\documentclass[12pt]{article} 
\usepackage[paper=letterpaper,margin=.8in]{geometry}
\usepackage[utf8]{inputenc}

    \usepackage{amsmath}
    \usepackage{amssymb}
    \usepackage{amsthm}
    \usepackage{thmtools}

    \usepackage{titlesec}
    \usepackage[colorlinks = true, linkcolor = blue, citecolor = red, urlcolor  = blue, anchorcolor = blue]{hyperref}
    \usepackage{url}

    \usepackage{dsfont}
    \usepackage{mathrsfs}
    \usepackage{mathtools}
    \usepackage{upgreek}
    
    \usepackage{tikz}
    \usepackage{pgfplots}
    \pgfplotsset{compat=1.18}
    \usepackage{graphicx}
    \usepackage{wrapfig}
    \usepackage{subcaption}

    \usepackage{enumitem}
    \usepackage{setspace}

    \usepackage{booktabs}
    \usepackage{multirow}

    \usepackage[normalem]{ulem}
    \usepackage{tensor}
    \usepackage{braket}
    \usepackage{xcolor}
    
    \usepackage[numbers,sort&compress]{natbib}

\numberwithin{equation}{section}

\theoremstyle{plain}
\newtheorem*{theorem*}{Theorem}
\newtheorem*{definition*}{Definition}
\newtheorem*{lemma*}{Lemma}
\newtheorem*{proposition*}{Proposition}
\newtheorem*{corollary*}{Corollary}
\newtheorem*{conjecture*}{Conjecture}

\newtheorem{theorem}{Theorem}

\newtheorem{lemma}{Lemma}
\newtheorem{proposition}{Proposition}
\newtheorem{corollary}{Corollary}
\newtheorem{conjecture}{Conjecture}

\newcommand{\penrosediagram}[4][jagged]{%
\begin{tikzpicture}[scale=#2,
    decoration={zigzag,segment length=4pt,amplitude=1.5pt}]
    \fill[blue!8]  (0,0) -- (2,2) -- (0,4) -- cycle;  
    \fill[red!8]   (4,0) -- (2,2) -- (4,4) -- cycle;  
    \ifstrequal{#1}{jagged}{%
        \draw[thick, decorate] (0,4) -- (4,4);
        \draw[thick, decorate] (0,0) -- (4,0);
    }{%
        \draw[thick] (0,4) -- (4,4);
        \draw[thick] (0,0) -- (4,0);
    }%
    \draw[thick] (0,0) -- (0,4);
    \draw[thick] (4,0) -- (4,4);
    \draw[thick] (0,0) -- (4,4);
    \draw[thick] (0,4) -- (4,0);
    \node at (3.2,2) {\(#3\)};
    \node at (0.8,2) {\(#4\)};
\end{tikzpicture}%
}

    \newcommand{\calo}{\mathcal{O}}
    \newcommand{\cals}{\mathcal{S}}
    
    \newcommand{\calh}{\mathcal{H}}
    
    \newcommand{\cala}{\mathcal{A}}

    \newcommand{\dbar}{\mathrm{d}\hspace*{-0.16em}\bar{}\hspace*{0.2em}}

\begin{document}
	
\begin{flushright}
    MIT-CTP/6067
\end{flushright}

\thispagestyle{empty}
\vspace*{2.5cm}
\begin{center}
{\bf {\LARGE  A de Sitter Anti-Scrambling Algebra }}\\
\begin{center}
\vspace{1cm}
{\bf Wentao Cui$^{1}$ and David K. Kolchmeyer$^{2}$}\\
 \bigskip \rm
\bigskip
$^{1}$Center for Theoretical Physics -- a Leinweber Institute,\\
Massachusetts Institute of Technology, Cambridge, MA 02139, USA
\bigskip

$^{2}$School of Natural Sciences, Institute for Advanced Study,\\
Princeton, NJ 08540, USA
\rm
  \end{center}
\vspace{2.5cm}
{\bf Abstract}
\end{center}
\begin{quotation}
\noindent

We study the algebra of observables of a semiclassical observer in de Sitter space. When operators are time-evolved by a scrambling time, out-of-time-ordered correlation functions (OTOCs) receive corrections due to shockwave scattering near the cosmological horizon. When the observer's clock is treated classically, we show that the time advance effect implies the breakdown of the KMS condition for vacuum correlators of matter operators. When the observer's clock is quantized, this implies that the Hartle-Hawking state is not a trace on the resulting crossed-product algebra. We conjecture that the observer's algebra has a trivial commutant. When the time separation between operators exceeds the scrambling time, the OTOC decays to zero and a free product algebra emerges. We illustrate this using classical solutions of de Sitter JT gravity with shocks. We comment on how our results pose a challenge for observer-centric static patch holography.

\end{quotation}

	\pagebreak

	{
		\hypersetup{linkcolor=black}
		\tableofcontents
	}


	\pagebreak
	
\section{Introduction}

Although the holographic principle can be formulated in both Anti-de Sitter (AdS) and de Sitter (dS) spacetimes, much less is known about dS holography, in part due to the lack of a top-down construction.\footnote{Progress toward a top-down understanding of dS has been made in \cite{Anninos:2011ui,Anninos:2017eib,Anninos:2026hia,Silverstein:2024xnr,Banihashemi:2026mje,McAllister:2024lnt,Marini:2026zjk,Narovlansky:2023lfz,Narovlansky:2025tpb,Susskind:2022bia,Susskind:2021esx,Miyashita:2026zyl,Hoback:2026yqj}, among other works.
} Because aspects of the AdS/CFT correspondence can be studied from the bottom up, it is reasonable to expect that semiclassical gravity, and in particular the Euclidean path integral, can also teach us important lessons about dS holography. Static patch holography, which proposes that a static patch in dS is dual to a quantum-mechanical system with a finite-dimensional Hilbert space \cite{Banks:2000fe}, provides a conceptual basis for interpreting the results of bottom-up studies in dS. It has also been proposed that a successful model of dS holography should explicitly include an observer as part of the holographic map \cite{Anninos:2011af, Anninos:2011zn, Nakayama:2011qh,Maldacena:2024spf,Chen:2025jqm,Witten:2023xze,Tietto:2025oxn,Narovlansky:2023lfz,Blommaert:2026ofx,Goto:2026ipq}. The observer's worldline can be used to define gauge-invariant observables \cite{CLPW, Sivaramakrishnan:2024ydy,Cheung:2026euf}, and their internal degrees of freedom may be necessary to resolve puzzles surrounding the Hilbert space dimension of a closed universe \cite{Usatyuk:2024isz,Abdalla:2025gzn,Harlow:2025pvj,Wei:2025guh,Usatyuk:2024mzs,Akers:2025ahe,Engelhardt:2025azi,Zhao:2026mpl,Harlow:2026hky}.

In this paper, we explore dS holography by studying an observer's algebra in Lorentzian semiclassical gravity. To motivate our approach, we review similar algebraic studies in AdS. A classic result in AdS/CFT is that the thermofield double (TFD) state above the Hawking-Page temperature is dual to an eternal two-sided black hole \cite{Maldacena:2001kr}. This duality can be recast in an algebraic language developed by Leutheusser and Liu \cite{Leutheusser:2021frk,Leutheusser:2021qhd}. They carefully defined the large-$N$ limit of the algebra of single-trace operators in one CFT and showed that the result, an emergent type III$_1$ algebra, is equivalent to the bulk quantum field theory algebra of the black hole exterior, which we take to be the right wedge in Figure \ref{fig:penroseintro}.\footnote{In a follow-up work \cite{Leutheusser:2022bgi}, they argued that generic causally complete subregions in the bulk should also emerge from the large-$N$ limit of suitable boundary subalgebras, including subalgebras that are not canonically associated with boundary subregions. The general idea, which extends beyond AdS holography, is that any subalgebra in semiclassical gravity should emerge from the large-$N$ limit of a suitable subalgebra of the putative quantum-mechanical dual.} We refer to this algebra as $\cala_R$. The commutant algebra,\footnote{The commutant $\cala^\prime$ of an algebra $\cala$ is defined to be the algebra of all operators on the Hilbert space which commute with $\cala$. A von Neumann algebra $\cala$ is, by definition, equal to its double-commutant: $\cala = \cala^{\prime \prime}.$} $\cala_R^\prime$, is equivalent to the QFT algebra $\cala_L$ of the left wedge in Figure \ref{fig:penroseintro}, which is dual to the other CFT in the same sense. The algebras obey the following properties:
\begin{align}
    &\cala_L^\prime = \cala_R\,, \quad \quad \cala_R^\prime = \cala_L\,, \label{eq:fac1}
    \\
    \label{eq:fac2}
    &\cala_L \cap \cala_R = \mathbb{C} \,.
\end{align}
The second property above states that the only operators that belong to both $\cala_R$ and $\cala_L$ are multiples of the identity.  When a pair of algebras obeys both \eqref{eq:fac1} and \eqref{eq:fac2}, we say that they {\it factorize}. This generalizes the notion of Hilbert space factorization.\footnote{That is, given a factorized Hilbert space $\calh = \calh_L \otimes \calh_R$, we can obtain a pair of factorizing algebras by defining $\cala_L$ to be the algebra of all operators on $\calh_L$, and likewise for $\cala_R$. A general pair of factorizing algebras cannot necessarily be characterized in this way.} These properties are physically equivalent to the fact that the left and right wedges in Figure \ref{fig:penroseintro} are causal complements of each other.

\begin{figure}[h]
  \centering
                \penrosediagram{0.67}{\mathcal{A}_R}{\mathcal{A}_L}
  \caption{An eternal two-sided AdS black hole.}
  \label{fig:penroseintro}
\end{figure}

The reason why it is useful to explicitly discuss the wedge algebras $\cala_R$ and $\cala_L$ as well as the abstract algebraic properties \eqref{eq:fac1} and \eqref{eq:fac2} is that they continue to make sense away from the strict semiclassical limit, such as when the metric is subject to large quantum fluctuations and the geometric notion of a causally complete subregion no longer exists. That is, von Neumann algebras provide a natural language to describe subsystems in quantum gravity away from the semiclassical regime. A very interesting application of this idea appeared in the recent work of Penington and Tabor \cite{Penington:2025hrc}, which studied the eternal black hole algebras $\cala_L$ and $\cala_R$ in a regime where the single-trace operators may have large time separations, on the order of the scrambling time $\sim \log \frac{1}{G_N}$ (in units where $\ell_{AdS} = 1$). The single-trace operators create high energy shockwaves that backreact on the near-horizon geometry. Despite these quantum fluctuations of the metric, it is shown in \cite{Penington:2025hrc} that the algebras $\cala_R$ and $\cala_L$ continue to exist and factorize. Another example may be found in AdS JT gravity \cite{Kolchmeyer:2023gwa, Penington:2023dql}, which admits an exact quantization at finite $G_N$. In particular, \cite{Penington:2023dql} argued that algebraic factorization still holds when wormhole effects are incorporated. The fact that \eqref{eq:fac1} and \eqref{eq:fac2} are robust under both perturbative and non-perturbative corrections strongly points to the existence of an exact, non-perturbative theory in which the Hilbert space also factorizes.\footnote{In JT gravity, a mechanism for this was explained in \cite{Boruch:2024kvv}.} Thus, gravitational algebras are useful for motivating AdS/CFT from the bottom up.

Our starting point for the study of algebras in dS is the work of Chandrasekaran, Longo, Penington, and Witten (CLPW) \cite{CLPW}. They considered a pair of observers at the North and South poles of global dS. Each observer is naturally associated with a static patch. If we quantize a QFT on this background, we may define $\cala_R$ and $\cala_L$ to be the QFT algebras associated with each static patch. These algebras are generated by the quantum fields evaluated near the observer worldlines, and they factorize. Moreover, if we add the observer Hamiltonians to $\cala_L$ and $\cala_R$, then \cite{CLPW} showed that the algebras each admit a trace, and thus a well-defined notion of entropy. Up to a state-independent additive constant, this von Neumann entropy agrees with the generalized Gibbons-Hawking entropy of the observer's horizon. The trace is given by the expectation value in the Hartle-Hawking state, which has maximum entropy. This result may be interpreted as evidence for static patch holography. In particular, one may speculate that $\cala_R$ and $\cala_L$ are large-$N$ limits of a suitable single-trace algebra in a putative quantum-mechanical dual description. The formal ``type II'' trace on $\cala_R$ or $\cala_L$ would be dual to the ``type I'' trace in the dual theory.\footnote{A type I trace is defined by summing over the diagonal matrix elements of a given operator. A type II trace is a linear functional on an algebra that obeys standard properties such as cyclicity and positivity, but is not equivalent to any type I trace.} Thus, the dual would furnish the explicit microstates that account for the Gibbons-Hawking entropy of the observer's cosmological horizon.

CLPW \cite{CLPW} worked in the strict semiclassical limit, $G_N \rightarrow 0$. In this paper, we generalize their algebra by allowing the quantum field operators to be time-separated by a scrambling time, $\sim \log \frac{1}{G_N}$ (in units where $\ell_{dS} = 1$).\footnote{There are different notions of the scrambling time. The earliest scrambling time would be of order $\log M$ in dS units, where $M$ is the mass of the observer. This timescale is associated with the observer recoil effect which was studied in \cite{Kolchmeyer:2024fly}. In this paper, we will mostly focus on a two-dimensional model where the scrambling time is $\sim \log \frac{1}{G_N}$.} This is the simplest setting in which one can study corrections to the results of \cite{CLPW}. Our approach directly follows the analogous work of Penington and Tabor \cite{Penington:2025hrc} in AdS. We will focus on the algebra of just one of the observers, which we henceforth refer to as $\cala$. In most of this paper, $\cala$ will not include the observer's Hamiltonian, because we will not quantize the observer's clock.\footnote{After quantizing the observer's clock, we may construct a crossed product algebra $\cala_{\mathrm{cr}}$, which we discuss in Section \ref{sec:KMS-crossedprod}.} The operators in $\cala$ may be classified into early-time and late-time operators, where the time separation between an early-time and late-time operator is given by a large parameter $T$ (up to an order-one additive constant), while other time separations are order one. We work in the limit where $G_N \rightarrow 0$, but $G_N e^{\frac{2 \pi T}{\beta_c}}$ is held fixed, where $\beta_c$ is the inverse temperature of the observer's cosmological horizon. In this setting, the only QFT correlation functions that are affected are those with out-of-time-ordered insertions of early- and late-time operators. These operators produce shockwaves along the cosmological horizon, and their gravitational interactions are governed by an S-matrix that may be computed in the eikonal approximation. In AdS, the eikonal phase is positive, which physically means that shockwaves always experience a time delay \cite{Gao:2000ga}. In dS, the eikonal phase may be negative, such that time advances are permitted \cite{Anninos:2018svg, Aalsma:2020aib}. Due to this effect, two observers that are initialized at opposite poles of dS may communicate with each other. This has significant implications for the observer's algebra $\cala$. In particular, although we lack a rigorous proof, we will provide evidence that the commutant of $\cala$ is trivial (i.e., that $\cala^\prime = \mathbb{C}$). In other words, we conjecture that the properties \eqref{eq:fac1} and \eqref{eq:fac2} completely break down in dS. Moreover, if we follow CLPW and include the observer's Hamiltonian in their algebra, we find that the Hartle-Hawking state is no longer tracial because it does not obey the cyclic property $\mathrm{Tr}\, ab = \mathrm{Tr}\, ba$.\footnote{The use of the word cyclic here is not to be confused with the notion of a cyclic state in the context of Tomita-Takesaki theory or the Reeh–Schlieder theorem.} Taken at face value, this appears to be evidence against a version of static patch holography where the observer plays a fundamental role.\footnote{This ``observer-centric'' version of static-patch holography has been adopted in various works \cite{Anninos:2011af, Anninos:2011zn, Nakayama:2011qh,Maldacena:2024spf,Chen:2025jqm,Witten:2023xze,Tietto:2025oxn,Narovlansky:2023lfz,Blommaert:2026ofx,Goto:2026ipq,Narovlansky:2025tpb}.} In particular, the Hartle-Hawking state is expected to be a universal maximum entropy state \cite{Witten:2023xze, Blommaert:2025bgd}. In a finite-dimensional Hilbert space, such a state is tracial because the density matrix is proportional to the identity. Thus, to reconcile our results with the general expectations of static patch holography, new insights are needed.

The results quoted above crucially rely on the eikonal approximation for computing shockwave scattering amplitudes \cite{Kabat:1992tb}. We will mostly work with a two-dimensional toy model in which the S-matrix is simply
\begin{equation}
    S = e^{i x P_A P_B}\,,
    \label{eq:smatrixintro}
\end{equation}
where $P_A$ and $P_B$ are the energies of the two scattering shockwaves which are each understood to be of order $e^{\frac{ \pi T}{\beta_c}} $, and $x$ is a negative parameter proportional to $- G_N$.\footnote{We will actually assume that $P_A$ and $P_B$ are order one and interpret $x$ as $- G_N e^{\frac{2 \pi T}{\beta_c}}$. The difference is unimportant for the present discussion.} If we instead set $x$ to be positive, then our toy model is equivalent to semiclassical AdS JT gravity, where $x$ is proportional to $G_N$.\footnote{The $x > 0$ case was studied in \cite{Penington:2025hrc}.} Although correlation functions may be computed exactly in $x$, it is instructive to expand \eqref{eq:smatrixintro} as a power series in $x$, and work order-by-order in perturbation theory.\footnote{When we use the terminology ``perturbation theory,'' we are always referring to the power series expansion of the eikonal S-matrix. We do not discuss perturbation theory away from the eikonal approximation in this paper.} At a given order in perturbation theory, the results of \cite{CLPW} hold. That is, the observer's algebra has a nontrivial commutant, and the Hartle-Hawking state has the properties of a trace.\footnote{There is a technical subtlety: when computing an expectation value in the Hartle-Hawking state to a fixed order in perturbation theory, one may encounter a divergent integral. The correct statement is that when the relevant integrals converge, the resulting power series is consistent with a tracial Hartle-Hawking state. We discuss the details in Section \ref{sec:KMS-crossedprod}.} This is natural to expect because when $x > 0$, the algebra admits an interpretation as a boundary algebra of a two-sided AdS black hole in JT gravity. The Hartle-Hawking state must obey the cyclic property ($\mathrm{Tr}\, ab = \mathrm{Tr}\, ba$) at any fixed order in perturbation theory because it is exactly a trace when $x \geq 0$. In a more realistic model, such as the four-dimensional model that we introduce in Section \ref{sec:4dmodel}, one may also see that the Hartle-Hawking state obeys the cyclic property to all orders in perturbation theory.

We will first work in a physical setting where the observer has a perfect clock which is taken to be a part of the classical background. The observer's algebra $\cala$ is generated by QFT operators near their worldline at both early and late times, and correlation functions contain a parameter $x$. Let $\calh$ be the Hilbert space that $\cala$ acts on, and let $\ket{\Omega} \in \calh$ be the canonical vacuum state on $\cala$ (such as the Bunch-Davies vacuum of global dS), which is invariant under time translations.\footnote{The observer's clock provides a precise notion of time.} One may quantize the observer's clock by appending to $\calh$ a factor of $L^2(\mathbb{R})$ capturing the clock degrees of freedom. The observer's Hamiltonian acts on the extended Hilbert space $\calh \otimes L^2(\mathbb{R})$, and may be added to $\cala$ to obtain the crossed-product algebra $\cala_{\mathrm{cr}}$ \cite{Witten:2021unn}. In order for the Hartle-Hawking state to be a trace on $\cala_{\mathrm{cr}}$, the vacuum state $\ket{\Omega}$ must be KMS with respect to time translations of the observer's clock. More informally, we may say that the vacuum must be thermal from the observer's point of view. An important result of our work is that in our dS setup, $\ket{\Omega}$ does not obey the KMS condition. The KMS condition is closely related to Tomita-Takesaki theory, which we review in Appendix \ref{sec:aQFT-review}. An important lesson is that for every \emph{cyclic and separating} state on a von Neumann algebra $\mathfrak{A}$, there is a time-translation symmetry of $\mathfrak{A}$ for which the state is KMS.\footnote{This symmetry need not have a geometric interpretation. We refer to it as ``time-translation'' as a matter of convenience.} A state $\ket{\Psi}$ is \emph{cyclic} with respect to $\mathfrak{A}$ when $\mathfrak{A} \ket{\Psi}$ is dense in the Hilbert space, and $\ket{\Psi}$ is \emph{separating} with respect to $\mathfrak{A}$ when $\ket{\Psi}$ is cyclic with respect to the commutant $\mathfrak{A}^\prime$. In our setup, $\ket{\Omega}$ is clearly cyclic with respect to $\cala$, and the failure of the KMS condition can be used to prove by contradiction that $\ket{\Omega}$ is not separating for $\cala$. In contrast, in the setting of CLPW, the Bunch-Davies vacuum is both cyclic and separating for $\cala$. Because $\ket{\Omega}$ is not cyclic for $\cala^\prime$, this implies that $\cala^\prime$ is ``not large enough'' to generate a dense subspace of the Hilbert space when acting on $\ket{\Omega}$. Our conjecture that $\cala^\prime$ is trivial is based on this and other considerations that we describe in Section \ref{sec:properties}.

A key question is whether the correlation functions of operators in $\cala$ can be captured by a quantum-mechanical dual description. In particular, are gravitational time advances consistent with unitary quantum mechanics? The traversable wormhole \cite{Gao:2016bin,Maldacena:2017axo} provides a concrete quantum-mechanical dual of gravitational time advance. Given that scrambling is the quantum-mechanical mechanism that underlies gravitational time delay in AdS, we let ``anti-scrambling'' refer to the analogous mechanism dual to gravitational time advance.\footnote{To our knowledge, the term ``anti-scrambling'' was coined by Daniel Harlow and Ying Zhao.} Under a certain set of physical assumptions, Maldacena, Shenker, and Stanford \cite{Maldacena:2015waa} ruled out anti-scrambling and placed an upper bound on the rate of scrambling. In dS, the KMS condition is the key assumption of \cite{Maldacena:2015waa} that fails. This rules out a putative dS hologram based upon unitary quantum mechanics in a KMS state at the dS temperature. Interesting recent works on reproducing dS dynamics from explicit microscopic constructions include \cite{Narovlansky:2023lfz, Narovlansky:2025tpb, Goto:2026ipq,Marini:2026zjk,Verlinde:2024znh,Verlinde:2024zrh,Rahman:2024iiu,Miyashita:2026zyl}.

Although much of this work focuses on out-of-time-ordered dynamics at the scrambling time, we also consider a much larger timescale at which out-of-time-ordered correlators (OTOCs) decay to zero. In particular, Penington and Tabor \cite{Penington:2025hrc} showed that at times much larger than the scrambling time, their algebra of early- and late-time boundary operators in AdS becomes a free product algebra. In Section \ref{sec:alg-free}, we show that their proof also applies to our dS setup. In particular, at timescales much longer than the scrambling time, a type III$_1$ free product algebra emerges. The vacuum state is cyclic and separating for this algebra, the KMS condition holds, and the Hartle-Hawking state is a trace on the crossed-product algebra. The free product algebra was physically interpreted in AdS \cite{Chandrasekaran:2022eqq} in terms of long wormholes supported by shocks \cite{Shenker:2013yza}. Using dS JT gravity \cite{Maldacena:2019cbz, Cotler:2019nbi,Alonso-Monsalve:2024oii} as a toy model, we provide a similar geometric interpretation of the free product algebra. In AdS/CFT, the dynamics of this algebra may be captured by modeling the time evolution operator as a random unitary \cite{Stanford:2021bhl}. We can view the free product algebra as a universal mathematical structure that describes some aspects of the late-time dynamics of both CFTs dual to AdS as well as the putative dS hologram.

We now summarize the contents of this paper. In Section \ref{sec:obsalg}, we discuss a model of an observer in four dimensions, based upon the Schwarzschild–de Sitter geometry. We define the observer's algebra $\cala$ in this context. In the near-Nariai limit, we obtain a simpler two-dimensional model, which is the focus of the remainder of this paper. We also make some general comments on scrambling and anti-scrambling. In Section \ref{sec:OTOC}, we study the OTOC in both the time and frequency domains. We discuss the analytic continuation of the OTOC, the KMS condition, and the detailed balance condition. We then define the crossed-product algebra and demonstrate that the Hartle-Hawking state is not a trace. In Section \ref{sec:properties}, we provide evidence that supports our conjecture that $\cala^\prime$ is trivial. In particular, we prove that the vacuum is not separating for $\cala$, and we discuss a method for constraining the commutant. We then review the mathematical result of \cite{Penington:2025hrc} on the emergence of the free product algebra, and emphasize that this result also applies to dS. We then briefly review dS JT gravity, and construct Penrose diagrams of spacetimes with out-of-time-ordered shocks. We explain the geometric intuition for the emergence of the free product algebra, following an analogous AdS discussion in \cite{Chandrasekaran:2022eqq}. In the Discussion, we emphasize the challenge that our results pose for observer-centric static patch holography.


\section{An Observer's Algebra in de Sitter Space} \label{sec:obsalg}

\subsection{4D Model} \label{sec:4dmodel}

The goal of this paper is to study an observer's algebra in a regime where metric backreaction effects are important. To construct a model of an observer in de Sitter space, we begin with the four-dimensional Schwarzschild-de-Sitter black hole solution with mass $M$,
\begin{align}
    \mathrm{d}s_{\mathrm{SdS}}^2 &= -f(r)\,\mathrm{d}t^2 + \frac{\mathrm{d}r^2}{f(r)} + r^2 \mathrm{d}\Omega^2 \,,
    \\
    f(r) &=  1 - \frac{\Lambda}{3} r^2 - \frac{2 G_N M}{r} = - \frac{\Lambda}{3} \frac{(r - r_0)(r - r_b)(r - r_c)}{r}\,.
    \label{eq:SdS}
    \end{align}
The black hole horizon has radius $r_b$, while the cosmological horizon has radius $r_c$. Note that $r_0$, $r_b$, and $r_c$ obey the following bounds:
\begin{align}
    r_0 < 0 < r_b < \frac{1}{\sqrt{\Lambda}} < r_c < \sqrt{\frac{3}{\Lambda}} = \ell_{\mathrm{dS}}\,.
    \label{eq:bounds}
\end{align}
When studying physics near the cosmological horizon, it is convenient to work with Kruskal coordinates, which corresponds to the metric
\begin{align}
    \mathrm{d}s_{\mathrm{SdS}}^2 = - A(UV) \, \mathrm{d}U \mathrm{d}V + B(U V)\, \mathrm{d}\Omega^2 \,.
\end{align}
The directions of increasing $U$ and $V$ are indicated in Figure \ref{fig:penrose}. Below, we will only need to know the functions $A(x)$ and $B(x)$ near $x = 0$,
\begin{align}
    A(x) &= 1 + \frac{\Lambda}{6} \left(1 - \frac{r_0 r_b}{r_c^2}\right) x + \mathcal{O}(x^2)\,,
    \\
    B(x) &= r_c^2 + \frac{1}{6}(r_c - r_0)(r_c - r_b) \Lambda x + \mathcal{O}(x^2)\,.
\end{align}
\noindent To model an observer with mass $M$, we introduce a boundary at radius $r_{\mathrm{Obs}}$ and end the spacetime there. The radius $r_{\mathrm{Obs}}$ obeys
\begin{align}
    r_b < r_{\mathrm{Obs}} < r_c \,,
\end{align}
and the physical spacetime corresponds to $r > r_{\mathrm{Obs}}$. The choice of boundary conditions for the matter fields is not important for our later analysis, as long as there is no energy flux through the boundary. The boundary condition for the metric will also not be important for our analysis, because we will work in the approximation of linearized gravity, and the metric perturbations we consider are localized on the cosmological horizon, far away from the observer.\footnote{We are referring to linearized shockwaves. There can also be graviton perturbations, but we choose to regard those as part of the matter sector. Note that dS with timelike boundaries has been considered in various other works, such as \cite{Anninos:2024wpy, Anninos:2011zn, Silverstein:2024xnr, Banihashemi:2026mje, Banihashemi:2022jys, Batra:2024kjl}. It was found that the choice of boundary conditions is important for a well-posed initial value problem \cite{Anninos:2024wpy,An:2021fcq}. Although the shockwave solutions we use are not affected by the choice of boundary conditions, we expect that the precise choice will be important for understanding corrections to our model.}

The region $r > r_{\mathrm{Obs}}$ corresponds to the exterior of the observer. Although our spacetime ends at $r = r_{\mathrm{Obs}}$, we can think of $r < r_{\mathrm{Obs}}$ as the interior of the observer, where the metric depends on the matter configuration of the observer. The form of the metric inside the observer will not be important to us.

To quantize a matter QFT on this background spacetime, it is convenient to begin with the Euclidean metric obtained from the transformation $t \rightarrow -i \tau$, where $\tau$ is compactified so that the geometry is smooth near $r = r_c$. That is, we make the identification
\begin{align}
    \tau \sim \tau + \beta_c\,, \quad \quad \beta_c := \left|\frac{4 \pi}{f^\prime(r_c)}\right| \,.
\end{align}
This Euclidean geometry has the topology of a two-sphere times a disk. Euclidean correlation functions of local operators on this geometry may be analytically continued to Lorentzian signature. The Hilbert space may be obtained from the GNS construction. The GNS vacuum looks like the Minkowski vacuum near the bifurcate horizon. In this state, the observer and the cosmological horizon are in thermal equilibrium.\footnote{The Schwarzschild-de-Sitter geometry does not represent thermal equilibrium because the black hole and cosmological horizons have different temperatures. However, our spacetime has no black hole horizon because $r_{\mathrm{Obs}} > r_b$.} Let $\calh_0$ denote this GNS Hilbert space.

The isometry group of \eqref{eq:SdS} is $\mathbb{R} \times \mathrm{SO}(3)$, which corresponds to time translations and rotations. These isometries should be gauged because they are diffeomorphisms. It is natural to equip the observer with degrees of freedom that transform nontrivially under these isometries, such as a clock and a local reference frame. In other works \cite{CLPW,Witten:2021unn,Chandrasekaran:2022eqq,Jensen:2023yxy,Kudler-Flam:2023qfl,Kudler-Flam:2024psh,DeVuyst:2024fxc,DeVuyst:2024khu,Chen:2025tbh}, these degrees of freedom are quantized, leading to a crossed-product construction of the observer's algebra. In particular, the total Hilbert space consists of $\calh_0$ tensored with other Hilbert space factors arising from the quantization of the observer degrees of freedom. In the present setup, we choose to not quantize the observer degrees of freedom, so that the total Hilbert space is simply $\calh_0$. That is, we still equip the observer with a clock and local reference frame, but we freeze these degrees of freedom in a particular configuration that is taken to be a part of the classical background.\footnote{That is, we equip the observer with a perfect clock. One might object that this is unrealistic, because a perfect clock has infinite entropy, which would exceed the entropy of the observer's static patch. We will avoid this issue by working in a strict $G_N \rightarrow 0$ limit. In particular, we only need the observer's clock to be accurate for scrambling timescales, which are only logarithmic in the cosmological Gibbons-Hawking entropy. We will consider the case of an imperfect clock in Section \ref{sec:KMS-crossedprod}.} All of the isometries are broken. Thus, local operators in the matter QFT are in fact gauge invariant, because they are implicitly dressed to the observer. Another perspective on this implicit dressing is provided below in Section \ref{sec:observerrecoil}.

Following \cite{Penington:2025hrc}, we will first construct the observer's algebra in the matter quantum field theory, ignoring all quantum gravity effects. Then, we will explain how this algebra receives gravitational corrections. Consider the matter QFT with Hilbert space $\calh_0$. Because the observer has a cosmological horizon at $r = r_c$, we will define their algebra $\cala_0$ to be the algebra of QFT operators associated with the spacetime region $r_{\mathrm{Obs}} < r < r_c$. Equivalently, we could define $\cala_0$ to be the algebra of operators in a neighborhood of $r_{\mathrm{Obs}}$ and invoke the timelike tube theorem \cite{Strohmaier:2023opz}. We consider operators in $\cala_0$ that are boosted to either the far past or the far future. Let $H$ denote the Hermitian operator that generates the time translation isometry in \eqref{eq:SdS}, and let $T$ be a large parameter, of order the scrambling time. Given $\calo \in \cala_0$, we define
\begin{align}
    \calo^E &:= e^{-i H \frac{T}{2}} \calo e^{+i H \frac{T}{2}} \,, \\
    \calo^L &:= e^{+i H \frac{T}{2}} \calo e^{-i H \frac{T}{2}} \,.
\end{align}
That is, $\calo^E$ is an operator that is inserted at early times, and $\calo^L$ is an operator that is inserted at late times. Given a set of operators $\{ \calo_n \}_{n \in \mathbb{N}} \subset \cala_0$, consider the following correlator:
\begin{equation}
\braket{ \cdots \calo_4^E \, \calo_3^L \, \calo_2^E \, \calo_1^L}_{\mathrm{QFT}} \,,
\label{eq:ELcors}
\end{equation}
where the expectation value is in the GNS vacuum of $\calh_0$. The subscript $\mathrm{QFT}$ indicates that no gravitational corrections are included. We assume that in the matter QFT, \eqref{eq:ELcors} factorizes when $T$ is large:\footnote{That is, any connected contribution to the correlator that contains both early and late operators should decay to zero at large $T$. Physically, this is because a local perturbation near the observer will fall through their horizon. From the observer's perspective, the state returns to the GNS vacuum at late times. We will ignore subtleties that may arise when massless degrees of freedom are present. See \cite{Kudler-Flam:2025pol} for a more general discussion of infrared issues in dS.}
\begin{align}
    \label{eq:limfac}
    \lim_{T \rightarrow \infty} \braket{ \cdots \calo_4^E \, \calo_3^L \, \calo_2^E \, \calo_1^L}_{\mathrm{QFT}} = \braket{ \cdots \calo_4^E \, \calo_2^E  }_{\mathrm{QFT}}
    \braket{\cdots  \calo_3^L \, \calo_1^L }_{\mathrm{QFT}} \,.
\end{align}
As we will see, gravitational corrections to \eqref{eq:ELcors} result in a non-factorized answer, so that the result decays to zero at long times.

Because we are interested in the limit of large $T$, we choose to compute correlation functions of early- and late-time operators using a doubled Hilbert space, which naturally arises by applying the GNS construction to the factorized correlators in \eqref{eq:limfac}. We consider two copies of the matter QFT, which we denote by $A$ and $B$. The early-time operators belong to QFT $A$, while the late-time operators belong to QFT $B$. The total Hilbert space consists of two copies of $\calh_0$, which we refer to as $\calh_A$ and $\calh_B$,
\begin{align}
    \label{eq:Hspace}
    \calh = \calh_A \otimes \calh_B \,.
\end{align}
Let $\cala_{0,A}$ denote the copy of $\cala_0$ that acts on $\calh_A$, and let $\cala_{0,B}$ denote the copy of $\cala_0$ that acts on $\calh_B$.  Given $a \in \cala_{0,A}$ and $b \in \cala_{0,B}$, we define
\begin{align}
    a^E &:= e^{-i H_A \frac{T}{2}} \,a\, e^{+i H_A \frac{T}{2}} \,, \label{eq:aedef} \\
    b^L &:= e^{+i H_B \frac{T}{2}} \,b\, e^{-i H_B \frac{T}{2}} \,, \label{eq:bldef}
\end{align}
where $H_A$ and $H_B$ respectively denote the time translation generators in the $A$ and $B$ theories. The gravitational interaction between the $A$ and $B$ theories is captured by an S-matrix which we will define momentarily.

\begin{figure}[ht]
  \centering
  \includegraphics[width=0.8\textwidth]{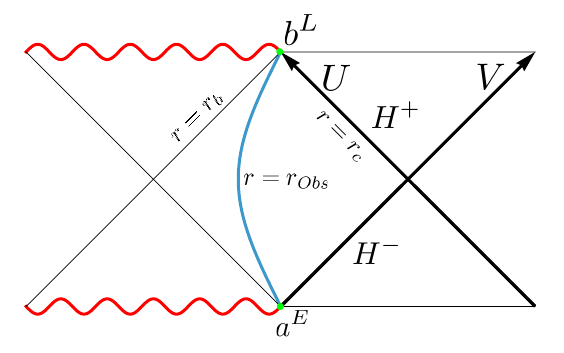}
  \caption{A Penrose diagram for the Schwarzschild-de-Sitter black hole. To construct the spacetime outside of an observer in de Sitter space, we introduce a boundary at $r = r_{\mathrm{Obs}}$, shown in blue. The observer's future cosmological horizon is given by $V = 0$, while the past horizon is at $U = 0$, where $U$ and $V$ are Kruskal coordinates that increase toward the future, as shown by the arrows. The early-time operator $a^E$ is inserted in the far past, and the late time operator $b^L$ is inserted in the far future, as indicated by the green dots. These operators create shockwaves along the past and future cosmological horizons, respectively. $H^+$ refers to the half-horizon $U > 0$, $V = 0$, while $H^-$ refers to $V < 0$, $U = 0$. }
  \label{fig:penrose}
\end{figure}

To study gravitational corrections to correlation functions of early- and late-time operators, we take the following double scaled limit:
\begin{align}
    G_N \rightarrow 0\,, \quad T \rightarrow \infty\,, \quad G_N e^{\frac{2 \pi T}{\beta_c}} \text{ fixed} \,.
    \label{eq:limit}
\end{align}
Consider operators in $\cala_{0,A}$ that are boosted to the far past. In the Penrose diagram in Figure \ref{fig:penrose}, these operators are inserted on the lower green dot. Due to the large boost, the stress-energy created by these operators is localized on the horizon at $U = 0$. We make an ansatz that the only non-zero component of the stress tensor is given by\footnote{For highly boosted states, this is the dominant component of the stress tensor. The other components may be neglected. As a consistency check, this ansatz is covariantly conserved.}
\begin{align}
    \label{eq:stresstensor}
    T_{UU} = - \frac{1}{r_c^2} P_U(\Omega) \, \delta (U) \,, 
\end{align}
where $P_U(\Omega)$, for every $\Omega \in S^2$, is an ANEC operator, defined by
\begin{equation}
P_U(\Omega) := - r_c^2  \int_{-\infty}^\infty \mathrm{d}U \, T_{UU}(U,V = 0,\Omega) \,. 
\label{eq:ANECdef}
\end{equation}
The above ANEC operators for all values of $\Omega \in S^2$ mutually commute because the stress tensors within two separate operators are spacelike separated \cite{Cordova:2018ygx}. Thus, the quantum state of a shockwave may be expressed in an eigenbasis that simultaneously diagonalizes these operators. The total null energy is given by
\begin{align}
    \label{eq:intnull}
    P_U := \int \mathrm{d}\Omega \, P_U(\Omega) \,,
\end{align}
where $\mathrm{d}\Omega$ is the measure of the round metric on $S^2$ with unit radius.

In the limit \eqref{eq:limit}, the stress tensor in \eqref{eq:stresstensor} sources a small perturbation to the metric in \eqref{eq:SdS}. The form of this perturbation was known since the work of \cite{Dray:1984ha,Aichelburg:1970dh,Penrose:1972xrn,tHooft:1987vrq,Sfetsos:1994xa}. Including the perturbation, the metric becomes
\begin{align}
    \mathrm{d}s^2 = \mathrm{d}s^2_{\mathrm{SdS}} + \delta(U) f_U(\Omega)\, \mathrm{d}U^2 \,,
\end{align}
where $f_U(\Omega)$ is a function on $S^2$. The linearized Einstein equation is
\begin{align}
    \label{eq:linEE}
    \left( - \nabla_{S^2}^2 + 1 - r_c^2 \Lambda \right)f_U(\Omega)  = - P_U(\Omega) \, 16 \pi G_N \,.
\end{align}
In the limit \eqref{eq:limit}, $P_U(\Omega)$ grows like $e^{\frac{\pi T}{\beta_c}}$. Thee right-hand side of \eqref{eq:linEE} is therefore of order $\sqrt{G_N}$, justifying the use of the linearized theory.

The above discussion for early-time operators can be repeated for late-time operators, with $U$ replaced by $V$ in equations \eqref{eq:stresstensor} through \eqref{eq:linEE}. Because the early- and late-time operators belong to different copies of a QFT, the ANEC operators $P_V(\Omega_1)$ and $P_U(\Omega_2)$ commute and may be simultaneously diagonalized. The metric with shockwaves along both the future and past horizons is given by
\begin{align}
    \mathrm{d}s^2 = \mathrm{d}s^2_{\mathrm{SdS}} + \delta(U) f_U(\Omega)\, \mathrm{d}U^2 + \delta(V) f_V(\Omega)\, \mathrm{d}V^2 \,,
\end{align}
up to higher order corrections that do not contribute to the on-shell action in the limit \eqref{eq:limit}. In this limit, the on-shell action, which receives contributions both from the Einstein-Hilbert action and the matter action, is
\begin{align}
    I = 16 \pi G_N \int \mathrm{d}\Omega \, P_V(\Omega) \frac{1}{\left(-  \nabla_{S^2}^2 + 1 - \Lambda r_c^2\right)} P_U(\Omega) \,.
    \label{eq:action}
\end{align}
As long as $r_c$ does not saturate the bounds in \eqref{eq:bounds}, the differential operator appearing in the denominator of the above equation has only non-zero eigenvalues and is thus invertible.

In the limit \eqref{eq:limit}, the metric perturbations vanish, yet $I$ is order one. Shockwave scattering is thus governed by the eikonal approximation, where in an appropriate basis, the \textit{out} state differs from the \textit{in} state by a phase. The S-matrix is simply
\begin{align}
    \cals := e^{i I} \,.
\end{align}
Consider now a sequence of operators $\{a_n \}_{n \in \mathbb{N}} \in \cala_{0,A}$ and $\{ b_n \}_{n \in \mathbb{N}} \in \cala_{0,B}$. Given a quantum field theory correlation function
\begin{align}
    \braket{\cdots a_1^E \, b_2^L \, a_3^E \, b_4^L}_{\mathrm{QFT}}
\end{align}
in the scaling limit \eqref{eq:limit}, the gravitational corrections may be incorporated by inserting $\cals$ or $\cals^{-1}$ between adjacent operators. That is, the quantum gravitational correlation functions of interest are computed as follows:
\begin{align}
    \label{eq:qgcorr}
    \braket{\cdots a_1^E \, b_2^L \, a_3^E \, b_4^L}_{\mathrm{QG}} := \lim_{\substack{
    T \rightarrow \infty \\
    G_N \rightarrow 0 \\
    G_N e^{\frac{2 \pi T}{\beta_c}} \text{ fixed }
    } }  \braket{\cdots \cals \, a_1^E \, \cals^{-1} \, b_2^L \, \cals \, a_3^E \, \cals^{-1} \, b_4^L}_{\mathrm{QFT}} \,.
\end{align}
We may think of \eqref{eq:qgcorr} as the definition of a quantum gravitational correlation function, which will be denoted using the subscript $\mathrm{QG}$. Each insertion of $\cals$ or $\cals^{-1}$ captures the gravitational interactions of the shockwaves that are created and/or annihilated by the early- and late-time operators. In particular, we should define an ``in'' Hilbert space of shockwave states, and an ``out'' Hilbert space of shockwave states, each of which are isomorphic to $\calh_A \otimes \calh_B$. The S-matrix $\cals$ is a unitary transformation from the ``in'' Hilbert space to the ``out'' Hilbert space. Early-time operators act on the ``in'' Hilbert space, and late-time operators act on the ``out'' Hilbert space. Thus, between an early- and late-time operator, there must always be an insertion of $\cals$ or $\cals^{-1}$. Note that the rightmost insertion of $\cals^{-1}$ in \eqref{eq:qgcorr} acts trivially.

Recall from \eqref{eq:aedef} and \eqref{eq:bldef} that the definitions of $a^E$ and $b^L$ involve conjugations by $e^{i H_A \frac{T}{2}}$ or $e^{-i H_B \frac{T}{2}}$. We can absorb these unitary operators into the S-matrix. Define a new S-matrix as follows:
\begin{align}
    S := e^{i H_A \frac{T}{2}} e^{-i H_B \frac{T}{2}} \,\cals \, e^{-i H_A \frac{T}{2}} e^{i H_B \frac{T}{2}} \,.
\end{align}
More explicitly, we have that
\begin{align}
    S = e^{i \delta}, \quad \quad \delta = 16 \pi G_N e^{\frac{2 \pi T}{\beta_c}} \int \mathrm{d}\Omega \, P_V(\Omega) \frac{1}{\left(-  \nabla_{S^2}^2 + 1 - \Lambda r_c^2\right)}  P_U(\Omega) \,.
    \label{eq:eikonalphase}
\end{align}
We can rewrite \eqref{eq:qgcorr} in terms of quantities that remain finite in the limit \eqref{eq:limit},
\begin{align}
    \braket{\cdots a^E_1 \, b^L_2 \, a^E_3 \, b^L_4}_{\mathrm{QG}} = \braket{\cdots S \, a_1 \, S^{-1} \, b_2 \, S \, a_3 \, S^{-1} \, b_4}_{\mathrm{QFT}} \,.
    \label{eq:qgcorr2}
\end{align}
Given $a \in \cala_{0,A}$ and $b \in \cala_{0,B}$, define
\begin{align}
\label{eq:tildeops}
    \tilde{a} := S^{1/2} \,a\, S^{-1/2}\,, \quad \quad \tilde{b} := S^{-1/2}\, b\, S^{1/2}\,.
\end{align}
Then we may simply write
\begin{align}
    \braket{\cdots a^E_1 \, b^L_2 \, a^E_3 \, b^L_4}_{\mathrm{QG}} = \braket{\cdots \tilde{a}_1 \, \tilde{b}_2 \, \tilde{a}_3 \, \tilde{b}_4}_{\mathrm{QFT}} \,.
    \label{eq:qgcorr3}
\end{align}
This shows that the quantum gravitational correlation functions of early- and late-time operators may be represented using the Hilbert space \eqref{eq:Hspace} and operators \eqref{eq:tildeops}. In the remainder of this paper, we will drop the subscripts, and all correlation functions will be in the QFT. To capture quantum gravity effects, we will simply ``dress'' operators with tildes, as in \eqref{eq:tildeops}.

We define the observer's algebra $\cala$ to be the algebra generated by $\tilde a$ and $\tilde b$ for all $a \in \cala_{0,A}$ and all $b \in \cala_{0,B}$. To better understand the physics of $\cala$, we will consider two separate limits beyond \eqref{eq:limit}. To do so, note that \eqref{eq:eikonalphase} contains two independent parameters, $G_N e^{\frac{2 \pi T}{\beta_c}}$ and $\Lambda r_c^2$, which are both held fixed in \eqref{eq:limit}. For each of the two limits, $r_c$ will approach one of its bounds in \eqref{eq:bounds}.


\subsubsection{Observer Recoil Effect} \label{sec:observerrecoil}

For the first limit, we take $G_N e^{\frac{2 \pi T}{\beta_c}} \rightarrow 0$ and $G_NM \rightarrow 0$, with their ratio held fixed. Equation \eqref{eq:eikonalphase} becomes:
\begin{equation}
\label{eq:recoilphase}
\delta = \frac{2 \ell_{\mathrm{dS}}}{ M} e^{\frac{2 \pi T}{\beta_c}} \int \mathrm{d}\Omega_1 \mathrm{d}\Omega_2 \, P_U(\Omega_1) P_V(\Omega_2) \cos \Theta_{12} \,,
\end{equation}
where $\Theta_{12}$ is the angle subtended from $\Omega_1$ to $\Omega_2$ on $S^2$. Because the Schwarzschild radius $G_NM$ goes to zero, the background geometry becomes global $\mathrm{dS}_4$ in this limit. The locus $r = 0$ is interpreted as the worldline of a pointlike observer with mass $M$, which is large in de Sitter units but small in Planck units. The eikonal phase in \eqref{eq:recoilphase} has the following interpretation. Suppose that the observer at $r = 0$ emits a shockwave at early times which travels along the $V$ axis, is localized at the North Pole of the internal $S^2$, and has null energy $P_U e^{\frac{2 \pi T}{\beta_c}}$. We will treat this shockwave as a source. Next, consider a probe shockwave along the $U$ axis. According to \eqref{eq:recoilphase}, as the probe shockwave crosses the source at $U = 0$, it is translated along the $V$ axis according to
\begin{equation}
    V \rightarrow V - \frac{2 \ell_{\mathrm{dS}}}{M} e^{\frac{2 \pi T}{\beta_c}} P_U  \cos \theta \,,
\end{equation}
where $\theta$ is the azimuthal angle of the probe. This supertranslation on the $U = 0$ horizon may be uniquely extended to an isometry of $\mathrm{dS}_4$, which we will call $\iota$. Instead of acting on the probe shock with $\iota$, we may instead work in a reference frame where the probe shock is not translated, but the worldline of the observer is transformed by $\iota^{-1}$. The new worldline represents an observer that has recoiled off the source shockwave. More details are provided in Appendix \ref{sec:recoileffect}. Thus, \eqref{eq:recoilphase} captures the physical fact that when an observer emits a shockwave, their trajectory changes. Due to the expansion of de Sitter space, a small perturbation of the observer's trajectory will grow exponentially, and this effect becomes important for out-of-time-ordered correlators on timescales of order $\log M$ in de Sitter units \cite{Kolchmeyer:2024fly}. This also explicitly demonstrates that the operators $\tilde{a}$ and $\tilde{b}$ are in fact dressed to the observer, because the momentum of the shockwaves that these operators create is compensated by an opposite change of the observer's momentum.


\subsubsection{Near-Nariai Limit} \label{sec:nearnariai}

For the second limit, we take $G_N e^{\frac{2 \pi T}{\beta_c}}\rightarrow 0$ and $r_c \rightarrow \frac{1}{\sqrt{\Lambda}}$ while holding fixed $e^{\frac{2 \pi T}{\beta_c}}/\Delta S_c$, where we define
\begin{equation}
    \Delta S_c := \frac{4 \pi r_c^2}{4  G_N} - \frac{ \pi }{ \Lambda G_N} \,.
\end{equation}
That is, $\Delta S_c$ is the difference of the entropy of the cosmic horizon and its minimum value for a Nariai black hole. The Nariai black hole geometry is $\mathrm{dS}_2 \times S_2$:
\begin{equation}
    \mathrm{d} s^2 = -\left(1 - \Lambda r^2\right) \mathrm{d}t^2 + \frac{\mathrm{d}r^2}{1 - \Lambda r^2} + \frac{1}{\Lambda} \, \mathrm{d}\Omega^2 \,.
\end{equation}
The roots of $1 - \Lambda r^2$, which correspond to the cosmic and black hole horizons, are $r_c = -r_b = \frac{1}{\sqrt{\Lambda}}$. We place the observer boundary at $r = r_{\mathrm{Obs}}$, where $r_b < r_{\mathrm{Obs}} < r_c$. Also, $\beta_c = \frac{2 \pi}{\sqrt{\Lambda}}$. The metric in Kruskal coordinates reads
\begin{equation}
    \mathrm{d}s^2 = -\frac{1}{\left(1 - \Lambda\, \frac{UV}{4}\right)^2}\,\mathrm{d}U \mathrm{d}V + \frac{1}{\Lambda} \, \mathrm{d}\Omega^2 \,,
\end{equation}
with conventions specified by Figure \ref{fig:penrose}. We define $H^+$ to be the half-horizon given by $V = 0$, $U > 0$. We define $H^-$ to be the half-horizon given by $U = 0$, $V < 0$. The eikonal phase becomes
\begin{equation}
    \delta = - \frac{4 \pi}{\Delta S_c} \frac{e^{\frac{2 \pi T}{\beta_c}}}{\Lambda} P_U P_V,
    \label{eq:2deikonal}
\end{equation}
which now depends only on the total null energies $P_U$ and $P_V$, defined in \eqref{eq:intnull}. This means that non-trivial scattering only occurs in the $s$-wave sector. In contrast, away from this limit, the S-matrix mixes sectors with different angular momenta. In the near-Nariai limit, we may KK reduce the matter theory to two dimensions, and the resulting two-dimensional theory is semiclassical JT de Sitter gravity minimally coupled to matter \cite{Maldacena:2019cbz, Cotler:2019nbi}. The phase in \eqref{eq:2deikonal} differs from that of AdS JT gravity \cite{Maldacena:2016upp} by a minus sign, which has important consequences that we will explore throughout this paper.

In the remainder of this paper, we will work in the near-Nariai limit to take advantage of the simplicity of two dimensions. We will also make an assumption about the two-dimensional matter theory. Recall that the algebra $\cala_0$ was defined to be the QFT algebra associated to the region $r_{\mathrm{Obs}} < r < r_c$. We will assume that $\cala_0$ is equivalent to both $\cala_0(H^+)$ and $\cala_0(H^-)$, where we define $\cala_0(H^+)$ to be the algebra generated by polynomials of the matter fields smeared within $H^+$, and $\cala_0(H^-)$ is defined analogously. In particular, the fields are {\it not} smeared in a neighborhood of $H^+$ or $H^-$. They are smeared strictly on $H^+$ or $H^-$. In \cite{Wall:2011hj}, this assumption is referred to as ``the existence of a null hyperspace formalism'' and was used to prove the generalized second law. This assumption was also used in \cite{Penington:2025hrc}.\footnote{See the discussion surrounding equation (2.3). Note that their horizon algebra $\hat \cala$ only contains operators that are compactly supported on the horizon. $\hat \cala$ is not a von Neumann algebra, but it is assumed to be s.o.t. dense in the von Neumann algebra of all observables in the right exterior of the black hole. To study the free product limit of the black hole scrambling algebra, the authors show that correlators of operators in $\hat \cala$ obey cluster decomposition with respect to modular translation. We will use this result in Section \ref{sec:alg-free}.} It is valid for free theories, and its general validity is discussed further in \cite{Wall:2011hj}. In Appendix \ref{sec:app3D}, we explicitly check this assumption for the case that the two-dimensional matter theory is the dimensional reduction of a three-dimensional free massive scalar field. We will moreover assume that the matter theory has a UV CFT fixed point, so that we can take $\cala_0(H^+)$ and $\cala_0(H^-)$ to be algebras in a 2D chiral CFT. In the example given in Appendix \ref{sec:app3D}, the CFT is a free chiral boson $\phi$, and the algebra is generated by the current $\partial \phi$. The motivation for this assumption is that general properties of the algebra $\cala$ may be inferred from properties of the matter fields on the horizon, which are easier to study.


\subsection{Simplified Model} \label{sec:2D-model}

By working in the near-Nariai limit and assuming the existence of a null hyperspace formalism, we have transformed our original four-dimensional setup into a two-dimensional one. We now provide more details on our two-dimensional model, beginning with some basic facts and conventions on 2D chiral CFTs.


\subsubsection{2D Chiral CFT}

Let $U$ be an affine coordinate along a horizon in two dimensions. Each primary operator\footnote{We are referring to $\mathrm{SL}(2)$ primaries, \emph{not} Virasoro primaries. Our primaries would be called ``quasi-primaries'' elsewhere. One could also use the term ``non-derivative field'' as in \cite{Mack1977convergence}.} $\phi(U)$ has a scaling dimension $h \in \mathbb{Z}_{\ge 0}$.\footnote{For simplicity, we restrict our attention to bosonic theories so we do not have to consider anticommuting fields.} The identity is the only primary with $h = 0$. There are Hermitian operators $P$, $K$, $B$ that implement the following transformations:
\begin{alignat}{2}
    &e^{i x P} \phi(U) e^{- i x P} &&= \phi(U - x)\,, \label{eq:Ptransf} \\
    &e^{i x K} \phi(U) e^{-i x K} &&= \Big(\frac{1}{1 - U x}\Big)^{2 h} \phi\Big(\frac{U}{1 - U x} \Big) \,, \label{eq:Ktransf} \\
    &e^{i x B} \phi(U) e^{-i x B} &&= e^{- h x}  \phi(e^{-x} U) \,. \label{eq:Btransf}
\end{alignat}
These satisfy the commutation relations
\begin{align}
    [B,P] = -i\,P\,, \qquad [B,K] = i\,K \,, \qquad [P,K] = -2i \,B \,.
\end{align}
The space of primary operators with a given scaling dimension is a vector space, closed under Hermitian conjugation. We can choose a basis $\phi_i(U)$ of the space of all primaries so that the basis operators are Hermitian.\footnote{The subscript $i$ refers to a basis vector, which includes the identity operator, and we will suppress this subscript when doing so does not cause confusion.} There is a normalizable vacuum vector $\ket{\Omega}$ that is invariant under the symmetries. The generator $P$ has a negative-semidefinite spectrum. Null-separated operators commute,
\begin{align}
    [ \phi_1(U_1) , \phi_2(U_2) ] = 0, \qquad \forall\, U_1 \neq U_2 \,.
\end{align}
We can choose the basis $\phi_i(U)$ such that the vacuum two-point function is given by
\begin{align}
    \braket{  \phi_i(U_1)  \phi_j(U_2) } = \delta_{ij} \left(- \frac{1}{(U_1 - U_2 - i \epsilon)^2} \right)^{h_i} \,.
\end{align}
According to the state-operator correspondence, the Hilbert space decomposes into sectors, where each sector is in one-to-one correspondence with one of the basis primaries $\phi_i$. In particular, $\phi_i(U) \ket{\Omega}$ is in the $i$th sector. We will take a complete insertion of the identity to be
\begin{align}
    \mathds{1} = \ket{\Omega} \bra{\Omega} + \sum^\prime_i \int_{-\infty}^0 \frac{\dbar P}{-P} |P\rangle_i \, {}_i\langle P|\, ,
    \label{eq:idres}
\end{align}
where $\dbar P:=\mathrm{d}P/(2\pi)$, and the prime on the sum indicates that $i$ refers to any member of the basis other than the identity. The state $|P\rangle_i$ refers to a delta-function normalizable eigenstate of $P$ within the sector labeled by $i$. If we insert \eqref{eq:idres} into the two-point function and use \eqref{eq:Ptransf}, we may derive that
\begin{align}
    \sum_j \int_{-\infty}^0 \frac{\dbar P}{-P} e^{-i x P}  \langle \phi_i(U_1) |P\rangle_j \, {}_j\langle P| \phi_i(U_2) \rangle = \left(- \frac{1}{(U_1 - U_2 - i \epsilon - x)^2} \right)^{h_i} \,,
\end{align}
and it follows that
\begin{align}
    {}_i\langle P| \phi_j(U) \rangle = \delta_{ij}  \frac{e^{-i P U}}{\sqrt{\Gamma(2 h_i)}} (-P)^{h_i} \theta(-P) \,, \label{eq:2D-PU-matrixel}
\end{align}
where $\theta(P)$ is the Heaviside step function.

\vspace{1em}

\noindent We may also define a basis that diagonalizes the $B$ generator. We define
\begin{align}
    {}_i \langle P | \lambda\rangle_j  := \delta_{ij} \sqrt{2\pi} (-P)^{-i \lambda} \theta(-P) \label{eq:2D-Plambda-matrixel}
\end{align}
such that
\begin{align}
    {}_i\langle \lambda_1|\lambda_2\rangle_j &= \delta_{ij} \, 2\pi \delta(\lambda_1 - \lambda_2) \,,
    \\
    \sum_j \int_{-\infty}^\infty \dbar\lambda \ {}_i\langle P_1|\lambda\rangle_j \, {}_j\langle\lambda|P_2\rangle_i  &= (-P_1) \theta(-P_1) \, 2\pi \delta( P_1 - P_2 ) = {}_i\langle P_1 | P_2\rangle{}_i \,,
    \\
    B |\lambda\rangle_i &= \lambda |\lambda\rangle_i \,.
\end{align}

It is possible to act on the vacuum $\ket{\Omega}$ with a primary field $\phi_i(U)$ smeared within $U > 0$ to obtain $|\lambda\rangle_i$ for any $\lambda \in \mathbb{R}$. This is equivalent to the Reeh-Schlieder theorem\footnote{See \cite{Witten:2018zxz} for a review.} \cite{ReehSchlieder1961}, which states that the algebra generated by polynomials of smeared fields with support on $U > 0$ generates a dense subspace of the Hilbert space when acting on the vacuum. To see this, define the smeared primary field
\begin{align}
    \varphi^+_i(\lambda) &:= \int_0^\infty \mathrm{d}U \, g^+_{h_i,\lambda}(U) \, \phi_i(U) \,, \label{eq:2D-varphiA-lambdatoU} \\
    g^+_{h,\lambda}(U) &:= U^{h-1 + i \lambda} e^{\frac{\pi \lambda}{2}} e^{-\frac{i\pi h}{2}} \frac{\sqrt{ 2\pi \Gamma(2 h) }}{\Gamma(h + i \lambda)} \,.
\end{align}
It follows that
\begin{equation}
    \varphi^+_i(\lambda) |\Omega\rangle = |\lambda\rangle_i, \qquad 
    \langle\Omega| \varphi^+_i(\lambda) = e^{-i \pi h} e^{\pi \lambda} \, {}_i\langle -\lambda| \,. \label{eq:2D-varphiA-lambda-vac}
\end{equation}
Likewise, we could also smear on the half-horizon $U < 0$. The analogous formulas are
\begin{align}
    \varphi^-_i(\lambda) &:= \int_{-\infty}^0 \mathrm{d}U \, g^-_{h_i,\lambda}(U) \, \phi_i(U) \,, \label{eq:2D-varphiB-lambdatoU} \\
    g^-_{h,\lambda}(U) &:= (-U)^{h-1 + i \lambda} e^{-\frac{\pi \lambda}{2}} e^{\frac{i\pi h}{2}} \frac{\sqrt{2\pi \Gamma(2 h)}}{\Gamma(h + i \lambda)} \,.
\end{align}
These satisfy
\begin{align}
    \varphi^-_i(\lambda) \ket{\Omega} &= \ket{\lambda}_i, \qquad 
    \langle \Omega\ \varphi^-_i(\lambda) = e^{ i \pi h} e^{- \pi \lambda}\, {}_i\langle -\lambda| \,. \label{eq:2D-varphiB-lambda-vac}
\end{align}
For $U \in \mathbb{R}$, we also have that 
\begin{align}
    \langle\Omega| \phi_i(U) | \lambda\rangle_j &= \delta_{ij} \, (U - i\epsilon)^{-h + i\lambda} \, e^{-\frac{i\pi h}{2} } e^{-\frac{\pi \lambda}{2}} \, \frac{\Gamma(h - i\lambda)}{\sqrt{ 2\pi \Gamma(2 h)}} \,, \\
    {}_j\langle \lambda|\phi_i(U)|\Omega\rangle &= \delta_{ij} \, (U + i\epsilon)^{-h - i\lambda}
    \, e^{+\frac{i \pi h}{2}} e^{-\frac{\pi \lambda}{2}} \, \frac{\Gamma(h + i\lambda)}{\sqrt{2\pi \Gamma(2 h)}} \,.
\end{align}
The three-point function is given by
\begin{align}
    \braket{ \phi_1(U_{1})  \phi_2 (U_2)    \phi_3 (U_{3}) }  =  C_{123} \frac{(-i)^{h_1 + h_2 + h_3 }}{(U_{12} - i \epsilon)^{h_1 + h_2 - h_3} (U_{13}- i \epsilon)^{h_1 + h_3 - h_2} (U_{23} - i \epsilon)^{h_2 + h_3 - h_1} } \,,
\label{eq:threepoint}
\end{align}
where the CFT structure constants obey
\begin{align}
    C_{ijk}^* = C_{kji}\,, \qquad\qquad C_{123} = C_{132} (-1)^{h_2 + h_3 + h_1} = C_{312} \,.
\end{align}

We emphasize that the field $\phi_i(U)$ is not an operator. Rather, it is an operator-valued distribution, and correlators such as \eqref{eq:threepoint} are distributions. The limit $\epsilon \rightarrow 0$ may be taken after smearing the fields against test functions.

\subsubsection{Early- and Late-Time CFTs} \label{sec:2D-AB-theory}

In the previous subsection, we reviewed states, operators, and correlation functions in chiral CFTs. We now consider two copies of a chiral CFT, which we will refer to as CFT A and CFT B. Following the conventions of Section \ref{sec:4dmodel}, we interpret CFT A as a theory of early-time modes, while CFT B describes late-time modes. We again refer to Figure \ref{fig:penrose} to establish our conventions. Primary fields in CFT A are denoted by $\phi_i^A(U)$, while primary fields in CFT B are denoted by $\phi_i^B(V)$. The algebra $\cala_{0,A}$ is defined to be generated by polynomials of the fields $\phi_i^A(U)$ smeared within $U > 0$, and $\cala_{0,B}$ is defined to be generated by polynomials of the fields $\phi_i^B(V)$ smeared within $V < 0$. In these definitions, we have invoked the null hyperspace formalism:
\begin{align}
    \cala_{0,A} \equiv \cala_{0,A}(H^+), \quad \quad     \cala_{0,B} \equiv \cala_{0,B}(H^-) \,.
\end{align}
The S-matrix is $e^{i \delta}$ where $\delta$ is given by \eqref{eq:2deikonal}. We will simply write
\begin{align}
    S = e^{i x P_A P_B} \,,
\end{align}
where $x < 0$ is a parameter, and $P_A$ and $P_B$ are the $P$ generators in CFTs A and B, respectively. Following \eqref{eq:tildeops}, we define
\begin{align}
    \tilde \phi^A(U) &:= S^{1/2} \phi^A(U) S^{-1/2}, \quad \quad U > 0 \,, \\
    \tilde \phi^B(V) &:= S^{-1/2} \phi^B(V) S^{1/2}, \quad \quad V < 0 \,.
\end{align}
The algebra $\cala$ is defined to be generated by polynomials of smeared $\tilde \phi^A(U)$ and $\tilde \phi^B(V)$ fields with support on $U > 0$ and $V  < 0$ respectively.

The Hamiltonians in \eqref{eq:aedef} and \eqref{eq:bldef} that define the time translation automorphisms on $\cala_{0,A}$ and $\cala_{0,B}$ are related to the boost operators $B_A$ and $B_B$ as follows:
\begin{align}
    H_A := -B_A \,, \quad \quad H_B := B_B \,.
\end{align}
One may explicitly check that $S$ commutes with the total Hamiltonian $H_A + H_B$, which implies that quantum gravitational correlation functions of operators in $\cala$ are invariant under time translations. We will sometimes find it convenient to consider operators labeled by time, instead of an affine coordinate on the horizon. We define
\begin{alignat}{2}
    \calo^A(t) &:= \left( e^{t}\right)^h  \phi^A( e^{t}) &&= e^{-i t B_A} \phi^A(1) e^{ i t B_A} \,, \label{eq:2D-A-conformal} \\
    \calo^B(t) &:= \left( e^{-t}\right)^h  \phi^B(- e^{-t}) &&= e^{i t B_B} \phi^B(-1) e^{- i t B_B} \,. \label{eq:2D-B-conformal}
\end{alignat}
The gravitationally dressed fields $\tilde \calo^A(t)$ and $\tilde \calo^B(t)$ are also defined using \eqref{eq:tildeops}. Frequency domain operators are defined through Fourier conjugation:
\begin{equation}
    \Phi^{A,B}(\lambda) := \int \mathrm{d}t \, e^{i\lambda t} \calo^{A,B}(t) \,. \label{eq:frequencydomainopdef}
\end{equation}
Their dressed counterparts $\tilde \Phi^{A,B}(\lambda)$ are again defined using \eqref{eq:tildeops}. The frequency domain operators are related to the earlier $\varphi^{A,+}(\lambda)$, $\varphi^{B,-}(\lambda)$ fields (\ref{eq:2D-varphiA-lambdatoU}), (\ref{eq:2D-varphiB-lambdatoU}) by the rescalings\footnote{Since the $A$ and $B$ theories are defined on the positive and negative half-lines, we drop the $\pm$'s for convenience.}
\begin{align}
    \Phi^A(\lambda)  &= \langle\lambda|\phi^A(1)|\Omega_A\rangle \, \varphi^A(\lambda), \qquad \Phi^B(\lambda) = \langle-\lambda|\phi^B(-1)|\Omega_B\rangle \, \varphi^B(-\lambda) \,. \label{eq:Phitovarphi}
\end{align}
Later in this paper, we will use the following identity:
\begin{equation}
    e^{i x P_A P_B} |\lambda_A\rangle_i |\lambda_B\rangle_j = \int_{-\infty}^\infty \dbar\lambda  \, 
    F_x( \lambda) |\lambda_A +  \lambda\rangle_i |\lambda_B +\lambda\rangle_j \,, \label{eq:2D-PAPB-lambda-basis}
\end{equation}
where the function $F_{x}(\lambda)$ is defined by
\begin{equation}
    F_x( \lambda)  := \Gamma\!\big(\epsilon +i\lambda \big)\, e^{-\frac{\pi\lambda}{2} \text{sign} (x)}\, |x|^{\, -i\lambda} \,. \label{eq:Fx-function}
\end{equation}
Its Fourier representation is given by\footnote{To derive this Fourier transform, start from
\begin{equation}
    \int_{-\infty}^\infty \dbar\lambda \, \Gamma(h - i \lambda) e^{i k \lambda} = e^{ h k} e^{-e^{k}} \,,
\end{equation}
which is convergent for $k$ in the strip $-\frac{\pi}{2} < \text{Im } k < \frac{\pi}{2}$, and then analytically continue $k$ to the boundary of the strip.}
\begin{equation}
    \int_{-\infty}^\infty \dbar\lambda \, e^{i k \lambda} \, F_x(\lambda) = e^{i x e^{-k}} \,.
\end{equation}


\subsubsection{Scrambling and Anti-Scrambling}

\label{sec:antiscrambling}

In our two-dimensional model of an observer in de Sitter space, the parameter $x$ is negative. For the case that $x$ is positive, the algebra $\cala$, which has been extensively studied by Penington and Tabor \cite{Penington:2025hrc}, may be interpreted as the algebra of an asymptotically AdS boundary in an eternal black hole spacetime. In this case, $x = G_N e^{\frac{2 \pi T}{\beta}}$. This algebra has a microscopic origin provided by AdS/CFT duality. It is the large-$N$ limit of the algebra of single-trace operators in the thermofield double state above the Hawking-Page temperature \cite{Leutheusser:2021qhd, Leutheusser:2021frk, Leutheusser:2022bgi}, where the time separation between any two operators can be either the scrambling time (up to additive order one constants) or of order one \cite{Chandrasekaran:2022eqq}. The correlation functions of operators in $\cala$ thus capture the physics of scrambling in the boundary quantum mechanics.

Motivated by proposals for worldline de Sitter holography \cite{Anninos:2011af, Anninos:2011zn, Nakayama:2011qh,Maldacena:2024spf,Chen:2025jqm,Witten:2023xze,Tietto:2025oxn,Narovlansky:2023lfz,Blommaert:2026ofx,Goto:2026ipq,Narovlansky:2025tpb}, one may speculate that $\cala$ also has a  quantum-mechanical dual for $x < 0$. However, the physics of a putative dual would substantially differ from the physics of quantum mechanical systems with AdS duals. In particular, while systems dual to AdS are known to display fast scrambling \cite{Sekino:2008he,Lashkari:2011yi,Shenker:2013pqa,Maldacena:2015waa}, a putative dual of dS would instead display ``anti-scrambling''. Scrambling is dual to gravitational time delays, while anti-scrambling is dual to gravitational time advances.

To be more precise, consider a precursor state in a quantum mechanical system, which may be constructed by perturbing a thermal state at time $t_0$ and evolving backwards by a scrambling time. The information content of this perturbation is scrambled throughout the system. Next, apply a new perturbation to the precursor and evolve forward in time. In a system with scrambling dynamics, the original perturbation will re-materialize after $t_0$, if at all. In contrast, ``anti-scrambling'' refers to the counterintuitive phenomenon where the original perturbation re-materializes earlier than $t_0$. The relation to gravitational time delay and advance may be seen in Figure \ref{fig:scrambling}.

\begin{figure}[ht]
    \centering
    \begin{minipage}[t]{0.45\linewidth}
        \centering
        \includegraphics[height=4cm]{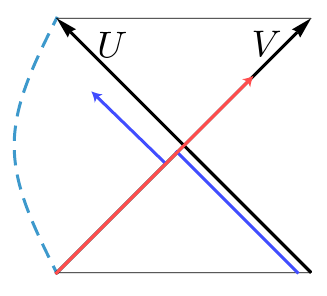}\\[4pt]
        {\small dS$_2$, $ \quad x < 0$
        \\
        ``anti-scrambling''}
    \end{minipage}\hfill
    \begin{minipage}[t]{0.45\linewidth}
        \centering
        \includegraphics[height=4cm]{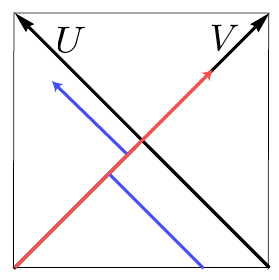}\\[4pt]
        {\small AdS$_2$, \, $  x > 0$ \\ ``scrambling''}
    \end{minipage}
    \caption{Left: Part of the Penrose diagram for $\mathrm{dS}_2$. The dashed line denotes the location of an observer boundary. The blue shock, which will eventually intercept the observer, represents a precursor state. Upon crossing the red shock, which represents an early-time perturbation from the observer, the blue shock experiences a time advance and reaches the observer at an earlier time. Right: The Penrose diagram of an eternal $\mathrm{AdS}_2$ black hole. The red shock causes the blue shock to arrive at the left AdS boundary at a later time.}
    \label{fig:scrambling}
\end{figure}

The presence of scrambling or anti-scrambling dynamics can be diagnosed from the sign of the leading correction to the OTOC during the onset of chaos. In our two-dimensional model, the OTOC in this regime is given by\footnote{A general ansatz for the early-time behavior of an OTOC is given in \cite{Kitaev:2017awl,Gu:2021xaj}.}
\begin{align}
    &\braket{\tilde \calo^A(t_1) \tilde \calo^B(t_2) \tilde \calo^A(t_3) \tilde \calo^B(t_4)} \nonumber \\
    &\hspace{5mm}= \frac{(-1)^{h_A+h_B}}
    {\left(2\sinh\frac{t_1-t_3}{2}\right)^{2h_A}
     \left(2\sinh\frac{t_2-t_4}{2}\right)^{2h_B}}
    \bigg[\,1
    -\,i\,x\,h_A h_B\,
    \frac{e^{\frac{1}{2}\left(t_2+t_4-t_1-t_3\right)}}
    {\sinh\frac{t_1-t_3}{2}\,\sinh\frac{t_2-t_4}{2}} + \calo(x^2)\,\bigg] \,,
    \label{eq:272}
\end{align}
where $x = c e^T$, and $c \in \mathbb{R}$ is small while $T$ is large. The time-ordered correlators are given by \eqref{eq:272} at $x = 0$. Assuming that these correlators are dual to correlators in a unitary quantum mechanical system at inverse temperature $2 \pi$, one may show that $c$ must be positive \cite{Maldacena:2015waa}. In particular, the duality implies that the OTOC must be analytic in the domain 
\begin{align}
    \mathrm{Im}\, t_4 > \mathrm{Im}\, t_3 > \mathrm{Im}\, t_2 > \mathrm{Im}\, t_1 > \mathrm{Im}\, t_4 - 2 \pi \,, \label{eq:OTOC-analyticity}
\end{align}
and obey the KMS condition. The authors of \cite{Maldacena:2015waa} set 
\begin{align}
    \label{eq:settings}
    t_1 = -t - i \pi \,, \qquad t_2 = t - \frac{i\pi}{2} \,, \qquad 
    t_3 = -t \,, \qquad t_4 = t + \frac{i\pi}{2} \,,
\end{align}
and studied the analytic continuation of the OTOC in $t$. The OTOC is analytic for $- \frac{\pi}{2} < \mathrm{Im}\, t < \frac{\pi}{2}$, and \cite{Maldacena:2015waa} showed that its magnitude is bounded in this strip by its value at $t = - \infty$, or $x = 0$.  Due to the limit \eqref{eq:limit} we used to compute the OTOC, this time is much earlier than the scrambling time and much later than the dissipation time.  After plugging in \eqref{eq:settings}, the bracketed term in \eqref{eq:272} becomes
\begin{align}
    1 - c\,e^{T}\,h_A h_B + \mathcal{O}\!\left(c^2 e^{2T}\right) \,.
\end{align} 
Because the leading correction can only decrease the magnitude of the OTOC, it follows that $c > 0$, or that anti-scrambling dynamics is ruled out.

If we do not impose any conditions on the behavior of correlation functions in the complex plane (such as the KMS condition), then there is no contradiction between anti-scrambling and unitary quantum mechanics. In fact, the traversable wormhole \cite{Gao:2016bin,Maldacena:2017axo} provides a concrete example of anti-scrambling. Consider two copies of a holographic CFT in the TFD state above the Hawking-Page temperature, so that the dual geometry is a two-sided AdS black hole. Create a precursor state by perturbing the right CFT at time $t = t_0 > 0$ and evolving backwards in time to $t = -t_0$. Assume that $t_0$ is of order the scrambling time. It was shown in \cite{Gao:2016bin,Maldacena:2017axo}  that by perturbing the precursor by a two-sided operator of the form $e^{i g O_L(0) O_R(0)}$ at $t = 0$, the original perturbation can experience a time advance.

The KMS condition is the key assumption of \cite{Maldacena:2015waa} that is violated by the dS OTOC. We discuss the KMS condition in detail in the next section. The implications for the observer's algebra are discussed in Section \ref{sec:properties}.


\section{Correlation Functions} \label{sec:OTOC}
In this section we specialize to the 2D theory, where correlators at any $x$ can be computed exactly. In Section \ref{sec:2D-OTOC-realt} we study OTOCs at real times and frequencies, in S-matrix perturbation theory and then exactly. To properly take the $x \rightarrow 0$ limit of the exact expressions, it is important to smear the operators with test functions. This procedure yields a well-defined perturbative expansion in $x$, agreeing to all orders with S-matrix perturbation theory. In Section \ref{sec:2D-OTOC-analc} we discuss the analytic continuation of OTOCs. Both the KMS condition and the detailed balance condition (its frequency-space analog) hold for $x>0$ and are violated for $x<0$. As we will explain, this is a consequence of the time advance which is induced by the gravitational S-matrix. A subtle point is that in the $x \rightarrow 0^-$ limit, the detailed balance condition holds, but the KMS condition may be violated.\footnote{To be clear, it is not possible to detect a violation of the KMS condition by expanding the gravitational S-matrix to a given power in $x$. However, after smearing the exact OTOC with test functions and performing the analytic continuation which is needed to verify the KMS condition, the $x \rightarrow 0^-$ limit does not necessarily agree with the result of applying the same analytic continuation to the smeared $x = 0$ OTOC. See Proposition \ref{eq:prop} below. This result does not contradict the $x \rightarrow 0^-$ behavior of the detailed balance condition because of a subtlety involving the convergence of the Fourier transform from the frequency-domain OTOC to the analytically continued time-domain OTOC. More details are provided beginning from Equation \eqref{eq:2D-G1-Fourier-analc}.} Finally, in Section \ref{sec:KMS-crossedprod} we apply the crossed product construction to $\cala$. When $x<0$, the Hartle-Hawking state is not a trace. However, the Hartle-Hawking state is a trace to all orders in a power series expansion in $x$.

\subsection{Correlators at Real Times} \label{sec:2D-OTOC-realt}
When $x = 0$, all gravitational and observer recoil effects are suppressed. We define the QFT correlation functions of our 2D theory:
\begin{align}
    G^{\mathrm{QFT}}_1(U_1, V_2, U_3, V_4) &:= \langle \phi^A(U_1) \phi^B(V_2) \phi^A(U_3) \phi^B(V_4) \rangle = \frac{1}{(2\pi)^2} \frac{(-1)^{h_A+h_B}}{(U_{13}-i\epsilon)^{2h_A} (V_{24}-i\epsilon)^{2h_B}} \,, \label{eq:2D-G1QFT-UV-def} \\
    G^{\mathrm{QFT}}_2(V_2, U_3, V_4, U_1) &:= \langle \phi^B(V_2) \phi^A(U_3) \phi^B(V_4) \phi^A(U_1) \rangle = \frac{1}{(2\pi)^2} \frac{(-1)^{h_A+h_B}}{(U_{13}+i\epsilon)^{2h_A} (V_{24}-i\epsilon)^{2h_B}} \,, \label{eq:2D-G2QFT-UV-def}
\end{align}
where $U_{13} := U_1 - U_3$ and $V_{24} := V_2 - V_4$. The OTOCs simply factorize into a product of two-point functions, since there is no coupling between the $A$ and $B$ theories.

\vspace{1em}

\noindent Now we consider finite $x$. The gravitational S-matrix acts trivially in any correlation function of three or fewer fields, since $P_A|\Omega\rangle = P_B|\Omega\rangle = 0$. The simplest non-trivial correlators are the four-point OTOCs: 
\begin{align}
    G_1(U_1, V_2, U_3, V_4) &:= \langle \tilde{\phi}^A(U_1) \tilde{\phi}^B(V_2) \tilde{\phi}^A(U_3) \tilde{\phi}^B(V_4) \rangle \label{eq:2D-G1-UV-def} \\
    &\,= \langle \phi^A(U_1) \phi^B(V_2) e^{ix P_AP_B} \phi^A(U_3) \phi^B(V_4) \rangle \,, \\
    G_2(V_2, U_3, V_4, U_1) &:= \langle \tilde{\phi}^B(V_2) \tilde{\phi}^A(U_3) \tilde{\phi}^B(V_4) \tilde{\phi}^A(U_1) \rangle \label{eq:2D-G2-UV-def} \\
    &\,= \langle \phi^B(V_2) \phi^A(U_3) e^{-ix P_A P_B} \phi^B(V_4) \phi^A(U_1) \rangle \,.
\end{align}
The simplest way to compute correlation functions is to take $x \ll 1$ and use S-matrix perturbation theory. This follows from (\ref{eq:2D-G1-UV-def}), (\ref{eq:2D-G2-UV-def}) by truncating $S = e^{ix P_A P_B}$ to some fixed order $x^N$, inserting a complete basis of $P_AP_B$ eigenstates, and computing the integrals using (\ref{eq:2D-PU-matrixel}). The $A$ and $B$ theories decouple at each order to yield the following perturbative expansion.
\begin{align}
    G_1^{\mathrm{pert},N}(U_1, V_2, U_3, V_4) &= G_1^{\mathrm{QFT}}(U_1, V_2, U_3, V_4) \label{eq:2D-G1-UV-Spert} \\
    &\hspace{5mm} \times \Big[ 1 + \sum_{n=1}^{N} \Big(\frac{-ix}{(U_{13}-i\epsilon)(V_{24}-i\epsilon)}\Big)^n \frac{(2h_A)_n (2h_B)_n}{\Gamma(1+n)} + \mathcal{O}(x^{N+1}) \Big] \,. \nonumber
\end{align}
There is a similar expression for $G_2^{\mathrm{pert},N}$ by replacing $U_{13}-i\epsilon \to -U_{13}-i\epsilon$ and $x\to -x$. It is tempting to take the $N\to \infty$ limit, but (\ref{eq:2D-G1-UV-Spert}) defines an asymptotic series which does not converge for any $x>0$ (these are known in the literature as ${}_2F_0$'s). That is, the all-orders series expansion of $S$ does not commute with the integrals over $\mathrm{Spec}\,P_A$, $\mathrm{Spec}\,P_B$. In Appendix \ref{sec:2D-Borel} we show the asymptotic series can be Borel resummed to recover the exact correlators of subsequent interest.


\subsubsection{Exact Correlators}

For the 2D model of Section \ref{sec:2D-model}, it is possible to compute the exact OTOCs in the time and frequency domains. We present the results here, leaving the derivations to Appendix \ref{sec:2Dcalcs-OTOCs-exact}. In position space it is most convenient to use Kruskal coordinates.\footnote{To obtain the OTOCs in static-patch time, we simply substitute $U_1=e^{t_1}$, $U_3 = e^{t_3}$, $V_2=-e^{-t_2}$, $V_4 = -e^{t_4}$, and multiply by the conformal factors in (\ref{eq:2D-A-conformal})--(\ref{eq:2D-B-conformal}).} Continuing with the conventions in Section \ref{sec:2D-AB-theory}, we find:
\begin{align}
    G_1(U_1, V_2, U_3, V_4) &= \langle \tilde{\phi}^A(U_1) \tilde{\phi}^B(V_2) \tilde{\phi}^A(U_3) \tilde{\phi}^B(V_4) \rangle \nonumber \\
    &= \frac{(-1)^{h_A}}{(2\pi)^2} x^{-2h_B} (U_{13})^{2h_B-2h_A} U^{\sigma_1} \Big(2h_B, 1-2h_A+2h_B; -\frac{i}{x}U_{13}V_{24}\Big)\,, \label{eq:2D-G1-UV-explicit} \\
    \sigma_1 &=
    \begin{cases}
        +\,, & x<0 \cap U_{13}<0 \cap V_{24}<0 \\
        0\,, & (x>0 \cap (U_{13}<0 \cup V_{24}<0)) \cup (x<0 \cap (U_{13}>0 \cup V_{24}>0)) \\
        -\,, & x>0 \cap U_{13}>0 \cap V_{24}>0
    \end{cases} \,, \nonumber \\
    \nonumber \\
    G_2(V_2, U_3, V_4, U_1) &= \langle \tilde{\phi}^B(V_2) \tilde{\phi}^A(U_3) \tilde{\phi}^B(V_4) \tilde{\phi}^A(U_1) \rangle \nonumber \\
    &= \frac{(-1)^{h_A}}{(2\pi)^2} x^{-2h_B} (U_{13})^{2h_B-2h_A} U^{\sigma_2}\Big(2h_B, 1-2h_A+2h_B; -\frac{i}{x} U_{13} V_{24}\Big) \,, \label{eq:2D-G2-UV-explicit} \\
    \sigma_2 &=
    \begin{cases}
        +\,, & x>0 \cap U_{13}>0 \cap V_{24}<0 \\
        0\,, & (x>0 \cap (U_{13}<0 \cup V_{24}>0)) \cup (x<0 \cap (U_{13}>0 \cup V_{24}<0)) \\
        -\,, & x<0 \cap U_{13}<0 \cap V_{24}>0
    \end{cases} \,. \nonumber
\end{align}
Above, we have assumed that $U_{13}$ and $V_{24}$ are non-zero so that we can set factors of $i\epsilon$ to zero. Here, $U(a,b;z)$ is the confluent hypergeometric-$U$ function, which has a branch cut on the negative real axis. We let $\sigma = (0,+, -)$ denote the main, upper, and lower sheets. The upper/lower sheets are reached by continuing across the branch cut in the upwards/downwards (equivalently, clockwise/anticlockwise) directions. We defer any comments concerning the analytic properties of $G_1$ and $G_2$ to Section \ref{sec:2D-OTOC-analc}.

Two interesting limits are $x\to 0$ and $x\to \infty$. The former is the focus of Section \ref{sec:2D-OTOC-realt-xto0}. For $x\to \pm\infty$, we note that the correlators decay as a model-dependent power:
\begin{align}
    \lim\limits_{x\to\pm\infty} G_1(U_1, V_2, U_3, V_4) &=
    \begin{dcases}
        \frac{(-1)^{h_B}}{(2\pi)^2} \frac{\Gamma(2h_B-2h_A)}{\Gamma(2h_B)} x^{-2h_A} (V_{24})^{2h_A-2h_B}, \quad & h_A < h_B \\
        \frac{(-1)^{h_A}}{(2\pi)^2} \frac{\Gamma(2h_A-2h_B)}{\Gamma(2h_A)} x^{-2h_B} (U_{13})^{2h_B-2h_A}, \quad & h_A > h_B \\
        \frac{(-1)^{h_A+1}}{(2\pi)^2\Gamma(2h_A)} x^{-2h_A} \log\Big(-\frac{i}{x} U_{13} V_{24}\Big), \quad &h_A=h_B
    \end{dcases} \,.
\end{align}
In Section \ref{sec:alg-free}, this decay will be associated with an emergent free product algebra.

\vspace{1em}

\noindent We can also compute the OTOCs in the frequency domain.
\begin{align}
    G_1(\lambda_1,\lambda_2,\lambda_3,\lambda_4) &:= \langle \tilde{\Phi}^A(\lambda_1) \tilde{\Phi}^B(\lambda_2) \tilde{\Phi}^A(\lambda_3) \tilde{\Phi}^B(\lambda_4) \rangle \nonumber \\
    &= 2\pi\delta(\lambda_1+\lambda_2+\lambda_3+\lambda_4) \, C_{\lambda_1\lambda_2\lambda_3\lambda_4} \, e^{\pi(\lambda_2+\lambda_1)} F_x(-\lambda_1-\lambda_3) \,, \label{eq:2D-G1-lambda-explicit} \\
    G_2(\lambda_2,\lambda_3,\lambda_4,\lambda_1) &:= \langle \tilde{\Phi}^B(\lambda_2) \tilde{\Phi}^A(\lambda_3) \tilde{\Phi}^B(\lambda_4) \tilde{\Phi}^A(\lambda_1) \rangle \nonumber \\
    &= 2\pi\delta(\lambda_1+\lambda_2+\lambda_3+\lambda_4) \, C_{\lambda_1\lambda_2\lambda_3\lambda_4} \, e^{\pi(\lambda_2+\lambda_3)}  F_{-x}(-\lambda_1-\lambda_3) \,, \label{eq:2D-G2-lambda-explicit} 
\end{align}
where $F_x(\lambda)$ was defined in (\ref{eq:Fx-function}), and $C_{\lambda_1\lambda_2\lambda_3\lambda_4}$ is given by
\begin{align}
    C_{\lambda_1\lambda_2\lambda_3\lambda_4} := \frac{e^{-\frac{\pi}{2}(\lambda_1+\lambda_2+\lambda_3+\lambda_4)}}{(2\pi)^2 \Gamma(2h_A)\Gamma(2h_B)}  \Gamma(h_A+i\lambda_1) \Gamma(h_B-i\lambda_2) \Gamma(h_A+i\lambda_3) \Gamma(h_B-i\lambda_4) \,.
\end{align}
In Section \ref{sec:2Dcalcs-OTOCs} we explicitly verify that the (suitably regulated) time and frequency domain representations for $G_1$ are related by the Fourier transform.


\subsubsection{The \texorpdfstring{$x \rightarrow 0$}{x to 0} Limit} \label{sec:2D-OTOC-realt-xto0}
The correlators (\ref{eq:2D-G1-UV-explicit})--(\ref{eq:2D-G2-UV-explicit}) and (\ref{eq:2D-G1-lambda-explicit})--(\ref{eq:2D-G2-lambda-explicit}) are presented for finite $x$. An important regime is the limit $x\to 0$, where observer recoil and gravitational backreaction are both suppressed. In this limit, we expect to recover the results of \cite{CLPW}.  

This limit must be taken with care. Since the discussion for $G_1$ and $G_2$ are nearly identical, we focus on the former. In position space, the hypergeometric function $U(a,b;z)$ permits an asymptotic expansion for $|z|\to\infty$ on its main sheet:
\begin{align}
    U(2h_B, 1-2h_A+2h_B; z) \sim z^{-{2h_B}} \sum_{n=0}^{\infty} \Big(-\frac{1}{z}\Big)^n \frac{(2h_A)_n (2h_B)_n}{\Gamma(1+n)} \,.
\end{align}
Carrying out this expansion for $G_1$ recovers the perturbative series $G_1^{\mathrm{pert}}$. However, on either the upper or lower sheet $G_1$ picks up a non-perturbative phase $e^{-i U_{13} V_{24}/x}$, which can be seen from the monodromy formula (\ref{eq:hypU-monodromy}). This oscillates wildly for $x\to 0$, leading to an ill-defined limit. It is therefore essential to consider correlation functions of fields smeared by test functions. In the (static patch) time domain we require the test functions $f_i(t_i)$ be of class $C^{\infty}_c(\mathbb{R})$, i.e. smooth with compact support. In Kruskal coordinates, this means that the smearing functions for the $A$- and $B$-theories are respectively supported in compact regions of $(0,\infty)$ and $(-\infty,0)$.

It is simpler to work in the frequency domain. Since the time and frequency domain correlators are related by a Fourier transform, we lose no generality in doing so. We start with the exact expression (\ref{eq:2D-G1-lambda-explicit}) for $G_1$. Once again this lacks a well-defined $x\to 0$ limit, and it is crucial to consider smeared fields. A natural class of functions are the Fourier conjugates of time domain smearing functions. If $f_i(t_i)$ is of compact support then $\tilde{f}_i(\lambda_i)$ extends to a $\lambda$-entire function. In particular, this means that frequency space test functions will not have compact support.

We now wish to evaluate the $x\to 0$ limit of $G_1(\lambda_i)$ smeared with frequency-space test functions. The variable $x$ only appears within the function $F_x(-\lambda_1-\lambda_3)$. We perform a change of variables, and order the integral involving $F_x(\lambda)$ last. We compute:
\begin{align}
    G_1(\tilde{f}) &= \int_{-\infty}^{\infty} \dbar\lambda_1\cdots \dbar\lambda_4 \, \tilde{f}_1(\lambda_1) \tilde{f}_2(\lambda_2) \tilde{f}_3(\lambda_3) \tilde{f}_4(\lambda_4) \, G_1(\lambda_1, \lambda_2, \lambda_3, \lambda_4) \\
    &= \int_{-\infty}^{\infty} \dbar\lambda_1\cdots \dbar\lambda_4 \mathrm{d}t_1 \cdots \mathrm{d}t_4 \, f_1(t_1) f_2(t_2) f_3(t_3) f_4(t_4) \\
    &\hspace{15mm} \times (e^{-i \lambda_1 (t_1+i\epsilon_1)} \cdots e^{-i \lambda_4 (t_4+i\epsilon_4)}) \, G_1(\lambda_1, \lambda_2, \lambda_3, \lambda_4) \nonumber \\
    &= \frac{(-1)^{h_A+h_B}}{(2\pi)^2\Gamma(2h_A)\Gamma(2h_B)} \int_{-\infty}^{\infty} \dbar\lambda \, F_x(\lambda) e^{\lambda(-\pi+(\epsilon_4-\epsilon_2))} \Gamma(2h_A-i\lambda) \Gamma(2h_B-i\lambda)\\
    &\hspace{5mm} \int_{-\infty}^{\infty} \mathrm{d}t_1 \cdots \mathrm{d}t_4 \, f_1(t_1) f_2(t_2) f_3(t_3) f_4(t_4) e^{h_A(t_1+t_3)} e^{-h_B(t_2+t_4)} \nonumber \\
    &\hspace{15mm} \times (e^{t_1} + e^{-\pi +(\epsilon_3-\epsilon_1)} e^{t_3})^{-2h_A+i\lambda} (-e^{-t_4} + e^{-\pi +(\epsilon_4-\epsilon_2)} e^{-t_2})^{-2h_B+i\lambda} \nonumber \\
    &= \frac{(-1)^{h_A+h_B}}{(2\pi)^2\Gamma(2h_A)\Gamma(2h_B)} \int_{-\infty}^{\infty} \dbar\lambda \, F_x(\lambda) e^{\lambda(-\pi+\epsilon)} \Gamma(2h_A-i\lambda) \Gamma(2h_B-i\lambda) \label{eq:2D-G1-lambda-smeared-step}\\
    &\hspace{5mm} \int_{-\infty}^{\infty} \mathrm{d}U_-\, g_1(U_-) (U_--i\epsilon)^{-2h_A+i\lambda} \int_{-\infty}^{\infty} \mathrm{d}V_- \, g_2(V_-)  (V_- -i\epsilon)^{-2h_B+i\lambda} \,. \nonumber
\end{align}
The first equality is by definition, and the second follows by writing the smearing functions using their Fourier representations for $\epsilon_4>\epsilon_3>\epsilon_2>\epsilon_1$. In the third equality we change variables from $\lambda_1$ to $\lambda := -\lambda_1-\lambda_3$, and perform the integrals over $\lambda_4$, $\lambda_3$, $\lambda_2$. More details are given in the real-time Fourier calculation of Appendix \ref{sec:2Dcalcs-OTOCs-exact}. In the fourth equality we switch to Kruskal coordinates, defining:
\begin{align}
    U_{\pm} &:= U_1 \pm U_3 = e^{t_1} \pm e^{t_3}\,, \qquad V_{\pm} := V_2 \pm  V_4 = (-e^{-t_2}) \pm (-e^{-t_4}) \,, \\
    g_1(U_-) &= 2\int_0^{\infty} \mathrm{d}U_+\, (U_1 U_3)^{h_A-1} f_1(t_1(U_+, U_-)) f_3(t_3(U_+, U_-)) \,, \nonumber \\
    g_2(V_-) &= 2\int_0^{\infty} \mathrm{d}V_+\, (V_2 V_4)^{h_B-1} f_2(t_2(V_+, V_-)) f_4(t_4(V_+, V_-)) \,, \label{eq:2D-g1g2-testfns}
\end{align}
where $f_1, f_3$ vanish identically when $U_1, U_3<0$, and $f_2, f_4$ vanish for $V_2, V_4>0$. Note that $g_1$ and $g_2$ are smooth functions of $U_-$ and $V_-$, with compact support in $\mathbb{R}$, and so are also valid test functions. Next we need the following result.

\begin{lemma} \label{lemma:Fxlambda-integral}
    Let $h(\lambda)$ be $\lambda$-analytic in the upper-half plane $\mathbb{H}_+$. Then, the following integral permits an asymptotic expansion in the $x\to 0$ limit:
    \begin{align}
        \int_{-\infty}^{\infty} \dbar\lambda\, F_x(\lambda) h(\lambda) = \sum_{n=0}^N \frac{(ix)^n}{\Gamma(1+n)} h(in) + \mathcal{O}(|x|^{N+1}), \qquad \forall\, N\in\mathbb{Z}_+ \,, \label{eq:Fx-xto0-identity-orderN}
    \end{align}
    provided that the integral of $F_x(\lambda)h(\lambda)$ on the vertical line segments $C^{\pm}_N(\Lambda) := \pm\Lambda + i[0,N]$ vanish in the $\Lambda\to\infty$ limit. 
    \begin{proof}
        We can use the $i\epsilon$-prescription in $F_x(\lambda)$ to write the integral as:
        \begin{align}
            \int_{-\infty}^{\infty} \dbar \lambda\, F_x(\lambda) h(\lambda) = \int_{-\infty-i\epsilon}^{\infty-i\epsilon} \dbar\lambda \, h(\lambda) \Gamma(i\lambda ) (-ix)^{-i\lambda} \,.
        \end{align}
        Since $h$ is analytic in $\mathbb{H}_+$, all poles of the integrand come from the $\Gamma$-function, at imaginary integers $\lambda=in$. The $\lambda$-contour $C := -i\epsilon + \mathbb{R}$ lies below all poles. We now employ a `contour-pulling' trick dragging the initial contour upwards in steps to $C_N := + i(N-\epsilon) + \mathbb{R}$ for $N\in\mathbb{Z}_+$, picking up residues at each pole. In the limit where $x\to 0$, the absolute value of the integrand on $C_{N+1}$ scales as $|x|^{N+1-\epsilon}$. There is no contribution from the vertical segments at $\mathrm{Re}(\lambda) \to \pm\infty$, since we have assumed the integral vanishes on $C^{\pm}_N(\Lambda\to\pm\infty)$. The residue calculation therefore yields:
        \begin{align}
            \int_{-\infty}^{\infty} \dbar\lambda\, F_x(\lambda) h(\lambda) &= \frac{2\pi i}{2\pi} \sum_{n=0}^N \mathop{\mathrm{Res}}\limits_{\lambda\to in} h(\lambda) \Gamma(i\lambda) (-ix)^{-i\lambda} + \mathcal{O}(|x|^{N+1}) \\
            &= \sum_{n=0}^N \frac{(ix)^{n}}{\Gamma(1+n)} h(in) + \mathcal{O}(|x|^{N+1}) \,.
        \end{align}
        The partial sums need not converge. For example, the partial sums diverge when $h(\lambda) = e^{-\lambda^2}$.
    \end{proof}
\end{lemma}

\noindent We now return to (\ref{eq:2D-G1-lambda-smeared-step}). In order to apply Lemma \ref{lemma:Fxlambda-integral}, we need that the function multiplying $F_x(\lambda)$ is analytic for $\lambda\in\mathbb{H}_+$, and the contribution on the vertical segments vanish at infinity. We justify the latter for a generic class of test functions in Appendix \ref{sec:G1xto0-detail}. To show the former, we use the statement in \textsection 3.6 of \cite{Gelfand:1964} that $(z -i\epsilon)^{i\lambda}$ is an entire distribution-valued function of $\lambda$, meaning it becomes an analytic function in $\lambda$ when smeared against a test function in $z$. Since $g_1,g_2 \in C^{\infty}_c$, the $U_-$ and $V_-$ integrals of (\ref{eq:2D-G1-lambda-smeared-step}) are $\lambda$-entire. Furthermore, the $\Gamma$-functions on the first line have only poles in $\mathbb{H}_-$, so the condition is satisfied. Applying (\ref{eq:Fx-xto0-identity-orderN}) yields:
\begin{align}
    G_1(\tilde{f}) &= \frac{(-1)^{h_A+h_B}}{(2\pi)^2} \sum_{n=0}^N (-ix)^n \frac{(2h_A)_n(2h_B)_n}{\Gamma(1+n)} \int_{-\infty}^{\infty} \frac{\mathrm{d}U_-\,g_1(U_-)}{(U_--i\epsilon)^{2h_A+n}} \int_{-\infty}^{\infty} \frac{\mathrm{d}V_-\, g_2(V_-)}{(V_--i\epsilon)^{2h_A+n}} + \mathcal{O}(x^{N+1})\\
    &= \int_{-\infty}^{\infty} \mathrm{d}t_1\cdots\mathrm{d}t_4\, f_1(t_1)f_2(t_2)f_3(t_3)f_4(t_4) \frac{(-1)^{h_A+h_B}}{(2\pi)^2} \frac{e^{h_A(t_1+t_3)}e^{h_B(t_2+t_4)}} {(U_{13}-i\epsilon)^{2h_A} (V_{24}-i\epsilon)^{2h_B}} \nonumber \\
    &\hspace{10mm} \times \Big[ \sum_{n=0}^N  \Big(\frac{-ix}{(U_{13}-i\epsilon)(V_{24}-i\epsilon)}\Big)^n \frac{(2h_A)_n (2h_B)_n}{\Gamma(1+n)} + \mathcal{O}(x^{N+1}) \Big] \\
    &= \int_{-\infty}^{\infty} \mathrm{d}t_1\cdots\mathrm{d}t_4\, f_1(t_1)f_2(t_2)f_3(t_3)f_4(t_4) \big[ G_1^{\mathrm{pert},N}(t_1,t_2,t_3,t_4) + \mathcal{O}(x^{N+1}) \big] \,.
\end{align}
In the second equality we have written the sum in terms of the initial smearing functions $f_i(t_i)$. In the third we use (\ref{eq:2D-G1-UV-Spert}), stripping off the conformal factors in going from Kruskal to static patch time correlators. Since this holds for any positive integer $N$, we conclude:
\begin{proposition} \label{prop:G1-smeared-xto0}
    The smeared exact correlators $G_1$ and $G_2$ permit a well-defined perturbative expansion in $x$, agreeing to all orders with $G_1^{\mathrm{pert}}$ and $G_2^{\mathrm{pert}}$. This holds for either sign of $x$. In the strict $x\to 0$ limit only the leading $x^0$ term survives, yielding the undressed $G_1^{\mathrm{QFT}}$ and $G_2^{\mathrm{QFT}}$.
\end{proposition}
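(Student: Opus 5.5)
The argument for $G_1$ is essentially the chain of equalities just preceding the statement, and I would organize it as follows. Start from the frequency-domain expression \eqref{eq:2D-G1-lambda-explicit}, smear against $\tilde f_i(\lambda_i)$ (Fourier transforms of $f_i\in C^\infty_c(\mathbb{R})$), and perform the $\lambda_2,\lambda_3,\lambda_4$ integrals so that $x$ appears only through a single factor $F_x(\lambda)$ with $\lambda=-\lambda_1-\lambda_3$. This yields \eqref{eq:2D-G1-lambda-smeared-step}. The remaining integrand multiplying $F_x(\lambda)$ is
\[
h(\lambda)=e^{\lambda(-\pi+\epsilon)}\Gamma(2h_A-i\lambda)\Gamma(2h_B-i\lambda)\,\mathcal{G}_1(\lambda)\,\mathcal{G}_2(\lambda),
\]
where $\mathcal{G}_{1}(\lambda)$ and $\mathcal{G}_{2}(\lambda)$ denote the smeared distributions $(U_--i\epsilon)^{-2h_A+i\lambda}$ and $(V_--i\epsilon)^{-2h_B+i\lambda}$ paired with $g_1$ and $g_2$.

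The first step is to check the hypotheses of Lemma \ref{lemma:Fxlambda-integral}. For analyticity on $\mathbb{H}_+$, I would invoke the Gelfand–Shilov result that $(z-i\epsilon)^{i\lambda}$ is an entire distribution-valued function. This makes $\mathcal{G}_1,\mathcal{G}_2$ entire, since $g_1,g_2\in C^\infty_c$. The Gamma factors have poles only at $\lambda=-i(2h+k)$, which lie in $\mathbb{H}_-$. The vertical-segment condition is the main obstacle. I would bound $|F_x(\lambda)h(\lambda)|$ on $\pm\Lambda+i[0,N]$ using Stirling. The three Gamma factors together decay like $e^{-3\pi|\mathrm{Re}\,\lambda|/2}$ times a power, while $e^{-\pi\lambda}$ and $e^{-\frac{\pi\lambda}{2}\mathrm{sign}\,x}$ grow at most like $e^{3\pi|\mathrm{Re}\,\lambda|/2}$ in one direction. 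The decay must therefore come from $\mathcal{G}_{1}\mathcal{G}_{2}$. For these, I would integrate by parts in $U_-$ repeatedly, or equivalently write them as Mellin transforms of smooth functions on each half-line. The $-i\epsilon$ prescription combines the two half-lines with relative weight $e^{\pm\pi\lambda}$, and I expect this to produce rapid (faster than any power) decay along the direction where the exponential prefactors grow. This is the delicate sign bookkeeping, and I would do it case by case in $\mathrm{sign}(x)$ and $\mathrm{sign}(\mathrm{Re}\,\lambda)$, deferring it to an appendix as the paper does.

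Given the hypotheses, Lemma \ref{lemma:Fxlambda-integral} gives $\sum_{n\le N}\frac{(ix)^n}{n!}h(in)+\mathcal{O}(|x|^{N+1})$. Evaluating at $\lambda=in$ gives the factor $\Gamma(2h_A+n)\Gamma(2h_B+n)/\Gamma(2h_A)\Gamma(2h_B)=(2h_A)_n(2h_B)_n$. It also turns the distributions into $(U_--i\epsilon)^{-2h_A-n}$ and $(V_--i\epsilon)^{-2h_B-n}$, and $e^{in(-\pi)}=(-1)^n$ turns $(ix)^n$ into $(-ix)^n$. Undoing the change of variables \eqref{eq:2D-g1g2-testfns} back to $t_i$ reproduces the smeared $G_1^{\mathrm{pert},N}$ of \eqref{eq:2D-G1-UV-Spert} term by term. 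Since $N$ is arbitrary, this is agreement to all orders.

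For $G_2$, the expression \eqref{eq:2D-G2-lambda-explicit} differs only by $F_{-x}$ and by the exponential $e^{\pi(\lambda_2+\lambda_3)}$. Repeating the same steps, with the contour shifts $\epsilon_i$ ordered appropriately for the $G_2$ operator ordering, gives the replacements $U_{13}-i\epsilon\to-U_{13}-i\epsilon$ and $x\to-x$. Nothing in the argument used $\mathrm{sign}(x)$ except the vertical-segment estimate, which is treated separately for each sign. The final statement follows by taking $N=0$: the error is $\mathcal{O}(|x|)$, so as $x\to0$ only the $n=0$ term survives, which is the smeared $G_{1,2}^{\mathrm{QFT}}$.
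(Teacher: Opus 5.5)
Your proposal follows the paper's proof essentially step for step. You reduce the smeared frequency-space correlator to a single $F_x(\lambda)$ integral as in \eqref{eq:2D-G1-lambda-smeared-step}, obtain analyticity on $\mathbb{H}_+$ from Gelfand--Shilov, apply Lemma~\ref{lemma:Fxlambda-integral}, and read off $G_1^{\mathrm{pert},N}$ from the residues at $\lambda=in$, deferring the vertical-segment condition (the paper checks it only for bump functions, via hypergeometric-$U$ asymptotics). Your integration-by-parts idea actually settles that step for general test functions through the bound $|\mathcal{G}_j(\lambda)|\le C_k|\lambda|^{-k}\max(1,e^{\pi\,\mathrm{Re}\,\lambda})$ on the strip. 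Note, however, that the $\mathcal{G}_j$ grow exponentially as $\mathrm{Re}\,\lambda\to+\infty$, and for $x<0$ this growth exactly cancels the surplus Gamma decay, so that direction is also marginal and needs the $|\lambda|^{-k}$ factor, just as $\mathrm{Re}\,\lambda\to-\infty$ does for $x>0$.
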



\subsection{Analytic Properties of Correlators} \label{sec:2D-OTOC-analc}
We have previously restricted our study to correlation functions at real times with an appropriate $i\epsilon$-prescription, $t_i + i\epsilon_i$. Now, we analytically continue these coordinates into the complex plane, and devote this section to the study of the analytic properties of $G_1$ and $G_2$. In particular, we wish to see whether the KMS condition and detailed balance condition are satisfied. We first review them.
\begin{enumerate}
    \item The \textit{KMS condition} asserts that $G_1(t_1, t_2, t_3, t_4)$ is analytic in the `KMS strip' $\mathrm{Im}\,t_1 \in (-2\pi i, 0]$ when all other times are real (up to $i\epsilon$'s). Furthermore, its value on the lower boundary of the strip is given by $G_2$,
    \begin{align}
        \Delta_{\mathrm{KMS}}(t_1,t_2,t_3,t_4) := G_1(t_1 - 2\pi i, t_2, t_3, t_4) - G_2(t_2, t_3, t_4, t_1) = 0\,, \qquad \forall\, t_i\in\mathbb{R}\,. \label{eq:DeltaKMS-t-def}
    \end{align}
    In Kruskal coordinates, this means that $G_1(U_1,V_2,U_3,V_4)$ lifts to a $U_1$ analytic function in the region $\arg(U_1) \in (-2\pi i, 0)$, with
    \begin{align}
        \Delta_{\mathrm{KMS}}(U_1,V_2,U_3,V_4) &:= G_1(e^{-2\pi i}U_1,V_2,U_3,V_4) - G_2(V_2, U_3, V_4, U_1) \nonumber \\
        &= 0, \qquad\qquad \forall\, U_1, U_3>0\,, \quad V_2, V_4<0\,. \label{eq:DeltaKMS-UV-def}
    \end{align}
    \item The \textit{detailed balance condition} states:
    \begin{align}
        \Delta_{\mathrm{db.}}(\lambda_1,\lambda_2,\lambda_3,\lambda_4) &:= e^{-2\pi\lambda_1} G_1(\lambda_1, \lambda_2, \lambda_3, \lambda_4) - G_2(\lambda_2, \lambda_3, \lambda_4, \lambda_1) = 0\,,\quad \forall\, \lambda_i \in \mathbb{R}\,. \label{eq:Deltadb-def}
    \end{align}
\end{enumerate}
When the KMS condition holds, the detailed balance condition follows from the Fourier transform of (\ref{eq:DeltaKMS-t-def}). However, when $\Delta_{\mathrm{KMS}}\neq 0$ the relation between them is more subtle. To avoid confusion, in this paper $\Delta_{\mathrm{KMS}}$ is defined only for correlators in the time domain, and $\Delta_{\mathrm{db.}}$ for correlators in the frequency domain.

First we discuss exact results at finite $x$. Both the KMS and detailed balance conditions are satisfied for $x>0$ and are violated for $x<0$. We derive the result for $\Delta_{\mathrm{KMS}}$ in two ways: abstractly using Tomita-Takesaki theory, and directly by studying the analytic properties of explicit correlators. Next we take the $x\to 0^-$ limit, which requires smearing. Although detailed balance is satisfied to all perturbative orders in $x$, the KMS condition is violated at order $x^0$. The discrepency is explained by the fact that the Fourier representation of $G_1$ converges in only half of the KMS strip.


\subsubsection{KMS Violation from AQFT} \label{sec:KMS-aQFT}
The KMS violation is seen most straightforwardly using Tomita-Takesaki theory. A quick review of algebraic QFT and the tools used here is provided in Appendix \ref{sec:aQFT-review}. The first part of this section applies to a general model of gravitational dressing. Then, we specialize to 2D.

\vspace{1em}

\noindent \textbf{General Model} \\
We wish to consider a gravitationally-dressed algebra $\cala$ accessible to a de Sitter observer at very early and late times. As discussed in Section \ref{sec:4dmodel}, $\cala$ is generated by the dressed fields $S_{\mathrm{grav}}^{1/2} \,a\, S_{\mathrm{grav}}^{-1/2}$ and $S_{\mathrm{grav}}^{-1/2} \,b\, S_{\mathrm{grav}}^{1/2}$ with $a \in \cala_{0,A}$, $b\in \cala_{0,B}$.\footnote{To avoid confusion with the Tomita operator, in this section we denote the gravitational S-matrix by $S_{\mathrm{grav}}$.} The Hilbert space is built around the tensor product $|\Omega\rangle := |\Omega_A\rangle \otimes |\Omega_B\rangle$, where $|\Omega_{A}\rangle$ and $|\Omega_{B}\rangle$ are the vacua for the $A$ and $B$ theories, cyclic and separating with respect to $\cala_{0,A}$ and $\cala_{0,B}$. The modular Hamiltonian acts on each algebra as static patch time translation.

For any correlator of three or fewer fields, the S-matrix acts trivially and the KMS condition follows directly from the KMS condition on the individual $A$ and $B$ theories. We thus consider: 
\begin{align}
    G_1 &:= \langle \tilde{a}_1 \,\tilde{b}_2 \,\tilde{a}_3 \,\tilde{b}_4 \rangle \,, \qquad G_2 := \langle \tilde{b}_2 \,\tilde{a}_3 \,\tilde{b}_4 \,\tilde{a}_1 \rangle \,, \qquad a_1, a_3 \in \cala_{0,A} \,, \quad b_2,  b_4 \in \cala_{0,B}\,. \label{eq:general-G-def}
\end{align}
We now compute:
\begin{align}
    \langle \tilde{a}_1(s-\tfrac{i}{2}) \,\tilde{b}_2 \,\tilde{a}_3 \,\tilde{b}_4 \rangle &= \langle a_1(s) \, \Delta_{\Omega_A}^{-1/2} \, b_2 \, S_{\mathrm{grav}} \, a_3 \, b_4\rangle \\
    &= \langle J_{\Omega_A} \,a_1(s)\, J_{\Omega_A} J_{\Omega_A} \Delta_{\Omega_A}^{-1/2} \, b_2\,  S_{\mathrm{grav}} \,a_3 \, S_{\mathrm{grav}}^{-1} \, b_4\rangle^{\ast} \nonumber \\
    &= \langle J_{\Omega_A} \,a_1(s) \,J_{\Omega_A} S_{\Omega_A}^{\dagger} \,b_2\, S_{\mathrm{grav}} \,a_3\, S_{\mathrm{grav}}^{-1} \, b_4\rangle^{\ast} \nonumber \\
    &= \langle (J_{\Omega_A} \,a_1(s)\, J_{\Omega_A})^{\dagger} \,b_2\, S_{\mathrm{grav}} \,a_3 \, S_{\mathrm{grav}}^{-1} \, b_4 \rangle \nonumber \\
    &= \langle b_2^{\dagger}| J_{\Omega_A} \,a_1(s)^{\dagger}\, J_{\Omega_A} S_{\mathrm{grav}} \,a_3\, S_{\mathrm{grav}}^{-1} | b_4 \rangle \,, \\
    \langle \tilde{b}_2 \,\tilde{a}_3 \,\tilde{b}_4 \,\tilde{a}_1(s+\tfrac{i}{2}) \rangle &= \langle b_2 \, a_3 \, S_{\mathrm{grav}}^{-1} \, b_4 \,\Delta_{\Omega_A}^{-1/2} \, a_1(s) \rangle \\
    &= \langle b_2 \,S_{\mathrm{grav}} \,a_3\, S_{\mathrm{grav}}^{-1} \,b_4\, \Delta_{\Omega_A}^{-1/2} J_{\Omega_A}J_{\Omega_A} \,a_1(s)\, J_{\Omega_A} \rangle \nonumber \\
    &= \langle b_2 \,S_{\mathrm{grav}} \, a_3 \, S_{\mathrm{grav}}^{-1} \, b_4 \, S_{\Omega_A} J_{\Omega_A} \, a_1(s) \, J_{\Omega_A} \rangle \nonumber \\
    &= \langle b_2 \, S_{\mathrm{grav}} \,a_3\, S_{\mathrm{grav}}^{-1} \,b_4\, (J_{\Omega_A} \,a_1(s) \,J_{\Omega_A})^{\dagger}  \rangle \nonumber \\
    &= \langle b_2^{\dagger} | S_{\mathrm{grav}} \,a_3 \, S_{\mathrm{grav}}^{-1} J_{\Omega_A} \, a_1(s)^{\dagger} \, J_{\Omega_A}  |b_4 \rangle \,.
\end{align}
Here $s$ denotes modular time, which differs from static patch time by a factor of $2\pi$. In the second set of equalities we have used that $J_{\Omega_A}$ is antiunitary. The third set of equalities uses (\ref{eq:Tomita-polar}), and the fourth uses that $\langle \Omega_A|a \, S_{\Omega_A}^{\dagger}|\psi\rangle = \langle\psi |a|\Omega_A\rangle$. Taking the difference of the two correlators yields
\begin{align}
    \langle \tilde{a}_1(s-\tfrac{i}{2}) \,\tilde{b}_2 \,\tilde{a}_3 \,\tilde{b}_4 \rangle - \langle \tilde{b}_2 \,\tilde{a}_3 \,\tilde{b}_4 \,\tilde{a}_1(s+\tfrac{i}{2}) \rangle &= \langle b_2^{\dagger} | \big[ J_{\Omega_A} a_1(s)^{\dagger} J_{\Omega_A} \,,\, S_{\mathrm{grav}} a_3 S_{\mathrm{grav}}^{-1} \big] | b_4 \rangle \,. \label{eq:DeltaKMS-aQFT-general}
\end{align}
The KMS condition is satisfied if the commutator in (\ref{eq:DeltaKMS-aQFT-general}) vanishes:
\begin{align}
    \big[ J_{\Omega_A} a_1(s)^{\dagger} J_{\Omega_A} \,,\, S_{\mathrm{grav}} a_3 S_{\mathrm{grav}}^{-1} \big] &= 0\,. \label{eq:aQFT-commutator-general}
\end{align}
The first operator in (\ref{eq:aQFT-commutator-general}) belongs to the commutant $\cala_{0,A}'$. By performing a similar analysis on general higher-point OTOCs, we may conclude that the KMS condition is universally obeyed by vacuum correlators whenever S-matrix conjugation preserves the $A$- and $B$-theory algebras,
\begin{align}
  S_{\mathrm{grav}}^{1/2} \cala_{0,A} S_{\mathrm{grav}}^{-1/2} &\subseteq \cala_{0,A} \,, \label{eq:340} \\
  S_{\mathrm{grav}}^{-1/2} \cala_{0,B} S_{\mathrm{grav}}^{1/2}& \subseteq \cala_{0,B} \,. \label{eq:341}
\end{align}
If \eqref{eq:340} and \eqref{eq:341} are obeyed, then $\cala$ has a nontrivial commutant. We may define
\begin{equation}
\label{eq:tildecommutant}
\begin{aligned}
    \tilde{a}^\prime &:= S_{\mathrm{grav}}^{-1/2}\, a^\prime\, S_{\mathrm{grav}}^{1/2}, \quad & a^\prime &\in \cala_{0,A}^\prime \,, \\
    \tilde{b}^\prime &:= S_{\mathrm{grav}}^{1/2}\, b^\prime\, S_{\mathrm{grav}}^{-1/2}, \quad & b^\prime &\in \cala_{0,B}^\prime \,,
\end{aligned}
\end{equation}
In the AdS setting, $\tilde{b}^{\prime}$ and $\tilde{a}^{\prime}$ denote early- and late-time operators on the other AdS boundary and generate the commutant of $\cala$ \cite{Penington:2025hrc}. In the dS model of section \ref{sec:4dmodel}, $\tilde{b}^{\prime}$ and $\tilde{a}^{\prime}$ are not in the commutant. They may be associated with an auxiliary, antipodal observer.

Note that \eqref{eq:aQFT-commutator-general}, \eqref{eq:340}, and \eqref{eq:341} hold to all orders in S-matrix perturbation theory. This is because the ANEC operators in \eqref{eq:action} generate infinitesimal null translations on the horizon. An infinitesimal translation applied to a field that is supported away from the bifurcation surface will keep the field away from this surface. Thus, to a given order in perturbation theory, the vacuum $\ket{\Omega}$ is KMS, and the observer's algebra has a commutant generated by fields in the neighborhood of an auxiliary antipodal observer.

The gravitational S-matrix for a general model can be written in terms of the ANEC operators $P_A(\Omega)$ and $P_B(\Omega)$ (defined in \eqref{eq:ANECdef}), but the angular smearing kernel makes their algebraic properties difficult to use. We therefore specialize to 2D, where $S_{\mathrm{grav}} = e^{ix P_AP_B}$. Equation (\ref{eq:DeltaKMS-aQFT-general}) becomes:
\begin{align}
    \langle \tilde{a}_1(s-\tfrac{i}{2})\, \tilde{b}_2\, \tilde{a}_3\, \tilde{b}_4 \rangle - \langle \tilde{b}_2\, \tilde{a}_3 \,\tilde{b}_4 \,\tilde{a}_1(s+\tfrac{i}{2}) \rangle &= \langle b_2^{\dagger} | \big[ J_{\Omega_A} a_1(s)^{\dagger} J_{\Omega_A} \,,\, e^{ix P_AP_B} a_3 e^{-ixP_AP_B} \big] | b_4 \rangle \,. \label{eq:DeltaKMS-aQFT-2D}
\end{align}
The second operator in the commutator is now more transparent. It involves a half-sided translation by $-P_A$, which generates an automorphism of $\cala_{0,A}$ whenever the half-sided modular time is positive \cite{Wiesbrock:1993,Leutheusser:2021frk,Faulkner:2020}. The $x\gtrless 0$ cases can now be distinguished.
\begin{enumerate}
    \item Case 1: $x>0$ (AdS) \cite{Penington:2025hrc}. Since $P_B$ is negative-definite we have $e^{ix P_A P_B} a_3 e^{-ix P_A P_B} \in \cala_{0,A}$, and the KMS condition is satisfied. In particular, if $a_3$ is a field smeared on the half-horizon $U>0$, then the half-sided flow shifts its support in the positive direction. The algebras $\cala_{0,A}, \cala_{0,B}, \cala_{0,A}', \cala_{0,B}'$ are each mapped to themselves under the gravitational dressing defined in \eqref{eq:tildeops} and \eqref{eq:tildecommutant}, and the commutant $\cala'$ is generated by the operators in \eqref{eq:tildecommutant}. Quantum gravity preserves the causal separation of the two boundaries.
    \item Case 2: $x<0$ (dS). The half-sided flow now shifts the support of $a_3$ in the negative direction. Since $\mathrm{Spec}\,P_B = (-\infty,0]$ this may not stay in $\cala_{0,A}$, and the commutator (\ref{eq:DeltaKMS-aQFT-2D}) picks up a KMS-violating contact term when its support intersects that of $J_{\Omega_A} a_1(s)^{\dagger} J_{\Omega_A}$. We explicitly compute this contact term in the next section. The moral is that the gravitational S-matrix sends $\cala_{0,A}$ into causal contact with $\cala_{0,A}'$ (similarly, $\cala_{0,B}$ with $\cala_{0,B}'$). Quantum gravity thus sends antipodal observers into causal contact. It is now unclear which operators lie in $\cala'$. In Section \ref{sec:alg-type}, we provide evidence that $\cala'$ is trivial for our 2D model. In Section \ref{sec:alg-geometric} we will see the change in causal structure more explicitly in terms of the backreacted geometry. 
\end{enumerate}
Finally, we note that the derivations of (\ref{eq:DeltaKMS-aQFT-general}) and (\ref{eq:DeltaKMS-aQFT-2D}) use only the Tomita-Takesaki theory of the $\cala_{0,A}$ algebra about the state $|\Omega_A\rangle$, through the existence and properties of the operators $(S_{\Omega_A}, \Delta_{\Omega_A}, J_{\Omega_A})$. We have not assumed anything about the modular theory of the full algebra $\cala$. This is important, as in Section \ref{sec:alg-type} we will show that for $x<0$ the Tomita operator for $\cala$ does not exist.

\vspace{1em}

\noindent \textbf{2D Model} \\
We now consider exact calculations in our 2D model. It is easiest to work in Kruskal coordinates. To check if the KMS condition holds, we need to compare $G_2$ (\ref{eq:2D-G2-UV-def}) to $G_1$ (\ref{eq:2D-G1-UV-def}) analytically continued by $U_1 \to e^{-2\pi i}U_1$. We will carry the analytic continuation out in two steps, going from $U_1 \to e^{-i\pi}U_1 \to e^{-2\pi i}U_1$.

We make use of the fact that $\phi^A(U)|\Omega\rangle$ is analytic in the upper-half plane.\footnote{More precisely, this is a statement about correlators $\langle \Psi|\phi^A(U)|\Omega\rangle$ for a dense set of vectors $|\Psi\rangle$ generated by acting smeared fields acting on the vacuum. Since the other fields in $G_1$ and $G_2$ are understood to be smeared, this applies to our setting.} This is justified by writing $\langle \Psi | \phi^A(U) |\Omega\rangle = \langle \Psi | e^{-iP_U U} \phi^A(0) e^{iP_U U} |\Omega\rangle =  \langle \Psi | e^{-iP_U U} \phi^A(0) |\Omega\rangle$; since $\mathrm{Spec}\,P_U<0$, the right-hand side converges when we give $U$ a positive imaginary part. By the same argument, $\langle \Omega|\phi^A(U)$ is analytic in the lower-half plane. Since $U_1>0$, we can unambiguously make the analytic continuation $U_1 \to e^{-i\pi}U_1$ in $G_1$. This allows us to write:
\begin{align}
    G_1(e^{-i\pi}U_1, V_2, U_3, V_4) &= \langle \phi^A(U_1 e^{-i\pi}) \phi^B(V_2) e^{ix P_A P_B} \phi^A(U_3) \phi^B(V_4) \rangle \\
    &= \langle \phi^B(V_2) \phi^A(U_1 e^{-i\pi}) \phi^A(U_3-x P_B) \phi^B(V_4) \rangle \\
    &= \langle \phi^B(V_2) \phi^A(U_3-xP_B) \phi^B(V_4) \phi^A(e^{-i\pi}U_1) \rangle \nonumber \\
    &\hspace{5mm}+ \langle \phi^B(V_2) \big[\phi^A(U_1 e^{-i\pi})\,,\, \phi^A(U_3-x\hat{P}_B)\big] \phi^B(V_4)\rangle  \\
    &= \langle \phi^B(V_2) \phi^A(U_3) e^{-ix P_A P_B} \phi^B(V_4) \phi^A(-U_1) \rangle \nonumber \\
    &\hspace{5mm}+ \langle \phi^B(V_2) \big[\phi^A(U_1 e^{-i\pi}) \,,\, \phi^A(U_3-x P_B)\big] \phi^B(V_4)\rangle \,. \label{eq:2D-G1-ipianalc} 
\end{align}
The first term of (\ref{eq:2D-G1-ipianalc}) is an inner-product involving $\phi(-U_1)|\Omega\rangle$, and unambiguously permits the second continuation $-U_1 \to -U_1 e^{-i\pi} = U_1$ to yield $G_2$. If the second term can also be analytically continued this way, we have:
\begin{align}
    \Delta_{\mathrm{KMS}}(U_1, V_2, U_3, V_4) &:= G_1(e^{-2\pi i}U_1, V_2, U_3, V_4) - G_2(V_2, U_3, V_4, U_1) \label{eq:2D-DeltaKMS-UV-def} \\
    &= \langle \Omega| \phi^B(V_2) \big[\phi^A(-U_1)\,,\, \phi^A(U_3-x P_B)\big] \phi^B(V_4)|\Omega \rangle\,\big|_{U_1\to e^{-i\pi}U_1} \,. \label{eq:2D-DeltaKMS-UV-commutator}
\end{align}
To complete the derivation of $\Delta_{\mathrm{KMS}}$, we still need to show that the second part of the analytic continuation in (\ref{eq:2D-DeltaKMS-UV-commutator}) is well-defined. We demonstrate this by explicit evaluation, using the following distributional identity.
\begin{align}
    \langle \Omega_A| [\phi^A(U), \phi^A(U')] |\Omega_A\rangle &= (-1)^{h_A} \Big[\frac{1}{(U-U'-i\epsilon)^{2h_A}} - \frac{1}{(-U+U'-i\epsilon)^{2h_A}} \Big] \\
    &= \frac{-i (-1)^{h_A} }{\Gamma(2h_A)} \delta^{(2h_A-1)}(U-U') \,.\label{eq:field-1D-commutator-identity}
\end{align}
We substitute this into (\ref{eq:2D-DeltaKMS-UV-commutator}) prior to the continuation:
\begin{align}
    &\langle \Omega| \phi^B(V_2) \big[\phi^A(-U_1), \phi^A(U_3-x P_B)\big] \phi^B(V_4)|\Omega \rangle \nonumber \\
    &\hspace{5mm}= \frac{-i(-1)^{h_A}}{\Gamma(2h_A)} \int_{-\infty}^0 \frac{\dbar P_B}{-P_B} \delta^{(2h_A-1)}(-U_1-U_3+xP_B)  \langle \phi^B(V_2)|P_B\rangle \langle P_B| \phi^B(V_4)\rangle \\
    &\hspace{5mm}= \theta(-x) \frac{i (-1)^{h_A}}{2\pi \Gamma(2h_A)\Gamma(2h_B)} (-x)^{-2h_B} \partial_{U_3}^{2h_A-1} \Big[ (U_1+U_3)^{2h_B-1} e^{\frac{i}{x}(U_1+U_3+i\epsilon)(V_2-V_4-i\epsilon)} \Big] \\
    &\hspace{5mm}= \frac{\theta(-x)}{2\pi x} \frac{(-1)^{h_A+h_B}}{\Gamma(2h_A)\Gamma(2h_B)} \Big[ \partial_{U_3}^{2h_A-1} \partial_{V_2}^{2h_B-1} e^{\frac{i}{x}(U_1+U_3)(V_2-V_4-i\epsilon)} \Big] \,. \label{eq:2D-AtApt-commutator-calc}
\end{align}
Note that $P_B$ is an operator in the first line, but a dummy integration variable in the second after inserting a complete basis. The function (\ref{eq:2D-AtApt-commutator-calc}) is $U_1$-entire, permitting the second analytic continuation $U_1 \to e^{-i\pi}U_1$. Expanding the derivative using the Leibniz rule yields:
\begin{align}
    \Delta_{\mathrm{KMS}}(U_1,V_2,U_3,V_4) &= \frac{\theta(-x)}{2\pi x} \frac{(-1)^{h_A+h_B}}{\Gamma(2h_A)\Gamma(2h_B)} \Big[ \partial_{U_1}^{2h_A-1} \partial_{V_2}^{2h_B-1} e^{-\frac{i}{x}(U_{13}+i\epsilon)(V_{24}-i\epsilon)} \Big] \label{eq:2D-DeltaKMS-UV-explicit-0} \\
    &= \theta(-x) \frac{i (-1)^{h_A}}{2\pi \Gamma(2h_A)\Gamma(2h_B)} x^{-2h_B} \partial_{U_1}^{2h_A-1} \Big[ U_{13}^{2h_B-1} e^{-\frac{i}{x}(U_{13}+i\epsilon)(V_{24}-i\epsilon)} \Big] \label{eq:2D-DeltaKMS-UV-explicit-1} \\
    &= \theta(-x) \frac{i (-1)^{h_-}}{2\pi} \times
    \begin{dcases}
        (U_1-U_3)^{2h_+-2h_-}, & h_B\geq h_A \\
        (V_2-V_4)^{2h_+-2h_-}, & h_A\geq h_B
    \end{dcases} \label{eq:2D-DeltaKMS-UV-explicit-2} \\
    &\hspace{5mm} \times e^{-\frac{i}{x}(U_{13}+i\epsilon)(V_{24}-i\epsilon)} x^{-2h_+} \sum_{k=0}^{2h_--1} \frac{\big(-\frac{i}{x}(U_{13}V_{24})\,\big)^k}{\Gamma(1+k)\Gamma(2h_--k)\Gamma(1+2h_+-2h_-+k)} \,, \nonumber
\end{align}
where $h_+ := \max(h_A, h_B)$, and $h_- := \min(h_A, h_B)$.

We note that the leading exponential phase does not depend on the field scaling dimensions $(h_A, h_B)$, and is therefore model independent. In Section \ref{sec:3D-KMS} we find the same exponential phase for the 3D model (\ref{eq:3D-DeltaKMS}).

Aside from the $x\to 0^-$ limit, which we discuss in Section \ref{sec:2D-OTOC-analc-xto0}, we can also take $x\to -\infty$. This vanishes, with leading term
\begin{align}
    \lim\limits_{x\to-\infty}\Delta_{\mathrm{KMS}}(U_{13}, V_{24}) &= \frac{i(-1)^{h_-} x^{-2h_+}}{\Gamma(2h_-)\Gamma(1+2h_+-2h_-)} \times
    \begin{dcases}
        U_{13}^{2h_+-2h_-}, & h_B\geq h_A \\
        V_{24}^{2h_+-2h_-}, & h_A\geq h_B
    \end{dcases}\,. \label{eq:2D-DeltaKMS-UV-xtominfty}
\end{align}


\subsubsection{Exact Correlators}

\noindent \textbf{Position Space} \\
We now study the analytic properties of the exact correlation functions $G_1$ and $G_2$. For convenience we reproduce (\ref{eq:2D-G1-UV-explicit}) and (\ref{eq:2D-G2-UV-explicit}):
\begin{align}
    G_1(U_1, V_2, U_3, V_4) &= \langle \tilde{\phi}^A(U_1) \tilde{\phi}^B(V_2) \tilde{\phi}^A(U_3) \tilde{\phi}^B(V_4) \rangle \nonumber \\
    &= \frac{(-1)^{h_A}}{(2\pi)^2} x^{-2h_B} (U_{13})^{2h_B-2h_A} U^{\sigma_1} \Big(2h_B, 1-2h_A+2h_B; -\frac{i}{x} U_{13} V_{24}\Big)\,, \nonumber \\
    \sigma_1 &=
    \begin{cases}
        +\,, & x<0 \cap U_{13}<0 \cap V_{24}<0 \\
        0\,, & (x>0 \cap (U_{13}<0 \cup V_{24}<0)) \cup (x<0 \cap (U_{13}>0 \cup V_{24}>0)) \\
        -\,, & x>0 \cap U_{13}>0 \cap V_{24}>0
    \end{cases} \,, \nonumber
\end{align}
\begin{align}
    G_2(V_2, U_3, V_4, U_1) &= \langle \tilde{\phi}^B(V_2) \tilde{\phi}^A(U_3) \tilde{\phi}^B(V_4) \tilde{\phi}^A(U_1) \rangle \nonumber \\
    &= \frac{(-1)^{h_A}}{(2\pi)^2} x^{-2h_B} (U_{13})^{2h_B-2h_A} U^{\sigma_2}\Big(2h_B, 1-2h_A+2h_B; -\frac{i}{x} U_{13} V_{24} \Big) \,, \nonumber \\
    \sigma_2 &=
    \begin{cases}
        +\,, & x>0 \cap U_{13}>0 \cap V_{24}<0 \\
        0\,, & (x>0 \cap (U_{13}<0 \cup V_{24}>0)) \cup (x<0 \cap (U_{13}>0 \cup V_{24}<0)) \\
        -\,, & x<0 \cap U_{13}<0 \cap V_{24}>0
    \end{cases} \,. \nonumber
\end{align}
Although $G_1$ and $G_2$ were computed for $U_1,U_3>0$ and $V_2,V_4<0$, they may be analytically continued to arbitrary complex values in $\mathbb{C}^4$, provided that neither pair $(U_1, U_3)$, $(V_2, V_4)$ are coincident. The cost of doing so is that the correlators will be multivalued, due to the different sheets of the hypergeometric-$U$. Interestingly, the OTOCs for opposite signs $x>0$ and $x<0$ have the same domain of analyticity.

Now we check the KMS condition, for fixed $U_3>0$ and $V_2, V_4<0$. The multivalued OTOC can be thought of as a Riemann surface in $U_1$ with an infinite number of sheets. Analytically continuing $U_1\to e^{-2\pi i}U_1$ crosses the hypergeometric-$U$ branch cut if and only if $U_{13}>0$. For $x>0$ we continue to an upper sheet, and for $x<0$ we cross to a lower sheet. This yields:
\begin{align}
    G_1(e^{-2\pi i}U_1, V_2, U_3, V_4) &= \frac{(-1)^{h_A}}{(2\pi)^2} x^{-2h_B} (U_{13}+i\epsilon)^{2h_B-2h_A} U^{\sigma_1'} \Big(2h_B, 1-2h_A+2h_B; -\frac{i}{x}U_{13} V_{24}\Big)\,, \nonumber \\
    \sigma_1' &=
    \begin{cases}
        +\,, & (x>0 \cap U_{13}>0 \cap V_{24}<0) \cup (x<0 \cap (U_{13}>0 \cup V_{24}<0)) \\
        0\,, & (x>0 \cap (U_{13}<0 \cup V_{24}
        >0)) \cup (x<0 \cap U_{13}<0 \cap V_{24}>0))
    \end{cases} \,. \label{eq:2D-G1-UV-explicit-analc}
\end{align}
The KMS violation is just the difference $G_1(e^{-2\pi i}U_1) - G_2(U_1)$. For $x>0$ the correlators always live on the same sheet and agree, thus $\Delta_{\mathrm{KMS}} \equiv 0$. For $x<0$ the difference is exactly a monodromy, which may be computed using the identity:
\begin{align}
    U(a,b; z e^{2\pi im}) - e^{-2\pi i b m} U(a,b;z) &= \frac{2\pi i e^{-i \pi b m} \sin(\pi b m)}{\Gamma(1+a-b) \sin(\pi b)} {}_1\tilde{F}_1(a,b;z)\,, \qquad m\in\mathbb{Z} \,. \label{eq:hypU-monodromy}
\end{align}
Putting everything together yields:
\begin{align}
    \Delta_{\mathrm{KMS}} &= G_1(e^{-2\pi i}U_1, V_2, U_3, V_4) - G_2(V_2, U_3, V_4,U_1) \\
    &= \theta(-x) \frac{i(-1)^{h_A}}{2\pi\Gamma(2h_A)} x^{-2h_B} (U_{13}+i\epsilon)^{2h_B-2h_A} {}_1\tilde{F}_1\Big(2h_B, 1-2h_A+2h_B;  -\frac{i}{x}U_{13}V_{24}\Big) \label{eq:2D-DeltaKMS-UV-explicit-3} \\
    &= \theta(-x) \frac{i (-1)^{h_-}}{2\pi} \times
    \begin{dcases}
        (U_{13}+i\epsilon)^{2h_+-2h_-}, & h_B\geq h_A \\
        (V_{24}-i\epsilon)^{2h_+-2h_-}, & h_A\geq h_B
    \end{dcases} \nonumber \\
    &\hspace{5mm} \times e^{-\frac{i}{x}U_{13}V_{24}} x^{-2h_+} \sum_{k=0}^{2h_--1} \frac{\big(-\frac{i}{x}U_{13}V_{24}\big)^k}{\Gamma(1+k)\Gamma(2h_--k)\Gamma(1+2h_+-2h_-+k)}\,.
\end{align}
This precisely reproduces (\ref{eq:2D-DeltaKMS-UV-explicit-2}). The moral is that $G_1(e^{-2\pi i}U_1,V_2,U_3,V_4)$ and $G_2(V_2,U_3,V_4,U_1)$ take values on the same multivalued but meromorphic function. For $x>0$ the correlators live on the same sheet and are identical, but for $x<0$ the analytically continued $G_1$ lives one sheet higher. Finally, we observe that $\Delta_{\mathrm{KMS}}$ is $U_1$-entire, even when coincident with $U_3$.

\vspace{1em}

\noindent \textbf{Frequency Space} \\
In the frequency domain, we use (\ref{eq:2D-G1-lambda-explicit})--(\ref{eq:2D-G2-lambda-explicit}) to check the detailed balance condition:
\begin{align}
    \Delta_{\mathrm{db.}}(\lambda_i) &= e^{-2\pi\lambda_1}G_1(\lambda_1,\lambda_2,\lambda_3,\lambda_4) - G_2(\lambda_2,\lambda_3,\lambda_4,\lambda_1)  \\
    &= \theta(-x) \, 2\pi\delta(\lambda_1+\lambda_2+\lambda_3+\lambda_4) \, C_{\lambda_1\lambda_2\lambda_3\lambda_4}  e^{\pi(\lambda_2+\lambda_3)} 2\sinh(\pi(-\lambda_1-\lambda_3)) F_x(-\lambda_1-\lambda_3) \,, \label{eq:2D-deltaKMS-lambda-explicit} \\
    C_{\lambda_1\lambda_2\lambda_3\lambda_4} &= \frac{e^{-\frac{\pi}{2}(\lambda_1+\lambda_2+\lambda_3+\lambda_4)}}{(2\pi)^2 \Gamma(2h_A)\Gamma(2h_B)}  \Gamma(h_A+i\lambda_1) \Gamma(h_B-i\lambda_2) \Gamma(h_A+i\lambda_3) \Gamma(h_B-i\lambda_4) \,. \nonumber
\end{align}
This vanishes identically for $x>0$, and is non-zero for $x<0$. We note that the only $x$-dependence is contained in the function $F_x(-\lambda_1-\lambda_3)$, which was defined in \eqref{eq:Fx-function}.

\vspace{1em}

Next we turn our attention to the relation between the analytically continued correlators in the time and frequency domains. In particular, are they always Fourier conjugate?
\begin{align}
    \mathcal{F}^{-1}_{\lambda_i\to t_i+i\epsilon_i} \big[e^{-2\pi \lambda_1 y} G_1(\lambda_1,\lambda_2,\lambda_3,\lambda_4) \big] \stackrel{?}{=} G_1(t_1-2\pi i y, t_2, t_3, t_4) \,, \qquad y\in \mathbb{R}\,. \label{eq:2D-G1-Fourier-analc}
\end{align}
Here all $t_i$'s are all understood to be real, and $y$ parameterizes the imaginary part of the first time coordinate. The left-hand side is computed in Appendix \ref{sec:2Dcalcs-OTOCs-exact}. We find that (\ref{eq:2D-G1-Fourier-analc}) holds whenever all Fourier integrals are convergent, but the region of convergence in $y$ differs depending on the sign of $x$. We restrict the current discussion to the real-time $G_1$ prescription $\epsilon_4>\epsilon_3>\epsilon_2>\epsilon_1$ unless otherwise specified, but more general prescriptions are studied in the Appendix \ref{sec:2Dcalcs-OTOCs-exact}.
\begin{enumerate}
    \item Case 1: $x>0$. For $y\in [0,1)$, the $G_1$ Fourier integrals converge for all real $t_i$'s. Outside this region, the integrals diverge for any real $t_i$'s. The Fourier representation of $G_1$ is therefore well-behaved (i.e. always convergent) on the KMS strip $\mathrm{Im}(t_1-2\pi iy) \in (-2\pi,0]$, and this strip is maximal. Convergence at $y=1$ for all $t_i$'s is guaranteed if we use the $i\epsilon$-prescription for $G_2$.
    \item Case 2: $x<0$. For $y\in [0,1/2)$, the $G_1$ Fourier integrals converge for all real $t_i$'s. For $y \notin [0,1)$, the integrals diverge for any real $t_i$'s. When $V_{24}>0$, we additionally have convergence for $y\in [1/2,1)$. However, when $V_{24}<0$, the integrals diverge for any $y\in [1/2,1)$.  The maximal strip $y\in [0,1/2)$ where the Fourier representation of $G_1$ is well-behaved constitutes only half of the KMS strip. Convergence at $y=1/2$ for all $t_i$'s requires a different $i\epsilon$-prescription.
\end{enumerate}
There is a similar computation involving $G_2(t_2, t_3, t_4, t_1-2\pi i y)$, with prescription $\epsilon_1>\epsilon_4>\epsilon_3>\epsilon_2$. The Fourier representation is well-behaved on $y\in (-1, 0]$ for $x>0$, and $y\in (-1/2, 0]$ for $x<0$.

We are now in a position to comment on the relation between detailed balance and the KMS condition. When the inverse Fourier transform $\mathcal{F}^{-1}[e^{-2\pi y\lambda_1}G_1(\lambda_i)]$ is convergent for all real times at $y=1$ (for some regulator choice), then $\mathcal{F}^{-1}[\Delta_{\mathrm{db.}}(\lambda_i)$] is also convergent. In this case there are no subtleties, and we write
\begin{align}
    \mathcal{F}^{-1} \big[\Delta_{\mathrm{db.}}(\lambda_i)\big] = \Delta_{\mathrm{KMS}}(t_i)\,.
\end{align}
This is the case for $x>0$, where the equality becomes $0=0$. Next consider the $x<0$ case, where $\mathcal{F}^{-1}[e^{-2\pi y\lambda_1}G_1(\lambda_i)]$ is only convergent (for some regulator choice) on $y\in [0,1/2]$, and $\mathcal{F}^{-1}[e^{-2\pi y\lambda_1}G_2(\lambda_i)]$ on $y\in[-1/2,0]$. Here, we can only say
\begin{align}
    \mathcal{F}^{-1} \big[e^{\pi\lambda_1} \Delta_{\mathrm{db.}}(\lambda_i) \big] &= \mathcal{F}^{-1} \big[e^{-\pi\lambda_1} G_1(\lambda_i) - e^{\pi\lambda_1} G_2(\lambda_i)\big] \\
    &= G_1(t_1-i\pi, t_2,t_3, t_4) - G_2(t_2,t_3,t_4,t_1+i\pi) \,.
\end{align}
To compute the KMS violation, we would need to further analytically continue the right-hand side by $t_1 \to t_1 -i\pi$; this is precisely the final step (\ref{eq:2D-DeltaKMS-UV-explicit-0}) of the Tomita-Takesaki derivation of $\Delta_{\mathrm{KMS}}$. This analytic continuation does not commute with the $\lambda_1$-integral from the inverse Fourier transform, because analytically continuing first makes the integral diverge. Consequently, $\Delta_{\mathrm{db.}}$ and $\Delta_{\mathrm{KMS}}$ can have different properties. In particular, in Section \ref{sec:2D-OTOC-analc-xto0} we will see that the smeared $\Delta_{\mathrm{KMS}}$ can remain finite in the $x\to 0$ limit, while the smeared $\Delta_{\mathrm{db.}}$ is always non-perturbative.


\vspace{1em}

\noindent \textbf{Perturbative Correlators and Borel Resummation}

\noindent In contrast to the exact correlation functions, the correlators $G_{1,2}^{\mathrm{pert},N}(t_i)$ computed from S-matrix perturbation theory satisfy the KMS condition at each order $N$. This follows from analytically continuing (\ref{eq:2D-G1-UV-Spert}) and using the KMS condition of the undressed QFT correlators $G^{\mathrm{QFT}}_{1,2}$:
\begin{align}
    &G_1^{\mathrm{pert},N}(e^{-2\pi i}U_1, V_2, U_3, V_4) \\
    &\hspace{2mm} = G_1^{\mathrm{QFT}}(e^{-2\pi i}U_1, V_2, U_3, V_4) \Big[\sum_{n=0}^{N} \Big(\frac{-ix}{(e^{-2\pi i}U_1-U_3-i\epsilon)(V_{24}-i\epsilon)}\Big)^n \frac{(2h_A)_n (2h_B)_n}{\Gamma(1+n)} + \mathcal{O}(x^{N+1}) \Big] \nonumber \\
    &\hspace{2mm} = G_2^{\mathrm{QFT}}(V_2, U_3, V_4,U_1) \Big[\sum_{n=0}^{N} \Big(\frac{ix}{(-U_{13}-i\epsilon)(V_{24}-i\epsilon)}\Big)^n \frac{(2h_A)_n (2h_B)_n}{\Gamma(1+n)} + \mathcal{O}(x^{N+1}) \Big] \\
    &\hspace{2mm}= G_2^{\mathrm{pert},N}(V_2, U_3, V_4, U_1) \,.
\end{align}
Despite the limitations of S-matrix perturbation theory, it is possible to recover the exact correlators from $G_{1,2}^{\mathrm{pert}}$ through Borel resummation. Here we summarize the takeaways, leaving the calculations to Appendix \ref{sec:2D-Borel}. Denote the all-orders perturbatively computed correlators by
\begin{align}
    G_{i=1,2}^{\mathrm{pert}} = \sum_{n=0}^{\infty} c_{i,n}(U_1, V_2, U_3, V_4) \, x^n \,.
\end{align}
We then show:
\begin{align}
    G_{1,2} &= \mathcal{S}G_{1,2}^{\mathrm{pert}}\,, \\
    \mathcal{S}G_i^{\mathrm{pert.}} &:= \int_0^{\infty} \mathrm{d}s \, e^{-s} \mathcal{B} G_i^{\mathrm{pert}}(s x) = \int_0^{\infty} \mathrm{d}s \, e^{-s} \sum_{n=0}^{\infty} c_{i,n}(U_1, V_2, U_3, V_4) \frac{(sx)^n}{n!}\,. \label{eq:G1G2-Borel-integral}
\end{align}
The Laplace integrals in (\ref{eq:G1G2-Borel-integral}) are well-defined if the Borel transformed correlators $\mathcal{B}G_i^{\mathrm{pert.}}$ do not have singularities on the real axis. This condition is satisfied whenever $U_1 \neq U_3$ and $V_2\neq V_4$. When the pole of $\mathcal{B} G_i^{\mathrm{pert}}$ crosses the positive real axis (for instance during analytic continuation), the correlators pick up a contour contribution around the pole, which may be computed via residue.

The non-perturbative structure of the correlators is encoded in the $s$-analytic structure of the Borel transformed $\mathcal{B} G_i^{\mathrm{pert}}(s x)$, which scales as $1/x$. In particular, the KMS violation for $x<0$ can be computed as a contour integral around the singularity $s_{\ast}$ of $\mathcal{B} G_2^{\mathrm{pert.}}$:
\begin{align}
    \Delta_{\mathrm{KMS}}(U_1,V_2,U_3,V_4) &= \theta(-x) \int_{C_{s_{\ast}}} \mathrm{d}s\, e^{-s} \mathcal{B} G_2^{\mathrm{pert.}}(s x) \,.
\end{align}


\subsubsection{The \texorpdfstring{$x \rightarrow 0$}{x to 0} Limit} \label{sec:2D-OTOC-analc-xto0}

\noindent In the previous section, we derived expressions at finite $x$ for objects involving the analytic continuations of time- and frequency-domain correlators. These include (i) the matrix elements of a dressed $\cala_{0,A}$ field with a dressed $\cala_{0,A}'$ field (\ref{eq:2D-AtApt-commutator-calc}); (ii) the KMS violation $\Delta_{\mathrm{KMS}}$ (\ref{eq:2D-DeltaKMS-UV-explicit-0}); and (iii) the violation of detailed balance $\Delta_{\mathrm{db.}}$ (\ref{eq:2D-deltaKMS-lambda-explicit}). In this section we compute their $x \rightarrow 0$ limits. Since none of these exact expressions permit a Taylor expansion in $x$, smearing is again important. We will find that after smearing, (i) and (iii) vanish to all perturbative orders, but (ii) can stay finite due to contact terms between smeared fields.\footnote{There are now three notions of `non-perturbativeness' which should not be confused. The first means that computing an object to any fixed order in S-matrix perturbation theory yields zero. Another means that the exact result does not permit a power series expansion in $x\to 0^-$. Yet a third refers to the $x$-dependence of the smeared object. It should always be clear from context to which we refer.}

\vspace{1em}

\noindent \textbf{The Time Domain}

\noindent We first consider the smeared matrix elements of the commutator of a dressed $\cala_{0,A}$ field with a dressed $\cala_{0,A}'$ field.
\begin{align}
    \mathcal{C}^{LR}(f_1^-, f_3^+; f_2, f_4) &:= \int_{-\infty}^{\infty} \mathrm{d}U_1 \mathrm{d}U_3 \mathrm{d}V_2 \mathrm{d}V_4\, f_1^-(U_1) f_2(V_2) f_3^+(U_3) f_4(V_4) \label{eq:2D-AtApt-commutator-smeared-def} \\
    &\hspace{15mm} \times \langle \phi^B(V_2) | \big[S^{-\frac{1}{2}}\phi^A(U_1) S^{\frac{1}{2}}, S^{\frac{1}{2}}\phi^A(U_3) S^{-\frac{1}{2}} \big] | \phi^B(V_4) \rangle \,, \nonumber
\end{align}
where $f_1^-$, $f_3^+$, $f_2$, $f_4$ are smooth test functions with bounded derivatives to all orders, and supported within compact subsets of $(-\infty,0)$, $(0,\infty)$, $\mathbb{R}$, and $\mathbb{R}$ respectively.\footnote{The smeared $B$-operators belong to the the observer algebra $\cala$ only if $f_2$ and $f_4$ are supported on a compact subset of $(-\infty,0)$. However, the proof proceeds even if $f_2$ and $f_4$ have support on the positive axis. We think of (\ref{eq:2D-AtApt-commutator-smeared-def}) as arbitrary $B$-field matrix elements of a dressed $\cala_{0,A}$ field with a dressed $\cala_{0,A}'$ field.} We have:

\begin{proposition}
    The smeared matrix elements (\ref{eq:2D-AtApt-commutator-smeared-def}) of the commutator between a dressed $\cala_{0,A}$ field and dressed $\cala_{0,A}'$ field can be expressed as the following. In particular, this vanishes faster than any power of $x$ in the $x\to 0^-$ limit.
    \begin{align}
        \mathcal{C}^{LR}(f_1^-, f_3^+; f_2, f_4) &= \frac{\theta(-x)}{-2\pi x} \frac{(-1)^{h_A}}{\Gamma(2h_A)\Gamma(2h_B)} \label{eq:2D-AtApt-commutator-smeared} \\
        &\hspace{5mm} \int_{-\infty}^{\infty} \mathrm{d}U_1 \mathrm{d}U_3 \mathrm{d}V_2 \mathrm{d}V_4\, f_1^-(U_1) \partial^{2h_B-1}_{V_2}f_2(V_2) \partial_{U_3}^{2h_A-1} f_3^+(U_3) f_4(V_4) e^{\frac{i}{x} (-U_1+U_3)(V_2-V_4)} \nonumber \\
        &=0 + \mathcal{O}(|x|^n)\,, \qquad \forall \,n\in\mathbb{Z}_{\geq 0}\,.
    \end{align}
\end{proposition}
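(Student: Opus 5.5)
The plan has two parts: first derive the closed form, then show it is smaller than every power of $x$.

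\textbf{Closed form.} Conjugating the commutator by $S^{1/2}$ gives
\[
S^{1/2}\big[S^{-1/2}\phi^A(U_1)S^{1/2},\,S^{1/2}\phi^A(U_3)S^{-1/2}\big]S^{-1/2}=\big[\phi^A(U_1),\,S\phi^A(U_3)S^{-1}\big],
\]
because $S^{\pm1/2}$ acts trivially on the $B$-states $\phi^B(V_4)\ket{\Omega}$ up to commuting through. More carefully, $S^{1/2}$ commutes with $P_B$, and in the $P_B$ eigenbasis it becomes a translation of the $A$ fields. Since $S\phi^A(U_3)S^{-1}=\phi^A(U_3-xP_B)$ by \eqref{eq:Ptransf}, the $B$ matrix element reduces to the computation already carried out in \eqref{eq:2D-AtApt-commutator-calc}, with $-U_1$ there replaced by $U_1<0$.

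The steps are then as follows.
\begin{itemize}
\item Insert the resolution \eqref{eq:idres} on the $B$ side.
\item Use \eqref{eq:2D-PU-matrixel} together with the distributional identity \eqref{eq:field-1D-commutator-identity}.
\item Do the $P_B$ integral against the delta function. This produces the factor $\theta(-x)$, since $P_B\le 0$ forces $U_1-U_3+xP_B=0$ to have a solution only when $x<0$ for $U_1<0<U_3$. It also produces the overall $1/(-x)$ and the phase $e^{\frac{i}{x}(-U_1+U_3)(V_2-V_4)}$.
\item Rewrite the powers of $P_B$ and the $U_3$-derivatives as $\partial_{V_2}^{2h_B-1}\partial_{U_3}^{2h_A-1}$ acting on the exponential, as in the last line of \eqref{eq:2D-AtApt-commutator-calc}.
\item Integrate by parts onto $f_2$ and $f_3^+$, which is legitimate because these are compactly supported. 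The signs give the stated prefactor.
\item Drop the $i\epsilon$ afterwards, since the smeared integrand is smooth.
\end{itemize}

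\textbf{Decay faster than any power.} I would treat the resulting integral as an oscillatory integral with large parameter $1/x$ and phase $\Phi=(U_3-U_1)(V_2-V_4)$. Do the $V_2$ and $V_4$ integrals first. For fixed $U_1,U_3$ they give
\[
\widehat{F_2}\big((U_3-U_1)/x\big)\,\widehat{f_4}\big(-(U_3-U_1)/x\big),
\]
where $F_2=\partial^{2h_B-1}f_2$ and the hats denote Fourier transforms of compactly supported smooth functions. On the support of $f_1^-$ and $f_3^+$ we have $U_3-U_1\ge c>0$, because $f_1^-$ is supported in a compact subset of $(-\infty,0)$ and $f_3^+$ in a compact subset of $(0,\infty)$. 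The Fourier frequency $(U_3-U_1)/x$ therefore has magnitude at least $c/|x|\to\infty$. Fourier transforms of Schwartz functions decay faster than any power, so each factor is $\mathcal{O}(|x|^n)$ for every $n$, uniformly over the compact $(U_1,U_3)$ support. Integrating against the bounded compactly supported $f_1^-\,\partial^{2h_A-1}f_3^+$ and multiplying by the prefactor $1/|x|$, which costs only one power, gives $\mathcal{O}(|x|^n)$ for all $n$.

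\textbf{Main obstacle.} The delicate step is the first one, specifically justifying the $S^{1/2}$ manipulation and the swap of the $P_B$ spectral integral with the distributional delta function. I would handle this by keeping all fields smeared throughout, so that every integral converges absolutely. In particular, $f_2$ and $f_4$ need not vanish on $V>0$, but the $B$-matrix elements are still given by \eqref{eq:2D-PU-matrixel}. The strict separation $U_3-U_1\ge c$ coming from the supports is what makes the non-stationary-phase argument work; without it a contact term could survive, as happens for $\Delta_{\mathrm{KMS}}$.
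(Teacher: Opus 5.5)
Your proposal is correct and follows the paper's proof. The closed form comes from substituting the unsmeared commutator \eqref{eq:2D-AtApt-commutator-calc} and integrating by parts; your observation that $S^{\pm 1/2}$ acts trivially on $\ket{\Omega_A}\otimes\phi^B(V_4)\ket{\Omega_B}$ is exactly what justifies that substitution. Your rapid-decay bound for the Fourier transform of $\partial^{2h_B-1}f_2$ at frequency of size at least $c/|x|$ is a repackaged version of the paper's step: it writes $e^{\frac{i}{x}(U_3-U_1)(V_2-V_4)}=\big(\tfrac{-ix}{U_3-U_1}\big)^m\partial_{V_2}^m e^{\frac{i}{x}(U_3-U_1)(V_2-V_4)}$, integrates by parts $m$ times, and relies on the same key input $U_3-U_1\ge c>0$. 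One small remark: if you track phases carefully you will get an overall $(-1)^{h_A+h_B}$ rather than the stated $(-1)^{h_A}$; this is a harmless sign slip in the statement and does not affect the decay faster than any power of $x$.
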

\begin{proof}
    We begin by substituting (\ref{eq:2D-AtApt-commutator-calc}) into (\ref{eq:2D-AtApt-commutator-smeared}) and integrate by parts. There are no boundary terms because $f_3^+$ has compact support in $(0,\infty)$, and we find
    \begin{align}
        \mathcal{C}^{LR}(f_1^-, f_3^+; f_2, f_4) &= \frac{\theta(-x)}{2\pi i} \frac{ (-1)^{h_A} x^{-2h_B} }{\Gamma(2h_A)\Gamma(2h_B)} \int_{-\infty}^{\infty} \mathrm{d}U_1 \mathrm{d}U_3 \mathrm{d}V_2 \mathrm{d}V_4\, f_1^-(U_1) f_2(V_2) \partial_{U_3}^{2h_A-1} f_3^+(U_3) f_4(V_4) \nonumber \\
        &\hspace{55mm} \times (-U_1+U_3)^{2h_B-1} e^{\frac{i}{x}(-U_1+U_3)(V_2-V_4)} \,.
    \end{align}
    Next we use the identity
    \begin{align}
        e^{\frac{i}{x} (-U_1+U_3)(V_2-V_4)} &= \Big(\frac{-ix}{-U_1+U_3}\Big)^m \partial^m_{V_2} e^{\frac{i}{x} (-U_1+U_3)(V_2-V_4)}\,. \qquad \forall \,m\in\mathbb{Z}_{\geq 0} \,.
    \end{align}
    Substituting this for $m = 2h_B-1+n$ into our above expression yields
    \begin{align}
        \mathcal{C}^{LR}(f_1^-, f_3^+; f_2, f_4) &:= \theta(-x) \frac{(-1)^{h_A} (-i)^n (-x)^{n-1} }{2\pi \Gamma(2h_A)\Gamma(2h_B)} \\
        &\hspace{5mm} \int_{-\infty}^{\infty} \mathrm{d}U_1 \mathrm{d}U_3 \mathrm{d}V_2 \mathrm{d}V_4\, g_n(U_1, V_2, U_3, V_4) e^{\frac{i}{x} (-U_1+U_3)(V_2-V_4)} \,, \nonumber \\
        g_n(U_1, V_2, U_3, V_4) &:= (-U_1+U_3)^{-n} f_1^-(U_1) \partial^{2h_B-1+n}_{V_2}f_2(V_2) \partial_{U_3}^{2h_A-1} f_3^+(U_3) f_4(V_4) \,.
    \end{align}
    Taking $n=0$ yields (\ref{eq:2D-AtApt-commutator-smeared}). We observe that the factor $(-U_1+U_3)^{-n}$ is bounded because $f_1^-$ and $f_3^+$ are supported in compact subsets of $(-\infty,0)$ and $(0,\infty)$. Since the test functions also have bounded derivatives to all orders, we can uniformly bound $|g_n(U_1, V_2, U_3, V_4)| < c_n$ for position-independent constants $c_n$. Finally we let $D\subset \mathbb{R}^4$ be any finite volume region containing the joint support of all four test functions. This lets us bound:
    \begin{align}
        |\mathcal{C}^{LR}(f_1^-, f_3^+; f_2, f_4)| &= \frac{\theta(-x) |x|^{n-1} }{2\pi \Gamma(2h_A)\Gamma(2h_B)} \Big| \int_{-\infty}^{\infty} \mathrm{d}U_1 \mathrm{d}U_3 \mathrm{d}V_2 \mathrm{d}V_4\, g_n(U_1, V_2, U_3, V_4) e^{\frac{i}{x} (-U_1+U_3)(V_2-V_4)} \Big| \nonumber \\
        &\leq \frac{\theta(-x) |x|^{n-1} }{2\pi \Gamma(2h_A)\Gamma(2h_B)} \int_{-\infty}^{\infty} \mathrm{d}U_1 \mathrm{d}U_3 \mathrm{d}V_2 \mathrm{d}V_4\, \Big| g_n(U_1, V_2, U_3, V_4) e^{\frac{i}{x} (-U_1+U_3)(V_2-V_4)} \Big| \nonumber \\
        &\leq \frac{c_n \mathrm{vol}(D)}{2\pi \Gamma(2h_A)\Gamma(2h_B)} \theta(-x) |x|^{n-1}  = 0 + \mathcal{O}(|x|^{n-1}) \,.
    \end{align}
    Since $n$ can be any positive integer, this completes our proof. 
\end{proof}

\noindent Next we turn our attention to the smeared KMS violation. Working in Kruskal coordinates, we have the following result.

\begin{proposition}
\label{eq:prop}
    Consider the KMS violation $\Delta_{\mathrm{KMS}}(U_1, V_2, U_3, V_4)$, smeared by test functions $f_{1,3}^+$, $f_{2,4}^-$ supported in compact subsets of $(0,\infty)$ and $(-\infty,0)$ respectively.\footnote{Our result is also valid if $f_2$ and $f_4$ have compact supports in $\mathbb{R}$, by the same proof. However, since $\Delta_{\mathrm{KMS}}$ is defined in terms of correlators $\cala$ operators, here we restrict $f_2$ and $f_4$ to have support in a compact subset of $(-\infty,0)$.}  This permits the following perturbative expansion,
    \begin{align}
    \Delta_{\mathrm{KMS}}(f) &:= \int_{-\infty}^{\infty} \mathrm{d}U_1 \mathrm{d}U_3 \mathrm{d}V_2 \mathrm{d}V_4\, f_1^+(U_1) f_2^-(V_2) f_3^+(U_3) f_4^-(V_4) \, \Delta_{\mathrm{KMS}}(U_1, V_2, U_3, V_4) \label{eq:2D-DeltaKMS-smeared-def} \\
    &= \theta(-x) \frac{(-1)^{1+h_A+h_B}}{\Gamma(2h_A)\Gamma(2h_B)} \nonumber \\
    &\hspace{5mm} \sum_{n\geq 0} \frac{(i|x|)^n}{\Gamma(1+n)} \int_0^{\infty}\mathrm{d}U (f_1^+)^{(2h_A-1+n)}(U) f_3^+(U) \int_{-\infty}^0 \mathrm{d}V (f_2^-)^{(2h_B-1+n)}(V) f_4^-(V) \label{eq:2D-smeared-DeltaKMS}
\end{align}
\end{proposition}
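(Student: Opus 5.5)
The plan is to start from the closed-form expression \eqref{eq:2D-DeltaKMS-UV-explicit-0} for $\Delta_{\mathrm{KMS}}$ and smear it directly. Since $\Delta_{\mathrm{KMS}}$ is entire in $U_1$ (as noted after \eqref{eq:2D-DeltaKMS-UV-explicit-3}), the $i\epsilon$'s can be dropped once we smear. We then integrate by parts in $U_1$ and $V_2$ to move $\partial_{U_1}^{2h_A-1}\partial_{V_2}^{2h_B-1}$ onto $f_1^+$ and $f_2^-$. There are no boundary terms, because the test functions have compact support. The resulting sign is $(-1)^{2h_A-1+2h_B-1}=+1$. This leaves
\begin{align}
\Delta_{\mathrm{KMS}}(f)=\frac{\theta(-x)}{2\pi x}\frac{(-1)^{h_A+h_B}}{\Gamma(2h_A)\Gamma(2h_B)}\int \mathrm{d}^4 X\,(f_1^+)^{(2h_A-1)}(U_1)\,(f_2^-)^{(2h_B-1)}(V_2)\,f_3^+(U_3)\,f_4^-(V_4)\,e^{-\frac{i}{x}U_{13}V_{24}}\,.
\end{align}
The key structural point, in contrast with the previous proposition for $\mathcal{C}^{LR}$, is this. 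There $f_1^-$ and $f_3^+$ had disjoint supports, so $U_{13}$ was bounded away from zero and repeated integration by parts gave rapid decay. Here $f_1^+$ and $f_3^+$ both live on $(0,\infty)$, so the phase has a stationary set $U_{13}=V_{24}=0$ inside the support, and a contact term survives at order $x^0$.

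Next I change variables to $u=U_{13}$, $v=V_{24}$, $U=U_3$, $V=V_4$. Define
\begin{align}
G(u,v):=\int \mathrm{d}U\,\mathrm{d}V\,(f_1^+)^{(2h_A-1)}(U+u)\,f_3^+(U)\,(f_2^-)^{(2h_B-1)}(V+v)\,f_4^-(V)\,,
\end{align}
which is smooth with compact support in $\mathbb{R}^2$. Everything then reduces to the two-dimensional oscillatory integral $I(\varepsilon)=\int \mathrm{d}u\,\mathrm{d}v\,G(u,v)\,e^{iuv/\varepsilon}$ with $\varepsilon=|x|$, using $-1/x=1/|x|$ for $x<0$. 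This is a stationary-phase integral with a nondegenerate hyperbolic Hessian of signature zero, so no Maslov phase appears.

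To obtain the full asymptotic series with a controlled remainder, I would pass to Fourier space. Write $G(u,v)=\int \dbar k\,\dbar l\,\hat G(k,l)\,e^{iku+ilv}$ and do the $u,v$ integrals exactly: $\int \mathrm{d}u\,\mathrm{d}v\,e^{iuv/\varepsilon+iku+ilv}=2\pi\varepsilon\,e^{-i\varepsilon kl}$. This gives the exact identity $I(\varepsilon)=2\pi\varepsilon\int\dbar k\,\dbar l\,\hat G(k,l)\,e^{-i\varepsilon kl}$. Taylor expanding $e^{-i\varepsilon kl}$ to order $N$, with remainder bounded by $|\varepsilon kl|^{N+1}/(N+1)!$, gives terms $\frac{(-i\varepsilon)^n}{n!}\int \hat G\,(kl)^n=\frac{(i\varepsilon)^n}{n!}\,(\partial_u\partial_v)^n G(0,0)$, up to the sign bookkeeping from $k\to -i\partial_u$. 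The remainder is $O(\varepsilon^{N+2})$ because $\hat G$ is Schwartz. Multiplying by the prefactor $\frac{1}{2\pi x}\cdot 2\pi|x|=-1$ produces the overall $(-1)^{1+h_A+h_B}$. Finally, $(\partial_u\partial_v)^nG(0,0)$ factorizes into $\int \mathrm{d}U\,(f_1^+)^{(2h_A-1+n)}f_3^+$ times $\int \mathrm{d}V\,(f_2^-)^{(2h_B-1+n)}f_4^-$, which reproduces \eqref{eq:2D-smeared-DeltaKMS}.

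The main obstacle I expect is getting the signs and factors of $i$ right in the Fourier step. I mean the precise orientation of $e^{\mp i\varepsilon kl}$ and the conversion of $(kl)^n$ into $(\partial_u\partial_v)^n$, since these determine whether the expansion parameter comes out as $(i|x|)^n$ or $(-i|x|)^n$. I would fix this by checking the $n=0$ and $n=1$ terms against a Gaussian choice of $G$, where $I(\varepsilon)$ can be computed in closed form. A secondary point is to state that the series is asymptotic rather than convergent, as in Lemma \ref{lemma:Fxlambda-integral}: the Taylor remainder bound yields $O(|x|^{N+1})$ at each fixed $N$, which is all the proposition asserts.
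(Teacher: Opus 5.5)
Your proposal is correct and follows the paper's proof: you integrate by parts to move the derivatives off $e^{-\frac{i}{x}U_{13}V_{24}}$ onto the test functions, then expand the remaining oscillatory integral by stationary phase about $U_{13}=V_{24}=0$, which gives exactly the factorized contact terms in \eqref{eq:2D-smeared-DeltaKMS}. The differences are cosmetic: you integrate by parts in $U_1,V_2$ at fixed $U_3,V_4$ instead of in $U_-,V_-$ at fixed $U_+,V_+$, and your Fourier-space computation with an explicit Taylor remainder proves the all-orders stationary-phase identity that the paper quotes, with the added benefit of showing the series is asymptotic with an $O(|x|^{N+1})$ remainder.
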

\begin{proof}
    We begin by substituting the equation (\ref{eq:2D-DeltaKMS-UV-explicit-0}) into the smeared correlator, perform the change of variables $U_{\pm} := U_1 \pm U_3$, $V_{\pm} := V_2 \pm V_4$, and use integration by parts. As in the commutator calculation there are no boundary terms, and we find
    \begin{align}
        \Delta_{\mathrm{KMS}}(f) &= \frac{\theta(-x)}{2\pi x} \frac{(-1)^{h_A+h_B}}{\Gamma(2h_A)\Gamma(2h_B)} \int_0^{\infty}\frac{\mathrm{d}U_+}{2}  \int_{-\infty}^0\frac{\mathrm{d}V_+}{2} \\
        &\hspace{5mm} \int_{\mathbb{R}^2}\mathrm{d}U_- \mathrm{d}V_-  \partial_{U_-}^{2h_A-1}\Big[ f_1^+(\tfrac{U_++U_-}{2}) f_3^+(\tfrac{U_+-U_-}{2})\Big]\partial_{V_-}^{2h_B-1} \Big[ f_2^-(\tfrac{V_++V_-}{2}) f_4^-(\tfrac{V_+-V_-}{2}) \Big] e^{-\frac{i}{x}U_-V_-} \,. \nonumber 
    \end{align}
    Now we use the method of stationary phase. The oscillatory exponential has one saddle point where $U_-=V_-=0$, and we make use of the following identity valid to all orders:
    \begin{align}
        &\int_{\mathbb{R}^2} \mathrm{d}U_- \mathrm{d}V_-\, g_1(U_-)g_2(V_-) e^{\frac{i}{|x|} U_-V_-} = 2\pi |x| \sum_{n\geq 0}\frac{(i|x|)^n}{\Gamma(1+n)} g_1^{(n)}(U_-) g_2^{(n)}(V_-) \,.
    \end{align}
    We thus have
    \begin{align}
        \Delta_{\mathrm{KMS}}(f) &= \frac{\theta(-x)}{2\pi x} \frac{(-1)^{h_A+h_B}}{ \Gamma(2h_A)\Gamma(2h_B)} 2\pi|x| \sum_{n\geq 0} \frac{(i|x|)^n}{\Gamma(1+n)} \\
        &\hspace{5mm} \partial_{U_-}^{2h_A-1+n}\Big[\int_0^{\infty}\frac{\mathrm{d}U_+}{2} f_1^+(\tfrac{U_++U_-}{2}) f_3^+(\tfrac{U_+-U_-}{2})\Big]\partial_{V_-}^{2h_B-1+n} \Big[\int_{-\infty}^0\frac{\mathrm{d}V_+}{2} f_2^-(\tfrac{V_++V_-}{2}) f_4^-(\tfrac{V_+-V_-}{2}) \Big] \nonumber \\
        &= \theta(-x) \frac{(-1)^{1+h_A+h_B}}{\Gamma(2h_A)\Gamma(2h_B)} \\
        &\hspace{5mm} \sum_{n\geq 0} \frac{(i|x|)^n}{\Gamma(1+n)} \int_0^{\infty}\mathrm{d}U (f_1^+)^{(2h_A-1+n)}(U) f_3^+(U) \int_{-\infty}^0 \mathrm{d}V (f_2^-)^{(2h_B-1+n)}(V) f_4^-(V) \,, \nonumber
    \end{align}
    which is our desired result.
\end{proof}
Note that the only contributions to $\Delta_{\mathrm{KMS}}(f)$ in the stationary phase approximation come from contact terms, or regions where the test functions $f_1^+$ and $f_3^+$ (also $f_2^-$ and $f_4^-$) have overlapping support. If either the $A$ or $B$ theory test functions are disjoint, then $\Delta_{\mathrm{KMS}}(f)$ vanishes faster than any power of $x$, and is truly non-perturbative. The calculation of the smeared commutator (\ref{eq:2D-AtApt-commutator-smeared-def}) differed in that $U_1$ was smeared over the negative line with support disjoint from $f_3^+$. On the other hand, in the presence of contact terms we have $\Delta_{\mathrm{KMS}}(f) \propto \theta(-x) x^0$ (except in the fine-tuned case where the $U$ or $V$ integrals in (\ref{eq:2D-smeared-DeltaKMS}) vanish). This is truly perturbative, and stays finite even in the strict $x\to 0^-$ limit.

Finally, recall from (\ref{eq:2D-DeltaKMS-UV-explicit-2}) that the unsmeared $\Delta_{\mathrm{KMS}}(U_1,V_2,U_3,V_4)$ can be written as an exponential $e^{-\frac{i}{x}U_{13}V_{24}}$ times a polynomial in $U_{13}V_{24}/x$. The aforementioned contact terms correspond to a saddle point of the universal phase $\Delta_{\mathrm{KMS}}(U_1, V_2, U_3, V_4) \propto e^{-\frac{i}{x}U_{13}V_{24}}$. In the stationary phase approximation $U_{13}$ and $V_{24}$ are integrated over characteristic lengths $\sim |x|^{1/2}$, so each term of the polynomial contributes to leading order; keeping only the leading order term in $x$ would have resulted in a Laurent series diverging as $x^{1-2h_+}$. All the polynomial coefficients are important in ensuring that the singular coefficients cancel.


\vspace{1em}

\noindent \textbf{The Frequency Domain}

\noindent It remains to smear $\Delta_{\mathrm{db.}}$ and take the $x\to 0$ limit. It is not immediately clear what the smearing functions should be. When we worked with $G_1(\lambda_i)$ in Section \ref{sec:2D-OTOC-realt-xto0}, the natural class of functions were Fourier conjugates $\tilde{f}(\lambda)$ of time-domain test functions $f(t)\in C^{\infty}_c(\mathbb{R})$. However, it follows from the calculation of $\mathcal{F}^{-1}[e^{-2\pi \lambda_1}G_1(\lambda_i)]$ in Appendix \ref{sec:2Dcalcs-OTOCs-exact} that the smearing integrals may diverge. To avoid this problem we will smear with fixed-width Gaussians centered at zero, of the form:\footnote{The following calculation can be generalized to Gaussians of variable width and mean.}
\begin{align}
    g_i(\lambda_i) = e^{-\frac{1}{2a} \lambda_i^2}\,, \qquad a>0\,. \label{eq:test-functions-lambda-Gaussian}
\end{align}
Their Fourier transforms yield position-space Gaussians which are $C^{\infty}$. Although not of compact support, they vanish superpolynomially in both static patch time and Kruskal time at either boundary, so there are no complications with boundary terms. We now compute:
\begin{align}
    \Delta_{\mathrm{db.}}(g) &:= \int_{-\infty}^{\infty} \dbar\lambda_1\cdots \dbar\lambda_4 \, g_1(\lambda_1) g_2(\lambda_2) g_3(\lambda_3) g_4(\lambda_4) \Delta_{\mathrm{db.}}(\lambda_1, \lambda_2, \lambda_3, \lambda_4) \\
    &= \frac{\theta(-x)}{2\pi i} \frac{1}{\Gamma(2h_A)\Gamma(2h_B)} \int \frac{\dbar\lambda\, F_x(\lambda)}{\Gamma(i\lambda+\epsilon)\Gamma(1-i\lambda-\epsilon)} \label{eq:Delta-db-smeared-xto0} \\
    &\hspace{5mm} \times \int \dbar\lambda_2\, e^{\pi\lambda_2} g_2(\lambda_2) g_4(\lambda-\lambda_2) \Gamma(h_B-i\lambda_2) \Gamma(h_B+i(\lambda_2-\lambda)) \nonumber \\
    &\hspace{5mm} \times \int \dbar\lambda_3\, e^{-\pi\lambda_3} g_3(-\lambda_3) g_1(-\lambda+\lambda_3) \Gamma(h_A-i\lambda_3) \Gamma(h_A+i(\lambda_3-\lambda)) \nonumber 
\end{align}
where we substituted (\ref{eq:Deltadb-def}), integrated over $\lambda_4$, and defined $\lambda=-\lambda_1-\lambda_3$. In order to apply Lemma \ref{lemma:Fxlambda-integral}, we will need that the function of $\lambda$ defined by the $\lambda_2$ and $\lambda_3$ integrals is $\lambda$-analytic in $\mathbb{H}_+$, and decays sufficiently rapidly for $\lambda$ on the segments $C^{\pm}_n(\Lambda)$ for $\Lambda\to\infty$. The first property follows from a standard theorem in complex analysis.
\begin{lemma}[Morera]
    Consider a function $f(\lambda)$ defined by a contour integral in the complex plane:
    \begin{align}
        f(\lambda) := \int_C \mathrm{d}z \, \Xi(\lambda,z) \,.
    \end{align}
    Then $f(\lambda)$ is analytic in an open region $D \subseteq \mathbb{C}$ if:
    \begin{enumerate}
        \item For each fixed $z\in C$, the function $\Xi(\lambda,z)$ is analytic on $D$.
        \item For every compact region $K \subseteq D$, there exists an absolutely integrable function $G_K(z)$ such that $|\Xi(\lambda,z)| \leq G_K(z)$, for all $\lambda\in K$ and $z\in C$.
    \end{enumerate}
\end{lemma}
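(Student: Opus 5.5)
The natural route is to verify Morera's theorem on small triangles in $D$, combined with Fubini's theorem to swap the $\lambda$-contour integral with the $z$-integral. We want to show that $f$ is continuous on $D$ and that $\oint_{\partial\Delta} f(\lambda)\,\mathrm{d}\lambda = 0$ for every closed triangle $\Delta$ whose interior and boundary lie in $D$. Morera's theorem then gives analyticity on $D$.

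The first step is continuity. Fix $\lambda_0 \in D$ and a closed disk $K \subset D$ around it. For $\lambda_n \to \lambda_0$ in $K$, hypothesis 1 gives $\Xi(\lambda_n, z) \to \Xi(\lambda_0, z)$ pointwise in $z \in C$. Hypothesis 2 supplies the integrable dominating function $G_K(z)$, uniformly in $n$. By dominated convergence along the contour $C$ (parametrized so that $\mathrm{d}z$ becomes $\gamma'(s)\,\mathrm{d}s$, with $|\gamma'|$ absorbed into "absolutely integrable"), $f(\lambda_n) \to f(\lambda_0)$.

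The second step is the vanishing of triangle integrals. Take a triangle $\Delta \subset D$ and let $K = \Delta$, which is compact. The double integral $\oint_{\partial\Delta}\int_C |\Xi(\lambda,z)|\,|\mathrm{d}z|\,|\mathrm{d}\lambda|$ is bounded by $\mathrm{length}(\partial\Delta)\int_C G_K(z)\,|\mathrm{d}z| < \infty$. Measurability of $\Xi$ jointly follows from continuity in $\lambda$ and measurability in $z$, which I will assume implicitly as is standard. Fubini then allows swapping the order of integration:
\begin{align}
\oint_{\partial\Delta} f(\lambda)\,\mathrm{d}\lambda = \int_C \mathrm{d}z \oint_{\partial\Delta} \Xi(\lambda,z)\,\mathrm{d}\lambda = 0,
\end{align}
since the inner integral vanishes by Cauchy--Goursat, using that $\Xi(\cdot,z)$ is analytic on $D \supseteq \Delta$. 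Morera's theorem completes the argument.

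The main point of care is the justification of Fubini, and specifically joint measurability together with the use of a single dominating function over the whole compact set $\Delta$. Hypothesis 2 is designed to provide exactly this uniformity. When this lemma is applied to the $\lambda_2$ and $\lambda_3$ integrals in \eqref{eq:Delta-db-smeared-xto0}, the real work moves to constructing $G_K$: Gaussian decay of $g_i$ must beat the exponential growth of $e^{\pm\pi\lambda_i}$ and of the Gamma functions for $\lambda$ in compact subsets of $\mathbb{H}_+$. That is a separate bound from the lemma itself.
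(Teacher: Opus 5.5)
Your argument is correct: dominated convergence gives continuity, Fubini swaps the triangle integral with the contour integral, Goursat makes each inner integral vanish, and Morera's theorem finishes the proof. The paper states this lemma without proof and calls it a standard result; your route is the standard proof that the lemma's name refers to.
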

\noindent We wish to apply this for $C = \mathbb{R}$, $D=\mathbb{H}_+$, and $\Xi_2(\lambda,\lambda_2)$, $\Xi_3(\lambda,\lambda_3)$ the $\lambda_2, \lambda_3$ integrands in (\ref{eq:Delta-db-smeared-xto0}). The first condition is immediate, since for $\lambda_{2,3}\in\mathbb{R}$ the $\Gamma$-functions only have poles for $\lambda\in\mathbb{H}_-$. The second condition follows after some algebra from the rapid decay of the Gaussian test functions as $\lambda\to\pm\infty$ along with the $\Gamma$-function vertical-line bound:
\begin{align}
    |\Gamma(z_1 + iz_2)| \leq c_R (1+|z_2|)^{M_R} e^{-\frac{\pi}{2}|z_2|}\,, \qquad \forall\, z_2\in\mathbb{R}\,, \quad z_1 \in R_{\mathrm{compact}} \subset (0,\infty) \,. \label{eq:Gamma-vertical-bound}
\end{align}
The intuition is that when $\lambda\in\mathbb{H}_+$ is bounded, each term contributes at most a growth of $e^{c|\lambda_2|}$, which at large $|\lambda_2|$ is killed by the Gaussian. The function multiplying $F_x(\lambda)$ in (\ref{eq:Delta-db-smeared-xto0}) is therefore $\lambda$-analytic in $\mathbb{H}_+$. We also need to show the second requirement of Lemma \ref{lemma:Fxlambda-integral}. Again using (\ref{eq:Gamma-vertical-bound}), one can show that for $\lambda \in C_N^{\pm}(\Lambda)$, the absolute value of the $\lambda_2$, $\lambda_3$ integrands is bounded by:
\begin{align}
    |\Xi_2(\lambda,\lambda_2)| \leq c_N (1+|\lambda_2|+|\Lambda|)^{m_N} e^{\frac{\pi}{2}|\lambda_2|} e^{-\frac{1}{2a}\lambda_2^2} e^{-\frac{1}{2a}(\lambda_2\pm\Lambda)^2}\,, \\
    |\Xi_3(\lambda,\lambda_3)| \leq c_N' (1+|\lambda_3|+|\Lambda|)^{m_N'} e^{\frac{\pi}{2}|\lambda_3|} e^{-\frac{1}{2a}\lambda_3^2} e^{-\frac{1}{2a}(\lambda_3\pm\Lambda)^2}\,.
\end{align}
The product of the $\lambda_2$, $\lambda_3$ integrals on $C_N^{\pm}(\Lambda)$ are then bounded uniformly by:
\begin{align}
    d_{1,N} (d_{2,N} + \Lambda)^{M_N} e^{-\frac{1}{2a} \Lambda^2 + d_{3,N} \Lambda} \,.
\end{align}
This decays uniformly as a Gaussian in $\Lambda$, therefore the integral (\ref{eq:Delta-db-smeared-xto0}) on the segments $C_N^{\pm}(\Lambda)$ vanishes as $\Lambda\to\infty$. We can therefore use Lemma \ref{lemma:Fxlambda-integral} and apply (\ref{eq:Fx-xto0-identity-orderN}) to (\ref{eq:Delta-db-smeared-xto0}). For fixed $N$, the residues at $\lambda=in$ all vanish due to the $1/\Gamma(i\lambda+\epsilon)$ factor in (\ref{eq:Delta-db-smeared-xto0}). We conclude:
\begin{align}
    \lim\limits_{x\to 0^-} \Delta_{\mathrm{db.}}(g) = 0 + \mathcal{O}(x^{N+1})\,, \qquad \forall\,N\in\mathbb{Z}_+\,.
\end{align}
That is, the $x\to 0$ limit of the smeared violation of the detailed balance condition vanishes at all perturbative orders.


\subsection{The Crossed Product} \label{sec:KMS-crossedprod}

Thus far, we have studied the algebra $\cala$ of dressed early- and late-time operators accessible to an observer. The observer's clock was treated classically and assumed to be perfect. We now ask how the algebra is modified when we quantize the clock. In particular, the clock is viewed as a quantum mechanical system with a given energy spectrum. For our purposes, a ``perfect clock'' has a continuous energy spectrum supported on the entire real line. The modified algebra $\cala_{\mathrm{cr}} = \cala\rtimes \mathbb{R}$ is obtained from the crossed-product construction \cite{Witten:2021unn, CLPW, Daele:1978, DeVuyst:2024khu,DeVuyst:2024fxc,DeVuyst:2024grw}.

The crossed product algebra is generated by the clock's Hamiltonian and QFT operators dressed to the clock. When the clock is not perfect, the matter operators are necessarily smeared in time. For our results to remain physically consistent, it is important that the smearing does not spoil the distinction between early- and late-time QFT operators. Through the energy-time uncertainty relation, this imposes constraints on the density of states of the clock.

We first consider an observer with a perfect clock. We construct the algebra and compute correlators in the Hartle-Hawking state. Next, we briefly discuss the case of an imperfect clock. Then, we show that for $x<0$, the violation of the detailed balance condition implies that the Hartle-Hawking state does not furnish a trace on $\cala_{\mathrm{cr}}$. Nevertheless, the Hartle-Hawking state defines a trace order-by-order in $x$.


\subsubsection{Constructing the Algebra}

Suppose the observer has a perfect clock with Hamiltonian $q$. Its canonical conjugate $p$ obeys  $[q, p] = i$. The operators $q$ and $p$ act on an auxiliary Hilbert space $\mathcal{H}_{\mathrm{obs}} = L^2(\mathbb{R},\mathrm{d}q)$. The total Hilbert space is $\mathcal{H}_{\mathrm{cr}} :=  \mathcal{H}_A \otimes \mathcal{H}_B \otimes \mathcal{H}_{\mathrm{obs}}$. To construct operators in $\cala_{\mathrm{cr}}$, we must impose the constraint $[\cala_{\mathrm{cr}}, H] = 0$, where $H := H_{\mathrm{QFT}}+ q = H_A + H_B + q$. A set of generators consists of  functions of $q$ and the observer-dressed fields $\tilde{\calo}(p) := e^{i H_{\mathrm{QFT}} p}\tilde{\calo}e^{-i H_{\mathrm{QFT}} p}$ for $\tilde{\calo}\in\cala$.\footnote{For example, $\tilde \calo$ may refer to either $\tilde \calo^A$ or $\tilde \calo^B$ in section \ref{sec:2D-AB-theory}.} That is:
\begin{align}
    \cala_{\mathrm{cr}} &= \big\{ \tilde{\calo}(p)\,,\, e^{ikq} \quad \mathrm{s.t.} \quad \tilde{\calo}\in\cala\,, k\in\mathbb{R} \big\}'' \\
    &= \big\{ S^{\frac{1}{2}} a(p) S^{-\frac{1}{2}}, S^{-\frac{1}{2}} b(p) S^{\frac{1}{2}}\,,\, e^{ikq} \quad \mathrm{s.t.} \quad a\in\cala_{0,A}\,, b\in\cala_{0,B}\,, k\in\mathbb{R} \big\}'' \,,
\end{align}
where in the second equality we have used that $[H_A + H_B, P_A P_B] = 0$. The operators $a(p)$ and $b(p)$ denote time-translated operators in $\cala_{0,A}$ and $\cala_{0,B}$.

\vspace{1em}

\noindent The Hartle-Hawking state on $\cala_{\mathrm{cr}}$ is defined by \cite{CLPW}
\begin{align}
    |\Psi_{\mathrm{HH}}\rangle &:= |\Omega\rangle \otimes e^{-\pi q}\,, \qquad p|\Psi_{\mathrm{HH}}\rangle = i\pi |\Psi_{\mathrm{HH}}\rangle \,. \label{crossed-prod-HH-state}
\end{align}
In this state, we compute
\begin{align}
    &\langle \Psi_{\mathrm{HH}} | f_0(q) \tilde{\calo}_1(p) f_1(q) \cdots \tilde{\calo}_n(p) f_n(q) |\Psi_{\mathrm{HH}} \rangle \nonumber \\
    &\hspace{5mm} = \int_{-\infty}^{\infty} \dbar\lambda_1\cdots\dbar\lambda_n\, \langle \Psi_{\mathrm{HH}} | f_0(q) e^{-ip\lambda_1} \tilde{\Phi}_1(\lambda_1) f_1(q) \cdots e^{-ip\lambda_n} \tilde{\Phi}_n(\lambda_n) f_n(q) |\Psi_{\mathrm{HH}} \rangle \\
    &\hspace{5mm} = \int_{-\infty}^{\infty} \dbar\lambda_1\cdots\dbar\lambda_n\, \langle\Omega| \tilde{\Phi}_1(\lambda_1)\cdots \tilde{\Phi}_n(\lambda_n)|\Omega\rangle \nonumber \\
    &\hspace{15mm}\times \langle e^{-\pi q}| f_0(q) f_1(q-\lambda_1) \cdots f_n(q-\lambda_1-\cdots-\lambda_n) |e^{-\pi (q-\lambda_1-\cdots-\lambda_n)} \rangle \\
    &\hspace{5mm}= \int_{-\infty}^{\infty} \mathrm{d}\omega \, e^{-2\pi \omega} f_0(\omega) \int_{-\infty}^{\infty}\dbar\lambda_1\cdots\dbar\lambda_n\, \big(f_1(\lambda_1)\cdots f_n(\lambda_n) \big) \nonumber \\
    &\hspace{55mm} \times \langle\Omega| \tilde{\Phi}_1(\omega-\lambda_1) \tilde{\Phi}_2(\lambda_1-\lambda_2)\cdots \tilde{\Phi}_n(\lambda_{n-1}-\lambda_n) |\Omega\rangle \,. \label{eq:crossed-prod-correlators-qbasis}
\end{align}
In the first equality we Fourier transform all the matter operators to the frequency domain (the transformed operators are denoted by $\tilde{\Phi}_i$'s). In the second equality, we act with the $p$'s to the right. In the third equality, we perform a change of variables. Finally, we use that the QFT correlator in the frequency domain is only supported when $\lambda_1+\cdots+\lambda_n=0$, which follows from time translation invariance of $|\Omega\rangle$. We arrive at an expression for any crossed-product correlator in terms of a smeared correlator of QFT operators in the frequency domain. The smearing can also be written in the time domain:
\begin{align}
    &\langle \Psi_{\mathrm{HH}} | f_0(q) \tilde{\calo}_1(p) f_1(q) \cdots \tilde{\calo}_n(p) f_n(q) |\Psi_{\mathrm{HH}} \rangle \nonumber \\
    &\hspace{5mm} = \int \mathrm{d}t_1\cdots \mathrm{d}t_n \, \langle\Omega| \tilde{\calo}_1(t_1) \cdots \tilde{\calo}_n(t_n) |\Omega\rangle \\
    &\hspace{10mm} \int \mathrm{d}\omega \, f_0(\omega) e^{i\omega(t_1+2\pi i)} \hspace{-1mm} \int \dbar\lambda_1\cdots \dbar\lambda_n \, f_1(\lambda_1) e^{i\lambda_1(t_2-t_1)} \cdots f_{n-1}(\lambda_{n-1}) e^{i\lambda_{n-1}(t_n-t_{n-1})} f_n(\lambda_n) e^{-i\lambda_n t_n} \nonumber \\
    &\hspace{5mm}= \int \mathrm{d}t_1\cdots\mathrm{d}t_n\, F(t_1,\ldots, t_n) \langle\Omega| \tilde{\calo}_1(t_1) \cdots \tilde{\calo}_n(t_n) |\Omega\rangle \,, \label{eq:crossed-prod-correlators-tbasis} \\
    &F(t_1,\ldots, t_n) :=  2\pi F_0(-2\pi i-t_1)F_1(t_1-t_2)\cdots F_{n-1}(t_{n-1}-t_n) F_n(t_n)\,,
\end{align}
where $F_i(t)$ denotes the Fourier transform of $f_i(\lambda)$:
\begin{align}
    F_i(t) = \int_{-\infty}^{\infty} \dbar \lambda \, f_i(\lambda)\, e^{-i\lambda t} \,,
    \qquad
    f_i(\lambda) = \int_{-\infty}^{\infty} \mathrm{d}t\, F_i(t)\, e^{+i\lambda t} \,. \label{eq:crossed-prod-test-fn-Fourier}
\end{align}
In order for $F_0(-2\pi i-t_1)$ to be well-defined, $F_0(z)$ should be analytic for $\mathrm{Im}(z) \in [-2\pi,0]$. Equivalently, the $\omega$-integral in the first equality must converge.


To see the (anti)-scrambling effect, it is important that the early- and late-time matter operators are separated by a large time $T$. As shown above, the dressed QFT operators are necessarily smeared in time. It is important that the functions $F_i(t)$ decay to zero quickly for $t \gtrsim T$, so that the distinction between an early- and late-time operator is not spoiled. Because we are working in a strict limit \eqref{eq:limit} where $T \rightarrow \infty$, there is no need to enforce this condition in practice. Below, we will specialize to the case where the $f_i(q)$ functions are $\delta$-functions,
\begin{equation}
    f_i(q) = \delta(E_i - q) \,.
\end{equation}
We may interpret each $\delta$-function as a Gaussian whose width $\Delta E$ respects the energy-time uncertainty $\Delta E \gtrsim \frac{1}{T}$. The limit $\Delta E \rightarrow 0$ is taken after the $T \rightarrow \infty$ limit \eqref{eq:limit}.

With this in mind, consider the correlator
\begin{equation}
\label{eq:deltafunctioncorr}
\langle \Psi_{\mathrm{HH}} | \tilde{\calo}_1(p) \delta(q-E_1) \cdots \tilde{\calo}_n(p) \delta(q-E_n) |\Psi_{\mathrm{HH}} \rangle \,.
\end{equation}
Each $\delta(q-E)$ may be interpreted as a ``projection'' onto the subspace of $\calh_{\mathrm{cr}}$ in which the clock has energy $E$.\footnote{This interpretation is not mathematically precise, but we will use it to motivate what we mean by a physical correlator.} When the clock is imperfect and has a spectrum given by $\{E_i\}_{i \in \mathbb{N}}$, a physically relevant projection is
\begin{equation}
    \rho_{\mathrm{obs}}(E) := \sum_i \delta(E - E_i) \,.
\end{equation}
The density of states of a physical clock should be bounded from below. For example, consider
\begin{equation}
\label{eq:heaviside}
    \rho_{\mathrm{obs}}(E) = \theta(E) \,.
\end{equation}
CLPW \cite{CLPW} conjugated all operators in the crossed product algebra by \eqref{eq:heaviside} to obtain a type II$_1$ algebra. We will not explicitly modify $\cala_{\mathrm{cr}}$ in this way. Instead, we will simply remark that the correlation function
\begin{equation}
\langle \Psi_{\mathrm{HH}} | \tilde{\calo}_1(p) \rho_{\mathrm{obs}}(q) \cdots \tilde{\calo}_n(p) \rho_{\mathrm{obs}}(q) |\Psi_{\mathrm{HH}} \rangle
\label{eq:physicalcorr}
\end{equation}
is of greater physical relevance than \eqref{eq:deltafunctioncorr}. Below, we will continue to work with \eqref{eq:deltafunctioncorr}. The following discussion may be trivially generalized to include other correlators such as \eqref{eq:physicalcorr}.

\subsubsection{The Hartle-Hawking State is not a Trace}

\noindent We now address the question of whether the Hartle-Hawking state is a trace. When $x>0$, the Hartle-Hawking state is a trace on the crossed product algebra, and in particular obeys the cyclic property ($\mathrm{Tr}\, ab = \mathrm{Tr}\, b a$) due to the KMS condition on $|\Omega\rangle$. To see this, we compute:
\begin{align}
    &\langle \Psi_{\mathrm{HH}} | f_1(q) \tilde{\calo}_1(p) \cdots f_n(q) \tilde{\calo}_n(p) |\Psi_{\mathrm{HH}}\rangle \\
    &\hspace{5mm} = 2\pi \int_{-\infty}^{\infty} \mathrm{d}t_1\cdots\mathrm{d}t_{n-1}\, F_1(-2\pi i-t_1)F_2(t_1-t_2)\cdots F_{n-1}(t_{n-2}-t_{n-1}) F_n(t_{n-1}) \\
    &\hspace{25mm} \times \langle\Omega| \tilde{\calo}_1(t_1) \cdots \tilde{\calo}_{n-1}(t_{n-1}) \tilde{\calo}_n(0) |\Omega\rangle \nonumber \nonumber \\
    &\hspace{5mm} = 2\pi \int_{-\infty}^{\infty} \mathrm{d}t_1\cdots\mathrm{d}t_{n-1}\, F_1(-t_1)F_2(-2\pi i+t_1-t_2)\cdots F_{n-1}(t_{n-2}-t_{n-1}) F_n(t_{n-1}) \\
    &\hspace{25mm} \times \langle\Omega| \tilde{\calo}_1(t_1-2\pi i) \tilde{\calo}_2(t_2) \cdots \tilde{\calo}_{n-1}(t_{n-1}) \tilde{\calo}_n(0) |\Omega\rangle \nonumber \\
    &\hspace{5mm} = 2\pi \int_{-\infty}^{\infty} \mathrm{d}t_1\cdots\mathrm{d}t_{n-1}\, F_1(-t_1)F_2(-2\pi i+t_1-t_2)\cdots F_{n-1}(t_{n-2}-t_{n-1}) F_n(t_{n-1}) \\
    &\hspace{25mm} \times \langle\Omega| \tilde{\calo}_2(t_2) \cdots \tilde{\calo}_{n-1}(t_{n-1}) \tilde{\calo}_n(0) \tilde{\calo}_1(t_1) |\Omega\rangle \nonumber \\
    &\hspace{5mm} = \langle\Psi_{\mathrm{HH}} | f_2(q) \tilde{\calo}_2(p) \cdots f_n(q) \tilde{\calo}_n(p) f_1(q) \tilde{\calo}_1(p) |\Psi_{\mathrm{HH}}\rangle
\end{align}
The first equality follows from (\ref{eq:crossed-prod-correlators-tbasis}). In the second we deform the $t_1$ contour downwards by $2\pi i$; this is allowed because $F_1(z)$ is analytic in $\mathrm{Im}(z) \in [-2\pi,0]$, and the QFT correlator analytic in the KMS strip. The third equality uses the KMS condition, and the fourth follows from a change of variable and the $t$-invariance of the QFT correlator.

\vspace{1em}

\noindent When $x<0$, the trace is no longer cyclic. It is simplest to work in the energy basis. From (\ref{eq:crossed-prod-correlators-qbasis}), we find that
\begin{align}
    &\langle \Psi_{\mathrm{HH}} | \tilde{\calo}_1(p) \delta(q-E_1) \cdots \tilde{\calo}_n(p) \delta(q-E_n) |\Psi_{\mathrm{HH}} \rangle \\
    &\hspace{5mm} = \int_{-\infty}^{\infty} \mathrm{d}\omega \, e^{-2\pi \omega} \langle\Omega| \tilde{\Phi}_1(\omega-E_1) \tilde{\Phi}_2(E_1-E_2) \cdots \tilde{\Phi}_n(E_{n-1} - E_n) |\Omega\rangle \,. \nonumber
\end{align}
To evaluate this for OTOCs, we simply substitute the frequency-space expressions (\ref{eq:2D-G1-lambda-explicit}) and (\ref{eq:2D-G2-lambda-explicit}). We obtain
\begin{align}
    G_1^{\mathrm{cr}}(E_1, E_2, E_3, E_4) &:= \langle \Psi_{\mathrm{HH}} | \tilde \calo^A(p) \delta(q-E_1) \tilde \calo^B(p) \delta(q-E_2) \tilde \calo^A(p) \delta(q-E_3) \tilde \calo^B(p) \delta(q-E_4) |\Psi_{\mathrm{HH}}\rangle \nonumber \\
    &= C_{E_1, E_2, E_3, E_4} e^{-\pi(E_2+E_4)} F_x(E_1-E_2+E_3-E_4) \,, \\
    &\, \nonumber \\
    G_2^{\mathrm{cr}}(E_2, E_3, E_4,E_1) &:= \langle \Psi_{\mathrm{HH}} | \tilde \calo^B(p) \delta(q-E_2) \tilde \calo^A(p) \delta(q-E_3) \tilde \calo^B(p) \delta(q-E_4) \tilde \calo^A(p) \delta(q-E_1) |\Psi_{\mathrm{HH}}\rangle \nonumber \\
    &= C_{E_1, E_2, E_3, E_4} e^{-\pi(E_1+E_3)}e^{\pi\,\mathrm{sgn}(x)\, (E_1-E_2+E_3-E_4)} F_x(E_1-E_2+E_3-E_4) \,,
\end{align}
\begin{align}
    C_{E_1, E_2, E_3, E_4} &:= \frac{\Gamma(h_B-i(E_1-E_2)) \Gamma(h_A+i(E_2-E_3)) \Gamma(h_B-i(E_3-E_4)) \Gamma(h_A+i(E_4-E_1))}{2\pi \Gamma(2h_A)\Gamma(2h_B)}  \,.
\end{align}
The non-cyclicity of the trace $\Delta_{\mathrm{Tr}} := G_1^{\mathrm{cr}} - G_2^{\mathrm{cr}}$ is given by:
\begin{align}
    \Delta_{\mathrm{Tr}} = \theta(-x) C_{E_1, E_2, E_3, E_4} e^{-\pi(E_2-E_4)} (1- e^{-2\pi(E_1-E_2+E_3-E_4)}) F_x(E_1-E_2+E_3-E_4) \,. \label{eq:2D-Deltatr-exact}
\end{align}
This is non-vanishing for $x<0$.

One may wonder if there is another state $|\Psi\rangle = |\Omega\rangle \otimes \rho_{\Psi}^{1/2}(q)$ which defines a cyclic trace for the algebra $\cala_{\mathrm{cr}}$ when $x<0$. We may again compute crossed-product correlators built from dressed out-of-time ordered matter fields with $\delta(q-E_i)$ insertions between; let us call these $G_{1,\rho_{\Psi}}^{\mathrm{cr}}$ and $G_{2,\rho_{\Psi}}^{\mathrm{cr}}$. We find:
\begin{align}
    \Delta_{\mathrm{Tr}}^{\rho_{\Psi}} = C_{E_1,E_2,E_3,E_4} \big(\rho_{\Psi}(E_4) e^{-\pi(E_2-E_4)} - \rho_{\Psi}(E_1) e^{\pi(E_1-E_3)} e^{\pi\,\mathrm{sgn}(x)\, (E_1-E_2+E_3-E_4)}\big) \,.
\end{align}
When $x>0$, this vanishes if and only if $\rho_{\Psi}(E_4) = e^{2\pi(E_1-E_4)}\rho_{\Psi}(E_1)$, which implies the Hartle-Hawking state $\rho_{\Psi}(q) = e^{-2\pi q}$. When $x<0$, cyclicity requires $\rho_{\Psi}(E_4) = e^{2\pi(E_2-E_3)} \rho_{\Psi}(E_1)$, which is clearly impossible since $(E_1, E_2, E_3, E_4)$ are all independent. We conclude that there is no choice of $\rho_{\Psi}$ that defines a trace on $\cala_{\mathrm{cr}}$ when $x<0$. 

\vspace{1em}

\noindent \textbf{The $x \rightarrow 0$ limit}

\noindent Finally, we take the $x\to 0^-$ limit. The expression (\ref{eq:2D-Deltatr-exact}) does not permit a na\"{i}ve expansion, so we need to smear with test functions. For simplicity, we will again use the Gaussians (\ref{eq:test-functions-lambda-Gaussian}) with fixed width and centered at zero, but again it is possible to generalize to Gaussians of differing widths and means. We now compute:
\begin{align}
    \Delta_{\mathrm{Tr}}(g) &:= \int_{-\infty}^{\infty} \dbar E_1\cdots \dbar E_4 \, g_1(E_1) g_2(E_2) g_3(E_3) g_4(E_4) \Delta_{\mathrm{Tr}}(E_1, E_2, E_3, E_4) \\
    &= \theta(-x) \int_{-\infty}^{\infty} \dbar E_1\cdots \dbar E_4 \, e^{-\frac{1}{2a}(E_1^2 + E_2^2 + E_3^2 + E_4^2)} C_{E_1, E_2, E_3, E_4} e^{-\pi(E_2-E_4)} \\
    &\hspace{25mm} \times (1- e^{-2\pi(E_1-E_2+E_3-E_4)}) F_x(E_1-E_2+E_3-E_4) \nonumber \\
    &= \frac{\theta(-x)}{8\pi i} \frac{1}{\Gamma(2h_A)\Gamma(2h_B)} \sqrt{\frac{\pi a}{2}} \int \dbar k\, F_x(k) \frac{e^{-\pi k}e^{-\frac{1}{8a}k^2}}{\Gamma(1-ik-\epsilon)\Gamma(ik+\epsilon)} \\
    &\hspace{10mm} \times \int \dbar k_1 \, e^{-\frac{1}{8a}k_1^2} e^{-\frac{\pi k_1}{2}} \Gamma(h_A - \tfrac{i}{2}(k-k_1)) \Gamma(h_A - \tfrac{i}{2}(k+k_1)) \nonumber \\
    &\hspace{10mm} \times \int \dbar k_2 \, e^{-\frac{1}{8a}k_2^2} e^{\frac{\pi k_2}{2}} \Gamma(h_B - \tfrac{i}{2}(k-k_2)) \Gamma(h_B - \tfrac{i}{2}(k+k_2)) \,, \nonumber 
\end{align}
In the final equality we have make a change of variables $k=\lambda_1-\lambda_2+\lambda_3-\lambda_4$, $k_1=\lambda_1+\lambda_2-\lambda_3-\lambda_4$, $k_2 = \lambda_1-\lambda_2-\lambda_3+\lambda_4$, $k'=\lambda_1$, and perform the integral over $k'$. Using the same techniques as (\ref{eq:Delta-db-smeared-xto0}), it can be shown using Morera's theorem that the $k_1$ and $k_2$ integrals define a function of $\lambda$ analytic in $\mathbb{H}_+$. Moreover, on the vertical segments $k \in C^{\pm}_N(\Lambda)$ the $k_1$ and $k_2$ integrals decay uniformly as a Gaussian in $\Lambda$, satisfying the requirements of Lemma \ref{lemma:Fxlambda-integral}. As with $\Delta_{\mathrm{db.}}$, we find that the $1/\Gamma(ik+\epsilon)$ factor kills all residues at $k=in$, therefore:
\begin{align}
    \mathrm{\Delta}_{\mathrm{Tr}}(g) = 0 + \mathcal{O}(x^{N+1})\,, \qquad \forall\, N\in\mathbb{Z}_{+}\,.
\end{align}

\noindent For more general test functions, the lesson is as follows. The all-orders power series expansion of the OTOC is given by
\begin{equation}
\braket{\tilde \calo^A(t_1) \tilde \calo^B(t_2) \tilde \calo^A(t_3) \tilde \calo^B(t_4)} = \sum_{n = 0}^\infty \frac{(-ix)^n}{n!}\,\frac{(-1)^{h_A+h_B}(2h_A)_n(2h_B)_n\;e^{-\frac{n}{2}(t_1+t_3)}\,e^{\frac{n}{2}(t_2+t_4)}}{\left(2\sinh\frac{t_1-t_3}{2}-i\epsilon\right)^{2h_A+n}\left(2\sinh\frac{t_2-t_4}{2}-i\epsilon\right)^{2h_B+n}} \,,
\end{equation}
which is understood to be an asymptotic series. We would like to apply the smearing formula \eqref{eq:crossed-prod-correlators-tbasis} to each term above individually, to verify the tracial property of the Hartle-Hawking state order-by-order. We find that
\begin{align}
\label{eq:G1eqan}
    &\langle \Psi_{\mathrm{HH}} | \tilde \calo^A(p) f_1(q) \tilde \calo^B(p) f_2(q) \tilde \calo^A(p) f_3(q) \tilde \calo^B(p) f_4(q) |\Psi_{\mathrm{HH}}\rangle \nonumber |_n
    \\
    &\hspace{5mm}= \pi\,\frac{(-ix)^n}{n!}(-1)^{h_A+h_B}(2h_A)_n(2h_B)_n \nonumber
    \\
    &\hspace{10mm} \times\int_{-\infty}^{\infty}\mathrm{d}u\,\mathrm{d}v\,\mathrm{d}s\;
    \frac{F_4\!\left(-2\pi i-\tfrac{v+u}{2}\right)F_1\!\left(\tfrac{v+u}{2}-s\right)F_2\!\left(s-\tfrac{v-u}{2}\right)F_3\!\left(\tfrac{v-u}{2}\right)\,e^{-\frac{n}{2}v}\,e^{\frac{n}{2}s}}
    {\left(2\sinh\frac{u}{2}-i\epsilon\right)^{2h_A+n}\left(2\sinh\frac{s}{2}-i\epsilon\right)^{2h_B+n}},
\end{align}
where a trivial change of variables has been performed. The $F_i$ functions refer to Fourier transforms (\ref{eq:crossed-prod-test-fn-Fourier}). We are primarily interested in the $v$ integral, because its convergence depends on the properties of the $F_i$ functions. To ensure that the integrals in \eqref{eq:G1eqan} converge for all $n \ge 0$, we should impose that either $F_4(k)$ or $F_2(k)$ decays to zero as $k \rightarrow \infty$ faster than any exponential, or that either $F_1(k)$ or $F_3(k)$ decays to zero as $k \rightarrow -\infty$ faster than any exponential. This means that either $f_2(q)$ or $f_4(q)$ should be analytic in the lower-half $q$ plane, or $f_1(q)$ or $f_3(q)$ should be analytic in the upper-half $q$ plane. With this in mind, we may compute \eqref{eq:G1eqan} using
\begin{equation}
    \int \prod_{i = 1}^4 \dbar E_i \, f_i(E_i) \, G_1^{\mathrm{cr}}(E_1, E_2, E_3, E_4) \,.
    \label{eq:G1int}
\end{equation}
For the sake of concreteness, assume that $f_1(q)$ is analytic in the upper-half $q$ plane. Then the integrand of \eqref{eq:G1int} is analytic in the upper-half $E_1$ plane, except for the poles of $F_x(E_1-E_2+E_3-E_4)$. We may produce an asymptotic series by closing the $E_1$ contour in the upper-half plane, or by using Equation \eqref{eq:Fx-xto0-identity-orderN}. The poles are located at $E_1-E_2+E_3-E_4 = i \mathbb{Z}_{\geq 0}$. As is clear from \eqref{eq:2D-Deltatr-exact}, this power series expansion will agree with the analogous series that one may obtain from 
\begin{equation}
    \int \prod_{i = 1}^4 \dbar E_i \, f_i(E_i) \, G_2^{\mathrm{cr}}(E_1, E_2, E_3, E_4) \,.
    \label{eq:G2int}
\end{equation}
We conclude that the Hartle-Hawking state is a trace on the crossed-product algebra at each order in perturbation theory in $x$ whenever this perturbative expansion is well-defined.


\section{Properties of the Algebra} \label{sec:properties}

We now study properties of the observer's algebra $\mathcal{A}$ in our 2D model. In Section \ref{sec:alg-type}, we show that for $x<0$, the vacuum $|\Omega\rangle$ is cyclic but not separating for $\mathcal{A}$, meaning that it is separating but not cyclic for its commutant $\mathcal{A}'$. This differs sharply from the $x>0$ case studied in \cite{Penington:2025hrc}. Then, we rule out a wide class of candidate operators in $\mathcal{A}'$, which supports our conjecture that $\mathcal{A}'$ is trivial for $x<0$. In Section \ref{sec:alg-free}, we show that a free product algebra emerges in the $x \rightarrow -\infty$ limit. The proof is identical to the $x \rightarrow +\infty$ proof provided in \cite{Penington:2025hrc}. In Section \ref{sec:alg-geometric}, we use dS JT gravity to depict spacetimes with multiple out-of-time-ordered shocks. These intuitively explain the Hilbert space structure and the emergence of a nontrivial commutant in the free product limit.


\subsection{The Type of the Algebra} \label{sec:alg-type}

For $x>0$, the authors of \cite{Penington:2025hrc} demonstrated that the gravitational algebra $\mathcal{A}$ and its commutant $\mathcal{A}'$ are Type $\mathrm{III}_1$ algebras, with cyclic and separating vector $|\Omega_A\rangle\otimes|\Omega_B\rangle$. The structure of the algebra for $x<0$ differs substantially. We first show:

\begin{proposition} \label{prop:non-separating}
    For $x<0$, the state $|\Omega\rangle = |\Omega_A\rangle\otimes|\Omega_B\rangle$ is cyclic but not separating for $\cala$. Consequently, $|\Omega\rangle$ is separating but not cyclic for $\cala'$.
\end{proposition}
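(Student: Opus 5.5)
Cyclicity comes first, and it is essentially inherited from the undressed theory. The rightmost $S^{\pm 1/2}$ in any product of dressed operators acts on $\ket{\Omega}$, and $P_A\ket{\Omega}=P_B\ket{\Omega}=0$ implies $S^{\pm1/2}\ket{\Omega}=\ket{\Omega}$. Any monomial $\tilde a_1\tilde b_2\cdots\ket{\Omega}$ therefore equals $S^{1/2}a_1S^{-1}b_2S\cdots\ket{\Omega}$, with the factors $S^{\pm1}$ interleaved as in \eqref{eq:qgcorr2}.

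To prove density I would argue by induction on word length. The span of $\cala\ket{\Omega}$ contains $\tilde a\ket{\Omega}=S^{1/2}a\ket{\Omega}$ and $\tilde b\ket{\Omega}=S^{-1/2}b\ket{\Omega}$. Because $S$ commutes with $H_A+H_B$ and acts diagonally on $P_A$, $P_B$ eigenstates, it maps $\calh_A\otimes\ket{\Omega_B}$ and $\ket{\Omega_A}\otimes\calh_B$ to themselves trivially (there $P_B=0$ or $P_A=0$). Hence $\tilde a\tilde b\ket{\Omega}=S^{1/2}a\,b\ket{\Omega}$, using $S^{-1/2}b\ket{\Omega}=b\ket{\Omega}$, since $b\ket{\Omega}$ has $P_A=0$. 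By Reeh--Schlieder for each chiral factor, the vectors $a\,b\ket{\Omega}$ span a dense subspace of $\calh_A\otimes\calh_B$. Since $S^{1/2}$ is unitary, $\{S^{1/2}ab\ket{\Omega}\}$ is also dense, so $\ket{\Omega}$ is cyclic for $\cala$.

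For the failure of separation I would argue by contradiction, as the introduction suggests. Suppose $\ket{\Omega}$ were cyclic and separating for $\cala$. Tomita--Takesaki theory would then supply a modular group $\sigma_s$ for $\cala$ with respect to which the vacuum state is KMS. I then need to show that $\sigma_s$ must coincide with the static-patch time translation $e^{-2\pi i s(H_A+H_B)}$, which preserves $\cala$ because $[S,H_A+H_B]=0$ and $\ket{\Omega}$ is invariant under it. The cleanest route is Takesaki's criterion together with the KMS uniqueness theorem: a one-parameter automorphism group of $\cala$ under which a faithful normal state is KMS at $\beta=1$ must be the modular group. So I would check only the converse direction. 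On the two- and three-point functions, and on time-ordered correlators, where $S$ acts trivially, the vacuum restricted to $\cala$ is KMS with respect to $H_A+H_B$ because each undressed theory is KMS. Once $\sigma_s$ is identified with this time translation, the four-point function $G_1$ would have to satisfy the KMS condition. That contradicts the explicit nonzero $\Delta_{\mathrm{KMS}}$ in \eqref{eq:2D-DeltaKMS-UV-explicit-2}, or equivalently the nonzero $\Delta_{\mathrm{db.}}$ in \eqref{eq:2D-deltaKMS-lambda-explicit} smeared against test functions for which it is nonvanishing at finite $x<0$.

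I expect the identification of $\sigma_s$ to be the main obstacle, because the KMS uniqueness theorem cannot be applied while KMS is exactly what fails. I would sidestep it with a more direct argument. Modular theory guarantees that, for every pair $A_1,A_2\in\cala$, the function $s\mapsto\langle\Omega|A_1\sigma_s(A_2)|\Omega\rangle$ continues analytically into the strip. The uniqueness of the modular group follows from Connes' Radon--Nikodym theorem, using the fact that $\ket{\Omega}$ is invariant under the time-translation automorphism $\alpha_t$. Concretely, $\alpha_t$ commutes with $\sigma_s$, and the restrictions of $\sigma_s$ to the undressed subalgebras $S^{1/2}\cala_{0,A}S^{-1/2}$ and $S^{-1/2}\cala_{0,B}S^{1/2}$ agree with $\alpha$, because the vacuum restricted to each is the respective Bisognano--Wichmann/KMS state. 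Here I would use Takesaki's theorem on conditional expectations onto invariant subalgebras. An automorphism group of $\cala$ fixed on a generating set is fixed on all of $\cala$, so $\sigma=\alpha$.

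If that step proves shaky, a fallback is to exhibit directly a nonzero $A\in\cala$ with $A\ket{\Omega}=0$. The natural candidates are built from the KMS-violating contact term, in the spirit of the commutator \eqref{eq:DeltaKMS-aQFT-2D}. The final sentence of the proposition is then immediate, since $\ket{\Omega}$ is separating for $\cala'$ iff it is cyclic for $\cala$, and cyclic for $\cala'$ iff it is separating for $\cala$.
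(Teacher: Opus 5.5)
Your cyclicity argument is the paper's and is fine. The gap is in the non-separation part, at exactly the step you flagged: nothing in your plan actually identifies the would-be modular group $\sigma_s$ of $(\cala,|\Omega\rangle)$ with the static-patch flow $\alpha_s$. The KMS-uniqueness route is circular, as you note. The replacement rests on the claim that $\sigma_s$ restricted to $\mathcal N:=S^{1/2}\cala_{0,A}S^{-1/2}$ equals $\alpha_s$ because the vacuum state $\omega$ restricted to $\mathcal N$ is KMS for $\alpha$. Takesaki's theorem does not give this. It yields $\sigma^\omega|_{\mathcal N}=\sigma^{\omega|_{\mathcal N}}$ only when $\mathcal N$ is globally invariant under $\sigma^\omega$ itself, or equivalently when a vacuum-preserving normal conditional expectation $E:\cala\to\mathcal N$ exists. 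Invariance of $\mathcal N$ under $\alpha$ is irrelevant, and knowing the modular flow of the restricted state says nothing about how the modular flow of the larger algebra acts on $\mathcal N$.

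For $x<0$ this invariance is precisely what the S-matrix spoils. Such an $E$ would have to satisfy $E(X)Q=QXQ$, where $Q=1\otimes|\Omega_B\rangle\langle\Omega_B|$ projects onto $\overline{\mathcal N|\Omega\rangle}$. A short computation gives $Q\,\tilde b_1\tilde a\,\tilde b_2\,Q=\int\mathrm{d}\mu(p)\,e^{ixpP_A}a\,e^{-ixpP_A}\,Q$, with $\mathrm{d}\mu(p)=\langle\Omega_B|b_1\,\mathrm{d}E_{P_B}(p)\,b_2|\Omega_B\rangle$ supported on $p\le 0$. Since $xp\ge 0$, these null translates push the support of $a$ into $U<0$, so the compression is generically not of the form $a'Q$ with $a'\in\cala_{0,A}$. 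The fact that $\alpha_t$ commutes with $\sigma_s$ cannot substitute for this, and neither can Connes' cocycle theorem, which compares modular groups of different states. Your fallback annihilator is not constructed either. The contact term in \eqref{eq:DeltaKMS-aQFT-2D} involves $J_{\Omega_A}a_1(s)^\dagger J_{\Omega_A}\in\cala_{0,A}'$, which is not an element of $\cala$.

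The missing idea, which is the paper's route, is to skip any abstract identification and compute the Tomita operator directly on the dense set you already used for cyclicity. $J_{\Omega_A}\otimes J_{\Omega_B}$ is antilinear and commutes with $P_A$ and $P_B$, and $\Delta^{1/2}_{\Omega_A}\otimes\Delta^{1/2}_{\Omega_B}$ commutes with $P_AP_B$. Together these give $(S_{\Omega_A}\otimes S_{\Omega_B})\,S^{1/2}ab|\Omega\rangle=S^{-1/2}b^\dagger a^\dagger|\Omega\rangle=(\tilde a\tilde b)^\dagger|\Omega\rangle$. So a would-be $S_\Omega$ must agree with $S_{\Omega_A}\otimes S_{\Omega_B}$ on $\{\sum_i\tilde a_i\tilde b_i|\Omega\rangle\}$, and in particular $\Delta_\Omega\tilde\Phi^A(\lambda)|\Omega\rangle=e^{2\pi\lambda}\tilde\Phi^A(\lambda)|\Omega\rangle$. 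This is your desired statement that the modular flow is the static-patch flow, obtained by calculation rather than by a uniqueness theorem. Now evaluate $\langle\Omega|\tilde\Phi^A(\lambda_1)\,S_\Omega\,(\tilde\Phi^B(\lambda_2)\tilde\Phi^A(\lambda_3)\tilde\Phi^B(\lambda_4))^\dagger|\Omega\rangle$ in two ways: once with the defining property of $S_\Omega$, and once with its polar decomposition. This forces the detailed-balance relation $G_1=e^{2\pi\lambda_1}G_2$, contradicting \eqref{eq:2D-deltaKMS-lambda-explicit}. Also, $\Delta_{\mathrm{KMS}}\neq0$ and $\Delta_{\mathrm{db.}}\neq0$ are not equivalent for $x<0$, as the paper stresses; its contradiction uses the frequency-space one.
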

\begin{proof}
    To see that $|\Omega\rangle$ is cyclic, we note that for any $a_i\in\cala_{0,A}$ and $b_i\in\cala_{0,B}$,
    \begin{align}
        \sum_i \tilde{a}_i \tilde{b}_i |\Omega\rangle = \sum_i S^{1/2} a_i|\Omega_A\rangle\otimes b_i|\Omega_B\rangle \,.
    \end{align}
    Since $|\Omega_A\rangle$, $|\Omega_B\rangle$ are cyclic for $\cala_{0,A}$, $\cala_{0,B}$ and $S^{1/2}$ is a unitary transformation, these states are dense in $\mathcal{H}_A\otimes\mathcal{H}_B$.
    
    Suppose now that $|\Omega\rangle$ is also separating for $\cala$. This allows us to use Theorem \ref{thm:Tomita-Takesaki} to define the Tomita operator, modular operator, and modular conjugation $S_{\Omega} = J_{\Omega}\Delta_{\Omega}^{1/2}$. Following \cite{Penington:2025hrc}, we show that $S_{\Omega}$ agrees with $S_{\Omega_A} \otimes S_{\Omega_B}$ on a dense set of states:
    \begin{align}
        (S_{\Omega_A}\otimes S_{\Omega_B}) \sum_i \tilde{a}_i \tilde{b}_i|\Omega\rangle &= (S_{\Omega_A}\otimes S_{\Omega_B}) e^{ix P_AP_B/2} \sum_i a_i b_i |\Omega\rangle \\
        &= (J_{\Omega_A} \otimes J_{\Omega_B}) (\Delta_{\Omega_A}^{1/2}\otimes  \Delta_{\Omega_B}^{1/2}) e^{ix P_AP_B/2} \sum_i a_i b_i |\Omega\rangle \nonumber \\
        &= (J_{\Omega_A} \otimes J_{\Omega_B}) e^{ix P_AP_B/2} (\Delta_{\Omega_A}^{1/2}\otimes  \Delta_{\Omega_B}^{1/2}) \sum_i a_i b_i |\Omega\rangle \nonumber \\
        &= e^{-ix P_AP_B/2} (J_{\Omega_A} \otimes J_{\Omega_B}) (\Delta_{\Omega_A}^{1/2} \otimes \Delta_{\Omega_B}^{1/2}) \sum_i a_i b_i |\Omega\rangle \nonumber \\
        &= e^{-ix P_AP_B/2} \sum_i S_{\Omega_A} a_i|\Omega_A\rangle\otimes S_{\Omega_B} b_i |\Omega_B\rangle \nonumber \\
        &= e^{-ix P_AP_B/2} \sum_i b_i^{\dagger} a_i^{\dagger} |\Omega\rangle \nonumber \\
        &= \sum_i \tilde{b}_i^{\dagger} \tilde{a}_i^{\dagger} |\Omega\rangle = S_{\Omega} \sum_i \tilde{a}_i \tilde{b}_i|\Omega\rangle \,.
    \end{align}
    In the third and fourth equalities we use the following identities \cite{Leutheusser:2021frk} for the half-sided Hamiltonians:
    \begin{align}
        &[K_{\Omega_A}, P_A] =  2\pi i P_A\,, \qquad [K_{\Omega_B}, P_B] = -2\pi i P_B\,, \qquad [J_{\Omega_A}, P_A] = [J_{\Omega_B}, P_B] = 0 \,, \\
        &[\Delta_{\Omega_A}^{1/2} \otimes \Delta_{\Omega_B}^{1/2}, P_A P_B] = [e^{-\frac{1}{2}(K_{\Omega_A} + K_{\Omega_B})}, P_A P_B] = 0 \,.
    \end{align}
    In particular, this means $\Delta_{\Omega}$ agrees with $\Delta_{\Omega_A}\otimes \Delta_{\Omega_B}$ on this set of states.

    \vspace{1em}
    
    \noindent Finally, we compute the following correlator in two ways:
    \begin{align}
        \langle \Omega| \tilde{\Phi}^A(\lambda_1) S_{\Omega} (\tilde{\Phi}^B(\lambda_2) \tilde{\Phi}^A(\lambda_3) \tilde{\Phi}^B(\lambda_4))^{\dagger}|\Omega\rangle &= \langle \Omega| \tilde{\Phi}^A(\lambda_1) S_{\Omega} (S_{\Omega}^{\dagger})^2 (\tilde{\Phi}^B(\lambda_2) \tilde{\Phi}^A(\lambda_3) \tilde{\Phi}^B(\lambda_4))^{\dagger} |\Omega\rangle \\
        &= \langle \Omega| \tilde{\Phi}^A(\lambda_1) \Delta_{\Omega} S_{\Omega}^{\dagger} (\tilde{\Phi}^B(\lambda_2) \tilde{\Phi}^A(\lambda_3) \tilde{\Phi}^B(\lambda_4))^{\dagger} |\Omega\rangle \\
        &= \langle \Omega| (\tilde{\Phi}^B(\lambda_2) \tilde{\Phi}^A(\lambda_3) \tilde{\Phi}^B(\lambda_4)) \Delta_{\Omega} \tilde{\Phi}^A(\lambda_1) |\Omega\rangle \\
        &= e^{2\pi\lambda_1} \langle \Omega| \tilde{\Phi}^B(\lambda_2) \tilde{\Phi}^A(\lambda_3) \tilde{\Phi}^B(\lambda_4) \tilde{\Phi}^A(\lambda_1) |\Omega\rangle \,, \label{eq:alg-Tomita-1} \\
        \langle \Omega| \tilde{\Phi}^A(\lambda_1) S_{\Omega} (\tilde{\Phi}^B(\lambda_2) \tilde{\Phi}^A(\lambda_3) \tilde{\Phi}^B(\lambda_4))^{\dagger}|\Omega\rangle &= \langle \Omega| \tilde{\Phi}^A(\lambda_1) \tilde{\Phi}^B(\lambda_2) \tilde{\Phi}^A(\lambda_3) \tilde{\Phi}^B(\lambda_4) |\Omega\rangle \,. \label{eq:alg-Tomita-2}
    \end{align}
    In the first equality we used that $(S_{\Omega}^{\dagger})^2=1$, in the second that $\Delta_{\Omega} = S_{\Omega}S_{\Omega}^{\dagger}$, and in the third that $\langle \Omega|\tilde{O}_1 S_{\Omega}^{\dagger}\tilde{O}_2|\Omega\rangle = \langle\Omega| \tilde{O}_2^{\dagger} \tilde{O}_1|\Omega\rangle$. In the fourth we have used that $\Delta_{\Omega}$ agrees with $\Delta_{\Omega_A}\otimes \Delta_{\Omega_B}$ on states $\tilde{a}|\Omega\rangle$, along with the KMS condition (\ref{eq:KMS-dS-lambda}) on the $\cala_{0,A}$ algebra:
    \begin{alignat}{2}
        \Delta_{\Omega}\tilde{\Phi}^A(\lambda)|\Omega\rangle &= (\Delta_{\Omega_A}\otimes\Delta_{\Omega_B}) \Phi^A(\lambda)|\Omega\rangle &&= (\Delta_{\Omega_A} \Phi^A(\lambda)|\Omega_A\rangle)\otimes |\Omega_B\rangle \\
        &= e^{2\pi\lambda}\Phi^A(\lambda)|\Omega\rangle &&= e^{2\pi\lambda}\tilde{\Phi}^A(\lambda)|\Omega\rangle \,.
    \end{alignat}
    Together, (\ref{eq:alg-Tomita-1}) and (\ref{eq:alg-Tomita-2}) imply the KMS condition for the frequency-space OTOC:
    \begin{align}
        G_1(\lambda_1,\lambda_2,\lambda_3,\lambda_4) &= \langle \tilde{\Phi}^A(\lambda_1) \tilde{\Phi}^B(\lambda_2) \tilde{\Phi}^A(\lambda_3) \tilde{\Phi}^B(\lambda_4) \rangle \\
        &= e^{2\pi\lambda_1} \langle \tilde{\Phi}^B(\lambda_2) \tilde{\Phi}^A(\lambda_3) \tilde{\Phi}^B(\lambda_4) \tilde{\Phi}^A(\lambda_1) \rangle = e^{2\pi\lambda_1} G_2(\lambda_1,\lambda_2,\lambda_3,\lambda_4) \,.
    \end{align}
    However, this contradicts the result of the explicit calculation (\ref{eq:2D-deltaKMS-lambda-explicit}), and we conclude that $|\Omega\rangle$ cannot be separating for $\cala$. 
\end{proof}
We can think of this result as a bound on the size of the commutant. Since $|\Omega\rangle$ is not cyclic for $\mathcal{A}'$, there are simply not enough operators for $\mathcal{A}'|\Omega\rangle$ to densely populate the Hilbert space $\mathcal{H}_A\otimes\mathcal{H}_B$. In fact, it is difficult to find non-trivial operators in $\mathcal{A}'$ at all. In comparison, for $x>0$ the commutant is generated by gravitationally dressing $\mathcal{A}_{0,A}'$ and $\mathcal{A}_{0,B}'$, but we have demonstrated in Section \ref{sec:KMS-aQFT} that for $x<0$ these operators do not commute with operators in $\mathcal{A}$. To further constrain $\mathcal{A}'$ systematically, we make use of the following result. 
\begin{lemma} \label{lemma:commutant-op-action}
    Suppose that $O\in \mathcal{A}'$. Then the action of $O$ on the Hilbert space $\mathcal{H} = \mathcal{H}_A\otimes\mathcal{H}_B$ is completely fixed by the action of $O$ on the vacuum, $O|\Omega\rangle$.
\end{lemma}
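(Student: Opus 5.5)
The argument is the standard one: a vector that is cyclic for $\cala$ is enough to determine any operator commuting with $\cala$. The only earlier input needed is the cyclicity half of Proposition \ref{prop:non-separating}. We do not need $|\Omega\rangle$ to be separating, and indeed it is not.

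\textbf{Action on the dense set.} Take $O\in\cala'$ and any $\tilde{\calo}\in\cala$. Since $O$ commutes with $\tilde{\calo}$,
\begin{align}
    O\,\tilde{\calo}\,|\Omega\rangle = \tilde{\calo}\,O\,|\Omega\rangle \,.
\end{align}
So on the subspace $\cala|\Omega\rangle$ the action of $O$ is fixed by the single vector $O|\Omega\rangle$. In practice we work with finite sums $\sum_i \tilde a_i\tilde b_i|\Omega\rangle = S^{1/2}\sum_i a_i|\Omega_A\rangle\otimes b_i|\Omega_B\rangle$. For these,
\begin{align}
    O\sum_i \tilde a_i \tilde b_i|\Omega\rangle = \sum_i \tilde a_i\tilde b_i\, O|\Omega\rangle \,,
\end{align}
and the right-hand side depends only on $O|\Omega\rangle$ and the chosen elements of $\cala$.

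\textbf{Extension to $\calh$.} Proposition \ref{prop:non-separating} shows that these vectors are dense in $\calh_A\otimes\calh_B$, because $S^{1/2}$ is unitary and $|\Omega_A\rangle$, $|\Omega_B\rangle$ are cyclic for $\cala_{0,A}$, $\cala_{0,B}$. Elements of the von Neumann algebra $\cala'$ are bounded, so $O$ is continuous. Hence its value on any $|\psi\rangle\in\calh$ is the norm limit of $\sum_i \tilde a_i\tilde b_i\,O|\Omega\rangle$ along any sequence $\sum_i \tilde a_i\tilde b_i|\Omega\rangle\to|\psi\rangle$. It follows that if $O_1,O_2\in\cala'$ satisfy $O_1|\Omega\rangle=O_2|\Omega\rangle$, then $O_1-O_2$ vanishes on a dense set and is therefore zero.

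\textbf{Technical point.} The generators are built from smeared fields, which may be unbounded. For those, the commutation relation should be read on a common invariant domain, or we should pass to bounded functions of the fields (spectral projections or exponentials), which generate the same von Neumann algebra. The cleanest route is to run the argument with $\cala$ taken as the von Neumann algebra itself. Cyclicity of $|\Omega\rangle$ for the bounded elements follows from cyclicity for the polynomial algebra by the usual strong-closure argument. This is the only step that needs any care; everything else is immediate.
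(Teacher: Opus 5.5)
Your argument is correct, but it is not the route the paper takes.

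The paper works directly in the generalized frequency basis $\{|\Omega\rangle, |\lambda_A,\Omega_B\rangle_i, |\Omega_A,\lambda_B\rangle_j, |\lambda_A,\lambda_B\rangle_{ij}\}$ and writes each basis vector explicitly as dressed fields acting on the vacuum. The states $|\lambda_A,\Omega_B\rangle_i$ and $|\Omega_A,\lambda_B\rangle_j$ are simply $\tilde\varphi^A_i(\lambda)|\Omega\rangle$ and $\tilde\varphi^B_j(\lambda)|\Omega\rangle$. For $|\lambda_1,\lambda_2\rangle_{ij}$ it proceeds in three steps. It uses $\tilde\varphi^A_i(\lambda_1)\tilde\varphi^B_j(\lambda_2)|\Omega\rangle = e^{ixP_AP_B/2}|\lambda_1,\lambda_2\rangle_{ij}$. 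It then expands the S-matrix with the kernel $F_{x/2}$. Finally it inverts with the $F_x \leftrightarrow F_{-x}$ convolution identity. The result is $O|\lambda_1,\lambda_2\rangle_{ij}$ as an explicit integral of $\tilde\varphi^A\tilde\varphi^B$ acting on $O|\Omega\rangle$, or, from the other ordering, of $\tilde\varphi^B\tilde\varphi^A$.

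You instead invoke the general fact that a cyclic vector for $\cala$ is separating for $\cala'$. This is really the second sentence of Proposition \ref{prop:non-separating} restated. Your version is shorter and uses nothing about the 2D model beyond cyclicity. It also handles the extension from a dense set properly, by boundedness. The paper, by contrast, manipulates delta-normalizable states, unbounded frequency-space fields and formal kernel inversions without addressing convergence.

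What the paper's computation buys is the pair of explicit formulas \eqref{eq:alg-KOI-step}--\eqref{eq:alg-KOII-step}. Their equality is the Corollary that drives Propositions \ref{prop:commutant-AorB-constraint} and \ref{prop:commutant-3field-constraint}. Your framework recovers this too, and more transparently. Both $\mathcal{K}_{\mathrm{I},O}^{\lambda_1\lambda_2;ij}$ and $\mathcal{K}_{\mathrm{II},O}^{\lambda_1\lambda_2;ij}$ map $|\Omega\rangle$ to $|\lambda_1,\lambda_2\rangle_{ij}$. So $\Delta\mathcal{K}_O^{\lambda_1\lambda_2;ij}$ is, formally, an element of $\cala$ that annihilates $|\Omega\rangle$. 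Commuting it past $O$ gives $\Delta\mathcal{K}_O^{\lambda_1\lambda_2;ij}O|\Omega\rangle = 0$ directly. Its nonvanishing for $x<0$ is then itself a formal witness that $|\Omega\rangle$ fails to be separating for $\cala$.
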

\begin{proof}
    We must specify the action of $O$ on a basis of $\mathcal{H}$ in terms of $O|\Omega\rangle$. For a chiral CFT with primaries $\phi_i$, a basis is furnished by states $\{|\Omega\rangle, |\lambda_A,\Omega_B\rangle_i, |\Omega_A,\lambda_B\rangle_j, |\lambda_A,\lambda_B\rangle_{ij}\}$. We recall the rescaled fields (\ref{eq:Phitovarphi}):
    \begin{alignat}{3}
        \varphi_i^A(\lambda) &:= \frac{1}{\langle\lambda|\phi_i^A(1)|\Omega_A\rangle} \Phi_i^A(\lambda)\,, \quad &&\varphi^A_i(\lambda) |\Omega\rangle = |\lambda_A,\Omega_B\rangle_i\,, \quad &&\langle\Omega|\varphi_i^A(\lambda) = e^{-i\pi(h_i+i\lambda)} {}_i\langle-\lambda,\Omega_B| \,, \nonumber \\
        \varphi_j^B(\lambda) &:= \frac{1}{\langle\lambda|\phi_j^B(-1)|\Omega_B\rangle} \Phi_j^B(-\lambda)\,, \quad &&\varphi^B_j(\lambda) |\Omega\rangle = |\Omega_A,\lambda_B\rangle_j\,, \quad &&\langle\Omega|\varphi_j^B(\lambda) = e^{+i\pi(h_j+i\lambda)} {}_j\langle\Omega_A,-\lambda| \,, \label{eq:2D-rescaled-fields}
    \end{alignat}
    and denote their dressed versions by $\tilde{\varphi}_i^A(\lambda)$ and $\tilde{\varphi}_j^B(\lambda)$. The action of $O$ on the states $|\lambda_A,\Omega_B\rangle_i$ and $|\Omega_A,\lambda_B\rangle_j$ is given by
    \begin{align}
        O |\lambda_A,\Omega_B\rangle_i = \varphi^A_i(\lambda_A) O |\Omega\rangle, \qquad O |\Omega_A,\lambda_B\rangle_j = \varphi^B_j(\lambda_B) O |\Omega\rangle \,.
    \end{align}
    It remains to compute $O|\lambda_1,\lambda_2\rangle_{ij}$. This can be done in two ways. We start with:
    \begin{align}
        \tilde{\varphi}_i^A(\lambda_1)\tilde{\varphi}_j^B(\lambda_2) O |\Omega\rangle &= O \tilde{\varphi}_i^A(\lambda_1)\tilde{\varphi}_j^B(\lambda_2) |\Omega\rangle = O e^{ix P_A P_B/2} |\lambda_1,\lambda_2\rangle_{ij} \nonumber \\
        &= \int \dbar\lambda\, F_{x/2}(\lambda) O |\lambda_1+\lambda\rangle_i |\lambda_2+\lambda\rangle_j \label{eq:alg-ABO-step} \\
        \tilde{\varphi}_j^B(\lambda_2)\tilde{\varphi}_i^A(\lambda_1) O |\Omega\rangle &= O \tilde{\varphi}_j^B(\lambda_2)\tilde{\varphi}_i^A(\lambda_1) |\Omega\rangle = O e^{-ix P_A P_B/2} |\lambda_1,\lambda_2\rangle_{ij} \nonumber \\
        &= \int \dbar\lambda\, F_{-x/2}(\lambda) O |\lambda_1+\lambda\rangle_i |\lambda_2+\lambda\rangle_j \label{eq:alg-BAO-step}
    \end{align}
    These equations can be inverted using the following identity:
    \begin{align}
        f_1(\lambda_1,\lambda_2) = \int\dbar\lambda \, F_x(\lambda) f_2(\lambda_1+\lambda, \lambda_2+\lambda) \ \Longleftrightarrow\  f_2(\lambda_1,\lambda_2) = \int \dbar\lambda' \, F_{-x}(\lambda') f_1(\lambda_1+\lambda', \lambda_2+\lambda') \,.
    \end{align}
    Applying this to (\ref{eq:alg-ABO-step})--(\ref{eq:alg-BAO-step}) yields two ways to express $O|\lambda_1,\lambda_2\rangle$ in terms of $O|\Omega\rangle$.
    \begin{alignat}{2}
        O|\lambda_1,\lambda_2\rangle_{ij} &= \int \dbar\lambda \, F_{-x/2}(\lambda) \, \tilde{\varphi}_i^A(\lambda_1+\lambda)\tilde{\varphi}_j^B(\lambda_2+\lambda) \, O |\Omega\rangle \label{eq:alg-KOI-step} \\
        &= \int \dbar\lambda \, F_{x/2}(\lambda) \, \tilde{\varphi}_j^B(\lambda_2+\lambda)\tilde{\varphi}_i^A(\lambda_1+\lambda) \, O |\Omega\rangle\,. \label{eq:alg-KOII-step}
    \end{alignat}
    This proves the lemma. Of course, the representations (\ref{eq:alg-KOI-step}) and (\ref{eq:alg-KOII-step}) must be equivalent, but we will see this is non-trivial. Matching them yields the following result.
\end{proof}
\begin{corollary}
    For any $O\in\mathcal{A}'$ we must have:
    \begin{align}
        \mathcal{K}_{\mathrm{I}, O}^{\lambda_1\lambda_2;ij} O|\Omega\rangle &= \mathcal{K}_{\mathrm{II}, O}^{\lambda_1\lambda_2;ij} O|\Omega\rangle \,. \label{eq:alg-KOI-KOII-constraint}
    \end{align}
    for all frequencies $\lambda_1$,$\lambda_2$ and primary labels $i$, $j$. Here we have defined:
    \begin{align}
        \mathcal{K}_{\mathrm{I}, O}^{\lambda_1\lambda_2;ij} &:= \int \dbar\lambda \, F_{-x/2}(\lambda) \, \tilde{\varphi}_i^A(\lambda_1+\lambda)\tilde{\varphi}_j^B(\lambda_2+\lambda) \,, \label{eq:alg-KOI-def} \\
        \mathcal{K}_{\mathrm{II}, O}^{\lambda_1\lambda_2;ij} &:= \int \dbar\lambda \, F_{x/2}(\lambda) \, \tilde{\varphi}_j^B(\lambda_2+\lambda)\tilde{\varphi}_i^A(\lambda_1+\lambda) \,, \label{eq:alg-KOII-def}  \\
        \Delta\mathcal{K}_O^{\lambda_1\lambda_2;ij} &:= \mathcal{K}_{\mathrm{I}, O}^{\lambda_1\lambda_2;ij} - \mathcal{K}_{\mathrm{II}, O}^{\lambda_1\lambda_2;ij} \,.
    \end{align}
\end{corollary}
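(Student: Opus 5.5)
The corollary follows by equating the two expressions (\ref{eq:alg-KOI-step}) and (\ref{eq:alg-KOII-step}) for the same vector $O|\lambda_1,\lambda_2\rangle_{ij}$. The proof of Lemma \ref{lemma:commutant-op-action} already produces both expressions, so the plan is to restate that derivation carefully and then subtract.

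\textbf{Step 1: the first representation.} Start from the fact that $O\in\cala'$ commutes with the dressed generators $\tilde{\varphi}_i^A(\lambda_1)$ and $\tilde{\varphi}_j^B(\lambda_2)$. These are unbounded, so the commutation should be understood on smeared generators, or in the strong resolvent sense, and then extended to frequency labels as distributions. Use $\tilde a\tilde b|\Omega\rangle = S^{1/2}ab|\Omega\rangle$, and the analogous $\tilde b\tilde a|\Omega\rangle = S^{-1/2}ab|\Omega\rangle$. Together with the rescaled-field actions (\ref{eq:2D-rescaled-fields}), this gives
\[
O\,e^{\pm ixP_AP_B/2}|\lambda_1,\lambda_2\rangle_{ij}
\]
written in terms of products of dressed fields acting on $O|\Omega\rangle$. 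Expanding $e^{\pm ixP_AP_B/2}$ in the $\lambda$-basis via (\ref{eq:2D-PAPB-lambda-basis}) with $x\to\pm x/2$ yields (\ref{eq:alg-ABO-step}) and (\ref{eq:alg-BAO-step}).

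\textbf{Step 2: inversion.} Invert both relations using the convolution identity relating $F_x$ and $F_{-x}$. This identity holds because the Fourier representation $\int\dbar\lambda\, e^{ik\lambda}F_x(\lambda)=e^{ixe^{-k}}$ turns the convolution into multiplication by $e^{ixe^{-k}}$, and $e^{ixe^{-k}}e^{-ixe^{-k}}=1$. Equivalently, $e^{ixP_AP_B/2}$ and $e^{-ixP_AP_B/2}$ are mutually inverse unitaries. The inversion gives
\[
O|\lambda_1,\lambda_2\rangle_{ij}=\mathcal{K}_{\mathrm{I},O}^{\lambda_1\lambda_2;ij}\,O|\Omega\rangle \quad\text{and}\quad O|\lambda_1,\lambda_2\rangle_{ij}=\mathcal{K}_{\mathrm{II},O}^{\lambda_1\lambda_2;ij}\,O|\Omega\rangle .
\]

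\textbf{Step 3: subtraction.} Since the left-hand sides are the same vector, subtracting the two expressions gives (\ref{eq:alg-KOI-KOII-constraint}), i.e.\ $\Delta\mathcal{K}_O^{\lambda_1\lambda_2;ij}\,O|\Omega\rangle=0$.

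\textbf{Main obstacle.} The delicate step is justifying the manipulations with delta-normalizable states and $\lambda$-labelled unbounded operators. Concretely, this means exchanging $O$ with the $\dbar\lambda$ integrals against $F_{\pm x/2}$. I would handle it by smearing $\lambda_1$ and $\lambda_2$ against test functions, so that every object is a smeared field or a normalizable vector. The integrals then converge: $F_x$ is bounded by $|\Gamma(\epsilon+i\lambda)|e^{\pm\pi\lambda/2}$, which decays exponentially in one direction and is polynomially bounded in the other. After this, the corollary is an identity between vectors for all smeared labels, and (\ref{eq:alg-KOI-KOII-constraint}) holds in the distributional sense.
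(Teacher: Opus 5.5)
Your proposal is correct and follows the paper's own argument. Commuting $O$ through the dressed fields gives \eqref{eq:alg-ABO-step}--\eqref{eq:alg-BAO-step}, inverting with the $F_{x/2}\leftrightarrow F_{-x/2}$ identity gives the two representations \eqref{eq:alg-KOI-step} and \eqref{eq:alg-KOII-step} of $O|\lambda_1,\lambda_2\rangle_{ij}$, and equating them is exactly the corollary.

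The paper does this formally, so your smearing paragraph goes beyond it. Note, however, that the bound on $F_x$ alone does not give convergence: $|F_x(\lambda)|$ decays only like $|\lambda|^{-1/2}$ in one direction, and the rescaled fields have exponentially growing matrix elements (e.g.\ $\langle\Omega|\varphi^A_i(\lambda)\propto e^{\pi\lambda}$). That step would need more care if you want it rigorous.
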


\noindent To employ this relation, we first consider a complete basis of states $\mathcal{B} = \{|\alpha,\beta\rangle\}$ of $\mathcal{H}$. We say that $O|\Omega\rangle$ has `spectral support' in some subset $\mathcal{B}_0 \subseteq \mathcal{B}$ if $\langle \alpha', \beta'|O|\Omega\rangle=0$ for all $|\alpha',\beta'\rangle \in \mathcal{B}\backslash \mathcal{B}_0$. This is denoted by $\mathop{\mathrm{supp}}_{\mathcal{B}}(O|\Omega\rangle) \subseteq \mathcal{B}_0$. We now use (\ref{eq:alg-KOI-KOII-constraint}) to constrain the spectral support of $O|\Omega\rangle$ for any $O\in\mathcal{A}'$. This provides constraints on $\mathcal{A}'$ itself. For example, note that $\mathcal{A}'$ is trivial if and only if $O|\Omega\rangle$ has spectral support in $\{|\Omega\rangle\}$. It is simplest to work in the frequency basis, where $|\alpha\rangle\in \{|\Omega_A\rangle,|\lambda_A\rangle_i \}$ and $|\beta\rangle \in \{|\Omega_B\rangle, |\lambda_B\rangle_j\}$. We remark on other bases in Appendix \ref{sec:commutator-calcs}. 

\begin{proposition} \label{prop:commutant-AorB-constraint}
    Let $x<0$ and $O\in\mathcal{A}'$. If $O|\Omega\rangle$ can be written as a linear combination of states of the form $|\Omega\rangle$, $|\lambda_A,\Omega_B\rangle_k$, and $|\Omega_A,\lambda_B\rangle_k$, then $O$ is trivial.
\end{proposition}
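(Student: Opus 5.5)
We plan to use the Corollary, \eqref{eq:alg-KOI-KOII-constraint}, as the main constraint. Write
\begin{align}
    O|\Omega\rangle = c_0|\Omega\rangle + \sum_k \int \dbar\lambda\, \alpha_k(\lambda)|\lambda_A,\Omega_B\rangle_k + \sum_k \int \dbar\lambda\, \beta_k(\lambda)|\Omega_A,\lambda_B\rangle_k \,.
\end{align}
The goal is to show that $\alpha_k = \beta_k = 0$. Once that is done, $O|\Omega\rangle = c_0|\Omega\rangle$, and since $O - c_0$ lies in $\mathcal{A}'$ and annihilates $|\Omega\rangle$, Lemma \ref{lemma:commutant-op-action} gives $O = c_0\mathds{1}$. (Equivalently, $|\Omega\rangle$ is separating for $\mathcal{A}'$ by Proposition \ref{prop:non-separating}.)

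The strategy is to apply $\Delta\mathcal{K}_O^{\lambda_1\lambda_2;ij}$ to each of the three components separately and compare. On $|\Omega\rangle$, both $\mathcal{K}_{\mathrm{I}}$ and $\mathcal{K}_{\mathrm{II}}$ reproduce $O|\lambda_1,\lambda_2\rangle_{ij}$ trivially, so $\Delta\mathcal{K}|\Omega\rangle = 0$.

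On a one-particle state $|\lambda_A,\Omega_B\rangle_k$, we first undress the operators. We write $\tilde\varphi^A\tilde\varphi^B = S^{1/2}\varphi^A S^{-1}\varphi^B S^{1/2}$, and note that $S^{1/2}$ acts trivially on states with $P_B = 0$ or $P_A = 0$. We then need the matrix elements $\varphi_i^A(\lambda')|\lambda\rangle_k$. These are computed by inserting a complete basis and using the three-point function \eqref{eq:threepoint}; they give components along $|\Omega_A\rangle$ (with coefficient $\delta_{ik}$ times the conjugate two-point data) and along two-particle sectors $|\lambda''\rangle_m$ weighted by $C_{ikm}$. The key step is then to project $\Delta\mathcal{K}\,O|\Omega\rangle$ onto states $|\lambda_A'\rangle_m\otimes|\lambda_B'\rangle_j$.

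From the $\alpha$ and $\beta$ components, the $B$-factor is created by a single $\varphi_j^B$ acting on $|\Omega_B\rangle$. In $\mathcal{K}_{\mathrm{I}}$ this is followed by $S^{-1}$ and then $\varphi^A$. In $\mathcal{K}_{\mathrm{II}}$ the ordering is reversed, with a factor of $S$. Using \eqref{eq:2D-PAPB-lambda-basis} to move the $S$ factors through, the two terms differ by kernels $F_{\pm x}$ convolved with $F_{\mp x/2}$. Their difference is expected to be proportional to the same structure as $\Delta_{\mathrm{db.}}$, namely a factor $\theta(-x)\,2\sinh(\pi\lambda)F_x(\lambda)$. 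For $x>0$ this factor vanishes identically, which is consistent with the nontrivial commutant in that case.

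Setting the projected difference to zero for all $\lambda_1,\lambda_2,\lambda_A',\lambda_B'$, the $|\Omega_A\rangle$-channel contribution (the $\delta_{ik}$ term) gives an equation of the schematic form
\begin{align}
    \theta(-x)\,\sinh(\pi\mu)\,F_x(\mu)\, g(\lambda_1,\lambda_2,\mu)\,\alpha_i(\lambda_1+\mu)=0 \,,
\end{align}
where $g$ is an explicit product of $\Gamma$-functions and never vanishes. Since $F_x$ and the $\Gamma$ factors have no zeros on the real line, this forces $\alpha_i(\lambda) = 0$ for almost every $\lambda$, for each $i$ (choosing $j$ arbitrarily). The symmetric argument, with the roles of $A$ and $B$ swapped, forces $\beta_j = 0$.

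The main obstacle is the bookkeeping in this projection. The $\alpha$ and $\beta$ components, as well as the $C_{ikm}$ three-particle-type contributions, can all land on the same two-particle basis vector, so it is not immediately clear that they do not cancel. We would first try to isolate the $\alpha$ terms by choosing $m = i$ and exploiting their distinct $\lambda$-dependence. In particular, the $\alpha$ contribution contains the non-analytic factor $F_x$ in the combination $\lambda_1-\lambda_A'$, whereas the $\beta$ contributions depend on $\lambda_2-\lambda_B'$. Varying these independently should separate the two sets of terms. If that separation fails, we would fall back on choosing primaries $i,j$ for which $C_{ikm}=0$ (for example the stress-tensor-free sector), or on smearing against Gaussians as in the $\Delta_{\mathrm{db.}}$ computation.
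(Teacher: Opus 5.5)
Your outline matches the paper's proof: impose $\mathcal{K}_{\mathrm{I},O}^{\lambda_1\lambda_2;ij}O|\Omega\rangle=\mathcal{K}_{\mathrm{II},O}^{\lambda_1\lambda_2;ij}O|\Omega\rangle$ on the ansatz, extract a factor $\theta(-x)\sinh\cdot F$ multiplying the one-particle coefficients, vary the free frequencies, and finish with Lemma~\ref{lemma:commutant-op-action}. There is a gap, however. The step you call the main obstacle is left unresolved, and the projection you describe, onto two-particle vectors $|\lambda_A'\rangle_m\otimes|\lambda_B'\rangle_j$, is the wrong one. In that channel the $\beta_k$ terms involve $\varphi_j^B|\lambda_B\rangle_k$, not $\varphi_j^B|\Omega_B\rangle$ as you state. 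So both the $\alpha$- and $\beta$-contributions carry a three-point function and can genuinely mix.

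The missing observation is that you should contract with the one-particle bra ${}_{k'}\langle\Omega_A,-\lambda_B'|$ (your ``$|\Omega_A\rangle$-channel''), where the separation is automatic.
\begin{itemize}
\item On a $\beta$ component, $\varphi_i^A$ acts on the $A$ vacuum and produces a state in sector $i$. All the $S$-factors are functions of $P_A,P_B$ and preserve that sector label, so the result is orthogonal to $\langle\Omega_A|$. Equivalently, you get a correlator with a single $A$ field, i.e.\ an $A$ one-point function.
\item On an $\alpha$ component only $\langle\Omega_A|\varphi_i^A|\cdot\rangle_k\propto\delta_{ik}$ enters, so no $C_{ikm}$ ever appears.
\item In $\mathcal{K}_{\mathrm{II}}$ every $S$-factor meets either $P_B=0$ on its right or $\langle\Omega_A|$ on its left. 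That matrix element therefore factorizes into two two-point functions.
\item In $\mathcal{K}_{\mathrm{I}}$ exactly one $S^{-1}$ survives, and you obtain the OTOC $G_2$ of \eqref{eq:2D-G2-lambda-explicit}.
\end{itemize}

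What remains is the computation you only assert as ``expected.'' Set $\mu=\lambda_1+\lambda_A$. The convolution $\int\dbar\lambda\,e^{\pi\lambda}F_{-x/2}(\lambda)F_{-x}(-\lambda-\mu)$ is evaluated with the Mellin--Barnes identity \eqref{eq:2D-MB-identity1}. It gives $e^{\pi\mu(1-\mathrm{sgn}\,x)}F_{x/2}(-\mu)$, versus $F_{x/2}(-\mu)$ from $\mathcal{K}_{\mathrm{II}}$. The difference is $\theta(-x)\,e^{\pi\mu}\,2\sinh(\pi\mu)\,F_{x/2}(-\mu)$; note it is $F_{x/2}$, not $F_x$.

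Projecting $\Delta\mathcal{K}_O^{\lambda_1\lambda_2;ij}O|\Omega\rangle$ then gives
\[
\theta(-x)\,\delta_{jk'}(-1)^{h_i}e^{\pi\lambda_1}\,2\sinh(\pi\nu)\,F_{x/2}(-\nu)\,\alpha_i(\nu-\lambda_1),\qquad \nu=\lambda_2+\lambda_B'.
\]
Since $\lambda_1$ and $\nu$ are independent, every argument of $\alpha_i$ is reached with $\nu\neq0$, so $\alpha_i=0$. The bra ${}_{k'}\langle-\lambda_A',\Omega_B|$ disposes of $\beta$ in the same way.

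With this choice your fallbacks are unnecessary. The first would not be available in general anyway: isolating $\alpha_k$ forces $i=k$, and $C_{kkm}$ cannot be arranged to vanish for all $m$.
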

\begin{proof}
    Suppose:
    \begin{align}
        O|\Omega\rangle = c_{\Omega}|\Omega\rangle + \sum_{k} \int\dbar\lambda_A \, c^A_{\lambda_A,k} |\lambda_A,\Omega_B\rangle_k + \sum_k \int\dbar\lambda_B \, c^B_{\lambda_B,k} |\Omega_A, \lambda_B\rangle_k \label{eq:alg-O-ansatz-AorB}
    \end{align}
    Without loss of generality we can assume $c_{\Omega}=0$, since we could apply the following proof to $O - c_{\Omega}\mathds{1}$ instead (by Lemma \ref{lemma:commutant-op-action} the coefficients in (\ref{eq:alg-O-ansatz-AorB}) uniquely define an operator). Our strategy is to compare the matrix elements on both sides of (\ref{eq:alg-KOI-KOII-constraint}) to constrain the coefficients $c^A_{\lambda_A,k}$ and $c^B_{\lambda_B,k}$. We will need the following:
    \begin{align}
        {}_{k'}\langle \Omega_A, -\lambda_B'| \mathcal{K}_{\mathrm{I}, O}^{\lambda_1\lambda_2;ij} |\Omega_A, \lambda_B\rangle_k &= {}_{k'}\langle \Omega_A, -\lambda_B'| \mathcal{K}_{\mathrm{II}, O}^{\lambda_1\lambda_2;ij} |\Omega_A, \lambda_B\rangle_{k} = 0\,, \\
        {}_{k'}\langle \Omega_A, -\lambda_B'| \mathcal{K}_{\mathrm{I}, O}^{\lambda_1\lambda_2;ij} |\lambda_A, \Omega_B\rangle_k  &= e^{-i\pi (h_{k'}+i\lambda_B')} \int \dbar\lambda \, F_{-\frac{x}{2}}(\lambda) \, \langle \tilde{\varphi}_{k'}^B(\lambda_B') \tilde{\varphi}_i^A(\lambda_1+\lambda)\tilde{\varphi}_j^B(\lambda_2+\lambda) \tilde{\varphi}_k^A(\lambda_A) \rangle \,, \\
        {}_{k'}\langle \Omega_A, -\lambda_B'| \mathcal{K}_{\mathrm{II}, O}^{\lambda_1\lambda_2;ij} |\lambda_A, \Omega_B\rangle_k  &= e^{-i\pi (h_{k'}+i\lambda_B')} \int \dbar\lambda \, F_{\frac{x}{2}}(\lambda) \, \langle \tilde{\varphi}_{k'}^B(\lambda_B') \tilde{\varphi}_j^B(\lambda_2+\lambda) \tilde{\varphi}_i^A(\lambda_1+\lambda) \tilde{\varphi}_k^A(\lambda_A) \rangle \,.
    \end{align}
    Both matrix elements in the first line can be written as correlators involving three $B$ fields and one $A$ field using (\ref{eq:alg-KOI-def})--(\ref{eq:alg-KOII-def}), and therefore vanish. The right-hand side of the second line is an OTOC we have already computed in (\ref{eq:2D-G2-lambda-explicit}). The correlator in the third line has no non-trivial S-matrix insertions and factorizes into a product of two-point functions, which in the $\lambda$-basis are just $\delta$-functions. Both the second and third lines vanish unless $h_i = h_k$ and $h_j = h_{k'}$. The remaining integrals are straightforwards to evaluate:
    \begin{align}
        &e^{i\pi (h_j+i\lambda_B')} \times {}_{k'}\langle \Omega_A, -\lambda_B'| \mathcal{K}_{\mathrm{I}, O}^{\lambda_1\lambda_2;ij} |\lambda_A, \Omega_B\rangle_k \nonumber \\
        &\hspace{5mm}= (-1)^{h_A+h_B} e^{\pi(\lambda_1-\lambda_B')} (2\pi\delta(\lambda_1+\lambda_A-\lambda_2-\lambda_B')) \int \dbar\lambda\, e^{\pi\lambda} F_{-x/2}(\lambda) F_{-x}(-\lambda-\lambda_1-\lambda_A) \\
        &\hspace{5mm}= 2\pi \delta(\lambda_1+\lambda_A-\lambda_2-\lambda_B') (-1)^{h_i+h_j} e^{-\pi(\lambda_A+\lambda_B')} e^{\pi (\lambda_1+\lambda_A)(1-\mathrm{sgn}(x))} F_{x/2}(-\lambda_1-\lambda_A) \,, \\
        &\, \nonumber \\
        &e^{i\pi (h_j+i\lambda_B')} \times {}_{k'}\langle \Omega_A, -\lambda_B'| \mathcal{K}_{\mathrm{II}, O}^{\lambda_1\lambda_2;ij} |\lambda_A, \Omega_B\rangle_k \nonumber \\
        &\hspace{5mm}= 2\pi \delta(\lambda_1+\lambda_A-\lambda_2-\lambda_B') (-1)^{h_i+h_j} e^{-\pi(\lambda_A+\lambda_B')} F_{x/2}(-\lambda_1-\lambda_A) \,,
    \end{align}
    where in the second equality we have used the Mellin-Barnes identity (\ref{eq:2D-MB-identity1}). These match for $x>0$, but not for $x<0$:
    \begin{align}
        &{}_{k'}\langle \Omega_A, -\lambda_B'| \Delta\mathcal{K}_O^{\lambda_1\lambda_2;ij} |\lambda_A, \Omega_B\rangle_k \nonumber \\
        &\hspace{5mm}= \theta(-x)\, 2\pi(\lambda_1+\lambda_A-\lambda_2-\lambda_B') \delta_{ik}\delta_{jk'} (-1)^{h_i} e^{\pi\lambda_1} 2\sinh(\pi(\lambda_1+\lambda_A)) F_{x/2}(-\lambda_1-\lambda_A) \,. \label{eq:DeltaKO-BA-matrixel}
    \end{align}
    \noindent Finally, we use these results together with (\ref{eq:alg-O-ansatz-AorB}) to compute:
    \begin{align}
        &{}_{k'}\langle \Omega_A, -\lambda_B'| \Delta\mathcal{K}_O^{\lambda_1\lambda_2;ij} O|\Omega\rangle \\
        &\hspace{5mm}= \sum_{k} \int\dbar\lambda_A \, c^A_{\lambda_A,k}\times  {}_{k'}\langle \Omega_A, -\lambda_B'|  \Delta\mathcal{K}_O^{\lambda_1\lambda_2;ij} |\lambda_A,\Omega_B\rangle_k + \sum_{k} \int\dbar\lambda_B \, c^B_{\lambda_B,k}\,  \times 0 \nonumber \\
        &\hspace{5mm} = \theta(-x) \delta_{jk'} (-1)^{h_i} e^{\pi\lambda_1} 2\sinh(\pi(\lambda_2+\lambda_B')) F_{x/2}(-\lambda_2-\lambda_B') \, c^A_{-\lambda_1+\lambda_2+\lambda_B',i} \,. \nonumber
    \end{align}
    Since $i,j,k',\lambda_1,\lambda_2,\lambda_B'$ are all arbitrary, the constraint (\ref{eq:alg-KOI-KOII-constraint}) demands that for $x<0$, all coefficients $c^A_{\lambda_A,k}$ must vanish. A very similar calculation using (\ref{eq:2D-G1-lambda-explicit}) yields:
    \begin{align}
        &{}_{k'}\langle -\lambda_A', \Omega_B| \Delta\mathcal{K}_O^{\lambda_1\lambda_2;ij} |\Omega_A,\lambda_B\rangle_k \nonumber \\
        &\hspace{5mm}= \theta(-x)\, 2\pi(\lambda_1+\lambda_A-\lambda_2-\lambda_B') \delta_{ik'}\delta_{jk} (-1)^{h_j} e^{-\pi\lambda_2} 2\sinh(\pi(\lambda_2+\lambda_B)) F_{-\frac{x}{2}}(-\lambda_2-\lambda_B) \,. \label{eq:DeltaKO-AB-matrixel}
    \end{align}
    Evaluating ${}_{k'}\langle -\lambda_A',\Omega_B| \Delta\mathcal{K}_O^{\lambda_1\lambda_2;ij} O|\Omega\rangle=0$ with the ansatz (\ref{eq:alg-O-ansatz-AorB}) now forces all coefficients $c^B_{\lambda_B,k}$ to vanish. We conclude that any $O\in\mathcal{A}'$ of the form (\ref{eq:alg-O-ansatz-AorB}) must be trivial.
\end{proof}

\noindent We have worked with a restricted ansatz. By Lemma \ref{lemma:commutant-op-action}, the most general operator acting on $\mathcal{H}$ is (uniquely) specified by:
\begin{alignat}{2}
    O|\Omega\rangle &= c_{\Omega}|\Omega\rangle &&+ \sum_{k} \int\dbar\lambda_A \, c^A_{\lambda_A,k} |\lambda_A,\Omega_B\rangle_k + \sum_l \int\dbar\lambda_B \, c^B_{\lambda_B,l} |\Omega_A, \lambda_B\rangle_l \label{eq:alg-O-ansatz-general} \\
    & &&+ \sum_{k,l} \int\dbar\lambda_A\dbar\lambda_B \, c^{AB}_{\lambda_A,\lambda_B;k,l} |\lambda_A,\lambda_B\rangle_{kl} \,. \nonumber
\end{alignat}
A proof that $\mathcal{A}'$ is trivial would require showing that $O\in\mathcal{A}'$ implies the coefficients $c^A_{\lambda_A,k}$, $c^B_{\lambda_B,l}$, and $c^{AB}_{\lambda_A,\lambda_B;k,l}$ are all vanishing. We believe this is possible using the same strategy as Proposition \ref{prop:commutant-AorB-constraint}--that is, by comparing matrix elements on both sides of (\ref{eq:alg-KOI-KOII-constraint}) with the ansatz (\ref{eq:alg-O-ansatz-general}) to derive constraints on the coefficients. This is technically challenging, since the matrix elements involve integrals of correlators with five and six fields, and so contain the frequency three-point function (\ref{eq:threepoint-lambda}). In Appendix \ref{sec:commutator-calcs} we show that in cases where the computation is tractable, matrix elements of $\mathcal{K}_{\mathrm{I}, O}^{\lambda_1\lambda_2;ij}$ and $\mathcal{K}_{\mathrm{II}, O}^{\lambda_1\lambda_2;ij}$ are equal when $x>0$ and different when $x<0$. This allows us to generalize Proposition \ref{prop:commutant-AorB-constraint} to the following.
\begin{restatable}{proposition}{commutantthreefieldconstraint}
    \label{prop:commutant-3field-constraint}
    For $x<0$ and any non-trivial $O\in\mathcal{A}'$, $O|\Omega\rangle$ cannot have spectral support in any of the following sets:
    \begin{align}
        \mathcal{S}_1 \cup \{|\lambda_A^{\ast},\lambda_B\rangle_{k^{\ast},l} \} \,, \quad \mathcal{S}_1 \cup \{|\lambda_A,\lambda_B^{\ast}\rangle_{k^{\ast},l} \} \,, \quad \mathcal{S}_1 \cup \{|\lambda_A^{\ast} \,,\lambda_B\rangle_{k,l^{\ast}} \}\,, \quad \mathcal{S}_1 \cup \{|\lambda_A,\lambda_B^{\ast}\rangle_{k,l^{\ast}} \}  \,, \label{eq:commutant-3field-constraint}
    \end{align}
    where $\mathcal{S}_1 := \{|\Omega\rangle, |\lambda_A,\Omega_B\rangle_k, |\Omega_A,\lambda_B\rangle_l\}$. Here the starred frequencies and primary indices can each take any one value, but are not summed over.
\end{restatable}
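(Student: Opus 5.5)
The plan follows the strategy of Proposition \ref{prop:commutant-AorB-constraint}: take the ansatz \eqref{eq:alg-O-ansatz-general}, with $c^{AB}$ supported on a single extra family, and impose the constraint \eqref{eq:alg-KOI-KOII-constraint} by taking matrix elements against suitable bra states. Consider the first set, $\mathcal{S}_1\cup\{|\lambda_A^\ast,\lambda_B\rangle_{k^\ast,l}\}$; the other three cases follow by relabeling and by exchanging the roles of $A$ and $B$. Concretely, I would write
\begin{align}
O|\Omega\rangle = \sum_k\int\dbar\lambda_A\, c^A_{\lambda_A,k}|\lambda_A,\Omega_B\rangle_k+\sum_l\int\dbar\lambda_B\, c^B_{\lambda_B,l}|\Omega_A,\lambda_B\rangle_l+\sum_l\int\dbar\lambda_B\, d_{\lambda_B,l}|\lambda_A^\ast,\lambda_B\rangle_{k^\ast l}\,,
\end{align}
having set $c_\Omega=0$ as before. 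Lemma \ref{lemma:commutant-op-action} guarantees that these coefficients determine $O$.

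The first step is to kill the new coefficients $d_{\lambda_B,l}$. I would choose a bra with no overlap with $\mathcal{S}_1$ under the action of $\Delta\mathcal{K}_O$ that has a nonzero overlap with the two-particle component. The natural choice is ${}_{k'}\langle\Omega_A,-\lambda_B'|$, already used in Proposition \ref{prop:commutant-AorB-constraint}. Its matrix elements with the $\mathcal{S}_1$ states are known: the $|\Omega_A,\lambda_B\rangle$ terms vanish, and the $|\lambda_A,\Omega_B\rangle$ terms are given by \eqref{eq:DeltaKO-BA-matrixel}. The new ingredient is
\begin{align}
{}_{k'}\langle\Omega_A,-\lambda_B'|\,\Delta\mathcal{K}_O^{\lambda_1\lambda_2;ij}\,|\lambda_A^\ast,\lambda_B\rangle_{k^\ast l}\,,
\end{align}
which is a five-field correlator: two $B$ fields from the state, two fields from $\mathcal{K}$, and one $A$ field. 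Since $A$ then appears twice and $B$ three times, the $B$ sector contains a genuine three-point function \eqref{eq:threepoint-lambda}. I would evaluate it by inserting complete sets of states, using \eqref{eq:2D-PAPB-lambda-basis} to move the $S$-matrix factors through as convolutions with $F_{\pm x/2}$, and then collapsing the $\lambda$ integrals with the Mellin--Barnes identity \eqref{eq:2D-MB-identity1}. The expected outcome mirrors \eqref{eq:DeltaKO-BA-matrixel}: the $\mathcal{K}_{\mathrm{I}}$ and $\mathcal{K}_{\mathrm{II}}$ contributions differ by a factor of the form $\theta(-x)\,2\sinh(\cdots)F_{x/2}(\cdots)$, times the $B$-sector three-point structure constant and Gamma factors.

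This explicit computation is the step I expect to be the main obstacle. Two things must be checked: that the resulting difference is not identically zero for generic $\lambda_1,\lambda_2,j$, and that it is not cancelled by the contribution of $c^A$. To separate the two, I would exploit the freedom in $\lambda_1,\lambda_2,i,j,\lambda_B'$. Because $\lambda_A^\ast$ and $k^\ast$ are fixed, the two-particle contribution is supported on a different locus in frequency and label space than the $c^A$ contribution, which is fixed by the delta function $\delta(\lambda_1+\lambda_A-\lambda_2-\lambda_B')$. Choosing $i\neq k^\ast$, or a generic frequency combination, isolates one piece at a time. The $c^A$ piece alone then vanishes exactly as in Proposition \ref{prop:commutant-AorB-constraint}, and the remaining equation forces $d_{\lambda_B,l}=0$ for all $\lambda_B,l$, since $\sinh$ and $F_{x/2}$ vanish only on measure-zero sets for $x<0$. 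If it turns out that the chosen bra never sees the $d$ term, I would instead use ${}_{k'}\langle-\lambda_A',\Omega_B|$, whose matrix elements are built from \eqref{eq:2D-G1-lambda-explicit}.

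Once $d=0$, the ansatz reduces to \eqref{eq:alg-O-ansatz-AorB}, and Proposition \ref{prop:commutant-AorB-constraint} gives $c^A=c^B=0$, so $O$ is trivial. The remaining three sets are handled by the same argument, with the fixed label placed on the $B$ side or with $A\leftrightarrow B$ swapped. In the swapped cases the relevant bra is ${}_{k'}\langle-\lambda_A',\Omega_B|$ and the relevant non-vanishing kernel comes from \eqref{eq:DeltaKO-AB-matrixel}.
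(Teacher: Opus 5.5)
There is a genuine gap, at the step you flag as the main obstacle: your bra is paired with the wrong sector. In the first set the fixed data $\lambda_A^\ast,k^\ast$ sit on the $A$ side, while $l$ is summed. With ${}_{k'}\langle\Omega_A,-\lambda_B'|$, the $A$ sector of the $d$-term matrix element is only a two-point function. Its $\delta_{ik^\ast}$ restricts the free label $i$ and collapses no sum in your ansatz. The $B$ sector, by contrast, is a three-point function proportional to $C_{k'jl}$, with $l$ running over all primaries.

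Two things then break.
\begin{itemize}
\item For generic weights this three-point function is the six-term ${}_3\tilde F_2$ expression \eqref{eq:threepoint-lambda-3F2}. The two-Gamma identity \eqref{eq:2D-MB-identity1} does not collapse the resulting integrals, which become multivariate Mellin--Barnes integrals. The only tractable case, $h_{k'}=h_j+h_l$, cannot be imposed for every summed $l$ at once.
\item Even granting the integrals, the $d$-only equation has the form $\sum_l M_{k'jl}\,d_{\lambda_B,l}=0$. Here $\lambda_B$ is the single value fixed by the frequency delta function, and each $l$ has its own kernel. To get this equation you take $i=k^\ast$ and $j\neq k'$; it is $j\neq k'$, not $i\neq k^\ast$, that removes the $c^A$ term. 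The equation does not force $d_{\lambda_B,l}=0$ termwise without a linear-independence argument over $l$, which you do not supply.
\end{itemize}
Your fallback is triggered by the wrong criterion: your bra does see $d$ whenever $i=k^\ast$. Your assignment of bras to the other three sets is inverted in the same way.

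The route that works is the bra you held in reserve, together with a specific choice of its label. For the sets with fixed $A$-label $k^\ast$, use ${}_{k'}\langle-\lambda_A',\Omega_B|$ and choose $k'$ with $h_{k'}=h_i+h_{k^\ast}$ and $C_{k'ik^\ast}\neq 0$; such a primary occurs in the $\phi^A_i\phi^A_{k^\ast}$ OPE. Then:
\begin{itemize}
\item The $B$ sector is a two-point function giving $\delta_{jl}$, which kills the sum over $l$.
\item The frequency delta function kills the $\lambda_B$ integral, since $\lambda_A^\ast$ is fixed.
\item The $A$ three-point function reduces to a ratio of Gamma functions. The $\mathcal{K}_{\mathrm{I}}$ integral becomes a ${}_2F_1$ Mellin--Barnes integral, whose continuation to $x<0$ adds a monodromy term. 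The $\mathcal{K}_{\mathrm{II}}$ double integral becomes an Appell $F_1$ giving the monodromy-free value for either sign of $x$.
\end{itemize}
The difference is \eqref{eq:DeltaKO-AAB-matrixel}: the $B$-factor of \eqref{eq:DeltaKO-AB-matrixel} times a generically nonvanishing $A$-factor proportional to $C_{k'ik^\ast}$.

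In this equation the $c^A$ terms vanish, since that matrix element contains a single $B$ field. The $c^B$ terms also vanish, because \eqref{eq:DeltaKO-AB-matrixel} carries $\delta_{ik'}$ while $h_{k'}>h_i$. Each choice of $(i,j,\lambda_1,\lambda_2,\lambda_A')$ therefore isolates one coefficient $c^{AB}_{\lambda_A^\ast,\lambda_B;k^\ast,j}$ and forces it to zero. Proposition~\ref{prop:commutant-AorB-constraint} then finishes the argument. The sets with fixed $l^\ast$ use the mirror bra ${}_{l'}\langle\Omega_A,-\lambda_B'|$ with $h_{l'}=h_j+h_{l^\ast}$.
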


\noindent Propositions \ref{prop:non-separating}--\ref{prop:commutant-3field-constraint} together provide support for the following conjecture.
\begin{conjecture}
    For $x<0$, the commutant is trivial, $\cala' = \{\mathds{1}\}$. The algebra $\cala$ thus consists of all bounded operators $\mathcal{B}(\mathcal{H})$, and is of Type $\mathrm{I}$. 
\end{conjecture}

\subsection{The Free Product Limit} \label{sec:alg-free}

Until now, we have argued that for finite $x < 0$, the observer's algebra $\cala$ has a trivial commutant, and is thus type I. We now consider the limit $x \rightarrow - \infty$, in which the time separation between early- and late-time operators is much larger than the scrambling time. We will show that a type III$_1$ algebra emerges. This algebra is the free product of the QFT algebras $\cala_{0,A}$ and $\cala_{0,B}$, which we denote as $\cala_{0,A} * \cala_{0,B}$. We will first discuss the mathematical result, and then we will interpret it using Penrose diagrams of de Sitter space with multiple out-of-time-ordered shocks.

The analogous results in AdS were already obtained by \cite{Penington:2025hrc}, following the work of \cite{Chandrasekaran:2022eqq}. In particular, their proof of the emergent free product algebra when $x \rightarrow +\infty$ directly applies to the $x \rightarrow - \infty$ case as well. We now summarize the key steps of the proof, and refer the reader to \cite{Penington:2025hrc} for more details.

We would like to study the $x \rightarrow -\infty$ limit of correlators of early- and late-time operators. Recall that an early-time operator is given by $\tilde a$ (defined in \eqref{eq:tildeops}) for $a \in \cala_{0,A}$, while a late-time operator is given by $\tilde b$ for $b \in \cala_{0,B}$. In the proof, we restrict our attention to $a \in \hat \cala_{0,A}$ and $b \in \hat \cala_{0,B}$, where $\hat \cala_{0,A}$ is defined to be the sub-algebra of $\cala_{0,A}$ that only contains operators with compact support on the horizon, and $\hat \cala_{0,B}$ is defined analogously. For example, $\hat \cala_{0,A}$ contains bounded functions of the smeared field $\int \mathrm{d} U \, f(U) \, \tilde{\phi}^A(U)$, where the smearing function $f(U)$ has compact support on $U > 0$. The sub-algebra $\hat \cala_{0,A}$ (resp. $\hat \cala_{0,B}$) is dense in $\cala_{0,A}$ (resp. $\cala_{0,B}$) in the strong operator topology. The key technical advantage of restricting to these sub-algebras is that operators that are separated by a large modular translation exhibit clustering. That is, consider a set of operators $\{ a_i \}_{i \in \mathbb{N}} \subset \hat \cala_{0,A}$. Define $a[s] := e^{-i s P_A} \,a\, e^{i s P_A}$ to represent a modular translation of $a$, where $s \in \mathbb{R}$. Let $\{ s_i(\lambda) \}_{i \in \mathbb{N}}$ be a family of functions of $\lambda \in \mathbb{R}$ with the property that $\lim_{\lambda \rightarrow \infty} (s_i(\lambda) - s_j(\lambda)) \in \{-\infty,\infty \} \cup \mathbb{R}$. Due to clustering, we have that
\footnote{
    We have introduced new notation on the right hand side. Given a partition $\pi$ of $\{1,\cdots,n\}$, define
    \begin{align}
        \langle  \prod_{i = 1}^n a_i \rangle_\pi := \prod_{I} \braket{\Omega | \prod_{i \in I} a_i | \Omega},
    \end{align}
    where $I$ runs over the subsets of this partition. The products above are ordered so that the $i$ index increases from left-to-right.
}
\begin{equation}
\lim_{\lambda \rightarrow \infty}    \braket{\prod_{i = 1}^n a_i[s_i(\lambda)]  } = \lim_{\lambda \rightarrow \infty} \prod_{I} \braket{\Omega | \prod_{i \in I} a_i[s_i(\lambda)] | \Omega} := \lim_{\lambda \rightarrow \infty} \langle  \prod_{i = 1}^n a_i[s_i(\lambda)] \rangle_\pi \,,
\label{eq:clustering}
\end{equation}
The partition $\pi$ is defined so that $i$ and $j$ are in the same subset iff $\lim_{\lambda \rightarrow \infty} (s_i(\lambda) - s_j(\lambda)) \in \mathbb{R}$.  The intuition behind \eqref{eq:clustering} is that the horizon behaves like a Cauchy surface for the QFT, and correlations should vanish at large distances. See \cite{Penington:2025hrc} for a proof.

 Consider $\{ a_i \}_{i \in \mathbb{N}} \subset \hat \cala_{0,A}$ and $\{ b_i \}_{i \in \mathbb{N}} \subset \hat \cala_{0,B}$. As shown in \cite{Penington:2025hrc}, the correlator $\braket{ \tilde a_1 \tilde b_1 \cdots \tilde a_n \tilde b_n}$ may be written as:\footnote{Note that $\alpha$ in \cite{Penington:2025hrc} refers to $x$ here.}
\begin{align}
    \braket{ \tilde a_1 \tilde b_1 \cdots \tilde a_n \tilde b_n} &= \int_{-\infty}^\infty \prod_{j = 1}^n \left[ \frac{\mathrm{d}y_j \,\mathrm{d}x_j}{2 \pi |x|} \right] e^{\frac{i}{x} \sum_{j = 1}^n (x_{j+1} - x_j) y_j} \braket{ a_1[x_1]  \cdots a_n[x_n]   } \braket{b_1[ y_1] \cdots b_n[y_n]} \,,
\end{align}
where $x_{n + 1} := 0$. The integrand contains a product of correlators in the $A$ and $B$ QFTs. We may write each correlator in terms of its connected components. In particular,
\begin{align}
    \braket{ a_1  \cdots a_n   } &= \sum_{\pi}     \braket{ a_1  \cdots a_n   }_{\pi,c}, 
        \label{eq:ccdef}
    \\
    \braket{ a_1  \cdots a_n   }_{\pi,c} &:= \prod_{I} \langle \prod_{i \in I} a_i \rangle_c \,,
\end{align}
where the sum in \eqref{eq:ccdef} runs over all partitions of $\{1,\cdots,n\}$.  Note that $\braket{\cdots}_{\pi,c}$ is given by a product of connected correlation functions, each of which corresponds to a subset $I$ of $\pi$. Equation \eqref{eq:ccdef} serves as a recursive definition of the connected correlation functions.\footnote{A connected correlation function of two or more operators vanishes if any of those operators is the identity.} From \eqref{eq:ccdef}, it follows that
\begin{align}
    \braket{\tilde a_1 \tilde b_1 \cdots \tilde a_n \tilde b_n}
    &= \int_{-\infty}^\infty \prod_{j=1}^n \left[ \frac{\mathrm{d}y_j\,  \mathrm{d}x_j}{2\pi |x|} \right]
    e^{\frac{i}{x} \sum_{j=1}^n (x_{j+1} - x_j) y_j} \hspace{-2mm}
    \sum_{\pi_A,\pi_B} \hspace{-1mm} \braket{a_1[x_1] \cdots a_n[x_n]}_{\pi_{A} , c }
    \braket{b_1[y_1] \cdots b_n[y_n]}_{\pi_B , c}.
\end{align}
Penington and Tabor \cite{Penington:2025hrc} first show that when $\pi_A$ and $\pi_B$ are mutually crossing, then the corresponding term in the sum above makes a vanishing contribution to the left hand side as $x \rightarrow \infty$. That is, let us consider
\begin{align}
    \int_{-\infty}^\infty \prod_{j=1}^n \left[ \frac{\mathrm{d}y_j\, \mathrm{d}x_j}{2\pi |x|} \right]
    e^{\frac{i}{x} \sum_{j=1}^n (x_{j+1} - x_j) y_j}
     \braket{a_1[x_1] \cdots a_n[x_n]}_{\pi_{A} , c }
    \braket{b_1[y_1] \cdots b_n[y_n]}_{\pi_B , c},
    \label{eq:526}
\end{align}
where $\pi_A$ and $\pi_B$ are mutually crossing. Penington and Tabor \cite{Penington:2025hrc} make a change of variables from $\{ x_i , y_i \}$ to $\{X_a, x_i^\prime, Y_b, y_i^\prime \}$, where $a$ indexes a subset $I_a$ of $\pi_A$, $b$ indexes a subset $I_b$ of $\pi_B$, $i \in \{1,\cdots,n\}$, and the variables $x_i^\prime$ and $y_i^\prime$ are subject to the constraints
\begin{equation}
    \sum_{i \in I_a} x_i^\prime =    \sum_{i \in I_b} y_i^\prime = 0,
\end{equation}
for all $a$ and $b$. The change of variables is given by
\begin{equation}
    x_i = X_{a_i} + x^\prime_{i}, \quad \quad y_i  = Y_{b_i} + y_{i}^\prime,
\end{equation}
where $a_i$ refers to the subset of $\pi_A$ that contains element $i$, and $b_i$ refers to the subset of $\pi_B$ that contains element $i$. That is, $X_{a}$ is the average of the $x_i$ elements within the subset $I_a$, and $x_i^\prime$ represents the fluctuations around the average. This change of variables is advantageous because the QFT correlation functions in \eqref{eq:526} are independent of  $X_a$ and $Y_b$, due to translation invariance. Thus, the $X_a$ and $Y_b$ integrals may be performed explicitly. Because the exponent in \eqref{eq:526} is linear in each $x_i$ and $y_i$ variable, these integrals are straightforward $\delta$-function integrals. Penington and Tabor carefully evaluate these integrals and keep track of the power of $|x|$ that is produced. Given that $\pi_A$ and $\pi_B$ are mutually crossing, they show that the power of $|x|$ that is produced is strictly less than $n$. This means that after integrating over the $X_a$ and $Y_b$ variables, the result is a negative power of $|x|$ times an integral over the $x_i^\prime$ and $y_i^\prime$ variables. The integrand is given by
\begin{equation}
    e^{\frac{i}{x} \vec{x}^{\prime \top} M \vec{y}^\prime}
    \braket{a_1[x_1^\prime] \cdots a_n[x_n^\prime]}_{\pi_{A} , c }
    \braket{b_1[y_1^\prime] \cdots b_n[y_n^\prime]}_{\pi_B , c} \prod_{a} \delta\left(\sum_{i \in I_a} x_i^\prime\right)  \prod_{b} \delta\left(\sum_{i \in I_b} y_i^\prime\right) \,,
\end{equation}
times other possible $\delta$-functions involving $x_i^\prime$ or $y_i^\prime$ that have been produced from the integrals over $X_a$ and $Y_b$. Above, $M$ refers to a matrix of numerical factors. To take the $x \rightarrow \infty$ limit, Penington and Tabor drop the $e^{\frac{i}{x} \vec{x}^{\prime\top} M \vec{y}^{\prime}}$ term above, and argue that the integral over $x_i^\prime$ and $y_i^\prime$ is finite because the connected correlation functions decay to zero for large separations, due to clustering. Thus, the final result of \eqref{eq:526} in the $x \rightarrow \infty$ limit is zero due to the prefactor of $|x|$ to a negative power. This argument is indifferent to the choice of $x \rightarrow +\infty$ or $x \rightarrow - \infty$, and hence applies to our de Sitter setup.

Penington and Tabor's treatment of the case where $\pi_A$ and $\pi_B$ are mutually noncrossing is also indifferent to the sign of $x$. The final conclusion is that for $a \in \hat \cala_{0,A}$ and $b \in \hat \cala_{0,B}$,
\begin{align}
    \lim_{x \rightarrow - \infty}  \braket{ \tilde a_1 \tilde b_1 \cdots \tilde a_n \tilde b_n} &= \sum_{\substack{\pi_A,\pi_B \\ \text{mutually noncrossing}}} \braket{a_1 \cdots a_n}_{\pi_{A,c}} \braket{b_1 \cdots b_n}_{\pi_{B,c}}
    \\
    &:= \braket{  a_1  b_1 \cdots  a_n  b_n}_F \,.
    \label{eq:Fdef}
\end{align}
This indicates the emergence of a free product algebra in the $x \rightarrow -\infty$ limit. We now provide a concise review of free product algebras (see \cite{Chandrasekaran:2022eqq} for more details). The algebra $\cala_{0,A} * \cala_{0,B}$ is generated by monomials of the form
\begin{equation}
    a_1 b_1 \cdots a_n b_n \,,
\end{equation}
where $a_i \in \cala_{0,A}$ and $b_i \in \cala_{0,B}$.\footnote{We have dropped the hats on $\cala_{0,A}$ and $\cala_{0,B}$ because we are ultimately interested in the von Neumann algebra generated by the free product of $\hat \cala_{0,A}$ and $\hat \cala_{0,B}$. Because $\hat \cala_{0,A}$ and $\hat \cala_{0,B}$ are dense in $ \cala_{0,A}$ and $\cala_{0,B}$, this is equivalent to the free product of $\cala_{0,A}$ and $\cala_{0,B}$.} The vacuum state is defined by the expectation values in \eqref{eq:Fdef}, and a Hilbert space may be constructed using the GNS construction. Consider $\{ A_i \}_{i \in \mathbb{N}} \subset \cala_{0,A}$ and $\{ B_i \}_{i \in \mathbb{N}} \subset \cala_{0,B}$ such that
\begin{equation}
    \braket{A_i} = \braket{B_i} = 0, \quad \forall \, i \,.
\end{equation}
We will work with a convention where operators with vanishing expectation values are denoted by uppercase letters, while we continue to use lowercase letters to denote general operators. The correlation function
\begin{equation}
    \braket{A_1 B_1 \cdots A_n B_n}_F
    \label{eq:ABcorr} = 0 
\end{equation}
necessarily vanishes because given any mutually noncrossing partitions $\pi_A$ and $\pi_B$, there must be at least one singleton subset. A general expectation value such as \eqref{eq:Fdef} can be computed iteratively by substituting
\begin{equation}
    a_i \rightarrow A_i + \braket{a_i}, \quad \quad     b_i \rightarrow B_i + \braket{b_i} \,,
    \label{eq:sub}
\end{equation}
and using \eqref{eq:ABcorr} to simplify the result. The substitution \eqref{eq:sub} also implies that $\cala_{0,A} * \cala_{0,B}$ may be generated by operators in $\cala_{0,A}$ and $\cala_{0,B}$ with vanishing expectation values, together with the identity.

We now consider the GNS construction of the Hilbert space $\calh_F$. States in $\calh_F$ are spanned by $\ket{X}$, for $X \in \cala_{0,A} * \cala_{0,B}$. We let $\ket{\Omega_F}$ denote the GNS vacuum, which corresponds to the identity operator. Other states are obtained by acting on $\ket{\Omega_F}$ with operators in $\cala_{0,A}$ and $\cala_{0,B}$ with vanishing expectation values. For example,
\begin{equation}
 A_1 B_1 \cdots A_n B_n  \ket{\Omega_F} =  A_1 B_1 \cdots A_n   \ket{B_n} = A_1 B_1 \cdots    \ket{A_n B_n} =   \ket{ A_1 B_1 \cdots A_n B_n} \,.
\end{equation}
Given two such states,\footnote{Here and elsewhere, we have only considered monomials that begin with an $A$ operator and end with a $B$ operator. To complete the discussion, we should also include the analogous formulas for more general monomials. These formulas are implicit in the discussion, and are left to the reader to derive as an exercise.} their inner product is
\begin{equation}
\label{eq:innerproduct}
    \braket{ A_{n + 1} B_{n + 1} \cdots A_{n + m} B_{n + m} | A_1 B_1 \cdots A_n B_n } = \delta_{nm} \braket{A_{n+1}^\dagger A_1} \braket{B_{n+1}^\dagger B_1} \cdots \braket{A_{2 n}^\dagger A_n} \braket{B_{2n}^\dagger B_n} \,.
\end{equation}
This indicates that $\calh_F$ has the following structure \cite{Chandrasekaran:2022eqq},
\begin{equation}
\label{eq:calhf}
    \calh_F = \ket{\Omega_F} \oplus \calh_A^* \oplus \calh_B^* \oplus \left(\calh_A^* \otimes \calh_B^*\right) \oplus \left(\calh_B^* \otimes \calh_A^* \right) \oplus \left(\calh_A^* \otimes \calh_B^* \otimes \calh_A^*\right) \oplus \cdots \,,
\end{equation}
where $\calh_A^*$ is the orthogonal complement of $\ket{\Omega_A}$ in $\calh_A$, and $\calh_B^*$ is the orthogonal complement of $\ket{\Omega_B}$ in $\calh_B$. 

As reviewed in \cite{Chandrasekaran:2022eqq}, $\cala_{0,A} * \cala_{0,B}$ is a type III$_1$ von Neumann algebra. In particular, this implies that it has a nontrvial commutant. We now briefly discuss how the commutant $\left(\cala_{0,A} * \cala_{0,B}\right)^\prime$ may be constructed. Just as $\cala_{0,A} * \cala_{0,B}$ may be generated by operators in $\cala_{0,A}$ and $\cala_{0,B}$ with vanishing expectation values, the commutant may be generated by operators in $\cala_{0,A}^\prime$ and $\cala_{0,B}^\prime$ with vanishing expectation values, where the primed algebras are the commutants of $\cala_{0,A}$ and $\cala_{0,B}$ in $\calh_A$ and $\calh_B$. In particular, let $A^\prime \in \cala_{0,A}^\prime$, and assume that $\braket{A^\prime} = 0$. Define $\ket{A^\prime} \in \calh_A$ to be 
\begin{equation}
    \ket{A^\prime} := A^\prime \ket{\Omega_A} \,.
\end{equation}
This state belongs to $\calh_A^*$, and thus may be canonically identified as a state $\ket{A^\prime}_F \in \calh_F$ using \eqref{eq:calhf}. That is, it is defined to belong to the first appearance of $\calh_A^*$ in \eqref{eq:calhf}. Next, we define $A^\prime$ as an operator on $\calh_F$ as follows,
\begin{equation}
    A^\prime \ket{A_1 B_1 \cdots A_n B_n} :=      A_1 B_1 \cdots A_n B_n \ket{A^\prime}_F \,.
\end{equation}
This defines $A^\prime$ on all of $\calh_F$ because $\calh_F$ is spanned by states of the form $\ket{X}$, where $X$ is a monomial of $A_i$ and $B_i$ operators. Given $B^\prime \in \cala_{0,B}$, we may define $B^\prime$ as an operator on $\cala_F$ in an analogous way. Intuitively, $A^\prime$ and $B^\prime$ act ``on the right.'' This definition is consistent because $\ket{\Omega_F}$ is cyclic and separating for $\cala_{0,A} * \cala_{0,B}$.

A key point is that each of the summands in \eqref{eq:calhf} captures a semiclassical geometry in which out-of-time-ordered shocks backreact on de Sitter space. In the next subsection, we will draw Penrose diagrams of these geometries and physically interpret the commutant.

\subsection{Geometric Interpretation} \label{sec:alg-geometric}

Having established the emergence of a type III$_1$ free product algebra as $x \rightarrow -\infty$, we now discuss how this algebra can be interpreted using de Sitter geometries with multiple out-of-time-ordered shocks. The interpretation of the AdS free product algebra ($x \rightarrow +\infty$) was provided in \cite{Chandrasekaran:2022eqq}. The AdS geometry with multiple shocks has a long wormhole, and was studied in \cite{Shenker:2013yza}. To discuss the analogous geometry in de Sitter, we will use dS JT gravity \cite{Maldacena:2019cbz, Cotler:2019nbi} as a toy model.

We consider dS JT gravity with matter on a spatial interval times time. On the timelike boundaries, we fix the induced metric and dilaton. In particular, we set the dilaton to zero on the boundaries. The action is
\begin{equation}
    S[g,\phi] = \frac{1}{16\pi G_N} \int_M \mathrm{d}^2x\,\sqrt{-g}\;\phi\left(R-2\right) + S_{\text{matter}}[g] \,,
\end{equation}
where $\phi$ is the dilaton. The GHY boundary term vanishes due to the $\phi = 0$ boundary condition. The dilaton's dynamics is governed by the following equation of motion,
\begin{equation}
\left(g_{\mu\nu} \nabla^2 - \nabla_\mu\nabla_\nu + g_{\mu\nu}\right)\phi = 8\pi G_N\,T^{\rm mat}_{\mu\nu} \,.
\label{eq:dilaton}
\end{equation}
The dilaton and matter fields live in an ambient $\mathrm{dS}_2$ spacetime. A physical solution may be constructed by ending the spacetime at two timelike loci where $\phi = 0$. Such loci must be geodesics. We interpret one of these geodesics as the worldline of an observer, and the other geodesic may be interpreted as an auxiliary observer. This is completely analogous to the discussion in Section \ref{sec:4dmodel}, where an observer is modeled as a timelike boundary at $r = r_{\mathrm{Obs}}$. 

It is convenient to describe the ambient $\mathrm{dS}_2$ spacetime in the embedding space formalism, where the spacetime is modeled as a hyperboloid in an ambient Minkowski space,
\begin{equation}
-(X^0)^2 + (X^1)^2 + (X^2)^2 = X_\mu X^\mu = 1 \,.
\end{equation}
A vacuum solution to \eqref{eq:dilaton} is given by
\begin{equation}
\label{eq:dilatonsoln}
    \phi = X^\mu Y_\mu \,,
\end{equation}
for arbitrary $Y_\mu$. To ensure the existence of timelike geodesic boundaries, we focus on solutions for which $Y_\mu Y^\mu > 0$. The dS$_2$ spacetime has three Killing vectors, whose components in the ambient Minkowski space are given by
\begin{equation}
\xi_i^\mu := \epsilon\indices{_i^\mu_\nu} X^\nu \,,
\end{equation}
where $\epsilon_{\mu \nu \rho}$ is a totally-antisymmetric Levi-Civita symbol that obeys $\epsilon_{012} = 1$. The $\mu$ index in $\epsilon\indices{_i^\mu_\nu}$ is raised using the mostly-plus Minkowski metric. The $i$ index is valued in $\{0,1,2\}$ and labels a choice of Killing vector, while the $\mu$ index denotes the vector components.

We are primarily interested in the case where the matter consists of null shockwaves. Each shockwave carries a set of charges which may be computed as follows,
\begin{equation}
    Q_i := \int_{-\infty}^\infty \mathrm{d}U \, T_{UU} \xi^U_i \,.
\end{equation}
This depends on the stress tensor of the shock as well as a choice of Killing vector. Above, $U$ is an affine coordinate along a null hypersurface that intersects the shock at one point, and $\xi^U_i$ is the Killing vector component along this hypersurface. Equation \eqref{eq:dilaton} implies that across a shock, the change of the $Y_\mu$ coefficients in \eqref{eq:dilatonsoln} is governed by
\begin{equation}
\label{eq:deltaY}
    \Delta Y_\mu = 8 \pi G_N \, Q_\mu \,.
\end{equation}
Note that the dilaton profile is continuous across a shock. Furthermore, the shockwave charges obey $Q_\mu Q^\mu = 0$.

\begin{figure}[htbp]
    \centering
    \includegraphics[width=\textwidth]{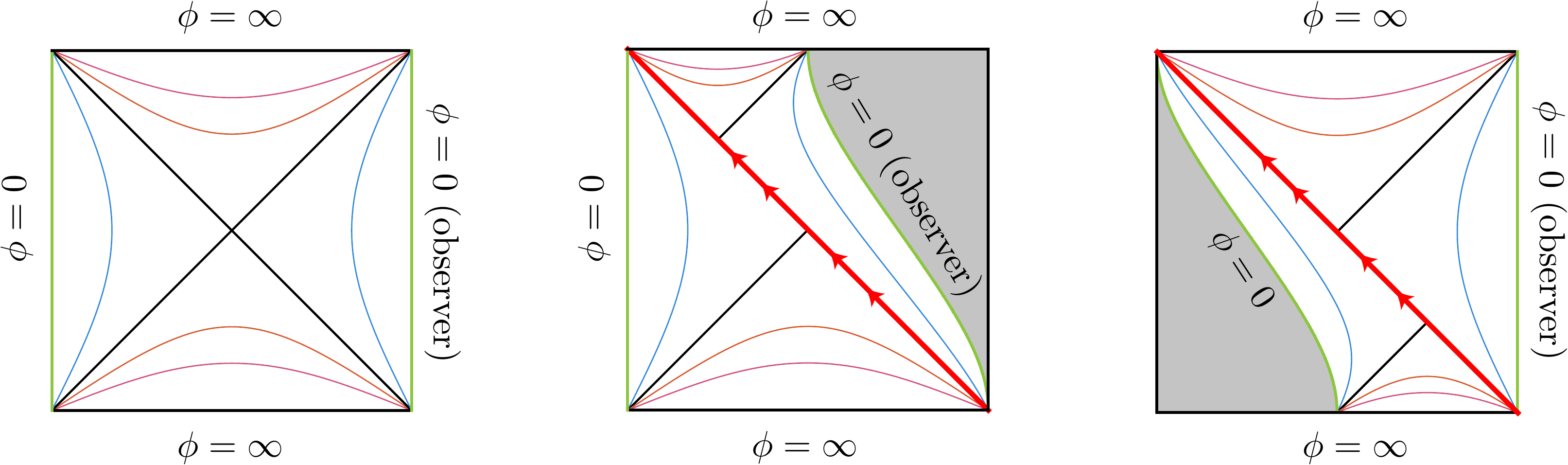}
    \caption{Left: a Penrose diagram for dS$_2$. Level curves of a vacuum solution to \eqref{eq:dilaton} are drawn. The physical spacetime ends at the timelike geodesic boundaries where $\phi = 0$. Center: the observer emits an early-time shockwave, shown in red with arrows. Given that the dilaton profile below the shock is unchanged, the profile above the shock is determined by \eqref{eq:deltaY}. The $\phi = 0$ locus, and hence the physical end of the spacetime, shifts inwards. The shaded gray region is part of the ambient dS$_2$ spacetime. Right: we apply a passive transformation given by an isometry of the ambient dS$_2$ spacetime that moves the observer geodesic back to its original location. This sets the stage for the next step, in which a late-time shockwave is emitted from the upper-right corner of the diagram. The resulting geometry is determined by evolving from the future to the past by using \eqref{eq:deltaY} to track how the dilaton changes across a shock.
    }
    \label{fig:shockplot}
\end{figure}

We may use \eqref{eq:dilatonsoln} to construct de Sitter shockwave geometries. See Figure \ref{fig:shockplot}. Our starting point is dS$_2$ without matter. At very early times, the observer emits a shock which travels along the past cosmological horizon. The profile of the dilaton to the past of the shock is unchanged, while the dilaton to the future of the shock is determined by \eqref{eq:deltaY}. With this prescription, we may say that the shock is dressed to the observer. The observer's geodesic trajectory becomes the new $\phi = 0$ locus. As shown in Figure \ref{fig:shockplot}, we may act with an isometry of the ambient dS$_2$ spacetime on both the matter and dilaton to move the observer $\phi = 0$ locus to its original location. Then, we may consider a second shockwave that is emitted from the observer's worldline at late times. This shockwave emanates from the upper-right corner of the rightmost diagram in Figure \ref{fig:shockplot}. To construct the two-shock geometry, we evolve this shockwave back in time, and take the dilaton profile above this shock to remain unchanged, while the profile below the shock is governed by \eqref{eq:deltaY}. Importantly, the second shockwave will reflect off the left timelike boundary, and the first shockwave will reflect off the observer's new geodesic trajectory. This means that \eqref{eq:deltaY} must be applied multiple times to compute the full evolution of the geometry. The analogous AdS geometries with multiple shockwaves \cite{Shenker:2013yza} are much simpler to study because a shock emitted from a boundary never reaches the other boundary.

To greatly simplify the de Sitter multi-shock geometries, we will work in a limit where much of the shockwave backreaction may be ignored. We let $G_N \rightarrow 0$, but assume that the time separation $T$ between early- and late-time shocks on the observer's worldline is large, with $G_N e^T$ held fixed. This is the same limit as \eqref{eq:limit}. As indicated in Figure \ref{fig:shockplot}, every time a shock is emitted from the observer's worldline, we perform a passive transformation in the ambient dS$_2$ geometry that moves their new worldline back onto their original worldline. This worldline is preserved by a boost isometry. We can use this boost isometry to work in a reference frame where the late-time shocks have null energies (and hence charges $Q_i$) proportional to $e^T$, while the early-time shocks have order one energies. In this reference frame, we may ignore the backreaction from the early-time shocks on the dilaton. Likewise, we may also work in a reference frame where we may ignore the backreaction of the late-time shocks. With these simplifications, some multi-shock geometries are depicted in Figure \ref{fig:columnplot}.

\begin{figure}[!htbp]
    \centering
    \includegraphics[width=0.9\textwidth]{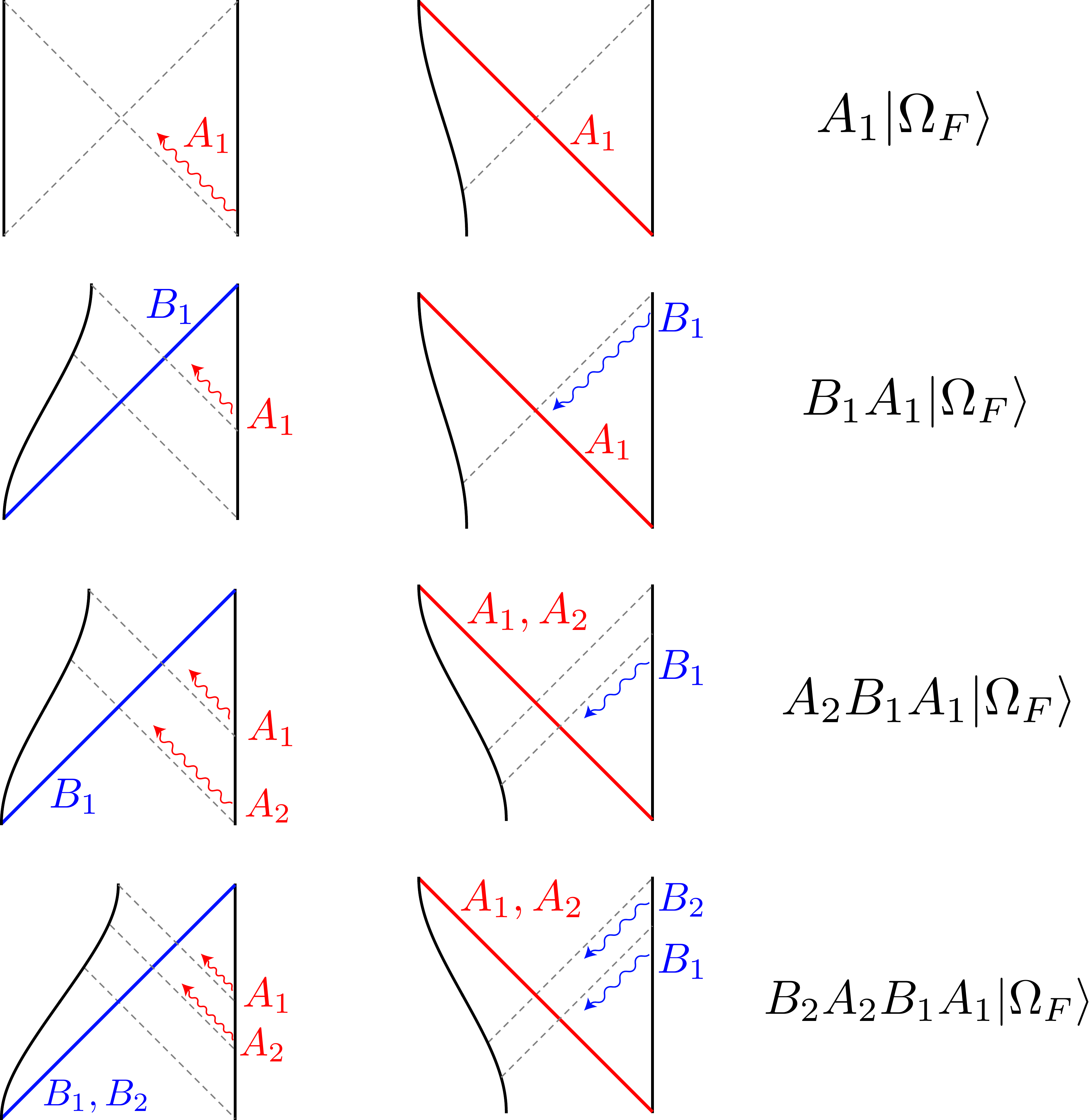}
    \caption{Each row depicts the geometry that corresponds to the state in the rightmost column. The left column corresponds to a reference frame in which the early-time shocks do not backreact, while the middle column 
    corresponds to a reference frame in which the late-time shocks do not backreact. To go from one row to the next, we add an early- or late-time shock and evolve forwards or backwards in time, following the procedure outlined in Figure \ref{fig:shockplot}.}
    \label{fig:columnplot}
\end{figure}

As shown in Figure \ref{fig:columnplot}, the late-time shocks separate the early-time modes along a null direction, and the early-time shocks likewise separate the late-time modes. The free product limit corresponds to sending $G_N e^T \rightarrow \infty$. In this limit, the correlation between the $A_1$ and $A_2$ modes vanishes. This explains why the inner product \eqref{eq:innerproduct} factorizes, and why the Hilbert space has the structure \eqref{eq:calhf}. Each row in Figure \ref{fig:columnplot} depicts the geometry of a summand in \eqref{eq:calhf}.

From the discussion in section \ref{sec:alg-free}, the commutant should be generated by early- and late-time operators on the other observer boundary (on the left of the diagrams in Figure \ref{fig:columnplot}). To illustrate this point, we have reproduced one such diagram in Figure \ref{fig:columnplotcommutant} and drawn a mode that corresponds to $A^\prime$, a late-time operator on the left boundary. Away from the free product limit, $A_1^\prime$ is not in the commutant, because it may not commute with $A_2$. In the free product limit, the $A_1^\prime$ and $A_2$ modes become infinitely separated, and thus commute.

\begin{figure}[!htbp]
    \centering
    \includegraphics[width=0.4\textwidth]{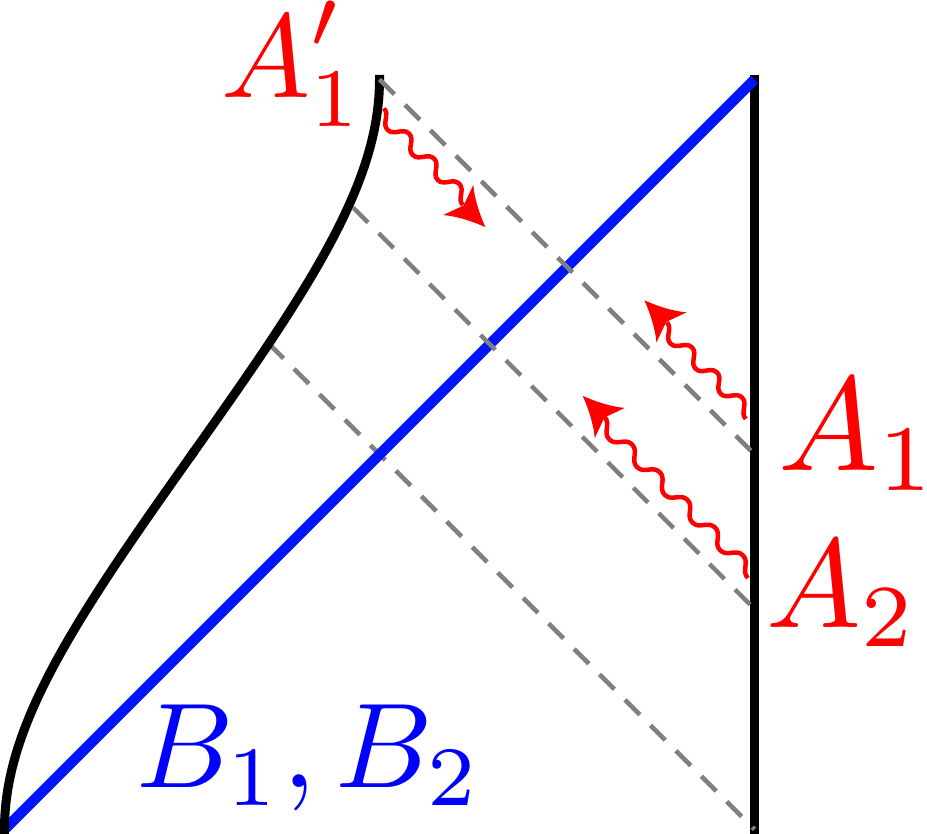}
    \caption{The geometry that corresponds to a state $A_1^\prime B_2 A_2 B_1 A_1 \ket{\Omega_F} \in \calh_F$. The $A_1^\prime$ mode commutes with the $A_1$ mode, but does not commute with $A_2$ unless we work in the free product limit.}
    \label{fig:columnplotcommutant}
\end{figure}

In our original construction of the observer's algebra in Section \ref{sec:4dmodel}, we only considered two types of operators: early- and late-time operators. We did not consider operators at intermediate times (such as order one times) because they do not create shocks that scatter near the horizon. That is, any intermediate time operators would factor out of correlation functions. In the free product limit, intermediate time operators do not necessarily factorize, because the scrambling time is now considered to be an intermediate timescale compared to the very late timescale at which the early- and late-time operators are inserted. Thus, it would not be correct to say that the observer's algebra transitions from a type I algebra to a type III$_1$ algebra in the free product limit. Rather, in this limit, one may define an emergent free product algebra by only considering early- and late-time operators that are inserted at very large times.


\section{Discussion}

In this paper, we studied certain $G_N$ corrections to an observer's algebra $\cala$ in dS. When operators are time-separated by a scrambling time, the gravitational interactions in the $G_N \rightarrow 0$ limit are entirely captured by an eikonal S-matrix. In a general observer model, the eikonal phase can have either sign. We developed a two-dimensional model in which the eikonal phase is negative. Operators associated to an antipodal observer are no longer in the commutant of $\cala$, and we conjecture that the commutant is trivial. After including a clock Hamiltonian in the observer's algebra, their algebra becomes a crossed-product algebra $\cala_{\mathrm{cr}}$. We found that the Hartle-Hawking state is not a trace on $\cala_{\mathrm{cr}}$, because the cyclic property ($\mathrm{Tr}\, ab = \mathrm{Tr}\, ba$) is violated. Equivalently, we find that the canonical vacuum state on $\cala$ is not KMS with respect to time translations along the observer's worldline. An important point is that the physical effects discussed in this paper are only relevant for out-of-time-ordered correlators of early- and late-time operators. The time-ordered correlators of operators in $\cala$ agree with those of QFT on a dS background.

Physically, the KMS condition expresses the fact that a static patch observer experiences a thermal environment at the dS temperature. Although we find that the vacuum is not KMS with respect to static patch time translations, this statement about the observer's experience still holds. Assume a simple coupling between the observer and their environment,
\begin{equation}
    H_{\mathrm{int}}(t) = X(t) \phi(t) \,,
\end{equation}
where $X(t)$ acts on the observer's Hilbert space, and $\phi(t)$ is a quantum field in the observer's environment. In the interaction picture, the time evolution of the joint observer-environment system is governed by the propagator
\begin{equation}
    \mathcal{T} \, e^{-i \int \mathrm{d}t \, H_{\mathrm{int}}(t)} \,.
\end{equation}
The time-ordering symbol above implies that when the environment is initialized in the Bunch--Davies vacuum, the evolution of the observer's density matrix is insensitive to the OTOCs discussed in this paper.\footnote{When the environment is initialized in a precursor state, OTOCs can become relevant. We consider this example to be fine-tuned.} Because an observer cannot reverse the arrow of time, they cannot observe the breakdown of the KMS condition.

A similar algebraic study of dS was conducted in \cite{Kolchmeyer:2024fly}. The observer was treated as a quantum particle that propagates on a fixed dS background. The observer's algebra was argued to have a trivial commutant, because their wavefunction can become delocalized across the entire spacetime. It was also found in \cite{Kolchmeyer:2024fly} that the Hartle-Hawking state is not tracial. Correlation functions were computed by Wick-contracting free fields. However, \cite{Kolchmeyer:2024fly} found that the Hartle-Hawking state behaves like a trace whenever contributions from crossed Wick contractions can be ignored. As shown in this paper, these contributions can be ignored in the free product limit. Our work extends the main physical conclusions of \cite{Kolchmeyer:2024fly} to settings that incorporate gravitational backreaction.

We have purely worked in Lorentzian signature. An alternative approach is to compute correlation functions of operators along an observer's worldline using the Euclidean path integral. For example, in the setup of CLPW \cite{CLPW}, the Hartle-Hawking state (up to normalization) may be identified as the unique trace on the observer's algebra. This does not invoke the Euclidean path integral. The wavefunction for this state (Equation \eqref{crossed-prod-HH-state}) corresponds to an exponential probability distribution for the observer's energy, $e^{- 2 \pi E}$. This Boltzmann factor can also be recovered from a saddle of the Euclidean path integral with a careful prescription for treating the imaginary prefactor \cite{Maldacena:2024spf, Chen:2025jqm}.\footnote{A similar discussion of the Euclidean path integral is contained in \cite{Anninos:2017hhn}.} It would be interesting to extend the results of \cite{Maldacena:2024spf,Chen:2025jqm} to include matter operators, so that a direct comparison can be made with the correlators of the Lorentzian crossed-product algebra. The Boltzmann distribution captures the observer's gravitational backreaction on their cosmic horizon to leading order. Using the Euclidean path integral, corrections were recently obtained in \cite{Blommaert:2026ofx}. It would be interesting to find a Lorentzian interpretation of these corrections.

Our results pose a challenge for observer-centric static patch holography. CLPW \cite{CLPW} constructed an algebra of observables for a static patch observer and identified a maximum entropy state, which we have referred to as $|\Psi_{\mathrm{HH}}\rangle$. Static patch holography proposes that a static patch is dual to a finite-dimensional Hilbert space, which necessarily has a maximum entropy (or tracial) state. The semiclassical state $|\Psi_{\mathrm{HH}}\rangle$ should be dual to this tracial state. However, we have found that when matter operators in CLPW's algebra are time-evolved by a scrambling time, $|\Psi_{\mathrm{HH}}\rangle$ no longer behaves like a trace because the cyclic property ($\mathrm{Tr}\, ab = \mathrm{Tr}\, ba$) is violated.

An important question is how the failure of the KMS condition, or the failure of the tracial property of the Hartle-Hawking state, constrains a putative quantum-mechanical dual. We have attributed these properties to anti-scrambling (or gravitational time advance). As we discussed in Section \ref{sec:antiscrambling}, anti-scrambling can be realized in the traversable wormhole, which involves a coupling between two systems. This suggests that anti-scrambling models may be constructed by coupling two different chaotic quantum systems. This approach has led to success in reproducing free field correlators on a timelike geodesic in dS \cite{Goto:2026ipq, Narovlansky:2023lfz}. A natural next step is to explore OTOCs in doubled systems.\footnote{A concrete example was considered in \cite{Narovlansky:2025tpb}.}

\vspace{1em}

\section*{Acknowledgments}

\begin{singlespace}

We thank Chris Akers, Shoaib Akhtar, Ahmed Almheiri, Elba Alonso-Monsalve, Dionysios Anninos, Tarek Anous, Anna Biggs, Andreas Blommaert, Chang-Han Chen, Yiming Chen, Xi Dong, Anatoly Dymarsky, Daniel Harlow, Temple He, Aidan Herderschee, Mikhail Ivanov, Victor Ivo, Daniel Jafferis, Marc Klinger, Jonah Kudler-Flam, Henry Lin, Hong Liu, Juan Maldacena, Ohad Mamroud, Tommaso Marini, Alexey Milekhin, Beatrix M\"{u}hlmann, Vladimir Narovlansky, Xiao-Liang Qi, Gautam Satishchandran, Allic Sivaramakrishnan, Antony Speranza, Manu Srivastava, Douglas Stanford, Zimo Sun, Leonard Susskind, Haifeng Tang, Erez Urbach, Jan Pieter van der Schaar, Erik Verlinde, Herman Verlinde, Edward Witten, Jiuci Xu, Zhenbin Yang, Daiming Zhang, Ying Zhao, and Kathryn Zurek for helpful discussions, comments, and questions. We acknowledge fruitful discussions during the workshop ``Observers, wormholes and complex saddles in cosmology," organized at the Bernoulli Center for Fundamental Studies (EPFL, Lausanne) from 18--22 May 2026. DKK is supported by the Paul Dirac Fund and the Fund for Memberships in Natural Sciences at the Institute for Advanced Study and the National Science Foundation under Grant No. PHY-2514611. WC is supported by the U.S. Department of Energy, Office of Science, Office of High Energy Physics of U.S. Department of Energy under grant Contract Number DE-SC0012567.

\end{singlespace}


\newpage
\appendix

\section{AQFT Review} \label{sec:aQFT-review}

To set the stage for our study of the gravitational algebras constructed in Section \ref{sec:obsalg}, we review some aspects of algebraic quantum field theory (aQFT). For more details, some excellent expositions are \cite{Witten:2018zxz, Sorce:2023, Haag:1992}. One starts by considering the set $\mathcal{B}(\mathcal{H})$ of bounded operators on a Hilbert space. A von Neumann algebra is a subset\footnote{Von Neumann algebras may also be defined abstractly without reference to a Hilbert space. Given a state (positive linear functional) $\omega$ on the algebra $\mathfrak{A}$, one can then construct a representation $(\pi_{\omega}(\mathfrak{A}), \mathcal{H}_{\omega})$ using the GNS construction. Roughly speaking, this involves identifying operators with vectors $a \to |a\rangle = a|\omega\rangle$ equipped with inner product $\langle a|b\rangle = \omega(a^{\dagger}b)$, quotienting by null states, and taking the Hilbert space completion.} of operators $\mathfrak{A} \subseteq \mathcal{B}(\mathcal{H})$ closed under linear combinations, multiplication, the adjoint operation, and limits in the strong operator topology.\footnote{The strong operator topology is defined by pointwise norm convergence on all vectors, $O_n|\psi\rangle \to O |\psi\rangle$ for all $|\psi\rangle\in\mathcal{H}$. Von Neumann algebras can be equivalently defined using a host of other topologies, including the weak operator topology (convergence of all matrix elements).} Its commutant $\mathfrak{A}'$ consists of all operators in $\mathcal{B}(\mathcal{H})$ commuting with $\mathfrak{A}$. The double commutant theorem states that $\mathfrak{A}'' = \mathfrak{A}$ for any von Neumann algebra $\mathfrak{A}$. In aQFT, the basic algebras of interest are quantum fields smeared by smooth test functions with support in some spacetime region $R$. 

Given an algebra $\mathfrak{A}$ acting on Hilbert space $\mathcal{H}$, we say a vector $|\omega\rangle \in \mathcal{H}$ is  \textit{cyclic} for $\mathfrak{A}$ if the vectors $\{a|\omega\rangle, a\in\mathfrak{A}\}$ are dense in $\mathcal{H}$. Furthermore, $|\omega\rangle$ is \textit{separating} for $\mathfrak{A}$ if the condition $\{a|\omega\rangle = 0, \ \forall \, a\in\mathfrak{A}\}$ implies that $a= 0$. Any state that is cyclic for either $\mathfrak{A}$ or $\mathfrak{A}'$ is separating for the other. The existence of a cyclic separating vector allows us to use the following powerful theorem \cite{Takesaki:2001}.

\begin{theorem}[Tomita-Takesaki] \label{thm:Tomita-Takesaki}
    Let $\mathfrak{A}$ be a von Neumann algebra, with cyclic separating vector $|\Omega\rangle$, and define the Tomita operator as the antilinear operator on $\mathcal{H}$ through the action
    \begin{align}
        S_{\Omega} a|\Omega\rangle = a^{\dagger}|\Omega\rangle\,, \qquad a\in\mathfrak{A}\,. \label{eq:Tomita-op}
    \end{align}
    The operator $S$ permits a polar decomposition
    \begin{align}
        S_{\Omega} = J_{\Omega}\Delta_{\Omega}^{\frac{1}{2}} = \Delta_{\Omega}^{-\frac{1}{2}} J_{\Omega}, \qquad S_{\Omega}^{\dagger} = J_{\Omega}\Delta_{\Omega}^{-\frac{1}{2}} = \Delta_{\Omega}^{\frac{1}{2}} J_{\Omega} \label{eq:Tomita-polar}
    \end{align}
    where $\Delta_{\Omega} = e^{-K_{\Omega}}$ is a positive operator, and $J_{\Omega} = J^{-1}_{\Omega} = J^{\dagger}_{\Omega}$ is an antilinear isometry. We call $\Delta_{\Omega}$ the modular operator, $K_{\Omega}$ the modular Hamiltonian, and $J_{\Omega}$ the modular conjugation. These satisfy $\Delta_{\Omega}|\Omega\rangle = J_{\Omega}|\Omega\rangle = |\Omega\rangle$. Furthermore,
    \begin{enumerate}
        \item The modular operator defines a one-parameter family of automorphisms on $\mathfrak{A}$ known as modular flow:
        \begin{align}
            a(s) := \Delta_{\Omega}^{is} \, a\, \Delta_{\Omega}^{-is} \in \mathfrak{A}\,, \qquad \forall \, a\in\mathfrak{A},\quad s\in\mathbb{R} \,. \label{eq:modular-flow}
        \end{align}
        \item Modular conjugation exchanges the algebras $\mathfrak{A}$ and $\mathfrak{A}'$:
        \begin{align}
            J_{\Omega} \,a\, J_{\Omega} \in \mathfrak{A}' \,, \qquad \forall\,a\in\mathfrak{A} \,. \label{eq:modular-conj}
        \end{align}
        \item The state $|\Omega\rangle$ satisfies the KMS condition for the algebra $\mathfrak{A}$ under modular flow. That is, for any $a,b\in\mathfrak{A}$ there exists a function $F$ analytic in the strip $-1< \mathrm{Im}\,s \leq 0$, such that
        \begin{align}
             F(s) = \langle\Omega | a(s) \, b|\Omega\rangle\,, \qquad F(s-i) = \langle\Omega| b\, a(s)|\Omega\rangle \,, \qquad \forall\, s\in\mathbb{R}\,. \label{eq:KMS}
        \end{align}
    \end{enumerate}
\end{theorem}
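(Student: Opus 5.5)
The plan is to follow the standard functional-analytic route: first construct $S_\Omega$ as a closed antilinear operator, then extract $J_\Omega$ and $\Delta_\Omega$ by polar decomposition, then prove the two structural statements (modular flow preserves $\mathfrak{A}$, and $J_\Omega \mathfrak{A} J_\Omega = \mathfrak{A}'$), and finally derive KMS by a spectral-theorem analytic continuation.

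\textbf{Closedness and polar decomposition.} The formula \eqref{eq:Tomita-op} defines an antilinear operator $S_0$ on the dense domain $\mathfrak{A}|\Omega\rangle$. It is well defined because $|\Omega\rangle$ is separating, and the domain is dense because $|\Omega\rangle$ is cyclic. Define analogously $F_0 a'|\Omega\rangle = a'^\dagger|\Omega\rangle$ on the dense set $\mathfrak{A}'|\Omega\rangle$. The identity $\langle a'\Omega | S_0 a\Omega\rangle = \overline{\langle a\Omega| F_0 a'\Omega\rangle}$ shows that $F_0 \subseteq S_0^\dagger$. Hence $S_0^\dagger$ is densely defined, so $S_0$ is closable; let $S_\Omega$ be its closure. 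Since $S_0^2=1$ on its domain, $S_\Omega$ is an involution, $S_\Omega = S_\Omega^{-1}$. Take the polar decomposition $S_\Omega = J_\Omega \Delta_\Omega^{1/2}$, where $\Delta_\Omega = S_\Omega^\dagger S_\Omega$ is positive and $J_\Omega$ is antiunitary. Comparing this with the polar decomposition of $S_\Omega^{-1}$ and using uniqueness gives $J_\Omega^2 = 1$ and $J_\Omega \Delta_\Omega J_\Omega = \Delta_\Omega^{-1}$. This yields the alternative forms in \eqref{eq:Tomita-polar}, as well as the form of $S_\Omega^\dagger$. Finally, $S_\Omega|\Omega\rangle = |\Omega\rangle$ forces $\Delta_\Omega|\Omega\rangle = J_\Omega|\Omega\rangle = |\Omega\rangle$.

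\textbf{The main obstacle: Tomita's theorem.} Items 1 and 2 require $\Delta_\Omega^{is}\mathfrak{A}\Delta_\Omega^{-is} = \mathfrak{A}$ and $J_\Omega \mathfrak{A} J_\Omega = \mathfrak{A}'$, and this is the hard step. I would use the resolvent method of van Daele (Rieffel--van Daele). The key lemma states: for each $a' \in \mathfrak{A}'$ and $\lambda>0$, there exists $a_\lambda \in \mathfrak{A}$ satisfying
\begin{equation*}
a_\lambda|\Omega\rangle = (\Delta_\Omega + \lambda)^{-1}\Delta_\Omega^{1/2} J_\Omega a'^\dagger |\Omega\rangle ,
\end{equation*}
or, in an equivalent form, whose matrix elements $\langle b_1\Omega| a_\lambda b_2\Omega\rangle$ are given by a bounded sesquilinear form. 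To prove it, I would first check that the relevant form is bounded, with norm controlled by $\|a'\|$ and $\lambda^{-1/2}$. The commutant theorem then supplies the operator, and one checks that this operator commutes with $\mathfrak{A}'$, so it lies in $\mathfrak{A}''=\mathfrak{A}$. Next I would express $\Delta_\Omega^{is}$ through resolvents by integrating $\lambda^{is}$ against $(\Delta_\Omega+\lambda)^{-1}$ over $\lambda \in (0,\infty)$, with the appropriate $\cosh(\pi s)$ normalization. This shows that $\Delta_\Omega^{is} J_\Omega a' J_\Omega \Delta_\Omega^{-is}$ is a weak limit of elements of $\mathfrak{A}$. That gives $J_\Omega \Delta_\Omega^{is}\mathfrak{A}'\Delta_\Omega^{-is} J_\Omega \subseteq \mathfrak{A}$. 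The symmetric statement with $\mathfrak{A}$ and $\mathfrak{A}'$ exchanged follows because $F_\Omega = S_\Omega^\dagger$ has modular data $(J_\Omega, \Delta_\Omega^{-1})$. Combining both inclusions with $J_\Omega^2=1$ gives equality. In particular, setting $s=0$ gives item 2, and item 1 follows.

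\textbf{KMS condition.} For $a,b\in\mathfrak{A}$, write
\begin{equation*}
F(s) = \langle\Omega|a(s)b|\Omega\rangle = \langle a^\dagger\Omega|\Delta_\Omega^{is} b\Omega\rangle ,
\end{equation*}
using $\Delta_\Omega^{-is}|\Omega\rangle = |\Omega\rangle$. Both vectors $a^\dagger\Omega$ and $b\Omega$ lie in the domain of $\Delta_\Omega^{1/2}$. By the spectral theorem, $F$ therefore continues analytically into the strip allowed by splitting $\Delta_\Omega^{is}$ as $\Delta_\Omega^{i\bar z/2}$ acting to the left and $\Delta_\Omega^{iz/2}$ acting to the right. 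The result is bounded and continuous up to the boundary; with the paper's sign conventions this is the strip $-1<\mathrm{Im}\, s\le 0$. On the boundary line, apply $\Delta_\Omega^{1/2} x\Omega = J_\Omega x^\dagger\Omega$ to both vectors and use the antiunitarity of $J_\Omega$ together with $J_\Omega \Delta_\Omega^{is} = \Delta_\Omega^{is} J_\Omega$. This turns the boundary value into $\langle b^\dagger\Omega|\Delta_\Omega^{is}\ldots\rangle$ with the roles of $a$ and $b$ exchanged, that is, into $\langle\Omega|b\,a(s)|\Omega\rangle$. The only care needed in this step is bookkeeping of the orientation of the strip, which I would fix by checking the case $a=b^\dagger$ directly against the spectral decomposition.
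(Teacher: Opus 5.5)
The paper does not prove this theorem; it quotes it from Takesaki as background. Your outline (closability from $F_0\subseteq S_0^\dagger$, polar decomposition, van Daele's resolvent lemma for Tomita's theorem, spectral continuation for KMS) is the standard textbook argument behind that citation. There is, however, a concrete error in your KMS step, exactly where you claim to recover the paper's strip. Since $\langle\Omega|\Delta_\Omega^{is}=\langle\Omega|$, the leftmost factor is absorbed and
\begin{equation*}
F(s)=\langle\Omega|\Delta_\Omega^{is}\,a\,\Delta_\Omega^{-is}\,b|\Omega\rangle=\langle a^\dagger\Omega|\Delta_\Omega^{-is}\,b\Omega\rangle ,
\end{equation*}
with $\Delta_\Omega^{-is}$, not $\Delta_\Omega^{+is}$. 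For $a^\dagger\Omega, b\Omega\in D(\Delta_\Omega^{1/2})$, the spectral factor $\mu^{-is}=\mu^{-i\,\mathrm{Re}\,s}\,\mu^{\mathrm{Im}\,s}$ is controlled only for $0\le\mathrm{Im}\,s\le 1$. The lower strip would require $D(\Delta_\Omega^{-1/2})$, which fails in general. Your $J_\Omega$ manipulation at $\mathrm{Im}\,s=1$ then yields $F(s+i)=\langle\Omega|b\,a(s)|\Omega\rangle$.

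A qubit Gibbs state makes this explicit. Take $\rho=\mathrm{diag}(p,1-p)$, $a=|1\rangle\langle 2|$, $b=a^\dagger$ and $r=p/(1-p)$. Then $F(s)=p\,r^{is}$ and $\langle\Omega|b\,a(s)|\Omega\rangle=(1-p)\,r^{is}=F(s+i)$, whereas $F(s-i)=p\,r^{is+1}$. So item 3, read literally with $a(s):=\Delta_\Omega^{is}a\Delta_\Omega^{-is}$, has the strip reversed. It is correct for the opposite flow $\Delta_\Omega^{-is}a\Delta_\Omega^{is}$. That is the convention implicit in the Gibbs-state check the paper gives right after the theorem: there $a(s)=e^{iHs}ae^{-iHs}$, while conjugation by $\Delta_\Omega^{is}$ is $e^{-i\beta Hs}(\cdot)e^{i\beta Hs}$. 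The orientation check with $a=b^\dagger$ that you propose would expose this mismatch rather than confirm $-1<\mathrm{Im}\,s\le 0$. You should state and prove the upper-strip version (or flip the sign of the flow) instead of claiming the strip as written.

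There are two smaller points in the Tomita part. First, the inclusion $F_0\subseteq S_0^\dagger$ reads $\langle a'\Omega|S_0a\Omega\rangle=\langle a\Omega|F_0a'\Omega\rangle$, with no complex conjugation. More importantly, your ``exchanged'' half needs $\overline{F_0}=S_\Omega^\dagger$, so that $(J_\Omega,\Delta_\Omega^{-1})$ really is the modular data of $(\mathfrak{A}',\Omega)$. That equality is a separate standard lemma, which you assert but have not proved; you only have the inclusion.

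Second, in van Daele's step the weight must be $\lambda^{-1/2+is}$, via $\int_0^\infty\lambda^{-1/2+is}(\Delta_\Omega+\lambda)^{-1}\Delta_\Omega^{1/2}\,\mathrm{d}\lambda=\frac{\pi}{\cosh\pi s}\Delta_\Omega^{is}$. Because $\|a_\lambda\|\lesssim\lambda^{-1/2}$ is not integrable against this weight, the ``weak limit'' is exactly the delicate point. As written, your sketch only pins down $a_\lambda|\Omega\rangle$. That identifies the limit with $\Delta_\Omega^{is}J_\Omega a'^\dagger J_\Omega\Delta_\Omega^{-is}$ only on $|\Omega\rangle$, which is circular unless you already know that operator lies in $\mathfrak{A}$. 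What the lemma must deliver is the operator statement that the weak integral $\frac{1}{2\sqrt{\lambda}}\int_{\mathbb{R}}\frac{\lambda^{-it}}{\cosh\pi t}\,\Delta_\Omega^{it}J_\Omega a'^\dagger J_\Omega\Delta_\Omega^{-it}\,\mathrm{d}t$ belongs to $\mathfrak{A}$. Then injectivity of the Fourier transform in $\log\lambda$, together with $\sigma$-weak closedness of $\mathfrak{A}$, gives the inclusion $J_\Omega\Delta_\Omega^{is}\mathfrak{A}'\Delta_\Omega^{-is}J_\Omega\subseteq\mathfrak{A}$ you are after.
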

\noindent The KMS condition for $|\Omega\rangle$ generalizes the notion of a thermal state for the observables $\mathfrak{A}$, in the absence of Hilbert space factorization and well-defined density operators. Here $K_{\Omega}$ is understood to be the Hamiltonian, and we have normalized the temperature as $\beta=1$. Indeed, when density matrices and traces are well-defined we can use the cyclicity of the trace to verify:
\begin{align}
    \langle a(s) \,b \rangle_{\beta} &= \mathrm{Tr} \big[ e^{-\beta H} e^{iH s} \,a \, e^{-iH s} b \big] = \mathrm{Tr} \big[b \,e^{-\beta H} e^{iH s} \,a\, e^{-iH s} \big] \nonumber \\
    &= \mathrm{Tr} \big[ e^{-\beta H} \,b\, e^{iH(s+i\beta)} \,a\, e^{-iH(s+i\beta)} \big] = \langle b \, a(s+i\beta) \rangle_{\beta} \,.
\end{align}

\noindent Now we specialize to de Sitter. The simplest case is the setting of CLPW \cite{CLPW} where we first take $G_N=0$, and consider a classical observer with $M_{\mathrm{obs}}\to\infty$ sitting at the north pole. We define $\mathfrak{A}$ to be the von Neumann algebra of fields accessible to the observer; by the timelike tube theorem this is equivalently the algebra of fields smeared in the de Sitter static patch. Let $|\Omega\rangle$ denote the Bunch-Davies state, which is cyclic and separating for $\mathfrak{A}$. We can then identify $K_{\Omega} = 2\pi i \partial_t$ as static patch translation\footnote{We work in the convention where modular time and static patch time differ by a factor of $2\pi$.} and $J_{\Omega}$ as $\mathcal{CPT}$. The commutant $\mathfrak{A}'$ is generated by smeared fields in the anti-static patch, or by operators accessible to an observer at the south pole. The KMS condition and its Fourier transform (the detailed balance condition) can be written as:
\begin{alignat}{2}
    &\langle \Omega| a(t_1-2\pi i) b(t_2) |\Omega\rangle &&= \langle \Omega| b(t_2) a(t_1) |\Omega\rangle \label{eq:KMS-dS-t} \,, \\
    & e^{-2\pi\lambda_1}\langle \Omega| a(\lambda_1) b(\lambda_2)|\Omega\rangle &&= \langle \Omega| b(\lambda_2) a(\lambda_1)|\Omega\rangle \,, \qquad \forall\, a, b\in\mathfrak{A}\,. \label{eq:KMS-dS-lambda}
\end{alignat}
The thermality of the Bunch-Davies vacuum can be understood from entropic grounds. In the vacuum the field degrees of freedom between the left and right static patches are in a maximally entangled state. Since the left patch is causally separated from the north pole, the restriction to the observer's algebra corresponds to tracing out the left degrees of freedom and leads to a thermal ensemble.

This paper generalizes the CLPW story to the gravitationally dressed algebra $\cala$ accessible to a de Sitter observer at very early and late times. As discussed in Section \ref{sec:4dmodel}, $\cala$ is generated by the dressed fields $e^{ix P_A P_B/2} \,a\, e^{-ix P_A P_B/2}$ and $e^{-ix P_A P_B}/2 \,b\, e^{ix P_A P_B/2}$, with $a \in \cala_{0,A}$ and $b\in \cala_{0,B}$. The Hilbert space is built around the tensor product $|\Omega\rangle := |\Omega_A\rangle \otimes |\Omega_B\rangle$, where $|\Omega_{A}\rangle$ and $|\Omega_{B}\rangle$ are the Bunch-Davies vacua for the $A$ and $B$ theories, cyclic and separating with respect to $\cala_{0,A}$ and $\cala_{0,B}$. The modular Hamiltonian acts on each algebra as static patch time translation.


\section{4D Model: Eikonal Phase from Recoil}

\label{sec:recoileffect}

In this appendix, we show how a pointlike observer in $\mathrm{dS}_4$ recoils off of a shockwave. We use embedding space coordinates $X^A$, $A = 0,1,2,3,4$, with the de Sitter hyperboloid defined by:
\begin{equation}
    \eta_{AB}X^A X^B = 1 \,, \qquad \eta_{AB} = \mathrm{diag}(-1,+1,+1,+1,+1) \,.
\end{equation}
We introduce Kruskal coordinates $(U,V,\theta,\phi)$ via:
\begin{alignat}{2}
    &X_0 = \frac{2(V+U)}{4-UV}\,, \qquad &&X_4 = \frac{2(V-U)}{4-UV}\,, \nonumber \\
    &X_1 = \frac{4+UV}{4-UV}\sin\theta\cos\phi\,, \qquad &&X_2 = \frac{4+UV}{4-UV}\sin\theta\sin\phi\,, \qquad X_3 = \frac{4+UV}{4-UV}\cos\theta \,.
\end{alignat}
The induced metric on the hyperboloid in these coordinates is:
\begin{equation}
    \mathrm{d}s^2 = -\frac{16}{(4-UV)^2}\, \mathrm{d}U\mathrm{d}V + \frac{(4+UV)^2}{(4-UV)^2} \, \mathrm{d}\Omega_2^2 \,,
\end{equation}
which at $U=V=0$ reduces to:
\begin{equation}
    \mathrm{d}s^2\big|_{U=V=0} = -\mathrm{d}U \mathrm{d}V + \mathrm{d}\Omega_2^2 \,.
\end{equation}
The North Pole worldline corresponds to $V = -2e^{-t}$, $U = 2e^{t}$, where $t$ is the proper time of a geodesic observer. The future cosmological horizon of the North Pole observer is the surface $V=0$, and $\mathscr{I}^\pm$ corresponds to $UV = -4$.

\vspace{1em}

\noindent The observer lives on the North Pole. Their worldline is given by
\begin{equation}
    X_{\mathrm{Obs}}^A(\tau) = \begin{pmatrix}\sinh(M\tau)\\0\\0\\0\\\cosh(M\tau)\end{pmatrix} \,,
\end{equation}
where $\tau$ is their proper time. Suppose that at the point \begin{equation}
    X_{\mathrm{emit}}^A = \begin{pmatrix}\sinh t_0\\0\\0\\0\\\cosh t_0\end{pmatrix}
\end{equation}
the observer emits a future-directed localized shockwave, which may be parameterized by
\begin{equation}
    X_{\mathrm{shock}}^A(\lambda) = \begin{pmatrix}\sinh t_0\\0\\0\\0\\\cosh t_0\end{pmatrix} + e^{t_0}P^V  \lambda\begin{pmatrix}\cosh t_0\\0\\0\\1\\\sinh t_0\end{pmatrix} \,,
\end{equation}
where $\lambda$ is an affine parameter, chosen such that $\partial_\lambda X_{\mathrm{shock}}^A = P^V \partial_V$ at $V = 0$.

The observer recoils off of the shockwave. Their new worldline is related to $X^A_{\mathrm{Obs}}(\tau)$ by an isometry that fixes $X^A_{\mathrm{emit}}$ and acts trivially in the $(X_1, X_2)$ plane. An infinitesimal isometry that obeys these conditions is given by
\begin{align}
    \label{eq:A9}
    \delta X_0 &= \epsilon X_3 \,, \\
    \delta X_1 &= 0 \,, \\
    \delta X_2 &= 0 \,, \\
    \delta X_3 &= \epsilon(X_0 - \tanh t_0 \, X_4) \,, \\
    \delta X_4 &= \epsilon \tanh t_0 \, X_3 \,,
    \label{eq:A13}
\end{align}
which we may represent using the matrix $Y$,
\begin{equation}
    Y\indices{^A_B} :=
    \begin{pmatrix}
    0 & 0 & 0 & 1 & 0 \\
    0 & 0 & 0 & 0 & 0 \\
    0 & 0 & 0 & 0 & 0 \\
    1 & 0 & 0 & 0 & -\tanh t_0 \\
    0 & 0 & 0 & \tanh t_0 & 0
    \end{pmatrix}\,.
\end{equation}
The isometry that relates the old and new observer worldlines must take the form $e^{\alpha Y}$. By acting on $X_{\mathrm{Obs}}^A(\tau)$ with $e^{\alpha Y}$, we can deduce that the tangent vector of the new worldline at $X_{\mathrm{emit}}^A$ must take the form
\begin{equation}
T^A :=    M^\prime \begin{pmatrix}\cosh t_0\cosh(\alpha\operatorname{sech} t_0)\\0\\0\\\sinh(\alpha\operatorname{sech} t_0)\\\cosh(\alpha\operatorname{sech} t_0)\sinh t_0\end{pmatrix}
\end{equation}
for some suitable choice of $\alpha$ and $M^\prime$. To solve for these parameters, we compute the charges of the old and new observer worldlines and the shock with respect to the $\mathrm{dS}_4$ isometries. That is, let $J_i^A$ refer to the Killing vectors of $\mathrm{dS}_4$, where $i$ labels a Killing vector. We then compute
\begin{equation}
    Q^{\mathrm{Obs}}_i := J_{A\, i}\, \partial_\tau X^A_{\mathrm{Obs}}(\tau)\,, \qquad Q^{\mathrm{Obs}^\prime}_i := J_{A\, i}\, T^A \,, \qquad Q^{\mathrm{shock}}_i := J_{A \, i}\,  \partial_\lambda X^A_{\mathrm{shock}}(\lambda) \,,
\end{equation}
and require that
\begin{equation}
    Q^{\mathrm{Obs}}_i =  Q^{\mathrm{Obs}^\prime}_i + Q^{\mathrm{shock}}_i \,.
\end{equation}
Working in the limit $t_0 \rightarrow - \infty$, we find that
\begin{equation}
    \alpha = -\frac{P^V}{2 M}\,, \qquad M^\prime = M \,.
\end{equation}
Working in the $t_0 \rightarrow -\infty$ limit, the infinitesimal isometry in \eqref{eq:A9}--\eqref{eq:A13} acts on $U = 0$ as $\delta V = 2 \cos \theta$. Hence, the isometry $e^{- \alpha Y}$ acts at $U = 0$ as
\begin{equation}
    V \rightarrow V +  \frac{P^V}{ M} \cos \theta \,.
\end{equation}
This corresponds to an eikonal phase of
\begin{equation}
   I =  -\frac{P_V P^V}{ M} \cos \theta = \frac{2 P_V P_U}{M} \cos \theta \,,
\end{equation}
in agreement with \eqref{eq:recoilphase}. In this appendix, we have worked in units where $\ell_{\mathrm{dS}} = 1$. The quantities $I$ and $\delta$, defined in \eqref{eq:action} and \eqref{eq:eikonalphase},  simply differ by $e^{\frac{2 \pi T}{\beta_c}}$.


\section{3D Model}

\label{sec:app3D}

In this appendix, we discuss an observer in dS$_3$. The observer is pointlike and sources a conical defect. We take the matter theory to be a massive scalar field. We first discuss the quantization of this field on dS$_3$ with a pair of conical defects. Then, we compute the eikonal phase. We will show that the near-Nariai limit of Section \ref{sec:nearnariai} effectively reduces the number of spacetime dimensions to two. In Section \ref{sec:3D-KMS} we compute OTOCs and discuss their analytic properties. The KMS condition is preserved to all orders in S-matrix perturbation theory, but for $x<0$ it is violated non-perturbatively. In the $x\to 0^-$ limit, we reproduce the leading $x$-term of $\Delta_{\mathrm{KMS}}$ (\ref{eq:2D-DeltaKMS-UV-explicit-2}) with $h_A=h_B=1/2$.

\subsection{Scalar Quantization in \texorpdfstring{$\mathrm{dS}_3$}{dS3}} 
In global coordinates $(\rho, \theta, \varphi)$, the geometry of dS$_3$ with an observer is given by
\begin{align}
    \mathrm{d}s^2 = -\mathrm{d}\rho^2 + \cosh^2\rho\,( \mathrm{d}\theta^2 + \cos^2\theta \mathrm{d}\varphi^2)\,, \qquad \rho\in\mathbb{R}\,, \qquad \theta \in  \left(- \frac{\pi}{2},\frac{\pi}{2} \right), \qquad \varphi \equiv \varphi + \alpha, \label{eq:dS3-metric}
\end{align}
where $\alpha = 2\pi - \Omega$ and $\Omega = 8\pi G_N M \in (0,2\pi)$ is the deficit angle. The observer has mass $M$ and sits at the north pole, $\theta=\pi/2$. In three dimensions, the term ``near-Nariai limit'' refers to the limit $\Omega \rightarrow 2 \pi$. The conical defect at the south pole can be interpreted as an auxiliary observer that will not play a role in the present discussion. It is useful to consider other coordinates in place of $(\rho,\theta)$. Three choices are
\begin{enumerate}
    \item Static patch coordinates $(t,\vartheta)$:
    \begin{align}
        \mathrm{d}s^2 &= -\sin^2\vartheta\mathrm{d}t + \mathrm{d}\vartheta^2 + \cos^2\vartheta \mathrm{d}\varphi^2\,, \qquad t\in\mathbb{R}\,, \qquad \vartheta \in \left( - \frac{\pi}{2}, \frac{\pi}{2} \right)  \,, \label{eq:dS3-static-patch-coords} \\
        \cos\vartheta &= \cos\theta \cosh\rho\,, \quad \tanh t = \frac{\tanh\rho}{\sin\theta}\,, \qquad \sinh\rho = \sin\theta\sinh t\,, \quad \tan\theta = \cosh\eta \tan\vartheta \,. \nonumber
    \end{align}
    This covers only the causal diamonds of the north pole ($\vartheta=\pi/2$) and south pole ($\vartheta=-\pi/2$). We call the region $\vartheta\in (0,\pi/2)$ the static patch, and $\vartheta\in(-\pi/2,0)$ the anti-static patch.
    \item Kruskal coordinates $(U,V)$:
    \begin{align}
        \mathrm{d}s^2 &= \frac{-4 \mathrm{d}U\mathrm{d}V}{(1- UV)^2} + \Big( \frac{1 + UV}{1 - UV}\Big)^2 \mathrm{d}\varphi^2\,, \qquad
        (|U|,|V|) = (e^{-t}\tan\tfrac{\vartheta}{2}, e^t\tan\tfrac{\vartheta}{2})\,.\label{eq:dS3-Kruskal-coords}
    \end{align}
    The static patch is given by $\{U<0, V>0\}$, and the antistatic patch by $\{U>0, V<0\}$.
    \item Poincar\'{e} coordinates $(\zeta_-,\xi_-)$:
    \begin{align}
        \mathrm{d}s^2 &= \frac{1}{\zeta_-^2}(-\mathrm{d}\zeta_-^2 + \mathrm{d}\xi_-^2 + \xi_-^2\mathrm{d}\varphi^2)\,, \label{eq:dS3-Poincare-coords} \\
        (\zeta_-,\xi_-) &= \Big(\tfrac{UV-1}{2V}, \tfrac{UV+1}{2V}\Big)\,, \qquad (U,V) = \Big(\zeta_-+\xi_-, -\tfrac{1}{\zeta_--\xi_-}\Big)\,, \nonumber
    \end{align}
    where $\zeta_-\in(-\infty,0)$ and $\xi_-\in(0,\infty)$. This covers $V>0$. The north pole is $\xi_-=0$. One can also define an analogous set of coordinates $(\zeta_+,\xi_+)$ which covers $U<0$ by exchanging $-U$ and $V$.
\end{enumerate}

\noindent To quantize a free scalar field, it is convenient to pick a null Cauchy surface, which has the property that a solution of the Klein-Gordon equation is determined by its restriction to the surface. For QFT A, which describes operators inserted near the north pole at early times, we should take this null Cauchy surface to be $U = 0$. With this choice, it is natural to solve the KG equation in the $(\zeta_+,\xi_+)$ Poincare coordinates that cover $U < 0$. For QFT B, we should choose the $V = 0$ Cauchy surface and $(\zeta_-,\xi_-)$ Poincare coordinates. 

For the time being, we work in Poincar\'{e} coordinates $(\zeta_-, \xi_-, \varphi)$. The matter action and equation of motion are given by
\begin{align}
    S &= -\int \frac{1}{2} \mathrm{d}^3x \sqrt{-g} \big(g^{ab}\partial_a\phi\partial_b\phi + m^2\phi^2\big)\,, \qquad\qquad m^2 = 1 + s^2, \quad s > 0\,, \label{eq:dS3-scalar-action} \\
    0 &= (-\nabla^2 + m^2)\phi \label{eq:dS3-eom} \\
    &= \Big( \zeta_-^2\partial_{\zeta_-}^2 - \zeta_-\partial_{\zeta_-} - \zeta_-^2\partial_{\xi_-}^2 - \frac{\zeta_-^2}{\xi_-}\partial_{\xi_-} - \frac{\zeta_-^2}{\xi_-^2}\partial_{\varphi}^2 + m^2\Big)\phi(\zeta_-,\xi_-,\varphi). \label{eq:dS3-eom-Poincare}
\end{align}
The Klein-Gordon equation is separable, and we can expand in Fourier modes along the $S^1$. On the $V = 0$ horizon, each mode is a plane wave, characterized by null momentum $k_U < 0$. The mode expansion into positive and negative frequencies takes the form
\begin{align}
    \phi(X) &= \sum_{n\in\frac{2\pi}{\alpha} \mathbb{Z}} \int_{-\infty}^0 \dbar k_U \Big[ v^+_{n,k_U}(X) a_{n,k_U} + v^-_{n,k_U}(X) a_{n,k_U}^{\dagger} \Big]\,, \qquad v_{n,k_U}^-(X) = v^+_{n,k_U}(X)^{\ast}\,, \label{eq:dS3-kU-mode-exp} \\
    v^+_{n,k_U}(X) &= \sqrt{\frac{\pi}{-k_U \alpha}} e^{in\varphi} J_{|n|}(-k_U\xi_-) \frac{e^{-\frac{i\pi |n|}{2}} k_U\zeta_-}{2\sinh\pi s} \Big[e^{\frac{\pi s}{2}} J_{is}(k_U\zeta_-) - e^{-\frac{\pi s}{2}} J_{-is}(k_U\zeta_-) \Big] \label{eq:dS3-kU-mode-explicit-1} \\
    &= \frac{e^{in\varphi} }{\sqrt{-k_U \pi \alpha}} e^{-\frac{i\pi |n|}{2}} (-i k_U\zeta_-) K_{is}(-ik_U\zeta_-)J_{|n|}(-k_U\xi_-)\,, \label{eq:dS3-kU-mode-explicit-2}
\end{align}
where $J_{\nu}(z)$ and $K_{\nu}(z)$ are Bessel functions. To solve for these modefunctions, one must impose that they do not diverge at the north pole. In particular, note that $J_{|n|}(-k_U\xi_-) \sim \xi_-^{|n|}$ as $\xi_- \rightarrow 0$. These modes have the expected behavior on $V=0$,\footnote{The $\cdots$ in \eqref{eq:dS3-kU-mode-BC} contains a $e^{- i k_U/V}$ term, which we disregard because it vanishes in the $V \rightarrow 0$ limit after smearing $k_U$.}
\begin{equation}
    \lim\limits_{V\to 0} v^+_{n,k_U}(X) = \frac{e^{in\varphi}}{\sqrt{4 \pi (-k_U) \alpha}}\left(e^{ik_U U} + \cdots \right) \,,\label{eq:dS3-kU-mode-BC}
\end{equation}
and are orthonormal with respect to the Klein-Gordon inner product,
\begin{equation}
    \langle v_{n,k_U}^a | v_{n',k_U'}^b\rangle_{\mathrm{KG}} = i\int_{\Sigma}\mathrm{d}\Sigma^{\mu}\, v^{a\,\ast}_{n,k_U} \overset\leftrightarrow{\partial}_{\mu} v^b_{n',\,k_U'} = \mathrm{sgn}(a)\, \delta_{ab}\, \delta_{nn'}\, \delta(k_U-k_U') \,. \label{eq:dS3-kU-mode-norm}
\end{equation}
The Bunch-Davies vacuum state $|\Omega\rangle$ is annihilated by all the positive frequency modes, $a_{n,k}|\Omega\rangle = 0$ for all $k<0$. The oscillators obey the usual commutation relations,
\begin{align}
    [a_{n_1,k_1}, a_{n_2,k_2}^{\dagger}] &= \delta_{n_1,n_2}\delta(k_1-k_2)\,. \label{eq:dS3-mode-commutators}
\end{align}
This implies the canonical commutation relations between the field and its conjugate momentum,
\begin{align}
    [\phi(\zeta_-,\xi_-,\varphi), \Pi(\zeta_-,\xi_-',\varphi')] = i\delta(\xi_- -\xi_-')\delta(\varphi-\varphi')\,, \qquad \Pi(X) = -\frac{\xi_-}{\zeta_-} \partial_{\zeta_-}\phi(X)\,. \label{eq:dS3-CCR}
\end{align}
We now show:
\begin{proposition}
    In the near-Nariai limit $\alpha\to 0$, the oberver's algebra is generated by fields smeared on the half-horizon $V=0$, $U<0$.
\end{proposition}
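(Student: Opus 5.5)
The key observation is that the conical deficit fixes the angular quantization: the allowed Fourier modes are $n\in\frac{2\pi}{\alpha}\mathbb{Z}$. In the near-Nariai limit $\alpha\to 0$, every nonzero $n$ satisfies $|n|\to\infty$, so only the $n=0$ sector survives near the observer. The plan is to show that in this limit the bulk field in the static patch of the north pole is determined by its restriction to the null half-horizon. We then invoke the timelike tube theorem, as in Section \ref{sec:4dmodel}, to identify the observer's algebra with the algebra of the static-patch region.

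\textbf{Step 1: the nonzero angular modes do not reach the observer.} For $n\neq 0$, the radial factor $J_{|n|}(-k_U\xi_-)$ is suppressed like $(-k_U\xi_-/2)^{|n|}/|n|!$ for $|n|\gg -k_U\xi_-$. Smearing a field with a test function supported in a compact region away from the horizon bounds $\xi_-$. The $k_U$ integral is controlled by the test function's smoothness and by the exponential decay of $K_{is}(-ik_U\zeta_-)$ as a function of $k_U$.

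One then shows that the contribution of all $n\neq 0$ modes to the smeared field vanishes as $\alpha\to 0$, for example in norm on $\ket{\Omega}$ (strong convergence). Concretely, we sum $\sum_{n\in\frac{2\pi}{\alpha}\mathbb{Z}\setminus 0}|c_n|^2$ using the Bessel bound. Hence near the observer, and by the timelike tube theorem the observer's algebra $\cala_0$, is generated by the $s$-wave field $\phi_0$.

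\textbf{Step 2: the $s$-wave sector as a null-surface theory.} For $n=0$, \eqref{eq:dS3-kU-mode-BC} shows that the mode restricted to $V=0$ is $e^{ik_UU}/\sqrt{4\pi(-k_U)\alpha}$. Its oscillators are therefore exactly the Fourier modes of the horizon restriction $\phi_0(U,V=0)$. Explicitly, the map from a KG solution to the horizon data $\partial_U\phi_0|_{V=0}$ is an isometry onto the chiral current algebra generated by $\partial\phi$, by \eqref{eq:dS3-kU-mode-norm}.

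Any smeared bulk $s$-wave field in the static patch then equals $\int \mathrm{d}U\,g(U)\,\partial_U\phi_0(U,0)$, where $g$ is obtained by projecting onto modes. The key point is that $g$ is supported in $U<0$: the static-patch region lies on one side of the Cauchy surface, and Kirchhoff/characteristic propagation along $V=0$ only uses data on the half-horizon bounding the causal past or future of the region. To show this, I would use the explicit Poincar\'e kernel: the coordinates $(\zeta_+,\xi_+)$ cover exactly $U<0$. The $K_{is}$ modes are analytic in $k_U$ in the appropriate half-plane, which implies by a Paley–Wiener argument that the horizon profile vanishes for $U>0$.

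\textbf{Step 3: density.} We conclude that the smeared fields on the half-horizon generate a dense subalgebra of $\cala_0$, and conversely that the horizon fields lie in $\cala_0$ by causal completeness.

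The main obstacle is the support claim in Step 2 together with the uniformity of the $n\neq0$ suppression. Horizon-restricted fields are singular (the dropped $e^{-ik_U/V}$ terms in the footnote), so the comparison must be done after smearing. I would control it with the stationary-phase style estimates used in the proofs of Proposition \ref{prop:G1-smeared-xto0}.
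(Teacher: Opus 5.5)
Your Step 2 is essentially the paper's proof. The paper restricts to $n=0$ and inverts the horizon expansion on $V=0$ to express $a_{0,k_U}$ as a Fourier transform of $\phi_{\mathrm{2D}}|_{V=0}$. Substituting back gives an explicit kernel, and the $a$ and $a^\dagger$ terms combine into a single integral over real $k_U$. The cut of $K_{is}(-ik_U\zeta_-)$ runs up the positive imaginary $k_U$ axis, so the lower half-plane is free of singularities and the contour can be closed downward; an integration by parts then lands on $\partial_U\phi_{\mathrm{2D}}$.

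Two corrections to how you phrase it. First, the relevant chart is $(\zeta_-,\xi_-)$, which covers $V>0$ (hence the static patch) and has $U=\zeta_-+\xi_-$; $(\zeta_+,\xi_+)$ is the chart adapted to the $U=0$ surface. Second, and more substantively, analyticity in a half-plane does not by itself give the support; the exponential type does. In the lower half-plane $J_0(-k_U\xi_-)$ grows like $e^{\xi_-|\mathrm{Im}\,k_U|}$, while $K_{is}(-ik_U\zeta_-)$ decays like $e^{-|\zeta_-|\,|\mathrm{Im}\,k_U|}$. So the integrand behaves as $e^{i(\zeta_-+\xi_-)k_U}$ and the kernel vanishes for $U>\zeta_-+\xi_-=U_X$. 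This comparison is exactly where $X$ lying in the static patch ($U_X<0$) is used, and it gives the sharper statement that only horizon data with $U\le U_X$ contribute.

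Your Step 1 is not part of the paper's argument. There the near-Nariai limit is implemented by truncating to $n=0$ and \emph{defining} the observer's algebra via $\phi_{\mathrm{2D}}$ smeared in the static patch, so nothing about the $n\neq 0$ modes is established.

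If you keep Step 1, the mechanism you cite fails. For real $k_U<0$ and $\zeta_-<0$ the argument $-ik_U\zeta_-$ is purely imaginary, so $K_{is}$ is a Hankel-type oscillation with only $|k_U|^{-1/2}$ falloff, not exponential decay. The small-argument bound on $J_{|n|}(-k_U\xi_-)$ is also useless once $|k_U|\xi_-\gtrsim|n|\geq 2\pi/\alpha$. The whole large-$k_U$ tail must therefore be controlled by decay of the smeared coefficients $c_n(k_U)$, with constants uniform in $n$. That is, you need decoupling of KK modes of mass $\gtrsim 1/\alpha$ on smooth test functions. This uniform estimate is the real content of Step 1, and it is not supplied.

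The simplest way around it is to note that test functions averaged over the shrinking circle see only $n=0$, since $\int_0^\alpha \mathrm{d}\varphi\, e^{in\varphi}=0$ for $n\in\frac{2\pi}{\alpha}\mathbb{Z}\setminus\{0\}$. Finally, the converse inclusion in your Step 3 goes beyond what the paper shows: its proof only expresses static-patch fields through the half-horizon.
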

\begin{proof}
    In the limit where $\alpha \rightarrow 0$ we may perform a dimensional reduction on the shrinking $S^1$. We restrict the sum in \eqref{eq:dS3-kU-mode-exp} to $n = 0$. We define
    \begin{equation}
       \phi_{\mathrm{2D}}(X) := \sqrt{\alpha} \int_{-\infty}^0 \dbar k_U \Big[ v^+_{0,k_U}(X) a_{0,k_U} + v^-_{0,k_U}(X) a_{0,k_U}^{\dagger} \Big] \,.
       \label{eq:phi2d}
    \end{equation}
    The observer's algebra is defined to be the algebra generated by polynomials of $\phi_{\mathrm{2D}}$ smeared within the static patch. We will show that this algebra is the same as the algebra generated by polynomials of $\phi_{\mathrm{2D}}$ smeared on the half-horizon $V = 0$, $U < 0$. In particular, we will show that $\phi_{\mathrm{2D}}$ anywhere in the static patch can be written as $\phi_{\mathrm{2D}}$ smeared on the half-horizon. First, we evaluate \eqref{eq:phi2d} on the $V = 0$ horizon,
    \begin{equation}
        \left. \phi_{\mathrm{2D}}(U) \right|_{V=0} = \int_{-\infty}^0 \dbar k_U \Big[ 
       \frac{1}{\sqrt{4 \pi (-k_U) }} e^{ik_U U}  a_{0,k_U} + \text{h.c.} \Big] \,.
       \label{eq:phi2dV0}
    \end{equation}
    which implies that
    \begin{equation}
        a_{0,k_U} = \sqrt{4 \pi (-k_U)} \int_{-\infty}^{\infty} \mathrm{d}U \, \left. \phi_{\mathrm{2D}}(U) \right|_{V=0} \, e^{-i k_U U} \,,
    \end{equation}
    and we use the convention that $a_{0,-k_U} = a_{0,k_U}^\dagger$. Substituting into \eqref{eq:phi2d}, we find that
    \begin{equation}
        \phi_{\mathrm{2D}}(X) = \int_{-\infty}^{\infty} \mathrm{d}U \, \left. \phi_{\mathrm{2D}}(U) \right|_{V=0} \int_{-\infty}^0 \dbar k_U \Big[ 
        (-2 i  k_U\zeta_-) K_{is}(-ik_U\zeta_-)J_{0}(-k_U\xi_-)  \,  \, e^{-i k_U U} 
        + \text{h.c.} \Big] \,.
        \label{eq:phi2dV0kernel}
    \end{equation}
    Note that $J_{0}(-k_U\xi_-)$ is analytic in the $k_U$ plane, while $K_{is}(-ik_U\zeta_-)$ has a branch cut that is conventionally taken to lie along the positive imaginary $k_U$ axis. We may rewrite \eqref{eq:phi2dV0kernel} as follows, 
    \begin{equation}
        \phi_{\mathrm{2D}}(X) = \int_{-\infty}^{\infty} \mathrm{d}U \, \left. \phi_{\mathrm{2D}}(U) \right|_{V=0} \int_{-\infty}^{\infty} \dbar k_U \Big[ 
        (-2 i  k_U\zeta_-) K_{is}(-ik_U\zeta_-)J_{0}(-k_U\xi_-)  \,  \, e^{-i k_U U}  \Big] \,,
        \label{eq:phi2dV0kernelkuinf}
    \end{equation}
    where the $k_U$ contour above is chosen to avoid the branch cut that emanates from $k_U = 0$. In the lower half plane, where the integrand is analytic, the integrand behaves as
    \begin{equation}
        (-2 i  k_U\zeta_-) K_{is}(-ik_U\zeta_-)J_{0}(-k_U\xi_-) \rightarrow  \sqrt{\frac{-\zeta_-}{\xi_-}}\; e^{\,i(\zeta_- + \xi_-) k_U}
    \end{equation}
    for asymptotically large $k_U$. Thus, we may close the contour in the lower half plane and conclude that the integral vanishes when $U >   \zeta_- + \xi_- $. Because $\zeta_- + \xi_-$ is at most zero for $\phi_{\mathrm{2D}}(X)$ in the static patch, it follows that any $\phi_{\mathrm{2D}}(X)$ in the static patch can be expressed in terms of $\phi_{\mathrm{2D}}(X)$ smeared on the $V = 0$, $U < 0$ half-horizon. Moreover, we can use integration by parts to rewrite \eqref{eq:phi2dV0kernelkuinf} as
    \begin{equation}
        \phi_{\mathrm{2D}}(X) := \int_{-\infty}^{\infty} \mathrm{d}U \, \left. \partial_U \phi_{\mathrm{2D}}(U) \right|_{V=0} \int_{-\infty}^{\infty} \dbar k_U \Big[ 
        (-2 \zeta_-) K_{is}(-ik_U\zeta_-)J_{0}(-k_U\xi_-)  \,  \,  e^{-i k_U U}  \Big] \,,
    \end{equation}
    The integrand of the $k_U$ integral oscillates wildly near $k_U = 0$, but is still integrable there. Hence, we conclude that $\phi_{\mathrm{2D}}$ evaluated anywhere within the static patch can be written as $\partial_U \phi_{\mathrm{2D}}$ smeared on the half-horizon $V = 0$, $U < 0$.
\end{proof}

So far, we have established that the algebra of a massive scalar field in the static patch becomes, in the near-Nariari limit, the algebra generated by $\partial_U \phi_{\mathrm{2D}}$ on the half-horizon $V = 0$, $U < 0$. Before we discuss the computation of the eikonal phase, we summarize the formulas for the modefunctions for the two cases where we take either $V = 0$ or $U = 0$ to be the null Cauchy surface used for quantization. Below, we evaluate the modefunctions on either the north or south poles.
\begin{alignat}{2}
    v^+_{k_U}(t,\varphi)|_{\mathrm{N}/\mathrm{S}} &= \frac{-ik_U U}{\sqrt{-k_U \pi \alpha}} K_{is}(-ik_U U) &&= \frac{\pm ik_U e^{-t}}{\sqrt{-k_U \pi \alpha}} K_{is}(\pm ik_U e^{-t}) \,, \nonumber \\
    v^+_{k_V}(t,\varphi)|_{\mathrm{N}/\mathrm{S}} &= \frac{-ik_V V}{\sqrt{-k_V \pi \alpha}} K_{is}(-ik_VV) &&= \frac{\mp ik_V e^t}{\sqrt{-k_V \pi \alpha}} K_{is}(\mp ik_V e^t) \,. \label{eq:dS3-NS-modes-explicit}
\end{alignat}
The choice of $\mathrm{N}$ or $\mathrm{S}$ poles corresponds to picking the upper or lower sets of signs.


\subsection{Eikonal Phase}

Next we discuss the computation of the eikonal phase for $\mathrm{dS}_3$, working with \eqref{eq:dS3-metric} with general $\alpha$. We consider a shockwave with energy $e^{T/2} k_V$ localized at angle $\varphi_V$ on the internal circle, which scatters off of a shockwave with energy $e^{T/2} k_U$ localized at angle $\varphi_U$. These source the stress energies:
\begin{align}
    \left. T_{UU}(U,\varphi) \right|_{V = 0} &= e^{\frac{T}{2}}k_U \delta(U)\delta(\varphi-\varphi_U)\,, \qquad \left. T_{VV}(V,\varphi) \right|_{U = 0} = e^{\frac{T}{2}}k_V \delta(V) \delta(\varphi-\varphi_V)\,. \label{eq:dS3-T-ansatz}
\end{align}
The ANEC operators are defined by
\begin{alignat}{2}
    P_U(\varphi) &:= -\int dU \, T_{UU}(U,V = 0, \varphi)\,, \qquad\qquad &&P_U := \int_0^\alpha d\varphi \, P_U(\varphi) \,, \\
    P_V(\varphi) &:= -\int  dV \, T_{VV}(U = 0,V, \varphi)\,, \qquad\qquad  &&P_V := \int_0^\alpha d\varphi \, P_V(\varphi) \,.
\end{alignat}
In the linearized gravity approximation, the backreaction leads to a double shockwave geometry, for which we compute the action to quadratic order:
\begin{align}
    \mathrm{d}s^2 &= \mathrm{d}s^2_{\mathrm{dS}_3} + h_{UU}\,\mathrm{d}U^2 + h_{VV}\,\mathrm{d}V^2, \label{eq:dS3-backreaction-ansatz} \\
    I &= \frac{1}{2} \int \mathrm{d}U \mathrm{d}V\mathrm{d}\varphi\, \sqrt{-g_{\mathrm{d}S_3}}  \Big[h^{UU} \mathcal{D}_- h_{UU} + h^{VV} \mathcal{D}_+ h_{VV} + T^{UU}h_{UU} + T^{VV} h_{VV} \Big] \,. \label{eq:dS3-eikonal-linearized} \\
    \mathcal{D}_{\pm} &:= -\frac{1-UV}{1+UV} (\mp U\partial_U \pm V\partial_V + 2) - \frac{(1-UV)^2}{(1+UV)^2} \partial_{\varphi}^2 \,, \nonumber
\end{align}
The linearized Einstein's equations are solved by:
\begin{align}
    h_{UU} &= -16\pi G_N e^{\frac{T}{2}} k_U \delta(U) f(\varphi-\varphi_U), \qquad h_{VV} = -16\pi G_N e^{\frac{T}{2}} k_V \delta(V) f(\varphi-\varphi_V) \label{eq:dS3-shockwave}
\end{align}
where $f$ is the Greens function for the angular equation
\begin{align}
    (\partial_{\varphi}^2 + 1) f(\varphi-\varphi') &= \delta(\varphi-\varphi'), \qquad\qquad \varphi \equiv \varphi + \alpha \label{eq:dS3-EE-angular} \\
    f(\varphi-\varphi') &= \frac{\cos(|\varphi-\varphi'|-\alpha/2)}{2\sin(\alpha/2)} .\label{eq:dS3-EE-angular-GF}
\end{align}
The zero-mode of \eqref{eq:dS3-EE-angular} is regulated by the observer backreaction. Substituting everything back, the on-shell action evaluates to
\begin{align}
    I &= x \,k_U k_V, \qquad x := -4\pi G_N e^T f(\varphi_U - \varphi_V) \label{eq:dS3-eikonal-phase}
\end{align}
In the limit where $G_N\to 0$ and $G_N e^T$ is of order 1, we note that the metric perturbations scale as $G_N^{1/2}$ while the action remains finite, so the linearized approximation remains valid. Compared to the eikonal phase for $\mathrm{AdS}$, a new feature is that $I$ can take both signs. As discussed in Section \ref{sec:antiscrambling}, this can be interpreted as a Shapiro time advance or delay for our scattering process signaling scrambling or anti-scrambling. When the deficit angle is not too big (so that $\alpha>\pi$), we note the sign is always positive for sufficiently small angular separation. This had to be true because the spacetime looks flat on very small spatial scales, for which there is only a Shapiro time delay. The total S-matrix is
\begin{equation}
    S = \exp\Big[-i 4\pi G_N e^T \int_0^\alpha \mathrm{d}\varphi_U \mathrm{d}\varphi_V \, P_U(\varphi_U) P_V(\varphi_V) \frac{\cos(|\varphi_U-\varphi_V|-\alpha/2 )}{2\sin(  \alpha/2)} \Big] \,.
\end{equation}
In particular, we can take the near-Nariai limit by defining $S_c = \alpha/(4G_N)$, and fixing $S_c$ while taking $\alpha \rightarrow 0$. The result is\footnote{The Kruskal coordinates here differ from those used in the main text by $(U_{\text{here}},V_{\text{here}}) = (\frac{1}{2}U_{\text{there}},\frac{1}{2}V_{\text{there}})$.  Accounting for this difference, \eqref{eq:3Dsmatrix} agrees with \eqref{eq:2deikonal}. We should identify $\Delta S_c$ with $S_c$, as the cosmic horizon has no entropy when $\alpha = 0$. Note that $\Lambda = \ell_{\mathrm{dS}}^{-2}$ in three dimensions, and we have been working in units where $\ell_{\mathrm{dS}} = 1$.}
\begin{equation}
\label{eq:3Dsmatrix}
    S = \exp\left(-i  \frac{\pi}{ S_c} e^T P_U P_V \right)\,, \qquad S_c :=    \frac{\alpha}{4 G_N} \,.
\end{equation}
We see that in the near-Nariai limit, the observer's static patch algebra is equivalent to the algebra generated by a massless chiral scalar on either the future or past boundary of the static patch, and the S-matrix that governs shockwave scattering is given by \eqref{eq:3Dsmatrix}.


\subsection{KMS Violation} \label{sec:3D-KMS}
Finally, we compute out-of-time-ordered correlation functions of fields along the observer worldline, and study their analytic properties. We define:
\begin{align}
    G_1(X_1, X_2, X_3, X_4) &= \langle \phi(-\tfrac{T}{2}+\eta_1,\varphi_1) \phi(\tfrac{T}{2}+\eta_2,\varphi_2) \phi(-\tfrac{T}{2}+\eta_3,\varphi_3) \phi(\tfrac{T}{2}+\eta_4,\varphi_4) \rangle \,, \label{eq:3D-G1} \\
    G_2(X_2, X_3, X_4, X_1) &= \langle \phi(\tfrac{T}{2}+\eta_2,\varphi_2) \phi(-\tfrac{T}{2}+\eta_3,\varphi_3) \phi(\tfrac{T}{2}+\eta_4,\varphi_4) \phi(-\tfrac{T}{2}+\eta_1,\varphi_1) \rangle \,, \label{eq:3D-G2}
\end{align}
where $\eta_i$ denotes static patch time. Using the mode decomposition (\ref{eq:dS3-kU-mode-explicit-1}) and (\ref{eq:dS3-NS-modes-explicit}), each of these may be computed as an overlap between an asymptotic in-state and out-state, with their respective Hilbert spaces related by S-matrix conjugation. Switching to Kruskal coordinates with $U_1,U_3>0$ and $V_2,V_4<0$, we compute:
\begin{align}
    G_1 &= \int_{-\infty}^0 \dbar k_U\dbar k_V \dbar k_U' \dbar k_V'\, e^{i I(k_U k_V, \varphi_3-\varphi_4)} \langle k_U', \varphi_1 |k_U, \varphi_3 \rangle_{\mathrm{in}} \langle k_V', \varphi_2 |k_V, \varphi_4\rangle_{\mathrm{in}} \\
    &\hspace{14mm} \times v_{k_U'}(U_1, \varphi_1)^{\ast} v_{k_V'}(V_2, \varphi_2)^{\ast} v_{k_U}(U_3, \varphi_3) v_{k_V}(V_4, \varphi_4) \nonumber \\
    &= \frac{U_1 U_3 V_2 V_4}{(\pi\alpha)^2}  \int_0^{\infty} \dbar k_U \dbar k_V \, k_Uk_V e^{ix k_Uk_V} K_{is}(i k_U U_1) K_{is}(-i k_U U_3)  K_{is}(i k_V V_2) K_{is}(-i k_V V_4) \,,  \label{eq:3D-G1-integrals} \\
    G_2 &= \frac{U_1 U_3 V_2 V_4}{(\pi\alpha)^2}  \int_0^{\infty} \dbar k_U \dbar k_V \, k_Uk_V e^{-ix k_Uk_V} K_{is}(-i k_U U_1) K_{is}(i k_U U_3)  K_{is}(i k_V V_2) K_{is}(-i k_V V_4) \,, \label{eq:3D-G2-integrals}
\end{align}
where $x = -4\pi G_N e^T f(\varphi_3 - \varphi_4)$. Note that the only angular-dependence is contained in $x$.\footnote{We assume that the observer is equipped with a local frame so that the angular orientation of a shockwave can be fixed. Thus, we do not integrate over $\varphi_3$ and $\varphi_4$.} We now wish to check the KMS condition (\ref{eq:DeltaKMS-t-def}).

\vspace{1em}

\noindent In S-matrix perturbation theory, the $k_U$ and $k_V$ integrals factor at each order:
\begin{align}
    G_1^{\mathrm{pert.}} &= \frac{U_1U_3 V_2V_4}{(\pi\alpha)^2} \sum_{n\geq 0} \frac{(ix)^n}{\Gamma(1+n)} \mathcal{I}_n(iU_1, -iU_3) \, \mathcal{I}_n(iV_2, -iV_4) \,, \\
    G_2^{\mathrm{pert.}} &= \frac{U_1U_3 V_2V_4}{(\pi\alpha)^2} \sum_{n\geq 0} \frac{(-ix)^n}{\Gamma(1+n)} \mathcal{I}_n(-iU_1, iU_3) \, \mathcal{I}_n(iV_2, -iV_4) \\
    \mathcal{I}_n(\alpha,\beta) &= \int_0^{\infty} \mathrm{d}z\, z^{n+1} K_{is}(\alpha z) K_{is}(\beta z) \,, \nonumber \\
    &= 2^{n-1}\alpha^{-2-n-is}\beta^{is} \Gamma(1+\tfrac{n}{2})^2 \Gamma(1+\tfrac{n}{2}-is)\Gamma(1+\tfrac{n}{2}+is) \nonumber \\
    &\hspace{10mm} \times{}_2\tilde{F}_1(1+\tfrac{n}{2}, 1+\tfrac{n}{2}+is, 2+n; 1- \tfrac{\beta^2}{\alpha^2})\,, \qquad\qquad \mathrm{Re}(\alpha+\beta)>0 \,, \label{eq:3D-perturbative-KMS-identity}
\end{align}
where in the last equality we have used the identity \cite{Gradshteyn:2014} (6.576). The integrals for $G_1^{\mathrm{pert.}}$ converge for the prescriptions $U_{13}-i\epsilon$ and $V_{24}-i\epsilon$, while for $G_2^{\mathrm{pert.}}$ we need $U_{13}+i\epsilon$ and $V_{24}-i\epsilon$. To check the KMS condition at each order, it is simpler to work with the closed form integral $\mathcal{I}_n$. The Bessel function $K_{\nu}(z)$ is analytic in $z$ away from a branch cut from $(-\infty,0)$. We are thus able to analytically continue $\mathcal{I}_n(iU_1, -iU_3)$ as $U_1\to e^{-i\pi}U_1$ and $\mathcal{I}_n(-iU_1, iU_3)$ as $U_1\to e^{+i\pi}U_1$. These two integrals may be related by deforming the contour in the opposite way. We find:
\begin{align}
    \mathcal{I}_n(ie^{-i\pi} U_1, -i U_3) &= \int_0^{\infty} \dbar k_U\, k_U^{n+1} K_{is}(ik_U e^{-i\pi} U_1) K_{is}(-ik_U U_3) \\
    &= \int_{-\infty}^0 \dbar k_U\, (e^{i\pi} k_U)^{n+1} K_{is}(ik_U U_1) K_{is}(-i e^{i\pi} k_U U_3) = (-1)^n \mathcal{I}_n(-ie^{i\pi} U_1, i U_3) \,, \nonumber
\end{align}
where in the second equality we have rotate the contour contourclockwise as $k_U \to e^{i\pi}k_U$. The $(-1)^n$ factor compensates for the relative sign in front of $x$ between OTOCs, and at each order we have $G_1^{\mathrm{pert.}}(e^{-i\pi} U_1) = G_2^{\mathrm{pert.}}(e^{i\pi} U_1)$. The result of the integral (\ref{eq:3D-perturbative-KMS-identity}) permits the second continuation $U_1\to e^{-i\pi}U_1$, yielding the KMS condition to all orders in S-matrix perturbation theory. We remark that the perturbative correlators can again be Borel resummed, but the analytic properties of $\mathcal{B}G_i^{\mathrm{pert.}}$ are difficult to study due to the product of ${}_2F_1$'s in the summand.

\vspace{1em}

\noindent We now turn to the full expressions (\ref{eq:3D-G1-integrals})--(\ref{eq:3D-G2-integrals}). We still know that $G_1$ is well-defined under $U_1\to e^{-i\pi}U_1$, and $G_2$ under $U_1\to e^{i\pi}U_1$. For $x>0$ there is a slick way of seeing that $\Delta_{\mathrm{KMS}} = 0$. The important observation is that after the analytic continuation $U_1\to e^{-i\pi}U_1$, we are still able to deform the $k_U$ by $k_U \to e^{i\pi} k_U$. This is justified since it only makes the integral more convergent, as for $x>0$ the exponential $e^{ixk_Uk_V}$ becomes dampened. We do not run into any poles or branch cuts, and we find:
\begin{align}
    &G_1(e^{-i\pi}U_1, V_2, U_3, V_4) \\
    &\hspace{5mm}= \frac{-U_1 U_3 V_2 V_4}{(\pi\alpha)^2}  \int_0^{\infty} \dbar k_U \dbar k_V \, k_Uk_V e^{ix k_Uk_V} K_{is}(i k_U e^{-i\pi}U_1) K_{is}(-i k_U U_3)  K_{is}(i k_V V_2) K_{is}(-i k_V V_4) \nonumber \\
    &\hspace{5mm}= \frac{-U_1 U_3 V_2 V_4}{(\pi\alpha)^2}  \int_{-\infty}^0 \dbar k_U \int_0^{\infty} \dbar k_V \, e^{i\pi} k_Uk_V e^{ix e^{i\pi}k_Uk_V} \nonumber \\
    &\hspace{40mm} \times K_{is}(i e^{i\pi}k_U e^{-i\pi}U_1) K_{is}(-i e^{i\pi}k_U U_3)  K_{is}(i k_V V_2) K_{is}(-i k_V V_4) \nonumber \\
    &\hspace{5mm}= \frac{-U_1 U_3 V_2 V_4}{(\pi\alpha)^2}  \int_0^{\infty} \dbar k_U \dbar k_V \, k_Uk_V e^{-ix k_Uk_V} K_{is}(i k_U U_1) K_{is}(i k_U U_3) K_{is}(i k_V V_2) K_{is}(-i k_V V_4) \nonumber \\
    &\hspace{5mm}= G_2(e^{i\pi}U_1, V_2, U_3, V_4) \,.
\end{align}
The KMS condition is thus satisfied. For $x<0$ we can use a similar trick, except now in order to keep the integrals convergent, we must deform the contour in the other direction $k_U \to e^{-i\pi}k_U$ instead. There are two novel features compared to the $x>0$ case. The first is that we cross the branch cut of the Bessel functions during this procedure. The second is that there is a contribution from the arc at infinity. 
Denoting the total contour by $C_{\mathrm{branch}} + C_{\infty}$, we compute them separately. We first consider:
\begin{align}
    &G_1(e^{-i\pi}U_1, V_2, U_3, V_4)\big|_{C_{\mathrm{branch}}} \nonumber \\
    &\hspace{5mm}= \frac{-U_1 U_3 V_2 V_4}{(\pi\alpha)^2}  \int_{-\infty}^0 \dbar k_U \int_0^{\infty} \dbar k_V \, e^{-i\pi} k_Uk_V e^{ix e^{-i\pi}k_Uk_V} \\
    &\hspace{40mm} \times K_{is}(i e^{-i\pi}k_U e^{-i\pi}U_1) K_{is}(-i e^{-i\pi}k_U U_3)  K_{is}(i k_V V_2) K_{is}(-i k_V V_4) \nonumber \\
    &G_1(e^{-i\pi}U_1, V_2, U_3, V_4)\big|_{C_{\mathrm{branch}}} - G_2(e^{i\pi}U_1, V_2, U_3, V_4) \nonumber \\
    &\hspace{5mm}= \frac{-U_1 U_3 V_2 V_4}{(\pi\alpha)^2}  \int_0^{\infty} \dbar k_U \dbar k_V \, k_Uk_V e^{-ix k_Uk_V}  K_{is}(i k_V V_2) K_{is}(-i k_V V_4) \\
    &\hspace{40mm} \times \Big[ K_{is}(e^{-2\pi i}i k_U U_1) K_{is}(e^{-2\pi i} i k_U U_3) - K_{is}(i k_U U_1) K_{is}(i k_U U_3) \Big]  \nonumber \\
    &\hspace{5mm}= \frac{i |x|}{4\pi \alpha^2} \Big(s^2 + \frac{1}{4}\Big) \frac{V_2 V_4}{\sqrt{U_1 U_3}}  \int_0^{\infty} \dbar k_V \, k_V^2 K_{is}(ik_V V_2) K_{is}(-i k_V V_4) \\
    &\hspace{55mm} \times \,{}_2F_1\Big(\frac{3}{2}-is, \frac{3}{2}+is, 2; \frac{(U_1+U_3)^2 - (x k_V)^2}{4 U_1 U_3} - i\epsilon\Big) \,. \nonumber 
\end{align}
The last step is non-trivial. We have used the Bessel monodromy \cite{DLMF} (10.34.4), written the $k_U$ integrand in terms of Bessel $J$'s, used the (derivative of the) integral identity \cite{Gradshteyn:2014} (6.612.3), and converted the Legendre $Q$'s to a ${}_2F_1$ using \cite{DLMF} (14.3.7). 

\vspace{1em}

\noindent It remains to compute the contribution from the arc at infinity. We use the Bessel asymptotics \cite{DLMF} (10.40.2) to find:
\begin{align}
    e^{i x k_U k_V} k_U K_{is}(-ik_U U_1) K_{is}(-ik_U U_3) \to \frac{i\pi}{2\sqrt{U_1 U_3}} e^{i(U_1+U_3 + x k_V)k_U} \Big[1 + \mathcal{O}(|k_U|^{-1}) \Big] \,.
\end{align}
Since $|k_U|\to \infty$, the $k_U$-integral yields a $\delta$-function, fixing $k_V = (U_1+U_3)/|x|$ in the remaining $k_V$ integral. Note that this contribution can also be written for $x>0$, but since $U_1,U_3>0$ the $\delta$-function always vanishes. This appears to be the manifestation of the contact term in the 2D Tomita-Takesaki derivation (\ref{eq:2D-AtApt-commutator-calc}). We thus have
\begin{align}
    G_1(e^{-i\pi}U_1, V_2, U_3, V_4)\big|_{C_{\infty}} &= \frac{-i\sqrt{U_1U_3}V_2 V_4}{(2\pi\alpha)^2} \frac{U_1+U_3}{x^2} K_{is}\Big(\frac{i}{|x|} V_2(U_1+U_3) \Big) K_{is}\Big(\frac{-i}{|x|}V_4(U_1+U_3)\Big)
\end{align}
Finally, we put everything together and make the second analytic continuation $U_1\to e^{-i\pi}U_1$. Our result is:
\begin{align}
    \Delta_{\mathrm{KMS}} &= G_1(e^{-2i\pi}U_1, V_2, U_3, V_4) - G_2(V_2, U_3, V_4, U_1) \\
    &= \frac{\theta(-x)}{(2\pi \alpha)^2x^2} \bigg[ \big(\sqrt{U_1 U_3} V_2 V_4 \, U_{13}\big) \, K_{is}\Big(-\frac{i}{|x|} V_2(U_{13}+i\epsilon) \Big) K_{is}\Big(\frac{i}{|x|}V_4(U_{13}+i\epsilon)\Big) \nonumber \\
    &\hspace{25mm} +\Big(s^2 + \frac{1}{4}\Big) \frac{-V_2 V_4}{2\sqrt{U_1 U_3}} \int_0^{\infty} \mathrm{d}z \, z^2 K_{is}\Big(\frac{iV_2}{|x|} z\Big) K_{is}\Big(\hspace{-1mm}-\frac{iV_4 }{|x|}z\Big) \nonumber \\
    &\hspace{70mm} \times \,{}_2F_1\Big(\frac{3}{2}-is, \frac{3}{2}+is, 2; \frac{z^2 - (U_{13}^2+i\epsilon)}{4 U_1 U_3} \Big) \bigg] \label{eq:3D-DeltaKMS}
\end{align}
The first term is $s$-independent, while the second term is not. Compared to the 2D case the second term is new, arising from analytic continuation past a branch cut of the two-point function. Note that this piece vanishes when $s=i/2$, for which the scalar is conformally coupled. Finally, we can take the $x\to 0^-$ limit, when $U_1$ and $U_3$ (also $V_2$ and $V_4$) are distinct so there is no need for smearing. The second term oscillates rapidly. One can show using the stationary phase approximation with endpoints that it is subleading to the first. In total we find:
\begin{align}
    \lim_{x\to 0}\Delta_{\mathrm{KMS}}(U_1, V_2, U_3, V_4) &= (U_1 U_3 V_2 V_4)^{\frac{1}{2}}\frac{\theta(-x)}{8\pi\alpha^2 |x|} e^{-\frac{i}{x}U_{13}V_{24}} \Big(1 + \mathcal{O}(x^{-1})\Big) \,.
\end{align}
Up to a numerical prefactor, this matches the 2D result\footnote{once the 2D conformal factors are restored.}  (\ref{eq:2D-DeltaKMS-UV-explicit-0}), with $h_A=h_B=1/2$.


\newpage

\section{2D Model} \label{sec:2Dcalcs}

\subsection{Correlation Functions} \label{sec:2Dcalcs-OTOCs}

\subsubsection{Exact Correlators} \label{sec:2Dcalcs-OTOCs-exact}

Here we derive explicit formulas for the OTOCs in position space (\ref{eq:2D-G1-UV-explicit})--(\ref{eq:2D-G2-UV-explicit}) and frequency space (\ref{eq:2D-G1-lambda-explicit})--(\ref{eq:2D-G2-lambda-explicit}). For real times with an appropriate $i\epsilon$-prescription we check that they are Fourier conjugates. For analytically continued times, the Fourier relation remains valid for $x>0$ so long as we are in the KMS strip. However, for $x<0$ the Fourier integrals are only convergent on half the KMS strip. The consequences for $\Delta_{\mathrm{KMS}}$ and $\Delta_{\mathrm{db.}}$ are discussed in Section \ref{sec:2D-OTOC-analc}.

\vspace{1em}

\noindent \textbf{Correlators at Real Times}

\noindent In the following we take $U_1, U_3>0$ and $V_2, V_4<0$. We compute:
\begin{align}
    G_1(U_1, V_2, U_3, V_4) &:= \langle \phi^A(U_1) \phi^B(V_2) e^{ix\hat{P}_A\hat{P}_B} \phi^A(U_3) \phi^B(V_4) \rangle \\
    &= \int_{-\infty}^0 \frac{\dbar P_A\dbar P_B }{\Gamma(2h_A)\Gamma(2h_B)} |P_A|^{2h_A-1} |P_B|^{2h_B-1} e^{iP_A(U_{13}-i\epsilon)} e^{iP_B(V_{24}-i\epsilon)} e^{ixP_AP_B} \label{eq:2D-G1-PAPB-integral} \\
    &= (-1)^{h_A} \int_{-\infty}^0 \frac{\dbar P_B}{\Gamma(2h_B)} \, e^{iP_B(V_{24}-i\epsilon)} |P_B|^{2h_B-1} (U_{13}-i\epsilon+xP_B)^{-2h_A} \,, \\
    G_2(V_2, U_3, V_4, U_1) &:= \langle \phi^B(V_2) \phi^A(U_3) e^{-ix\hat{P}_A\hat{P}_B} \phi^B(V_4) \phi^A(U_1) \rangle \\
    &= \int_{-\infty}^0 \frac{\dbar P_A\dbar P_B }{\Gamma(2h_A)\Gamma(2h_B)} |P_A|^{2h_A-1} |P_B|^{2h_B-1} e^{-iP_A(U_{13}+i\epsilon)} e^{iP_B(V_{24}-i\epsilon)} e^{-ixP_AP_B} \label{eq:2D-G2-PAPB-integral} \\
    &= (-1)^{h_A} \int_{-\infty}^0 \frac{\dbar P_B}{\Gamma(2h_B)} \, e^{iP_B(V_{24}-i\epsilon)} |P_B|^{2h_B-1} (U_{13}+i\epsilon-xP_B)^{-2h_A} \,,
\end{align}
where in the second pair of equalities we have inserted a resolution of the identity in the $P_{A,B}$-basis and used (\ref{eq:2D-Plambda-matrixel}). After computing the $P_A$ integral, the remaining $P_B$ integral can be evaluated using the identity \cite{DLMF} (13.4.4):
\begin{align}
    \int_0^{\infty}\mathrm{d}s \,e^{-zs} s^{a-1} (s+1)^{-a+b-1} = \Gamma(a) U(a,b;z)\,, \qquad \mathrm{Re}(a), \mathrm{Re}(z) > 0 \,. \label{eq:hypergeometricU-integral-identity}
\end{align}
Let us focus on $G_1$. The cases $x>0$ and $x<0$ must be treated separately. For $x>0$ the identity (\ref{eq:hypergeometricU-integral-identity}) applies for $U_{13}<0$ and $V_{24}<0$; the OTOC for other parameter regions is obtained through analytic continuation. The function $U(a,b;z)$ is $z$-analytic aside from a branch cut from $(-\infty,0)$, which is crossed downwards (counterclockwise) onto the lower sheet only when $U_{13}$ and $V_{24}$ are both positive due to the $i\epsilon$-prescription. For $x<0$, (\ref{eq:hypergeometricU-integral-identity}) is valid for $U_{13}>0$ and $V_{24}>0$. We cross the branch cut upwards (clockwise) for $U_{13},V_{24}<0$, continuing on to the upper sheet. Denoting the main, upper, and lower sheets by $U^0$, $U^+$, and $U^-$, we have altogether (\ref{eq:2D-G1-UV-explicit}):
\begin{align}
    G_1(U_1, V_2, U_3, V_4) &= \frac{(-1)^{h_A}}{(2\pi)^2} x^{-2h_B} (U_{13}-i\epsilon)^{2h_B-2h_A} U^{\sigma_1} \Big(2h_B, 1-2h_A+2h_B; -\frac{i}{x}U_{13}V_{24}\Big)\,, \\
    \sigma_1 &=
    \begin{cases}
        +\,, & x<0 \cap U_{13}<0 \cap V_{24}<0\\
        0\,, & (x>0 \cap (U_{13}<0 \cup V_{24}<0)) \cup (x<0 \cap (U_{13}>0 \cup V_{24}>0)) \\
        -\,, & x>0 \cap U_{13}>0 \cap V_{24}>0
    \end{cases} \,, \nonumber
\end{align}
where the $i\epsilon$'s in the hypergeometric function can now be dropped. The computation for the alternate ordered OTOC is very similar, and yields (\ref{eq:2D-G2-UV-explicit}).

\vspace{1em}

\noindent We may also compute the correlation functions in frequency space.
\begin{align}
    G_1(\lambda_1,\lambda_2,\lambda_3,\lambda_4)  &= \langle \Phi^A(\lambda_1) \Phi^B(\lambda_2) e^{ix\hat{P}_A\hat{P}_B} \Phi^A(\lambda_3) \Phi^B(\lambda_4) \rangle \\
    &= \int \dbar\lambda \, F_x(\lambda) \int \dbar \lambda_A\dbar\lambda_B \nonumber \\
    &\hspace{10mm} \times \langle\Omega|\Phi^A(\lambda_1)\Phi^B(\lambda_2)|\lambda_A+\lambda,\lambda_B+\lambda\rangle \langle \lambda_A,\lambda_B|\Phi^A(\lambda_3)\Phi^B(\lambda_4) |\Omega\rangle \nonumber \\
    &= C_{\lambda_1\lambda_2\lambda_3\lambda_4} e^{-i\pi(h_A+i\lambda_1)} e^{i\pi(h_B-i\lambda_2)} \int\dbar\lambda\, F_x(\lambda) \, (2\pi)^2 \delta(\lambda_1+\lambda_3+\lambda) \delta(\lambda_2+\lambda_4-\lambda) \nonumber \\
    &= 2\pi\delta(\lambda_1+\lambda_2+\lambda_3+\lambda_4) \, C_{\lambda_1\lambda_2\lambda_3\lambda_4} \, e^{\pi(\lambda_2+\lambda_1)} F_x(-\lambda_1-\lambda_3) \,,
\end{align}
where in the second equality we have inserted a complete basis of boost-eigenstates $|\lambda_A,\lambda_B\rangle$ and used (\ref{eq:2D-PAPB-lambda-basis}). In the third equality we use (\ref{eq:Phitovarphi}), (\ref{eq:2D-varphiA-lambda-vac}), (\ref{eq:2D-varphiB-lambda-vac}), and have defined:
\begin{align}
    C_{\lambda_1\lambda_2\lambda_3\lambda_4} &:= (-1)^{h_A+h_B} \langle\lambda_1|\phi^A(1)|\Omega\rangle\langle\lambda_3|\phi^A(1)|\Omega\rangle \langle -\lambda_2|\phi^B(-1) |\Omega\rangle \langle -\lambda_4|\phi^B(-1) |\Omega\rangle \\
    &= \frac{e^{-\frac{\pi}{2}(\lambda_1+\lambda_2+\lambda_3+\lambda_4)}}{(2\pi)^2 \Gamma(2h_A)\Gamma(2h_B)}  \Gamma(h_A+i\lambda_1) \Gamma(h_B-i\lambda_2) \Gamma(h_A+i\lambda_3) \Gamma(h_B-i\lambda_4) \,.
\end{align}
A similar calculation yields
\begin{align}
    G_2(\lambda_2,\lambda_3,\lambda_4,\lambda_1) &= \langle \Phi^B(\lambda_2) \Phi^A(\lambda_3) e^{-ix\hat{P}_A\hat{P}_B} \Phi^B(\lambda_4) \Phi^A(\lambda_1) \rangle \\
    &= C_{\lambda_1\lambda_2\lambda_3\lambda_4} e^{-i\pi(h_A+i\lambda_3)} e^{i\pi(h_B+i\lambda_2)} \int\dbar\lambda\, F_{-x}(\lambda) \,(2\pi)^2 \delta(\lambda_1+\lambda_3+\lambda) \delta(\lambda_2+\lambda_4-\lambda) \nonumber \\
    &= 2\pi\delta(\lambda_1+\lambda_2+\lambda_3+\lambda_4) \, C_{\lambda_1\lambda_2\lambda_3\lambda_4} \, e^{\pi(\lambda_2+\lambda_3)}  F_{-x}(-\lambda_1-\lambda_3) \,.
\end{align}
Together, these reproduce (\ref{eq:2D-G1-lambda-explicit}) and (\ref{eq:2D-G2-lambda-explicit}).

\newpage

\noindent Next, we check that $G_1(\lambda_1,\lambda_2,\lambda_3,\lambda_4)$ is the Fourier transform of $G_1(t_1,t_2,t_3,t_4)$. Each $t_i$ is given an $i\epsilon$-prescription which a priori is not fixed.
\begin{align}
    \mathcal{F}^{-1}_{\lambda_i\to t_i+i\epsilon} G_1 &= \int \dbar\lambda_1\cdots\dbar\lambda_4 \, e^{-i\sum_{i=1}^4 \lambda_i (t_i + i\epsilon_i)} G_1(\lambda_1,\lambda_2,\lambda_3,\lambda_4) \label{2D-G1-Fourier-realt-calc} \\
    &= \frac{1}{(2\pi)^2 \Gamma(2h_A)\Gamma(2h_B)} \int \dbar\lambda \, F_x(\lambda) e^{i\lambda(t_1-t_4 + i(\epsilon_1+\epsilon_4))} e^{-\pi\lambda} \\
    &\hspace{10mm} \times \int \dbar\lambda_2\,  e^{-i\lambda_2(t_2-t_4 + i(\epsilon_2-\epsilon_4))} e^{\pi\lambda_2} \Gamma(h_B-i\lambda_2) \Gamma(h_B-i(\lambda-\lambda_2)) \nonumber \\
    &\hspace{10mm} \times \int \dbar\lambda_3\,  e^{i\lambda_3(t_1-t_3 + i(\epsilon_1-\epsilon_3))} e^{-\pi\lambda_3} \Gamma(h_A+i\lambda_3) \Gamma(h_A-i(\lambda+\lambda_3)) \nonumber \\
    &= \frac{(-1)^{h_A+h_B}}{(2\pi)^2 \Gamma(2h_A)\Gamma(2h_B)} \frac{e^{h_A(t_1+t_3)}}{(U_1 + e^{i(-\pi + \epsilon_3-\epsilon_1)}U_3)^{2h_A}} \frac{e^{-h_B(t_2+t_4)}}{(-V_4 + e^{i(-\pi + \epsilon_4-\epsilon_2)}(-V_2) )^{2h_B}} \\
    &\hspace{5mm} \int \dbar \lambda\, F_x(\lambda) e^{\lambda(-\pi +\epsilon_4 - \epsilon_1)} \Gamma(2h_A-i\lambda) \Gamma(2h_B-i\lambda) \nonumber \\
    &\hspace{10mm} \times (U_1 + e^{i(-\pi + \epsilon_3-\epsilon_1)}U_3)^{i\lambda} (-V_4 + e^{i(-\pi + \epsilon_4-\epsilon_2)}(-V_2))^{i\lambda} \,. \nonumber
\end{align}
In the first equality we use the energy-conserving $\delta$-function to perform the $\lambda_4$ integral, and make the change of variable $\lambda = -\lambda_1-\lambda_3$. The $\lambda_2$ and $\lambda_3$ integrals immediately factorize. They respectively converge if $\epsilon_4-\epsilon_2>0$ and $\epsilon_3-\epsilon_1>0$, after which we evaluate using the Mellin-Barnes identity \cite{Gradshteyn:2014} (6.422.3):
\begin{align}
    \frac{1}{2\pi i}\int_{\gamma-i\infty}^{\gamma+i\infty} \mathrm{d}s\, \Gamma(-s) \Gamma(\beta+s) t^s = \Gamma(\beta) (1+t)^{-\beta}, \qquad \mathrm{Re}(1-\beta) < \gamma <0, \quad |\arg(t)|<\pi \,.  \label{eq:2D-MB-identity1}
\end{align}
Rewriting the simplified expression in Kruskal coordinates yields the third equality. The remaining $\lambda$-integral converges for both signs of $x$ provided $\epsilon_4-\epsilon_1>0$ and $\epsilon_3-\epsilon_2>0$. We note that all integrals converge for $0<\epsilon_1 < \epsilon_2 < \epsilon_3 < \epsilon_4$, which is the order (\ref{eq:OTOC-analyticity}) for which the position-space correlator $G_1(t_1,t_2,t_3,t_4)$ is well-defined. In Kruskal coordinates this corresponds to replacing $U_{13} \to U_{13}-i\epsilon$ and $V_{24}\to V_{24}-i\epsilon$, in agreement with (\ref{eq:2D-G1-UV-explicit}). For brevity we omit all $i\epsilon$'s for the rest of this calculation, with the above prescription taken to be implicit. The final integral can be evaluated using \cite{Gradshteyn:2014} (6.422.1):
\begin{align}
    &\hspace{-5mm} \frac{1}{2\pi i} \int_{-i\infty}^{i\infty} \mathrm{d}s\, \Gamma(s-k-\lambda)\Gamma(\lambda+\mu-s+\tfrac{1}{2}) \Gamma(\lambda-\mu-s+\tfrac{1}{2}) z^s \nonumber \\
    &= \Gamma(\tfrac{1}{2}-k-\mu) \Gamma(\tfrac{1}{2}-k+\mu) z^{\frac{1}{2}+\lambda+\mu} \, U^{\sigma}(\tfrac{1}{2}+\mu-k, 1+2\mu; z) \,, \\
    \sigma &= \begin{cases}
        0, & \arg z \in [-\pi, \pi] \\
        +, & \arg z \in (-\tfrac{3\pi}{2}, -\pi) \\
        -, & \arg z \in (\pi, \tfrac{3\pi}{2})
    \end{cases}, \qquad \mathrm{Re}(k+\lambda) <0, \qquad \mathrm{Re}(\lambda) > |\mathrm{Re} \, \mu| - \tfrac{1}{2} ,\qquad |\arg z| < \tfrac{3\pi}{2} \,. \label{eq:2D-MB-identity2} \nonumber 
\end{align}
Substituting this, we find:
\begin{align}
    \mathcal{F}^{-1}_{\lambda_i\to t_i} G_1 &= e^{h_A(t_1+t_3)} e^{-h_B(t_2+t_4)} \frac{(-1)^{h_A}}{(2\pi)^2} x^{-2h_B} U_{13}^{2h_B-2h_A}U^{\sigma_1}(2h_B, 1+2h_B-2h_A; -\tfrac{i}{x} U_{13} V_{24}) \nonumber \\
    &= e^{h_A(t_1+t_3)} e^{-h_B(t_2+t_4)} G_1(U_1, V_2, U_3, V_4) = G_1(t_1, t_2, t_3, t_4) \,,
\end{align}
where $\sigma_1$ is as defined in (\ref{eq:2D-G1-UV-explicit}). The explicit $t$-dependence in the final line precisely match the conformal factors (\ref{eq:2D-A-conformal}), (\ref{eq:2D-B-conformal}) in transforming from Kruskal coordinates to static patch time, yielding our desired result.

\vspace{1em}

\noindent \textbf{Fourier Transform at Analytically Continued Times} \\
Finally, we check if and when the analytically continued time and frequency domain correlators are still related by analytic continuation (\ref{eq:2D-G1-Fourier-analc}):
\begin{align}
    \mathcal{F}^{-1}_{\lambda_i\to t_i+i\epsilon_i} \big[e^{-2\pi y\lambda_1} G_1(\lambda_1,\lambda_2,\lambda_3,\lambda_4) \big] \stackrel{?}{=} G_1(t_1-2\pi i y, t_2, t_3, t_4), \qquad y\in \mathbb{R}\,.
\end{align}
For simplicity, only the first time is analytically continued off the real axis. We parameterize this by an imaginary shift $-2\pi i y$ for $y\in\mathbb{R}$, while keeping all $t_i$'s real (up to $i\epsilon$'s). The calculation is very similar to the real-time Fourier transform calculation (\ref{2D-G1-Fourier-realt-calc}). Again we change variables to $\lambda=-\lambda_1-\lambda_3$, and perform the integrals in the order (right-to-left) $\dbar\lambda \, (\dbar\lambda_2\dbar\lambda_3) \, \dbar\lambda_4$:
\begin{align}
    &\mathcal{F}^{-1}_{\lambda_i\to t_i+i\epsilon} \big[e^{-2\pi y \lambda_1}G_1\big] = \int \dbar\lambda_1\cdots\dbar\lambda_4 \, e^{-i\lambda_1(t_1+i\epsilon_1-2\pi iy)} e^{-i\sum_{i=1}^3 \lambda_i (t_i + i\epsilon_i)} G_1(\lambda_1,\lambda_2,\lambda_3,\lambda_4) \label{2D-G1-Fourier-analc-calc} \\
    &\hspace{5mm}= \frac{1}{(2\pi)^2 \Gamma(2h_A)\Gamma(2h_B)} \int \dbar\lambda \, F_x(\lambda) e^{i\lambda(t_1-t_4 + i(\epsilon_1+\epsilon_4))} e^{-\pi\lambda(1-2y)} \\
    &\hspace{15mm} \times \int \dbar\lambda_2\,  e^{-i\lambda_2(t_2-t_4 + i(\epsilon_2-\epsilon_4))} e^{\pi\lambda_2} \Gamma(h_B-i\lambda_2) \Gamma(h_B-i(\lambda-\lambda_2)) \nonumber \\
    &\hspace{15mm} \times \int \dbar\lambda_3\,  e^{i\lambda_3(t_1-t_3 + i(\epsilon_1-\epsilon_3))} e^{-\pi\lambda_3(1-2y)} \Gamma(h_A+i\lambda_3) \Gamma(h_A-i(\lambda+\lambda_3)) \nonumber \\
    &\hspace{5mm}= \frac{(-1)^{h_A+h_B}}{(2\pi)^2 \Gamma(2h_A)\Gamma(2h_B)} \frac{e^{h_A(t_1+t_3)}e^{2\pi ih_Ay}}{(U_1 + e^{i(-\pi(1-2y) + \epsilon_3-\epsilon_1)}U_3)^{2h_A}} \frac{e^{-h_B(t_2+t_4)}}{(-V_4 + e^{i(-\pi + \epsilon_4-\epsilon_2)}(-V_2) )^{2h_B}} \\
    &\hspace{10mm} \int \dbar \lambda\, F_x(\lambda) e^{-\lambda(\pi(1-2y) + \epsilon_1-\epsilon_4)} \Gamma(2h_A-i\lambda) \Gamma(2h_B-i\lambda) \nonumber \\
    &\hspace{15mm} \times (U_1 + e^{i(-\pi(1-2y) + \epsilon_3-\epsilon_1)}U_3)^{i\lambda} (-V_4 + e^{i(-\pi + \epsilon_4-\epsilon_2)}(-V_2))^{i\lambda} \,. \nonumber
\end{align}
In the third equality we go to Kruskal coordinates, where $U_1,U_3>0$ and $V_2,V_4<0$. The $\lambda_2$-integral converges if and only if $\epsilon_4>\epsilon_2$. The $\lambda_3$-integral can only converge when $y\in [0,1]$. This is regulator independent in the interior of the interval, but at $y=0$ and $y=1$ we need (respectively) $\epsilon_3>\epsilon_1$ and $\epsilon_3<\epsilon_1$. Finally there is the integral over $\lambda$, whose study we need only restrict to $y\in [0,1]$. The $x>0$ and $x<0$ cases are different. First consider $x>0$. For $y\in(0,1)$, the $\lambda$-integral converges without additional regulator conditions for all real $U_{13}$ and $V_{24}$. At $y=0$ we further need $\epsilon_4>\epsilon_1$. At $y=1$, we need $\epsilon_1 > \epsilon_2$. Next we study $x<0$. For $y\in (0,1/2)$, the $\lambda$-integral converges without additional regulator conditions for all real $U_{13}$ and $V_{24}$. Convergence at $y=0$ requires $\epsilon_3>\epsilon_2$, while for $y=1/2$ we need $\epsilon_1>\epsilon_2$. For $y\in (1/2, 1)$ the $\lambda$-integral diverges for $V_{24}<0$ and converges for $V_{24}>0$, both independently of regulator prescriptions. 

\vspace{1em}

\noindent When all integrals converge, we can again use \cite{Gradshteyn:2014} (6.422.1) to show that the inverse Fourier transform (\ref{2D-G1-Fourier-analc-calc}) yields $G_1(t_1-2\pi i y, t_2, t_3, t_4)$. We collect our results:
\begin{enumerate}
    \item Case 1: $x>0$. For $y\in [0,1)$, the Fourier integrals converge for all real times if $\epsilon_4>\epsilon_3>\epsilon_2>\epsilon_1$. For $y \in (0,1]$, all integrals converge for all times if $\epsilon_1>\epsilon_4>\epsilon_3>\epsilon_2$. These are the real-time regulator prescriptions for $G_1$ and $G_2$, respectively. Furthermore, there is no $i\epsilon$-prescription for which the integrals converge for all times if $y$ is outside of $[0,1]$. We see the Fourier representation of $G_1$ is always convergent on the KMS strip $\mathrm{Im}(t_1-2\pi iy) \in (-2\pi,0]$, and this strip is maximal.
    \item Case 2: $x<0$. For $y\in [0,1/2)$, all Fourier integrals converge for all real times given the $i\epsilon$-prescription of $G_1$. When $V_{24}>0$ with this prescription we additionally have convergence for $y\in [1/2,1)$. However, for any $y\in(1/2,1)$ and $V_{24}<0$, the Fourier integrals diverge regardless of the regulator. Convergence for all real times at the midpoint $y=1/2$ is satisfied for the $G_2$ regulator prescription. The maximal strip $y\in [0,1/2)$ where the Fourier representation of $G_1$ is always convergent constitutes only half of the KMS strip.
\end{enumerate}
A similar computation shows that the Fourier representation of $G_2(t_2, t_3, t_4, t_1-2\pi i y)$ with $\epsilon_1>\epsilon_4>\epsilon_3>\epsilon_2$ is always well-behaved on $y\in (-1, 0]$ for $x>0$, and $y\in (-1/2, 0]$ for $x<0$.


\subsubsection{Perturbative Correlators and Borel Resummation} \label{sec:2D-Borel} 
In this section we show how the exact correlation functions and the KMS violation can be recovered from the perturbative series through Borel resummation. In particular, the KMS violating term is given by the contour integral about a singularity in the Borel plane.

The OTOCs are computed in perturbation theory by expanding the integrands (\ref{eq:2D-G1-PAPB-integral}), (\ref{eq:2D-G2-PAPB-integral}) as a power series in $x$, and exchanging the sum with the $P_A$, $P_B$ integrals, which subsequently factorize. Strictly speaking these limits do not commute, which manifests through an asymptotic series with zero radius of convergence:
\begin{align}
    G_1^{\mathrm{pert.}}(U_1, V_2, U_3, V_4) &= \frac{1}{(2\pi)^2} \frac{(-1)^{h_A+h_B}}{(U_{13}-i\epsilon)^{2h_A} (V_{24}-i\epsilon)^{2h_B}} \sum_{n\geq 0} \Big(\frac{-ix}{(U_{13}-i\epsilon)(V_{24}-i\epsilon)}\Big)^n \frac{(2h_A)_n (2h_B)_n}{\Gamma(1+n)}\,.
\end{align}
The Borel transformed series is defined by dividing each term by $n!$. This new sum converges with finite radius of convergence to a hypergeometric ${}_2F_1$:
\begin{align}
    \mathcal{B}G_1^{\mathrm{pert.}} &= \frac{1}{(2\pi)^2} \frac{(-1)^{h_A+h_B}}{(U_{13}-i\epsilon)^{2h_A} (V_{24}-i\epsilon)^{2h_B}} {}_2F_1 \Big(2h_A, 2h_B, 1; \frac{-ix}{(U_{13}-i\epsilon)(V_{24}-i\epsilon)}\Big) \,, \nonumber \\
    \mathcal{B}G_2^{\mathrm{pert.}} &= \frac{1}{(2\pi)^2} \frac{(-1)^{h_A+h_B}}{(U_{13}+i\epsilon)^{2h_A} (V_{24}-i\epsilon)^{2h_B}} {}_2F_1 \Big(2h_A, 2h_B, 1; \frac{-ix}{(U_{13}+i\epsilon)(V_{24}-i\epsilon)}\Big) \,,
\end{align}
Since $h_A$ and $h_B$ are positive integers, the hypergeometric function ${}_2F_1(z)$ is holomorphic away from a pole at $z=1$. The prescription to go around the pole is specified by the $i\epsilon$'s. For $\mathcal{B}G_1^{\mathrm{pert.}}$, we stay clear of this pole in the region where $\sigma_1(x,U_{13},V_{24})=0$, where $\sigma_1$ was defined in (\ref{eq:2D-G1-UV-explicit}). The same applies to $\mathcal{B}G_2^{\mathrm{pert.}}$ with $\sigma_2(x,U_{13},V_{24})=0$.

\vspace{1em}

\noindent The Borel resummed functions are obtained from taking the Laplace transforms:
\begin{align}
    \mathcal{S}G_1^{\mathrm{pert.}} &= \frac{1}{(2\pi)^2}  \frac{(-1)^{h_A+h_B}}{(U_{13}-i\epsilon)^{2h_A} (V_{24}-i\epsilon)^{2h_B}} \int_0^{\infty} \mathrm{d}s \, e^{-s} {}_2F_1\Big(2h_A, 2h_B, 1; \frac{-ix s}{(U_{13}-i\epsilon)(V_{24}-i\epsilon)}\Big) \,, \nonumber \\
    \mathcal{S}G_2^{\mathrm{pert.}}  &= \frac{1}{(2\pi)^2} \frac{(-1)^{h_A+h_B}}{(U_{13}+i\epsilon)^{2h_A} (V_{24}-i\epsilon)^{2h_B}} \int_0^{\infty} \mathrm{d}s \, e^{-s} {}_2F_1\Big(2h_A, 2h_B, 1; \frac{-ix s}{(U_{13}+i\epsilon)(V_{24}-i\epsilon)}\Big) \,. \label{eq:2D-G-Borel-integral-2F1}
\end{align}
Let us focus on $\mathcal{S}G_1^{\mathrm{pert.}}$. The integral is well-defined if there are no singularities on the integration contour $(0,\infty)$. The only singularity in the Borel plane is at $s_1^{\ast} := (i/x)(U_{13}-i\epsilon)(V_{24}-i\epsilon)$, which crosses the contour when $U_{13}V_{24} \to 0^+$ with $\mathrm{sgn}(x) = \mathrm{sgn}(U_{13}) = \mathrm{sgn}(V_{24})$. When $U_{13}$ and $V_{24}$ are not in this problematic quadrant, we use the identity
\begin{align}
    \int_0^\infty \mathrm{d}s\, e^{-s} {}_2F_1(a,b;1; s z) = (-z)^{-b} U(a,1+b-a, -1/z)\,.
\end{align}
This recovers the exact correlator $G_1$ in the region where $\sigma_1(x,U_{13},V_{24})=0$. To derive the correlator in the final quadrant, note that when $s_1^{\ast}$ crosses the Laplace contour, the expression $\mathcal{S}G_1^{\mathrm{pert.}}$ picks up a contour around this pole, which can be computed by its residue. By Cauchy's theorem, the residue has a plus sign in front if the real axis is crossed downwards, and a minus sign if crossed from below. There is a similar calculation with $\mathcal{S}G_2^{\mathrm{pert.}}$, where the integrand has a singularity at $s_2^{\ast} := (i/x)(U_{13}+i\epsilon)(V_{24}-i\epsilon)$. The residues perfectly match the hypergeometric-$U$ monodromy between sheets, and we recover (\ref{eq:2D-G1-UV-explicit}) and (\ref{eq:2D-G2-UV-explicit}) at finite $x$ for all $U_{13}, V_{24}\neq 0$. The residue contributions are summarized in columns 4 and 6 of Table \ref{table:Borel-plane-residues}.

\vspace{1em}

\noindent Lastly, let us see how the KMS violating contribution arises from the Borel calculation. We wish to analytically continue $\mathcal{S}G_1^{\mathrm{pert.}}$ as $U_1\to e^{-2\pi i}U_1$. This yields:
\begin{align}
    &\mathcal{S}G_1^{\mathrm{pert.}}(e^{-2\pi i}U_1) \\
    &\hspace{5mm}= \frac{1}{(2\pi)^2} \frac{(-1)^{h_A+h_B}}{(U_{13}+i\epsilon)^{2h_A} (V_{24}-i\epsilon)^{2h_B}} \int_0^{\infty} \mathrm{d}s \, {}_2F_1\Big(2h_A, 2h_B, 1; \frac{-ix s}{(U_{13}+i\epsilon)(V_{24}-i\epsilon)}\Big) + \text{residues} \,. \nonumber \\
    &\hspace{5mm}= \mathcal{S}G_2^{\mathrm{pert.}}(U_1) + \text{residues} \,,
\end{align}
where we have worked modulo residues from singularities crossing the positive real axis. Let us now check that these residues restore the expected results. During the analytic continuation, the Borel pole crosses the real axis in the downward direction if and only if $U_{13}>0$, in which case the expression picks up a residue with a plus sign. Note that this occurs for both signs of $x$. When $x>0$, these add together with the residues in the original $\mathcal{S}G_1^{\mathrm{pert.}}$ to match those in $\mathcal{S}G_2^{\mathrm{pert.}}$ and $\Delta_{\mathrm{KMS}}=0$. However, when $x<0$, $\mathcal{S}G_1^{\mathrm{pert.}}(e^{-2\pi i}U_1)$ always differs by a residue. These results are summarized in the following table.
\begin{table}[h!]
    \centering
    \begin{tabular}{@{}ccccr@{\hspace{0.4em}${}+{}$\hspace{0.4em}}lc@{}}
        \toprule
        &
        $\mathrm{sgn}\,U_{13}$ &
        $\mathrm{sgn}\,V_{24}$ &
        $\mathcal{S}G_1^{\mathrm{pert.}}(U_1)$ &
        \multicolumn{2}{c}{$\mathcal{S}G_1^{\mathrm{pert.}}(e^{-2\pi i}U_1)$} &
        $\mathcal{S}G_2^{\mathrm{pert.}}(U_1)$
        \\ \midrule
    
        \multirow{4}{*}{$x>0$}
        & $+$ & $+$
        & $-\mathrm{res.}$
        & $-\mathrm{res.}$ & $\mathrm{res.}$
        & $0$
        \\
        & $+$ & $-$
        & $0$
        & $0$ & $\mathrm{res.}$
        & $+\mathrm{res.}$
        \\
        & $-$ & $+$
        & $0$
        & $0$ & $0$
        & $0$
        \\
        & $-$ & $-$
        & $0$
        & $0$ & $0$
        & $0$
        \\ \midrule
    
        \multirow{4}{*}{$x<0$}
        & $+$ & $+$
        & $0$
        & $0$ & $\mathrm{res.}$
        & $0$
        \\
        & $+$ & $-$
        & $0$
        & $0$ & $\mathrm{res.}$
        & $0$
        \\
        & $-$ & $+$
        & $0$
        & $0$ & $0$
        & $-\mathrm{res.}$
        \\
        & $-$ & $-$
        & $+\mathrm{res.}$
        & $+\mathrm{res.}$ & $0$
        & $0$
        \\ \bottomrule
    \end{tabular}
    \caption{Residue contributions from the Borel singularity crossing the positive real axis for the correlators $\mathcal{S}G_1^{\mathrm{pert.}}(U_1)$, $\mathcal{S}G_2^{\mathrm{pert.}}(U_1)$, and $\mathcal{S}G_1^{\mathrm{pert.}}(e^{-2\pi i}U_1)$. The last two columns agree for $x>0$, and differ by a residue for $x<0$.}
    \label{table:Borel-plane-residues}
\end{table}

\noindent For $x<0$, we compute the KMS violation by using that the ${}_2F_1$ can be written as a terminating polynomial:
\begin{align}
    &\mathcal{S}G_1^{\mathrm{pert.}}(e^{-2\pi i}U_1) - \mathcal{S}G_2^{\mathrm{pert.}}(U_1) \nonumber\\
    &\hspace{5mm}= \theta(-x) 2\pi i \mathop{\mathrm{Res}}_{s\to s_2^{\ast}} e^{-s} \mathcal{B} G_2^{\mathrm{pert.}}(U_{13}, V_{24}; s x) \\
    &\hspace{5mm}=\frac{2\pi i\, \theta(-x)(-1)^{h_A+h_B}}{(2\pi)^2 (U_{13}+i\epsilon)^{2h_A} (V_{24}-i\epsilon)^{2h_B}} \sum_{k=0}^{2h_--1} \hspace{-1mm} \frac{(1-2h_A)_k (1-2h_B)_k}{\Gamma(1+k)^2} \mathop{\mathrm{Res}}_{s\to s_2^{\ast}} \Big[ e^{-s} \big(\tfrac{s}{s_2^{\ast}}\big)^k \big(1-\tfrac{s}{s_2^{\ast}}\big)^{1-2h_A-2h_B} \Big] \nonumber \\
    &\hspace{5mm}= \theta(-x) \frac{i (-1)^{h_-}}{2\pi} \times
    \begin{dcases}
        (U_{13}+i\epsilon)^{2h_+-2h_-}, & h_B\geq h_A \\
        (V_{24}-i\epsilon)^{2h_+-2h_-}, & h_A\geq h_B
    \end{dcases} \\
    &\hspace{10mm} \times e^{-\frac{i}{x}U_{13}V_{24}} x^{-2h_+} \sum_{k=0}^{2h_--1} \frac{\big(-\frac{i}{x}U_{13}V_{24}\,\big)^k}{\Gamma(1+k)\Gamma(2h_--k)\Gamma(1+2h_+-2h_-+k)} \,. \nonumber
\end{align}
This is precisely $\Delta_{\mathrm{KMS}}$ (\ref{eq:2D-DeltaKMS-UV-explicit-2}). We conclude:
\begin{align}
    \Delta_{\mathrm{KMS}}(U_1, V_2, U_3, V_4) &= \theta(-x) \int_{C_{s_2^{\ast}}} \mathrm{d}s\, e^{-s} \mathcal{B} G_2^{\mathrm{pert.}}(s x) \,, \label{eq:2D-DeltaKMS-Borel}
\end{align}
where the contour circles $s_2^{\ast}$ counterclockwise. We say that the contact term in the Tomita-Takesaki derivation of $\Delta_{\mathrm{KMS}}$ is encoded in the analytic structure of the Borel transformed perturbative correlator. This scales as $1/x$, and is therefore non-perturbative. It would be interesting to see if an analog of (\ref{eq:2D-DeltaKMS-Borel}) continues to hold for more general theories, where the analytic structure of $\mathcal{B} G_2^{\mathrm{pert.}}$ may be more complex.


\subsubsection{A Detail in Proposition \ref{prop:G1-smeared-xto0}} \label{sec:G1xto0-detail}
In this section we address a subtlety in applying Lemma \ref{lemma:Fxlambda-integral} to (\ref{eq:2D-G1-lambda-smeared-step}), which is used to extract the perturbative expansion of the smeared correlator $G_1$. For the contour pulling trick to work, the integral on the vertical line segments $C^{\pm}_N(\Lambda) := \pm \Lambda + i[0,N]$ must vanish when we take $\Lambda\to\infty$ for any $n\in\mathbb{Z}_+$. That is,
\begin{align}
    \lim\limits_{\Lambda\to 0}I_{\pm,N}(\Lambda) &:= \lim\limits_{\Lambda\to\infty} \int_{C^{\pm}_N(\Lambda)} \dbar\lambda \, F_x(\lambda) h(\lambda) = 0\,, \\
    h(\lambda) &:= e^{\lambda(-\pi+\epsilon)} \Gamma(2h_A-i\lambda) \Gamma(2h_B-i\lambda) \,, \label{eq:Fxlambda-edges-h} \\
    &\hspace{5mm}\times \int_{-\infty}^{\infty} \mathrm{d}U_-\, g_1(U_-) (U_--i\epsilon)^{-2h_A+i\lambda} \int_{-\infty}^{\infty} \mathrm{d}V_- \, g_2(V_-)  (V_- -i\epsilon)^{-2h_B+i\lambda} \,. \nonumber
\end{align}
We will show this is satisfied for a generic class of test functions. Recall that we require test functions $f_i(t_i)$ in the time domain to be of class $C_c^{\infty}(\mathbb{R})$, smooth with bounded support. It follows that the functions $g_1(U_-)$ and $g_2(V_-)$ defined in (\ref{eq:2D-g1g2-testfns}) are also of class $C^{\infty}_c(\mathbb{R})$, as the $U_+,V_+$ integrands are supported away from $0$. An example of such functions is given by bump functions of the form:
\begin{align}
    B_R(z) := e^{-\frac{(z/R)^2}{1-(z/R)^2}} \mathds{1}(|z|<R) = e^{-\frac{(z/R)^2}{1-(z/R)^2}} \times 
    \begin{dcases}
        1, & |z| < R \\
        0, & |x| \geq R
    \end{dcases} \,.\label{eq:bump-fn}
\end{align}
We take $g_1(U_-) = B_R(U_-)$ and $g_2(V_-) = B_R(V_-)$. The function $h(\lambda)$ can be explicitly evaluated by making the change of variables $y=(z/R)^2(1-(z/R)^2)^{-1}$ and using (\ref{eq:hypergeometricU-integral-identity}) for each the $z>0$ and $z<0$ regions:
\begin{align}
    \int_{-\infty}^{\infty} \mathrm{d}z\, B_R(z) (z-i\epsilon)^{-2h+i\lambda} &= (-1)^h \pi R^{1-2h+i\lambda} \frac{e^{\frac{\pi\lambda}{2}} }{\Gamma(\frac{1}{2}+h-\frac{i\lambda}{2})} U\big(\tfrac{1}{2}-h_A+\tfrac{i\lambda}{2},0; 1\big) \,,
\end{align}
Substituting this into (\ref{eq:Fxlambda-edges-h}) yields:
\begin{align}
    h(\lambda) &= c \, e^{i\lambda \log((R/2)^2)} \Gamma(h_A-\tfrac{i\lambda}{2}) \Gamma(h_B-\tfrac{i\lambda}{2}) U\big(\tfrac{1}{2}-h_A+\tfrac{i\lambda}{2},0; 1) \, U\big(\tfrac{1}{2}-h_B+\tfrac{i\lambda}{2},0; 1) \,,
\end{align}
where $c$ is a constant. Finally, we need the asymptotics of $h(\lambda)$ on $C^{\pm}_N(\Lambda)$. For any fixed $n$, we may take $\Lambda$ sufficiently large so that the first argument of each hypergeometric function is arbitrarily close to $\pm\pi/2$. Using the $\Gamma$-vertical line bound (\ref{eq:Gamma-vertical-bound}) with with \cite{DLMF} (13.8.11) yields:
\begin{align}
    \lim\limits_{\Lambda\to\infty} |h(\lambda)| \leq c_N' \Lambda^{M_N} e^{-2\Lambda^{1/2}}, \qquad \forall\, \lambda \in C^{\pm}_N(\Lambda) \,,
\end{align}
where $c_N'$ and $M_N$ are $N$-dependent constants. The exponential decay in $\Lambda^{1/2}$ overpowers $F_x(\lambda)$, which for $x<1$ scales as $\mathcal{O}(\Lambda^{-1/2})$ on $C^{\pm}_N(\Lambda)$. When $g_1$ and $g_2$ are bump functions (\ref{eq:bump-fn}), we thus have:
\begin{align}
    \lim\limits_{\Lambda\to 0}I_{\pm,N}(\Lambda) &:= \lim\limits_{\Lambda\to\infty} \int_{C^{\pm}_N(\Lambda)} \dbar\lambda \, F_x(\lambda) h(\lambda) = 0 \,.
\end{align}
If $g_1(U_-)$ and $g_2(V_-)$ are bump functions centered away from zero, we cannot derive explicit formulas for $h(\lambda)$, but we expect the same analysis to hold--the only worrysome region is near $U_-=0$ (similarly $V_-=0$), since this is where the $(U_--i\epsilon)^{-2h_A+i\lambda}$ factor of (\ref{eq:Fxlambda-edges-h}) is not smooth. Since any $C_c^{\infty}$ function may be well approximated by a finite sum of shifted bump functions, we conclude that we may apply Lemma \ref{lemma:Fxlambda-integral} to (\ref{eq:2D-G1-lambda-smeared-step}) for generic test functions $f_i(t_i)$.


\subsection{Commutant Calculations} \label{sec:commutator-calcs}
In this appendix we provide further support for our conjecture that the commutant algebra $\mathcal{A}'$ is trivial by comparing matrix elements of $\mathcal{K}_{\mathrm{I}, O}^{\lambda_1\lambda_2;ij}$ and $\mathcal{K}_{\mathrm{II}, O}^{\lambda_1\lambda_2;ij}$ in the ket $|\lambda_A,\lambda_B\rangle_{kl}$. When the computation is tractable, we will see that the matrix elements agree for $x>0$ but disagree for $x<0$. This will lead to the result of Proposition \ref{prop:commutant-3field-constraint}. We begin with:
\begin{align}
    &{}_{k'}\langle -\lambda_A', \Omega_B| \mathcal{K}_{\mathrm{I}, O}^{\lambda_1\lambda_2;ij} |\lambda_A, \lambda_B\rangle_{kl} \nonumber \\
    &\hspace{2mm} = \int \dbar\lambda \, F_{-\frac{x}{2}}(\lambda) \times {}_{k'}\langle -\lambda_A', \Omega_B| e^{\frac{i}{2}x\hat{P}_A\hat{P}_B} \varphi_i^A(\lambda_1+\lambda) e^{-ix\hat{P}_A\hat{P}_B} \varphi_j^B(\lambda_2+\lambda) e^{\frac{i}{2}x\hat{P}_A\hat{P}_B} |\lambda_A, \lambda_B\rangle_{kl} \\
    &\hspace{2mm}= \int \dbar\lambda \dbar\lambda'\, F_{-\frac{x}{2}}(\lambda) F_{\frac{x}{2}}(\lambda') \times {}_{k'}\langle-\lambda_A'|\varphi_i^A(\lambda_1+\lambda)|\lambda_A+\lambda'\rangle_k \langle\Omega_B |\varphi_j^B(\lambda_2+\lambda)|\lambda_B+\lambda'\rangle_l \\
    &\hspace{2mm}= 2\pi\delta(\lambda_1+\lambda_A+\lambda_A'-\lambda_2-\lambda_B) e^{i\pi(h_j+i\lambda_2)} \delta_{jl} \nonumber \\
    &\hspace{2mm}\hspace{5mm} \int\dbar\lambda\, e^{-\pi\lambda}F_{-\frac{x}{2}}(\lambda) F_{\frac{x}{2}}(-\lambda-\lambda_2-\lambda_B) \times {}_{k'}\langle -\lambda_A'| \varphi_i^A(\lambda+\lambda_1) | -\lambda-\lambda_1-\lambda_A' \rangle_k \,, \label{eq:KOI-matrixel-AAB} \\
    &\, \nonumber \\
    &{}_{k'}\langle -\lambda_A', \Omega_B| \mathcal{K}_{\mathrm{II}, O}^{\lambda_1\lambda_2;ij} |\lambda_A, \lambda_B\rangle_{kl} \nonumber \\
    &\hspace{2mm}= \int \dbar\lambda \, F_{\frac{x}{2}}(\lambda) \times {}_{k'}\langle -\lambda_A', \Omega_B| e^{-\frac{i}{2}x\hat{P}_A\hat{P}_B} \varphi_j^B(\lambda_2+\lambda) e^{ix\hat{P}_A\hat{P}_B} \varphi_i^A(\lambda_1+\lambda) e^{-\frac{i}{2}x\hat{P}_A\hat{P}_B} |\lambda_A, \lambda_B\rangle_{kl} \\
    &\hspace{2mm}=\int \dbar\lambda \dbar\lambda'\dbar\lambda''\, F_{\frac{x}{2}}(\lambda) F_{-\frac{x}{2}}(\lambda') F_{x}(\lambda'') \nonumber \\
    &\hspace{2mm}\hspace{10mm}\times {}_{k'}\langle-\lambda_A'-\lambda''|\varphi_i^A(\lambda_1+\lambda)|\lambda_A+\lambda'\rangle_k \langle\Omega_B|\varphi_j^B(\lambda_2+\lambda)|\lambda'+\lambda''+\lambda_B\rangle_l \\
    &\hspace{2mm}= 2\pi\delta(\lambda_1+\lambda_A+\lambda_A'-\lambda_2-\lambda_B) e^{i\pi(h_j+i\lambda_2)} \delta_{jl} \nonumber \\
    &\hspace{2mm}\hspace{5mm} \int\dbar\lambda\dbar\lambda\, e^{-\pi\lambda}F_{\frac{x}{2}}(\lambda) F_{-\frac{x}{2}}(\lambda') F_{x}(-\lambda-\lambda'-\lambda_1-\lambda_A-\lambda_A') \nonumber \\
    &\hspace{2mm}\hspace{10mm} \times {}_{k'}\langle -\lambda-\lambda'-\lambda_1-\lambda_A | \varphi_i^A(\lambda+\lambda_1) | \lambda'+\lambda_A \rangle_k \,. \label{eq:KOII-matrixel-AAB}
\end{align}
The first set of equalities use (\ref{eq:alg-KOI-def}) and (\ref{eq:alg-KOII-def}). Some instances of the S-matrix can be moved to the far left, where they are annihilated by $\langle\Omega_B|$. The non-trivial S-matrices are expressed in terms of $F_x(\lambda)$'s using the identity (\ref{eq:2D-PAPB-lambda-basis}). Both matrix elements now take the form of an integral involving a two-point function and a three-point function. In the final lines we evaluate the two-point function, which contribute overall $\delta$-functions.

Next we need the 3-point functions in frequency space. For our chiral CFTs, these are given by a combination of ${}_3F_2$'s:
\begin{align}
     {}_i\langle -\lambda_1| \varphi^A_j(\lambda_2) | \lambda_3\rangle_k &= \sqrt{(2\pi)^3\Gamma(2h_i)\Gamma(2h_j)\Gamma(2h_k)} \, e^{i\pi(h_i+i\lambda_1)} e^{-\frac{i\pi}{2}(h_i+h_j+h_k)} C_{ijk} \mathcal{F}(-\lambda_1, \lambda_2, \lambda_3) \,, \nonumber \\
     {}_i\langle-\lambda_1| \varphi^B_j(\lambda_2) | \lambda_3\rangle_k &= \sqrt{(2\pi)^3\Gamma(2h_i)\Gamma(2h_j)\Gamma(2h_k)} \, e^{-i\pi(h_i+i\lambda_1)} e^{\frac{i\pi}{2}(h_i+h_j+h_k)} C_{ijk} \mathcal{F}(-\lambda_1, \lambda_2, \lambda_3) \,, \label{eq:threepoint-lambda}\\
     \mathcal{F}(-\lambda_1, \lambda_2, \lambda_3) &:= \tfrac{\Gamma(1+\Delta_i)\Gamma(1+\Delta_j)}{\Gamma(h_k+i\lambda_3)}
     \Big[
        \tfrac{\Gamma(h_i-i\lambda_1)}{\Gamma(h_i+i\lambda_1)} {}_3\tilde{F}_2\Big(
        \begin{smallmatrix}
            h_i-i\lambda_1, \ h_j+i\lambda_2, \ -\Delta_k \\ 1+h_j-h_k-i\lambda_1, \ 1+h_i-h_k+i\lambda_2
        \end{smallmatrix};1\Big) \nonumber \\
        &\hspace{32mm}
        + \tfrac{\Gamma(h_j-i\lambda_2)}{\Gamma(h_j+i\lambda_2)} {}_3\tilde{F}_2\Big(
        \begin{smallmatrix}
            h_i+i\lambda_1, \ h_j-i\lambda_2, \ -\Delta_k \\ 1+h_j-h_k+i\lambda_1, \ 1+h_i-h_k-i\lambda_2
        \end{smallmatrix};1\Big) (-1)^{h_i+h_j+h_k}
    \Big] \nonumber \\
    &\hspace{5mm}+ \tfrac{\Gamma(1+\Delta_i)\Gamma(1+\Delta_k)}{\Gamma(h_k+i\lambda_2)}
     \Big[
        \tfrac{\Gamma(h_k-i\lambda_3)}{\Gamma(h_k+i\lambda_3)} {}_3\tilde{F}_2\Big(
        \begin{smallmatrix}
            h_i+i\lambda_1, \ h_k-i\lambda_3, \ -\Delta_j \\ 1-h_j+h_k+i\lambda_1, \ 1+h_i-h_j-i\lambda_3
        \end{smallmatrix};1\Big) \nonumber \\
        &\hspace{32mm}
        + \tfrac{\Gamma(h_i-i\lambda_1)}{\Gamma(h_i+i\lambda_1)} {}_3\tilde{F}_2\Big(
        \begin{smallmatrix}
            h_i-i\lambda_1, \ h_k+i\lambda_3, \ -\Delta_j \\ 1-h_j+h_k-i\lambda_1, \ 1+h_i-h_j+i\lambda_3
        \end{smallmatrix};1\Big) (-1)^{h_i+h_j+h_k} 
    \Big] \nonumber \\
    &\hspace{5mm}+ \tfrac{\Gamma(1+\Delta_j)\Gamma(1+\Delta_k)}{\Gamma(h_i+i\lambda_1)}
     \Big[
        \tfrac{\Gamma(h_j-i\lambda_2)}{\Gamma(h_j+i\lambda_2)} {}_3\tilde{F}_2\Big(
        \begin{smallmatrix}
            h_j-i\lambda_2, \ h_k+i\lambda_3, \ -\Delta_i \\ 1-h_i+h_k-i\lambda_2, \ 1-h_i+h_j+i\lambda_3
        \end{smallmatrix};1\Big) \nonumber \\
        &\hspace{32mm}
        + \tfrac{\Gamma(h_k-i\lambda_3)}{\Gamma(h_k+i\lambda_3)} {}_3\tilde{F}_2\Big(
        \begin{smallmatrix}
            h_j+i\lambda_2, \ h_k-i\lambda_3, \ -\Delta_i \\ 1-h_i+h_k+i\lambda_2, \ 1-h_i+h_j-i\lambda_3
        \end{smallmatrix};1\Big) (-1)^{h_i+h_j+h_k}
    \Big] \,, \label{eq:threepoint-lambda-3F2}
\end{align}
where $\Delta_i = h_i-h_j-h_k$, $\Delta_j= h_j-h_i-h_k$, and $\Delta_k=h_k-h_i-h_j$. This can be derived from the position-space three-point function (\ref{eq:threepoint}) using the representations (\ref{eq:2D-varphiA-lambdatoU}) and (\ref{eq:2D-varphiB-lambdatoU}). The six ${}_3F_2$'s come from the integral over the three positions $(U_1, U_2, U_3)$, where each sector $U_i < U_j < U_k$ is treated separately. For general conformal weights the expression (\ref{eq:threepoint-lambda-3F2}) does not simplify nicely, and substituting (\ref{eq:threepoint-lambda}) into (\ref{eq:KOI-matrixel-AAB})--(\ref{eq:KOII-matrixel-AAB}) produces a sum of multivariate Mellin-Barnes integrals which are difficult to evaluate. However, the three-point function simplifies when two conformal weights add to the third. For the $A$-theory, we have:
\begin{align}
    {}_i\langle -\lambda_1| \varphi^A_j(\lambda_2) | \lambda_3\rangle_k &= C_{ijk} \, 2\pi \delta(\lambda_1+\lambda_2+\lambda_3) \\
    &\hspace{5mm}\times \begin{dcases}
        \sqrt{\frac{(2\pi)^3 \Gamma(2h_i)}{\Gamma(2h_j) \Gamma(2h_k)}} \frac{\Gamma(h_j-i\lambda_2) \Gamma(h_k-i\lambda_3)}{\Gamma(h_i+i\lambda_1)}, & h_i = h_j+h_k \\
        \sqrt{\frac{(2\pi)^3 \Gamma(2h_j)}{\Gamma(2h_i) \Gamma(2h_k)}} (-1)^{h_k} e^{-\pi\lambda_3} \frac{\Gamma(h_i-i\lambda_1) \Gamma(h_k-i\lambda_3)}{\Gamma(h_j+i\lambda_2)}, & h_j = h_i+h_k \\
        \sqrt{\frac{(2\pi)^3 \Gamma(2h_k)}{\Gamma(2h_i) \Gamma(2h_j)}} (-1)^{h_j} e^{\pi\lambda_2} \frac{\Gamma(h_i-i\lambda_1) \Gamma(h_j-i\lambda_2)}{\Gamma(h_k+i\lambda_3)}, & h_k= h_i+h_j
    \end{dcases} \,. \nonumber
\end{align}
We now return to (\ref{eq:KOI-matrixel-AAB}) and (\ref{eq:KOII-matrixel-AAB}), and suppose that $h_{k'} = h_i + h_k$. The first matrix element becomes:
\begin{align}
    {}_{k'}\langle -\lambda_A', \Omega_B| \mathcal{K}_{\mathrm{I}, O}^{\lambda_1\lambda_2;ij} |\lambda_A, \lambda_B\rangle_{kl} &= 2\pi\delta(\lambda_1+\lambda_A+\lambda_A'-\lambda_2-\lambda_B) \, e^{i\pi(h_j+i\lambda_2)} \delta_{jl}\\
    &\hspace{5mm} \Big|\frac{x}{2}\Big|^{i(\lambda_2+\lambda_B)} e^{\frac{\pi}{2}(\lambda_2+\lambda_B)\,\mathrm{sgn}(x)} \sqrt{\frac{(2\pi)^3\Gamma(2h_{k'})}{\Gamma(2h_i)\Gamma(2h_k)}} C_{k'ik}\, \mathcal{I}^{\mathrm{I}}_{k'ik}(x) \,, \nonumber \\
    \mathcal{I}^{\mathrm{I}}_{k'ik}(x) &:= \frac{1}{\Gamma(h_{k'}+i\lambda_A')} \int\dbar\lambda\, e^{-2\pi\lambda\, \theta(-x)} \Gamma(i\lambda+\epsilon) \Gamma(i\lambda + h_k + i(\lambda_1+\lambda_A')) \nonumber \\
    &\hspace{10mm} \times \Gamma(-i\lambda+h_i-i\lambda_1) \Gamma(-i\lambda+\epsilon -i(\lambda_1+\lambda_A+\lambda_A')) \,.
\end{align}
The integral $\mathcal{I}^{\mathrm{I}}$ is always convergent. When $x>0$, it can be evaluated using the ${}_2F_1$ Mellin-Barnes representation \cite{DLMF} (15.6.7), and the ${}_2F_1$ simplifies to $\Gamma$-functions.
\begin{align}
    {}_2\tilde{F}_1(a,b,c;z) &= \frac{1}{\Gamma(a)\Gamma(b)\Gamma(c-a)\Gamma(c-b)} \int_{i\mathbb{R}} \frac{\mathrm{d}s}{2\pi i} \Gamma(-s) \Gamma(a+s)\Gamma(b+s)\Gamma(c-a-b-s) (1-z)^s \,; \nonumber \\
    &\hspace{5mm} |\mathrm{arg}(1-z)|<\pi, \quad a,b,c-a,c-b \not\in \mathbb{Z}_{\leq 0} \,.
\end{align}
The $x<0$ case is related by analytic continuation. We define $1-z=e^{-\tau}$ so $(1-z)^s = e^{i\tau\lambda}$, and continue the $x>0$ result as $\tau\to \tau+2\pi i$. The ${}_2F_1$ winds around $z=1$ once clockwise in a circle of radius $>1$, and the result is given by monodromy:
\begin{align}
    &{}_2F_1(a,b,c;e^{-2\pi i}z) - {}_2F_1(a,b,c;z) \nonumber \\
    &\hspace{5mm} = (e^{-2\pi i(c-a-b)} - 1) \tfrac{\Gamma(c)\Gamma(a+b-c)}{\Gamma(a)\Gamma(b)} (1-z)^{c-a-b} {}_2F_1(c-a,c-b,c-a-b+1;1-z)\,.
\end{align}
Altogether, we find:
\begin{align}
    \mathcal{I}^{\mathrm{I}}_{k'ik}(x) &:= \frac{\Gamma(h_i-i\lambda_1)\Gamma(h_k-i\lambda_A)\Gamma(-i(\lambda_1+\lambda_A+\lambda_A'))}{\Gamma(h_{k'}-i(\lambda_1+\lambda_A))} \nonumber \\
    &\hspace{5mm} \times\Big[1 + \theta(-x) 2e^{\pi(\lambda_1+\lambda_A')} \frac{\sinh(\pi\lambda_1) \sinh(\pi(\lambda_1+\lambda_A+\lambda_A'))}{\sinh(\pi(\lambda_1+\lambda_A))} \Big] \,.
\end{align}
Next we turn to the second matrix element, which involves two integrals:
\begin{align}
    {}_{k'}\langle -\lambda_A', \Omega_B| \mathcal{K}_{\mathrm{II}, O}^{\lambda_1\lambda_2;ij} |\lambda_A, \lambda_B\rangle_{kl} &= 2\pi\delta(\lambda_1+\lambda_A+\lambda_A'-\lambda_2-\lambda_B) \, e^{i\pi(h_j+i\lambda_2)} \delta_{jl}\\
    &\hspace{5mm} |\tfrac{x}{2}|^{i(\lambda_2+\lambda_B)} e^{\frac{\pi}{2}(\lambda_2+\lambda_B)\mathrm{sgn}(x)} \sqrt{\frac{(2\pi)^3\Gamma(2h_{k'})}{\Gamma(2h_i)\Gamma(2h_k)}} C_{k'ik}\, \mathcal{I}^{\mathrm{II}}_{k'ik}(x) \,, \nonumber
\end{align}
\begin{align}
    \mathcal{I}^{\mathrm{II}}_{k'ik}(x) &:= \int \dbar\lambda\dbar\lambda' \, 2^{i(\lambda_1+\lambda_A+\lambda_A'+\lambda+\lambda')} e^{-\pi\lambda} e^{\pi\mathrm{sgn}(x)\,\lambda'} \Gamma(\epsilon+i\lambda) \Gamma(\epsilon+i\lambda') \\
    &\hspace{10mm} \times \frac{\Gamma(h_i-i(\lambda+\lambda_1)) \Gamma(h_k-i(\lambda'+\lambda_A)) \Gamma(\epsilon-i(\lambda+\lambda'+\lambda_1+\lambda_A+\lambda_A'))}{\Gamma(h_{k'}-i(\lambda+\lambda'+\lambda_1+\lambda_A))} \nonumber \\
    &= \frac{\Gamma(h_i-i\lambda_1) \Gamma(h_k-i\lambda_A) \Gamma(-i(\lambda_1+\lambda_A+\lambda_A'))}{\Gamma(h_{k'}-i(\lambda_1+\lambda_A))} \label{eq:KOII-matrixel-kpikcase-step} \\
    &\hspace{5mm} \times 2^{i(\lambda_1+\lambda_A+\lambda_A')} F_1(\epsilon-i(\lambda_1+\lambda_A+\lambda_A'), h_i-i\lambda_1, h_k-i\lambda_A, h_{k'}-i(\lambda_1-\lambda_A); \tfrac{1}{2}, \tfrac{1}{2}) \,, \nonumber
\end{align}
where in the final equality we use the Mellin-Barnes representation for the Appell $F_1$ function \cite{Exton:1978} (5.2.4.26), valid for both signs of $x$:
\begin{align}
    F_1(a,b,b',c;z_1,z_2) &= \frac{\Gamma(c)}{\Gamma(a)\Gamma(b)\Gamma(b')} \hspace{-1mm} \int_{i\mathbb{R}\times i\mathbb{R}} \frac{\mathrm{d}s\mathrm{d}t}{(2\pi i)^2} \frac{\Gamma(a+s+t)\Gamma(b+s)\Gamma(b'+t)\Gamma(-s)\Gamma(-t)}{\Gamma(c+s+t)} \nonumber \\
    &\hspace{42mm} \times (-z_1)^s (-z_2)^t \,; \qquad  |\arg(-z_1)|, |\arg(-z_2)| < \pi\,.
\end{align}
The entire second line of (\ref{eq:KOII-matrixel-kpikcase-step}) evaluates to 1, and we find that $\mathcal{I}^{\mathrm{II}}_{k'ik}(x)$ is the same as $\mathcal{I}^{\mathrm{I}}_{k'ik}(x)$ without the monodromy term. Putting everything together, we have:
\begin{align}
    &{}_{k'}\langle -\lambda_A', \Omega_B| \Delta\mathcal{K}_O^{\lambda_1\lambda_2;ij}|\lambda_A, \lambda_B\rangle_{kl} \big|_{h_{k'} = h_i + h_k} \label{eq:DeltaKO-AAB-matrixel} \\
    &\hspace{5mm} = 2\pi i \, \theta(-x)\,  2\pi \delta(\lambda_1+\lambda_A+\lambda_A'-\lambda_2-\lambda_B) \Big[ \delta_{jl} (-1)^{h_j} e^{-\pi\lambda_2} 2\sinh(\pi(\lambda_2+\lambda_B)) F_{-\frac{x}{2}}(-\lambda_2-\lambda_B) \Big] \nonumber \\
    &\hspace{10mm} \times \Big[ \sqrt{\frac{(2\pi)^3 \Gamma(2h_{k'})}{\Gamma(2h_i)\Gamma(2h_k)}} C_{k'ik} \frac{e^{-\pi\lambda_A}}{2\sinh\pi\lambda_A} \frac{\Gamma(1-h_{k'}+i(\lambda_1+\lambda_A))}{\Gamma(1-h_i+i\lambda_1)\Gamma(1-h_k+i\lambda_A)}\Big] \,. \nonumber
\end{align}
We see matrix elements $\mathcal{K}^{\lambda_1,\lambda_2;ij}_{\mathrm{I},O}$ and $\mathcal{K}^{\lambda_1,\lambda_2;ij}_{\mathrm{II},O}$ are equal for $x>0$, but not for $x<0$. Note that from the point of view of the integrands this highly non-trivial, and the equality for $x>0$ is rather remarkable. For $x<0$ the matrix element (\ref{eq:DeltaKO-AAB-matrixel}) factorizes into pieces depending only on the $B$-field frequencies, the $A$-field frequencies, and an overall $\delta$-function coupling them. In particular, the $B$-dependent terms are identical to those in (\ref{eq:DeltaKO-AB-matrixel}). Again the only $x$-dependence is contained in $\theta(-x) F_{-x/2}(-\lambda_2-\lambda_B)$.

\vspace{1em}

\noindent This calculation (and its analog ${}_{l'}\langle \Omega_A, -\lambda_B'| \Delta\mathcal{K}_O^{\lambda_1\lambda_2;ij}|\lambda_A, \lambda_B\rangle_{kl}$) allows us to generalize Proposition \ref{prop:commutant-AorB-constraint} to the following.
\commutantthreefieldconstraint*
\begin{proof}
    These are all similar, so let us consider the first. Suppose that:
    \begin{align}
        O|\Omega\rangle &= \sum_{k} \int\dbar\lambda_A \, c^A_{\lambda_A,k} |\lambda_A,\Omega_B\rangle_k + \sum_l \int\dbar\lambda_B \, \Big[ c^B_{\lambda_B,l} |\Omega_A, \lambda_B\rangle_l + \, c^{AB}_{\lambda_A^{\ast},\lambda_B;k^{\ast},l} |\lambda_A^{\ast},\lambda_B\rangle_{k^{\ast},l} \Big] \,. \nonumber
    \end{align}
    We now take $h_{k'} = h_i + h_k$, for a primary such that the CFT structure constant $C_{k'ik}\neq 0$. This is guaranteed to exist, since it must show up in the OPE of $\phi_i^A \phi_k^A$. We compute:
    \begin{align}
        {}_{k'}\langle -\lambda_A', \Omega_B| \Delta\mathcal{K}_O^{\lambda_1\lambda_2;ij} O |\Omega\rangle &=  0 + \sum_l \int\dbar\lambda_B\, \Big[ c^B_{\lambda_B,l} \times {}_{k'}\langle -\lambda_A', \Omega_B| \Delta\mathcal{K}_O^{\lambda_1\lambda_2;ij} |\Omega_A,\lambda_B\rangle_l \\
        &\hspace{35mm} + c^{AB}_{\lambda_A^{\ast}, \lambda_B; k^{\ast},l} \times {}_{k'}\langle -\lambda_A', \Omega_B| \Delta\mathcal{K}_O^{\lambda_1\lambda_2;ij} |\lambda_A^{\ast},\lambda_B\rangle_{k^{\ast},l} \Big] \nonumber 
    \end{align}
    It can then be shown that $c^{AB}_{\lambda_A^{\ast},\lambda_B;k^{\ast},l} \equiv 0$ for all $\lambda_B,l$ by substituting (\ref{eq:DeltaKO-AB-matrixel}) and (\ref{eq:DeltaKO-AAB-matrixel}), using the $\delta$-functions to remove the sum and integral, and using that the result holds for any $i,j,\lambda_1,\lambda_2,\lambda_A'$. The result follows now directly from Proposition \ref{prop:commutant-AorB-constraint}. 
\end{proof}

Alternatively we could have studied the spectral support of $O|\Omega\rangle$ in a basis diagonalizing the $P_{A,B}$ operators, again computing matrix elements of $\Delta \mathcal{K}_O^{\lambda_1\lambda_2;ij}$ in this basis. This circumvents the need to use the three-point function, but the problem is that the $\lambda$ integrals from $\mathcal{K}_{\mathrm{I},O}^{\lambda_1\lambda_2;ij}$ and $\mathcal{K}_{\mathrm{II},O}^{\lambda_1\lambda_2;ij}$ do not always converge. The integrals can always be evaluated by analytic continuation from a region where they are convergent, but this leads to no constraints--even for $x<0$ we always have $\Delta\mathcal{K}_O^{\lambda_1\lambda_2;ij} = 0$ at the level of matrix-elements.

Let us therefore return to the $\lambda$-basis. Generally, the matrix elements of $\Delta\mathcal{K}_O^{\lambda_1\lambda_2;ij}|\lambda_A, \lambda_B\rangle_{kl}$ with the bras ${}_{k'}\langle -\lambda_A', \Omega_B|$, ${}_{l'}\langle \Omega_A,-\lambda_B'|$, and ${}_{k'l'}\langle -\lambda_A', -\lambda_B'|$ can be expressed in terms of multivariate Mellin-Barnes integrals, but their evaluation is difficult. Nevertheless, the expressions (\ref{eq:DeltaKO-AB-matrixel}), (\ref{eq:DeltaKO-BA-matrixel}), and (\ref{eq:DeltaKO-AAB-matrixel}), suggest that for $x>0$ these all vanish, while for $x<0$ they are non-trivial. Together, we believe that these constraints should rule out any non-vacuum states in the spectral support of $O|\Omega\rangle$, implying that $\mathcal{A}'$ is trivial.


\newpage

\bibliographystyle{JHEP}
\nocite{}
\bibliography{thebibliography}

@article{CLPW,
    author = "Chandrasekaran, Venkatesa and Longo, Roberto and Penington, Geoff and Witten, Edward",
    title = "{An algebra of observables for de Sitter space}",
    eprint = "2206.10780",
    archivePrefix = "arXiv",
    primaryClass = "hep-th",
    doi = "10.1007/JHEP02(2023)082",
    journal = "JHEP",
    volume = "02",
    pages = "082",
    year = "2023"
}

@article{Penington:2025hrc,
    author = "Penington, Geoff and Tabor, Elisa",
    title = "{The algebraic structure of gravitational scrambling}",
    eprint = "2508.21062",
    archivePrefix = "arXiv",
    primaryClass = "hep-th",
    month = "8",
    year = "2025"
}

@article{Mack1977convergence,
  author  = {Mack, G.},
  title   = {Convergence of Operator Product Expansions on the Vacuum in Conformal Invariant Quantum Field Theory},
  journal = {Communications in Mathematical Physics},
  volume  = {53},
  number  = {2},
  pages   = {155--184},
  year    = {1977},
  doi     = {10.1007/BF01609130},
}

@article{Witten:2021unn,
    author = "Witten, Edward",
    title = "{Gravity and the crossed product}",
    eprint = "2112.12828",
    archivePrefix = "arXiv",
    primaryClass = "hep-th",
    doi = "10.1007/JHEP10(2022)008",
    journal = "JHEP",
    volume = "10",
    pages = "008",
    year = "2022"
}

@article{Chandrasekaran:2022eqq,
    author = "Chandrasekaran, Venkatesa and Penington, Geoff and Witten, Edward",
    title = "{Large N algebras and generalized entropy}",
    eprint = "2209.10454",
    archivePrefix = "arXiv",
    primaryClass = "hep-th",
    doi = "10.1007/JHEP04(2023)009",
    journal = "JHEP",
    volume = "04",
    pages = "009",
    year = "2023"
}

@article{Jensen:2023yxy,
    author = "Jensen, Kristan and Sorce, Jonathan and Speranza, Antony J.",
    title = "{Generalized entropy for general subregions in quantum gravity}",
    eprint = "2306.01837",
    archivePrefix = "arXiv",
    primaryClass = "hep-th",
    doi = "10.1007/JHEP12(2023)020",
    journal = "JHEP",
    volume = "12",
    pages = "020",
    year = "2023"
}

@article{Kudler-Flam:2023qfl,
    author = "Kudler-Flam, Jonah and Leutheusser, Samuel and Satishchandran, Gautam",
    title = "{Generalized black hole entropy is von Neumann entropy}",
    eprint = "2309.15897",
    archivePrefix = "arXiv",
    primaryClass = "hep-th",
    doi = "10.1103/PhysRevD.111.025013",
    journal = "Phys. Rev. D",
    volume = "111",
    number = "2",
    pages = "025013",
    year = "2025"
}

@article{Kudler-Flam:2024psh,
    author = "Kudler-Flam, Jonah and Leutheusser, Samuel and Satishchandran, Gautam",
    title = "{Algebraic Observational Cosmology}",
    eprint = "2406.01669",
    archivePrefix = "arXiv",
    primaryClass = "hep-th",
    month = "6",
    year = "2024"
}

@article{DeVuyst:2024fxc,
    author = "De Vuyst, Julian and Eccles, Stefan and Hoehn, Philipp A. and Kirklin, Josh",
    title = "{Crossed products and quantum reference frames: on the observer-dependence of gravitational entropy}",
    eprint = "2412.15502",
    archivePrefix = "arXiv",
    primaryClass = "hep-th",
    doi = "10.1007/JHEP07(2025)063",
    journal = "JHEP",
    volume = "07",
    pages = "063",
    year = "2025",
    note = "[Erratum: JHEP 10, 234 (2025)]"
}

@article{DeVuyst:2024khu,
    author = "De Vuyst, Julian and Eccles, Stefan and Hoehn, Philipp A. and Kirklin, Josh",
    title = "{Gravitational entropy is observer-dependent}",
    eprint = "2405.00114",
    archivePrefix = "arXiv",
    primaryClass = "hep-th",
    doi = "10.1007/JHEP07(2025)146",
    journal = "JHEP",
    volume = "07",
    pages = "146",
    year = "2025"
}

@article{Chen:2025tbh,
    author = "Chen, Bin and Xu, Jie",
    title = "{An algebra for covariant observers in de Sitter space}",
    eprint = "2511.00622",
    archivePrefix = "arXiv",
    primaryClass = "hep-th",
    month = "11",
    year = "2025"
}

@article{Strohmaier:2023opz,
    author = "Strohmaier, Alexander and Witten, Edward",
    title = "{The Timelike Tube Theorem in Curved Spacetime}",
    eprint = "2303.16380",
    archivePrefix = "arXiv",
    primaryClass = "hep-th",
    doi = "10.1007/s00220-024-05009-3",
    journal = "Commun. Math. Phys.",
    volume = "405",
    number = "7",
    pages = "153",
    year = "2024"
}

@article{Kudler-Flam:2025pol,
    author = "Kudler-Flam, Jonah and Prabhu, Kartik and Satishchandran, Gautam",
    title = "{Vacua and infrared radiation in de Sitter quantum field theory}",
    eprint = "2503.19957",
    archivePrefix = "arXiv",
    primaryClass = "hep-th",
    month = "3",
    year = "2025"
}

@article{Cordova:2018ygx,
    author = "C{\'o}rdova, Clay and Shao, Shu-Heng",
    title = "{Light-ray Operators and the BMS Algebra}",
    eprint = "1810.05706",
    archivePrefix = "arXiv",
    primaryClass = "hep-th",
    doi = "10.1103/PhysRevD.98.125015",
    journal = "Phys. Rev. D",
    volume = "98",
    number = "12",
    pages = "125015",
    year = "2018"
}

@article{Kolchmeyer:2024fly,
    author = "Kolchmeyer, David K. and Liu, Hong",
    title = "{Chaos and the Emergence of the Cosmological Horizon}",
    eprint = "2411.08090",
    archivePrefix = "arXiv",
    primaryClass = "hep-th",
    reportNumber = "MIT-CTP/5805",
    month = "11",
    year = "2024"
}

@article{Maldacena:2019cbz,
    author = "Maldacena, Juan and Turiaci, Gustavo J. and Yang, Zhenbin",
    title = "{Two dimensional Nearly de Sitter gravity}",
    eprint = "1904.01911",
    archivePrefix = "arXiv",
    primaryClass = "hep-th",
    doi = "10.1007/JHEP01(2021)139",
    journal = "JHEP",
    volume = "01",
    pages = "139",
    year = "2021"
}

@article{Cotler:2019nbi,
    author = "Cotler, Jordan and Jensen, Kristan and Maloney, Alexander",
    title = "{Low-dimensional de Sitter quantum gravity}",
    eprint = "1905.03780",
    archivePrefix = "arXiv",
    primaryClass = "hep-th",
    doi = "10.1007/JHEP06(2020)048",
    journal = "JHEP",
    volume = "06",
    pages = "048",
    year = "2020"
}

@article{Maldacena:2016upp,
    author = "Maldacena, Juan and Stanford, Douglas and Yang, Zhenbin",
    title = "{Conformal symmetry and its breaking in two dimensional Nearly Anti-de-Sitter space}",
    eprint = "1606.01857",
    archivePrefix = "arXiv",
    primaryClass = "hep-th",
    doi = "10.1093/ptep/ptw124",
    journal = "PTEP",
    volume = "2016",
    number = "12",
    pages = "12C104",
    year = "2016"
}

@article{Wall:2011hj,
    author = "Wall, Aron C.",
    title = "{A proof of the generalized second law for rapidly changing fields and arbitrary horizon slices}",
    eprint = "1105.3445",
    archivePrefix = "arXiv",
    primaryClass = "gr-qc",
    doi = "10.1103/PhysRevD.85.104049",
    journal = "Phys. Rev. D",
    volume = "85",
    pages = "104049",
    year = "2012",
    note = "[Erratum: Phys.Rev.D 87, 069904 (2013)]"
}

@article{ReehSchlieder1961,
  author  = {Reeh, H. and Schlieder, S.},
  title   = {Bemerkungen zur Unit\"ar\"aquivalenz von Lorentzinvarianten Feldern},
  journal = {Il Nuovo Cimento},
  volume  = {22},
  number  = {5},
  pages   = {1051--1068},
  year    = {1961},
  doi     = {10.1007/BF02787889}
}

@article{Witten:2018zxz,
    author = "Witten, Edward",
    title = "{APS Medal for Exceptional Achievement in Research: Invited article on entanglement properties of quantum field theory}",
    eprint = "1803.04993",
    archivePrefix = "arXiv",
    primaryClass = "hep-th",
    doi = "10.1103/RevModPhys.90.045003",
    journal = "Rev. Mod. Phys.",
    volume = "90",
    number = "4",
    pages = "045003",
    year = "2018"
}

@article{Sorce:2023,
   title={Notes on the type classification of von Neumann algebras},
   volume={36},
   ISSN={1793-6659},
   url={http://dx.doi.org/10.1142/S0129055X24300024},
   DOI={10.1142/s0129055x24300024},
   number={02},
   journal={Reviews in Mathematical Physics},
   publisher={World Scientific Pub Co Pte Ltd},
   author={Sorce, Jonathan},
   year={2023},
   month=Dec
}

@book{Haag:1992,
  title={Local Quantum Physics: Fields, Particles, Algebras},
  author={Haag, R.},
  isbn={9783540536109},
  lccn={92045286},
  series={R.Balian, W.Beiglbock, H.Grosse},
  url={https://books.google.com/books?id=Op_vAAAAMAAJ},
  year={1992},
  publisher={Springer-Verlag}
}

@book{Takesaki:2001,
  title     = {Theory of Operator Algebras Volumes I--III},
  author    = {Takesaki, M.},
  series    = {Encyclopaedia of Mathematical Sciences},
  publisher = {Springer Berlin Heidelberg},
  address   = {Berlin},
  year      = {2001--2013}
}

@article{Wiesbrock:1993,
    author = {Hans-Werner Wiesbrock},
    title = {{Half-sided modular inclusions of von-Neumann-algebras}},
    volume = {157},
    journal = {Communications in Mathematical Physics},
    number = {1},
    publisher = {Springer},
    pages = {83 -- 92},
    year = {1993},
}

@inbook{Daele:1978,
    place={Cambridge},
    series={London Mathematical Society Lecture Note Series},
    title={Crossed products of von Neumann algebras},
    booktitle={Continuous Crossed Products and Type III Von Neumann Algebras},
    publisher={Cambridge University Press},
    author={Daele, A. van},
    year={1978},
    pages={1–33},
    collection={London Mathematical Society Lecture Note Series}
}

@article{Leutheusser:2021qhd,
    author = "Leutheusser, Samuel and Liu, Hong",
    title = "{Causal connectability between quantum systems and the black hole interior in holographic duality}",
    eprint = "2110.05497",
    archivePrefix = "arXiv",
    primaryClass = "hep-th",
    reportNumber = "MIT-CTP/5335",
    doi = "10.1103/PhysRevD.108.086019",
    journal = "Phys. Rev. D",
    volume = "108",
    number = "8",
    pages = "086019",
    year = "2023"
}

@article{Leutheusser:2021frk,
    author = "Leutheusser, Samuel Aaron Wehlau and Liu, Hong",
    title = "{Emergent Times in Holographic Duality}",
    eprint = "2112.12156",
    archivePrefix = "arXiv",
    primaryClass = "hep-th",
    reportNumber = "MIT-CTP/5382",
    doi = "10.1103/PhysRevD.108.086020",
    journal = "Phys. Rev. D",
    volume = "108",
    number = "8",
    pages = "086020",
    year = "2023"
}

@article{Leutheusser:2022bgi,
    author = "Leutheusser, Sam and Liu, Hong",
    title = "{Subregion-subalgebra duality: Emergence of space and time in holography}",
    eprint = "2212.13266",
    archivePrefix = "arXiv",
    primaryClass = "hep-th",
    doi = "10.1103/PhysRevD.111.066021",
    journal = "Phys. Rev. D",
    volume = "111",
    number = "6",
    pages = "066021",
    year = "2025"
}

@article{Faulkner:2020,
    title = {Recovering the QNEC from the ANEC},
    author = {Ceyhan, Fikret and Faulkner, Thomas},
    doi = {10.1007/s00220-020-03751-y},
    journal = {Communications in Mathematical Physics},
    number = 2,
    volume = 377,
    place = {United States},
    year = {2020},
    month = {5}
}

@article{Anninos:2011af,
    author = "Anninos, Dionysios and Hartnoll, Sean A. and Hofman, Diego M.",
    title = "{Static Patch Solipsism: Conformal Symmetry of the de Sitter Worldline}",
    eprint = "1109.4942",
    archivePrefix = "arXiv",
    primaryClass = "hep-th",
    doi = "10.1088/0264-9381/29/7/075002",
    journal = "Class. Quant. Grav.",
    volume = "29",
    pages = "075002",
    year = "2012"
}

@article{Nakayama:2011qh,
    author = "Nakayama, Ryuichi",
    title = "{The World-Line Quantum Mechanics Model at Finite Temperature which is Dual to the Static Patch Observer in de Sitter Space}",
    eprint = "1112.1267",
    archivePrefix = "arXiv",
    primaryClass = "hep-th",
    reportNumber = "EPHOU-11-009",
    doi = "10.1143/PTP.127.393",
    journal = "Prog. Theor. Phys.",
    volume = "127",
    pages = "393--408",
    year = "2012"
}

@article{Maldacena:2024spf,
    author = "Maldacena, Juan",
    title = "{Real observers solving imaginary problems}",
    eprint = "2412.14014",
    archivePrefix = "arXiv",
    primaryClass = "hep-th",
    month = "12",
    year = "2024"
}

@article{Chen:2025jqm,
    author = "Chen, Yiming and Stanford, Douglas and Tang, Haifeng and Yang, Zhenbin",
    title = "{On the phase of the de Sitter density of states}",
    eprint = "2511.01400",
    archivePrefix = "arXiv",
    primaryClass = "hep-th",
    doi = "10.1007/JHEP05(2026)068",
    journal = "JHEP",
    volume = "05",
    pages = "068",
    year = "2026"
}

@article{Witten:2023xze,
    author = "Witten, Edward",
    title = "{A background-independent algebra in quantum gravity}",
    eprint = "2308.03663",
    archivePrefix = "arXiv",
    primaryClass = "hep-th",
    doi = "10.1007/JHEP03(2024)077",
    journal = "JHEP",
    volume = "03",
    pages = "077",
    year = "2024"
}

@article{Tietto:2025oxn,
    author = "Tietto, Damiano and Verlinde, Herman",
    title = "{A microscopic model of de Sitter spacetime with an observer}",
    eprint = "2502.03869",
    archivePrefix = "arXiv",
    primaryClass = "hep-th",
    month = "2",
    year = "2025"
}

@article{Narovlansky:2023lfz,
    author = "Narovlansky, Vladimir and Verlinde, Herman",
    title = "{Double-scaled SYK and de Sitter holography}",
    eprint = "2310.16994",
    archivePrefix = "arXiv",
    primaryClass = "hep-th",
    doi = "10.1007/JHEP05(2025)032",
    journal = "JHEP",
    volume = "05",
    pages = "032",
    year = "2025"
}

@article{Blommaert:2026ofx,
    author = "Blommaert, Andreas and Tietto, Damiano and Verlinde, Herman",
    title = "{An observer's quantization of 3d de Sitter}",
    eprint = "2606.26241",
    archivePrefix = "arXiv",
    primaryClass = "hep-th",
    month = "6",
    year = "2026"
}

@article{Goto:2026ipq,
    author = "Goto, Kanato and Milekhin, Alexey and Verlinde, Herman and Xu, Jiuci",
    title = "{Generalized Free Fields in de Sitter from 1D CFT}",
    eprint = "2605.03037",
    archivePrefix = "arXiv",
    primaryClass = "hep-th",
    reportNumber = "OU-HET-1309, RIKEN-iTHEMS-Report-26",
    month = "5",
    year = "2026"
}

@article{Narovlansky:2025tpb,
    author = "Narovlansky, Vladimir",
    title = "{Towards a microscopic description of de Sitter dynamics}",
    eprint = "2506.02109",
    archivePrefix = "arXiv",
    primaryClass = "hep-th",
    month = "6",
    year = "2025"
}

@article{Sekino:2008he,
    author = "Sekino, Yasuhiro and Susskind, Leonard",
    title = "{Fast Scramblers}",
    eprint = "0808.2096",
    archivePrefix = "arXiv",
    primaryClass = "hep-th",
    reportNumber = "SU-ITP-08-18, OIQP-08-08, SU-ITP-08/18, OIQP-08-08",
    doi = "10.1088/1126-6708/2008/10/065",
    journal = "JHEP",
    volume = "10",
    pages = "065",
    year = "2008"
}

@article{Lashkari:2011yi,
    author = "Lashkari, Nima and Stanford, Douglas and Hastings, Matthew and Osborne, Tobias and Hayden, Patrick",
    title = "{Towards the Fast Scrambling Conjecture}",
    eprint = "1111.6580",
    archivePrefix = "arXiv",
    primaryClass = "hep-th",
    doi = "10.1007/JHEP04(2013)022",
    journal = "JHEP",
    volume = "04",
    pages = "022",
    year = "2013"
}

@article{Shenker:2013pqa,
    author = "Shenker, Stephen H. and Stanford, Douglas",
    title = "{Black holes and the butterfly effect}",
    eprint = "1306.0622",
    archivePrefix = "arXiv",
    primaryClass = "hep-th",
    reportNumber = "SU-ITP-13-08",
    doi = "10.1007/JHEP03(2014)067",
    journal = "JHEP",
    volume = "03",
    pages = "067",
    year = "2014"
}

@article{Maldacena:2015waa,
    author = "Maldacena, Juan and Shenker, Stephen H. and Stanford, Douglas",
    title = "{A bound on chaos}",
    eprint = "1503.01409",
    archivePrefix = "arXiv",
    primaryClass = "hep-th",
    doi = "10.1007/JHEP08(2016)106",
    journal = "JHEP",
    volume = "08",
    pages = "106",
    year = "2016"
}

@article{Kitaev:2017awl,
    author = "Kitaev, Alexei and Suh, S. Josephine",
    title = "{The soft mode in the Sachdev-Ye-Kitaev model and its gravity dual}",
    eprint = "1711.08467",
    archivePrefix = "arXiv",
    primaryClass = "hep-th",
    doi = "10.1007/JHEP05(2018)183",
    journal = "JHEP",
    volume = "05",
    pages = "183",
    year = "2018"
}

@article{Gu:2021xaj,
    author = "Gu, Yingfei and Kitaev, Alexei and Zhang, Pengfei",
    title = "{A two-way approach to out-of-time-order correlators}",
    eprint = "2111.12007",
    archivePrefix = "arXiv",
    primaryClass = "hep-th",
    doi = "10.1007/JHEP03(2022)133",
    journal = "JHEP",
    volume = "03",
    pages = "133",
    year = "2022"
}

@article{Gao:2016bin,
    author = "Gao, Ping and Jafferis, Daniel Louis and Wall, Aron C.",
    title = "{Traversable Wormholes via a Double Trace Deformation}",
    eprint = "1608.05687",
    archivePrefix = "arXiv",
    primaryClass = "hep-th",
    doi = "10.1007/JHEP12(2017)151",
    journal = "JHEP",
    volume = "12",
    pages = "151",
    year = "2017"
}

@article{Maldacena:2017axo,
    author = "Maldacena, Juan and Stanford, Douglas and Yang, Zhenbin",
    title = "{Diving into traversable wormholes}",
    eprint = "1704.05333",
    archivePrefix = "arXiv",
    primaryClass = "hep-th",
    doi = "10.1002/prop.201700034",
    journal = "Fortsch. Phys.",
    volume = "65",
    number = "5",
    pages = "1700034",
    year = "2017"
}

@book{Gradshteyn:2014,
  title={Table of Integrals, Series, and Products},
  author={Gradshteyn, I.S. and Ryzhik, I.M.},
  isbn={9781483265643},
  url={https://books.google.com/books?id=F7jiBQAAQBAJ},
  year={2014},
  publisher={Academic Press}
}

@misc{DLMF,
    key = "{\relax DLMF}",
    title = "{\it NIST Digital Library of Mathematical Functions}",
    howpublished = "\url{https://dlmf.nist.gov/}, Release 1.2.7 of 2026-06-15",
    url = "https://dlmf.nist.gov/",
    note =  "F.~W.~J. Olver, A.~B. {Olde Daalhuis}, D.~W. Lozier, B.~I. Schneider,
            R.~F. Boisvert, C.~W. Clark, B.~R. Miller, B.~V. Saunders,
            H.~S. Cohl, and M.~A. McClain, eds."
}

@book{Exton:1978,
  author    = {Exton, Harold},
  title     = {Handbook of Hypergeometric Integrals: Theory, Applications,
               Tables, Computer Programs},
  year      = {1978},
  publisher = {Ellis Horwood; Halsted Press, a division of J. Wiley},
  address   = {Chichester, England; New York},
  isbn      = {0853121222},
  url       = {https://search.worldcat.org/title/4135088}
}

@book{Gelfand:1964,
    language = {eng},
    publisher = {Academic Press},
    title = {Generalized functions . Volume I . Properties and operations / I. M. Gel'fand and G. E. Shilov,... ; translated by Eugen Saletan,...},
    year = {1964},
    author = {Gelfand,Izrail Moiseevich and Šilov,Georgij Evgenʹevič and Saletan,Eugène Jérôme},
    address = {New York London},
    booktitle = {Generalized functions . Volume I . Properties and operations},
    isbn = {0-12-279501-6},
}

@article{Shenker:2013yza,
    author = "Shenker, Stephen H. and Stanford, Douglas",
    title = "{Multiple Shocks}",
    eprint = "1312.3296",
    archivePrefix = "arXiv",
    primaryClass = "hep-th",
    reportNumber = "SU-ITP-13-24",
    doi = "10.1007/JHEP12(2014)046",
    journal = "JHEP",
    volume = "12",
    pages = "046",
    year = "2014"
}

@article{Maldacena:2001kr,
    author = "Maldacena, Juan Martin",
    title = "{Eternal black holes in anti-de Sitter}",
    eprint = "hep-th/0106112",
    archivePrefix = "arXiv",
    reportNumber = "NSF-ITP-01-59",
    doi = "10.1088/1126-6708/2003/04/021",
    journal = "JHEP",
    volume = "04",
    pages = "021",
    year = "2003"
}

@article{Boruch:2024kvv,
    author = "Boruch, Jan and Iliesiu, Luca V. and Lin, Guanda and Yan, Cynthia",
    title = "{How the Hilbert space of two-sided black holes factorises}",
    eprint = "2406.04396",
    archivePrefix = "arXiv",
    primaryClass = "hep-th",
    doi = "10.1007/JHEP06(2025)092",
    journal = "JHEP",
    volume = "06",
    pages = "092",
    year = "2025"
}

@article{Kolchmeyer:2023gwa,
    author = "Kolchmeyer, David K.",
    title = "{von Neumann algebras in JT gravity}",
    eprint = "2303.04701",
    archivePrefix = "arXiv",
    primaryClass = "hep-th",
    doi = "10.1007/JHEP06(2023)067",
    journal = "JHEP",
    volume = "06",
    pages = "067",
    year = "2023"
}

@article{Penington:2023dql,
    author = "Penington, Geoff and Witten, Edward",
    title = "{Algebras and States in JT Gravity}",
    eprint = "2301.07257",
    archivePrefix = "arXiv",
    primaryClass = "hep-th",
    month = "1",
    year = "2023"
}

@article{Gao:2000ga,
    author = "Gao, Sijie and Wald, Robert M.",
    title = "{Theorems on gravitational time delay and related issues}",
    eprint = "gr-qc/0007021",
    archivePrefix = "arXiv",
    doi = "10.1088/0264-9381/17/24/305",
    journal = "Class. Quant. Grav.",
    volume = "17",
    pages = "4999--5008",
    year = "2000"
}

@article{Stanford:2021bhl,
    author = "Stanford, Douglas and Yang, Zhenbin and Yao, Shunyu",
    title = "{Subleading Weingartens}",
    eprint = "2107.10252",
    archivePrefix = "arXiv",
    primaryClass = "hep-th",
    doi = "10.1007/JHEP02(2022)200",
    journal = "JHEP",
    volume = "02",
    pages = "200",
    year = "2022"
}

@article{Blommaert:2025bgd,
    author = "Blommaert, Andreas and Kudler-Flam, Jonah and Urbach, Erez Y.",
    title = "{Absolute entropy and the observer{\textquoteright}s no-boundary state}",
    eprint = "2505.14771",
    archivePrefix = "arXiv",
    primaryClass = "hep-th",
    doi = "10.1007/JHEP11(2025)113",
    journal = "JHEP",
    volume = "11",
    pages = "113",
    year = "2025"
}

@article{Anninos:2024wpy,
    author = "Anninos, Dionysios and Galante, Dami{\'a}n A. and Maneerat, Chawakorn",
    title = "{Cosmological observatories}",
    eprint = "2402.04305",
    archivePrefix = "arXiv",
    primaryClass = "hep-th",
    doi = "10.1088/1361-6382/ad5824",
    journal = "Class. Quant. Grav.",
    volume = "41",
    number = "16",
    pages = "165009",
    year = "2024"
}

@article{An:2021fcq,
    author = "An, Zhongshan and Anderson, Michael T.",
    title = "{The initial boundary value problem and quasi-local Hamiltonians in general relativity}",
    eprint = "2103.15673",
    archivePrefix = "arXiv",
    primaryClass = "gr-qc",
    doi = "10.1088/1361-6382/ac0a86",
    journal = "Class. Quant. Grav.",
    volume = "38",
    number = "15",
    pages = "154001",
    year = "2021"
}

@article{Banihashemi:2022jys,
    author = "Banihashemi, Batoul and Jacobson, Ted",
    title = "{Thermodynamic ensembles with cosmological horizons}",
    eprint = "2204.05324",
    archivePrefix = "arXiv",
    primaryClass = "hep-th",
    doi = "10.1007/JHEP07(2022)042",
    journal = "JHEP",
    volume = "07",
    pages = "042",
    year = "2022"
}

@article{Batra:2024kjl,
    author = "Batra, Gauri and De Luca, G. Bruno and Silverstein, Eva and Torroba, Gonzalo and Yang, Sungyeon",
    title = "{Bulk-local dS$_{3}$ holography: the matter with $ T\overline{T} $ + {\ensuremath{\Lambda}}$_{2}$}",
    eprint = "2403.01040",
    archivePrefix = "arXiv",
    primaryClass = "hep-th",
    doi = "10.1007/JHEP10(2024)072",
    journal = "JHEP",
    volume = "10",
    pages = "072",
    year = "2024"
}

@article{Anninos:2011zn,
    author = "Anninos, Dionysios and Anous, Tarek and Bredberg, Irene and Ng, Gim Seng",
    title = "{Incompressible Fluids of the de Sitter Horizon and Beyond}",
    eprint = "1110.3792",
    archivePrefix = "arXiv",
    primaryClass = "hep-th",
    doi = "10.1007/JHEP05(2012)107",
    journal = "JHEP",
    volume = "05",
    pages = "107",
    year = "2012"
}

@article{Anninos:2017hhn,
    author = "Anninos, Dionysios and Hofman, Diego M.",
    title = "{Infrared realization of dS$_{2}$ in AdS$_{2}$}",
    eprint = "1703.04622",
    archivePrefix = "arXiv",
    primaryClass = "hep-th",
    doi = "10.1088/1361-6382/aab143",
    journal = "Class. Quant. Grav.",
    volume = "35",
    number = "8",
    pages = "085003",
    year = "2018"
}

@article{Anninos:2018svg,
    author = "Anninos, Dionysios and Galante, Dami{\'a}n A. and Hofman, Diego M.",
    title = "{De Sitter horizons {\&} holographic liquids}",
    eprint = "1811.08153",
    archivePrefix = "arXiv",
    primaryClass = "hep-th",
    doi = "10.1007/JHEP07(2019)038",
    journal = "JHEP",
    volume = "07",
    pages = "038",
    year = "2019"
}

@article{Aalsma:2020aib,
    author = "Aalsma, Lars and Shiu, Gary",
    title = "{Chaos and complementarity in de Sitter space}",
    eprint = "2002.01326",
    archivePrefix = "arXiv",
    primaryClass = "hep-th",
    reportNumber = "MAD-TH-20-01",
    doi = "10.1007/JHEP05(2020)152",
    journal = "JHEP",
    volume = "05",
    pages = "152",
    year = "2020"
}

@article{Dray:1984ha,
    author        = "Dray, Tevian and 't Hooft, Gerard",
    title         = "{The gravitational shock wave of a massless particle}",
    journal       = "Nucl. Phys. B",
    volume        = "253",
    pages         = "173--188",
    year          = "1985",
    doi           = "10.1016/0550-3213(85)90525-5"
}

@article{Aichelburg:1970dh,
    author        = "Aichelburg, Peter C. and Sexl, Roman U.",
    title         = "{On the gravitational field of a massless particle}",
    journal       = "Gen. Rel. Grav.",
    volume        = "2",
    pages         = "303--312",
    year          = "1971",
    doi           = "10.1007/BF00758149"
}

@incollection{Penrose:1972xrn,
    author        = "Penrose, Roger",
    title         = "{The geometry of impulsive gravitational waves}",
    booktitle     = "{General Relativity: Papers in Honour of J. L. Synge}",
    editor        = "O'Raifeartaigh, L.",
    publisher     = "Clarendon Press",
    address       = "Oxford",
    pages         = "101--115",
    year          = "1972"
}

@article{tHooft:1987vrq,
    author        = "'t Hooft, Gerard",
    title         = "{Graviton dominance in ultra-high-energy scattering}",
    journal       = "Phys. Lett. B",
    volume        = "198",
    pages         = "61--63",
    year          = "1987",
    doi           = "10.1016/0370-2693(87)90159-6"
}

@article{Sfetsos:1994xa,
    author        = "Sfetsos, Konstadinos",
    title         = "{On gravitational shock waves in curved space-times}",
    journal       = "Nucl. Phys. B",
    volume        = "436",
    pages         = "721--745",
    year          = "1995",
    eprint        = "hep-th/9408169",
    archivePrefix = "arXiv",
    doi           = "10.1016/0550-3213(94)00573-W"
}

@article{Marini:2026zjk,
    author = "Marini, Tommaso and Qi, Xiao-Liang and Verlinde, Herman",
    title = "{3D near-de Sitter gravity and the soft mode of DSSYK}",
    eprint = "2604.21014",
    archivePrefix = "arXiv",
    primaryClass = "hep-th",
    month = "4",
    year = "2026"
}

@article{Verlinde:2024znh,
    author = "Verlinde, Herman",
    title = "{Double-scaled SYK, chords and de Sitter gravity}",
    eprint = "2402.00635",
    archivePrefix = "arXiv",
    primaryClass = "hep-th",
    doi = "10.1007/JHEP03(2025)076",
    journal = "JHEP",
    volume = "03",
    pages = "076",
    year = "2025"
}

@article{Verlinde:2024zrh,
    author = "Verlinde, Herman and Zhang, Mengyang",
    title = "{SYK correlators from 2D Liouville-de Sitter gravity}",
    eprint = "2402.02584",
    archivePrefix = "arXiv",
    primaryClass = "hep-th",
    doi = "10.1007/JHEP05(2025)053",
    journal = "JHEP",
    volume = "05",
    pages = "053",
    year = "2025"
}

@article{Rahman:2024iiu,
    author = "Rahman, Adel A. and Susskind, Leonard",
    title = "{$p$-Chords, Wee-Chords, and de Sitter Space}",
    eprint = "2407.12988",
    archivePrefix = "arXiv",
    primaryClass = "hep-th",
    month = "7",
    year = "2024"
}

@article{Miyashita:2026zyl,
    author = "Miyashita, Shoichiro and Sekino, Yasuhiro and Susskind, Leonard",
    title = "{Holograms and Standard Models}",
    eprint = "2607.05678",
    archivePrefix = "arXiv",
    primaryClass = "hep-th",
    month = "7",
    year = "2026"
}

@article{Anninos:2011ui,
    author = "Anninos, Dionysios and Hartman, Thomas and Strominger, Andrew",
    title = "{Higher Spin Realization of the dS/CFT Correspondence}",
    eprint = "1108.5735",
    archivePrefix = "arXiv",
    primaryClass = "hep-th",
    doi = "10.1088/1361-6382/34/1/015009",
    journal = "Class. Quant. Grav.",
    volume = "34",
    number = "1",
    pages = "015009",
    year = "2017"
}

@article{Anninos:2017eib,
    author = "Anninos, Dionysios and Denef, Frederik and Monten, Ruben and Sun, Zimo",
    title = "{Higher Spin de Sitter Hilbert Space}",
    eprint = "1711.10037",
    archivePrefix = "arXiv",
    primaryClass = "hep-th",
    doi = "10.1007/JHEP10(2019)071",
    journal = "JHEP",
    volume = "10",
    pages = "071",
    year = "2019",
    note = "[Erratum: JHEP 06, 085 (2024)]"
}

@article{Anninos:2026hia,
    author = {Anninos, Dionysios and Baracco, Chiara and Letsios, Vasileios A. and M{\"u}hlmann, Beatrix},
    title = "{dS$^4$ Metamorphosis}",
    eprint = "2602.19812",
    archivePrefix = "arXiv",
    primaryClass = "hep-th",
    month = "2",
    year = "2026"
}

@article{Silverstein:2024xnr,
    author = "Silverstein, Eva and Torroba, Gonzalo",
    title = "{Timelike-bounded dS$_{4}$ holography from a solvable sector of the T$^{2}$ deformation}",
    eprint = "2409.08709",
    archivePrefix = "arXiv",
    primaryClass = "hep-th",
    doi = "10.1007/JHEP03(2025)156",
    journal = "JHEP",
    volume = "03",
    pages = "156",
    year = "2025"
}

@article{Banihashemi:2026mje,
    author = "Banihashemi, Batoul and Batra, Gauri and Law, Y. T. Albert and Silverstein, Eva and Torroba, Gonzalo",
    title = "{The yes boundaries wavefunctions of the universe}",
    eprint = "2604.10267",
    archivePrefix = "arXiv",
    primaryClass = "hep-th",
    month = "4",
    year = "2026"
}

@article{McAllister:2024lnt,
    author = "McAllister, Liam and Moritz, Jakob and Nally, Richard and Schachner, Andreas",
    title = "{Candidate de Sitter vacua}",
    eprint = "2406.13751",
    archivePrefix = "arXiv",
    primaryClass = "hep-th",
    reportNumber = "CERN-TH-2024-090",
    doi = "10.1103/PhysRevD.111.086015",
    journal = "Phys. Rev. D",
    volume = "111",
    number = "8",
    pages = "086015",
    year = "2025"
}

@article{Susskind:2022bia,
    author = "Susskind, Leonard",
    title = "{De Sitter Space, Double-Scaled SYK, and the Separation of Scales in the Semiclassical Limit}",
    eprint = "2209.09999",
    archivePrefix = "arXiv",
    primaryClass = "hep-th",
    doi = "10.22128/jhap.2024.920.1103",
    journal = "JHAP",
    volume = "5",
    number = "1",
    pages = "1--30",
    year = "2025"
}

@article{Susskind:2021esx,
    author = "Susskind, Leonard",
    title = "{Entanglement and Chaos in De Sitter Space Holography: An SYK Example}",
    eprint = "2109.14104",
    archivePrefix = "arXiv",
    primaryClass = "hep-th",
    doi = "10.22128/jhap.2021.455.1005",
    journal = "JHAP",
    volume = "1",
    number = "1",
    pages = "1--22",
    year = "2021"
}

@article{Banks:2000fe,
    author = "Banks, Tom",
    editor = "Duff, Michael J. and Liu, J. T. and Lu, J.",
    title = "{Cosmological breaking of supersymmetry?}",
    eprint = "hep-th/0007146",
    archivePrefix = "arXiv",
    reportNumber = "RUNHETC-2000-24, SCIPP-00-23",
    doi = "10.1142/S0217751X01003998",
    journal = "Int. J. Mod. Phys. A",
    volume = "16",
    pages = "910--921",
    year = "2001"
}

@article{Sivaramakrishnan:2024ydy,
    author = "Sivaramakrishnan, Allic",
    title = "{Correlators of Worldline Proper Length}",
    eprint = "2406.17205",
    archivePrefix = "arXiv",
    primaryClass = "hep-th",
    month = "6",
    year = "2024"
}

@article{Cheung:2026euf,
    author = "Cheung, Clifford and Sivaramakrishnan, Allic and Wilson-Gerow, Jordan and Zhou, Lihang",
    title = "{On Perturbatively Dressed Observables}",
    eprint = "2605.26077",
    archivePrefix = "arXiv",
    primaryClass = "hep-th",
    month = "5",
    year = "2026"
}

@article{Usatyuk:2024isz,
    author = "Usatyuk, Mykhaylo and Zhao, Ying",
    title = "{Closed universes, factorization, and ensemble averaging}",
    eprint = "2403.13047",
    archivePrefix = "arXiv",
    primaryClass = "hep-th",
    doi = "10.1007/JHEP02(2025)052",
    journal = "JHEP",
    volume = "02",
    pages = "052",
    year = "2025"
}

@article{Abdalla:2025gzn,
    author = "Abdalla, Ahmed I. and Antonini, Stefano and Iliesiu, Luca V. and Levine, Adam",
    title = "{The gravitational path integral from an observer{\textquoteright}s point of view}",
    eprint = "2501.02632",
    archivePrefix = "arXiv",
    primaryClass = "hep-th",
    doi = "10.1007/JHEP05(2025)059",
    journal = "JHEP",
    volume = "05",
    pages = "059",
    year = "2025"
}

@article{Harlow:2025pvj,
    author = "Harlow, Daniel and Usatyuk, Mykhaylo and Zhao, Ying",
    title = "{Quantum mechanics and observers for gravity in a closed universe}",
    eprint = "2501.02359",
    archivePrefix = "arXiv",
    primaryClass = "hep-th",
    reportNumber = "MIT-CTP/5824",
    doi = "10.1007/JHEP02(2026)108",
    journal = "JHEP",
    volume = "02",
    pages = "108",
    year = "2026"
}

@article{Usatyuk:2024mzs,
    author = "Usatyuk, Mykhaylo and Wang, Zi-Yue and Zhao, Ying",
    title = "{Closed universes in two dimensional gravity}",
    eprint = "2402.00098",
    archivePrefix = "arXiv",
    primaryClass = "hep-th",
    doi = "10.21468/SciPostPhys.17.2.051",
    journal = "SciPost Phys.",
    volume = "17",
    number = "2",
    pages = "051",
    year = "2024"
}

@article{Akers:2025ahe,
    author = "Akers, Chris and Bueller, Gracemarie and DeWolfe, Oliver and Higginbotham, Kenneth and Reinking, Johannes and Rodriguez, Rudolph",
    title = "{On observers in holographic maps}",
    eprint = "2503.09681",
    archivePrefix = "arXiv",
    primaryClass = "hep-th",
    doi = "10.1007/JHEP05(2025)201",
    journal = "JHEP",
    volume = "05",
    pages = "201",
    year = "2025"
}

@article{Engelhardt:2025azi,
    author = "Engelhardt, Netta and Gesteau, Elliott and Harlow, Daniel",
    title = "{Observer complementarity for black holes and holography}",
    eprint = "2507.06046",
    archivePrefix = "arXiv",
    primaryClass = "hep-th",
    reportNumber = "MIT-CTP/5884",
    month = "7",
    year = "2025"
}

@article{Zhao:2026mpl,
    author = "Zhao, Ying",
    title = "{''It from Bit'': The Hartle-Hawking state and quantum mechanics for de Sitter observers}",
    eprint = "2602.05939",
    archivePrefix = "arXiv",
    primaryClass = "hep-th",
    reportNumber = "MIT-CTP/6000",
    month = "2",
    year = "2026"
}

@article{Harlow:2026hky,
    author = "Harlow, Daniel",
    title = "{Observers, $\alpha$-parameters, and the Hartle-Hawking state}",
    eprint = "2602.03835",
    archivePrefix = "arXiv",
    primaryClass = "hep-th",
    reportNumber = "MIT-CTP/5900",
    month = "2",
    year = "2026"
}

@article{Alonso-Monsalve:2024oii,
    author = "Alonso-Monsalve, Elba and Harlow, Daniel and Jefferson, Patrick",
    title = "{Phase space of Jackiw-Teitelboim gravity with positive cosmological constant}",
    eprint = "2409.12943",
    archivePrefix = "arXiv",
    primaryClass = "hep-th",
    reportNumber = "MIT-CTP/5769",
    doi = "10.1007/JHEP03(2026)008",
    journal = "JHEP",
    volume = "03",
    pages = "008",
    year = "2026"
}

@article{DeVuyst:2024grw,
    author = "De Vuyst, Julian and Eccles, Stefan and Hoehn, Philipp A. and Kirklin, Josh",
    title = "{Linearization (in)stabilities and crossed products}",
    eprint = "2411.19931",
    archivePrefix = "arXiv",
    primaryClass = "hep-th",
    doi = "10.1007/JHEP05(2025)211",
    journal = "JHEP",
    volume = "05",
    pages = "211",
    year = "2025"
}

@article{Hoback:2026yqj,
    author = "Hoback, Sarah and Jafferis, Daniel L. and Wei, Zixia",
    title = "{Monitoring a de Sitter universe through an anti-de Sitter window}",
    eprint = "2606.31705",
    archivePrefix = "arXiv",
    primaryClass = "hep-th",
    month = "6",
    year = "2026"
}

@article{Wei:2025guh,
    author = "Wei, Zixia",
    title = "{Observers and Timekeepers: From the Page-Wootters Mechanism to the Gravitational Path Integral}",
    eprint = "2506.21489",
    archivePrefix = "arXiv",
    primaryClass = "hep-th",
    month = "6",
    year = "2025"
}

@article{Kabat:1992tb,
    author = "Kabat, Daniel N. and Ortiz, Miguel",
    title = "{Eikonal quantum gravity and Planckian scattering}",
    eprint = "hep-th/9203082",
    archivePrefix = "arXiv",
    reportNumber = "MIT-CTP-2069",
    doi = "10.1016/0550-3213(92)90627-N",
    journal = "Nucl. Phys. B",
    volume = "388",
    pages = "570--592",
    year = "1992"
}

\end{document}